\documentclass[preprint,12pt,authoryear]{elsarticle}

\usepackage{amsmath,amssymb,amsfonts,amsthm,mathtools,bm}
\usepackage{booktabs}
\usepackage{graphicx}
\usepackage{enumitem}

\usepackage{geometry}
\usepackage{setspace}
\usepackage{microtype}
\usepackage{threeparttable}
\allowdisplaybreaks
\usepackage[colorlinks=true,linkcolor=blue,citecolor=blue,urlcolor=blue]{hyperref}
\journal{Journal of Econometrics}

\usepackage[thicklines]{cancel}
\usepackage{xcolor}

\theoremstyle{plain}
\newtheorem{theorem}{Theorem}[section]         
  
\newtheorem{lemma}{Lemma}[section]             
\newtheorem{corollary}{Corollary}[section] 
\newtheorem{assumption}{Assumption}[section] 
\theoremstyle{definition}
    
\theoremstyle{remark}
\newtheorem{remark}{Remark}[section]         

\newcommand{\E}{\mathbb{E}}
\newcommand{\Pbb}{\mathbb{P}}
\newcommand{\R}{\mathbb{R}}
\newcommand{\1}{\mathbf{1}}
\newcommand{\F}{\mathcal{F}}

\newcommand{\Var}{\operatorname{Var}}

\newcommand{\sgn}{\operatorname{sgn}}
\newcommand{\toP}{\xrightarrow{\mathbb{P}}}
\newcommand{\toD}{\xrightarrow{d}}

\newcommand{\iid}{\stackrel{\mathrm{iid}}{\sim}}
\newcommand{\ind}{\perp\!\!\!\perp}
\newcommand{\abs}[1]{\left|#1\right|}
\newcommand{\norm}[1]{\left\|#1\right\|}
\newcommand{\set}[1]{\left\{#1\right\}}
\newcommand{\paren}[1]{\left(#1\right)}
\newcommand{\brac}[1]{\left[#1\right]}

\newcommand{\dd}{\,\mathrm{d}}
\newcommand{\T}{\mathsf{T}}
\newcommand{\an}{a_n}
\newcommand{\mcS}{\mathcal{S}}

\begin{document}

\begin{frontmatter}

\title{Adaptive Test for Jump}

\author{Huifang Ma$^1$ and Long Feng$^1$\\
$^1$Nankai University}
\begin{abstract}
We develop an adaptive jump test for discretely observed high-frequency semimartingales by combining the A\"it-Sahalia--Jacod ratio statistic \citep{AitSahaliaJacod2009} and the Lee--Mykland extreme-return statistic \citep{LeeMykland2008} with the Cauchy combination rule. Allowing stochastic It\^o drift, volatility, and leverage, we show asymptotic independence under the continuous-path null and dense local alternatives, yielding an analytically calibrated test with closed-form power; under finite-activity jumps, the test is consistent. We also extend the method to additive microstructure noise. Simulations show that the combined procedure performs well under both dense and sparse alternatives and is typically best overall.
\end{abstract}

\begin{keyword}
High-frequency data \sep jump test \sep adaptive combination \sep power variation \sep extreme value \sep microstructure noise \sep asymptotic independence
\end{keyword}

\end{frontmatter}

\section{Introduction}

Detecting jumps from high-frequency observations remains a central problem in financial econometrics.
The distinction between continuous and discontinuous return variation matters for volatility
measurement, derivative pricing, risk management, and the interpretation of how information is
incorporated into prices. Over the last two decades this problem has generated a large literature on
nonparametric and semiparametric jump tests for discretely observed processes.

The literature on high-frequency jump detection can be roughly grouped into four strands.

The first strand consists of specification-based and realized-variation approaches. \citet{AitSahalia2002} proposed an early specification test for assessing whether discretely observed data are consistent with an underlying diffusion, thereby providing a benchmark for separating diffusive dynamics from jump behavior. In the realized-variation literature, \citet{BarndorffNielsenShephard2004} introduced realized bipower variation as a jump-robust measure of integrated variance; \citet{HuangTauchen2005} interpreted jump tests in a Hausman-type framework and studied the contribution of jumps to total return variance; \citet{BarndorffNielsenShephard2006} established asymptotic theory for bipower-variation-based jump tests; and \citet{AndersenBollerslevDiebold2007} showed that separating continuous and jump variation improves volatility measurement and forecasting.

The second strand develops local, ratio-based, threshold, and truncation methods for nonparametric jump detection. \citet{LeeMykland2008} proposed a locally standardized test that identifies intraday jump times and jump sizes. \citet{AitSahaliaJacod2009} developed a general nonparametric test for jumps in discretely observed It\^o semimartingales. \citet{Mancini2009} introduced threshold methods for separating diffusion and jump increments and for volatility estimation in the presence of jumps. \citet{LeeHannig2010} extended jump detection to L\'evy jump-diffusion models and allowed for separate analysis of small and large jumps. \citet{PodolskijZiggel2010} proposed truncation-based tests that apply both in standard semimartingale settings and in the presence of market microstructure noise. \citet{CorsiPirinoReno2010} combined thresholding with bipower variation to reduce finite-sample bias and to study the effect of jumps on volatility forecasting.

The third strand extends the analysis to multivariate and dependent jump arrivals. \citet{JacodTodorov2009} developed tests for distinguishing common jumps from disjoint jumps in multidimensional processes. \citet{DungeyErdemliogluMateiYang2018} introduced tests for mutually exciting jumps and linked them empirically to flight-to-safety and flight-to-quality episodes. \citet{Kwok2024} further studied serial dependence in jump arrivals by proposing a robust test for autocorrelated jump occurrences.

A fourth strand provides comparative assessments and broader syntheses of the literature. \citet{DumitruUrga2012} compared leading nonparametric jump tests across different sampling frequencies, volatility dynamics, jump sizes, and noise levels. \citet{AitSahaliaJacod2012JEL} reviewed a unified framework for decomposing high-frequency returns into continuous and jump components and for measuring jump activity, while \citet{AitSahaliaJacod2014} offered a comprehensive treatment of high-frequency financial econometrics more broadly. \citet{ManeesoonthornMartinForbes2020} provided updated comparative guidance on the performance of high-frequency jump tests under volatility jumps, microstructure noise, and different sampling frequencies.

The present paper is built around two leading univariate tests. The first is the ratio statistic of
\citet{AitSahaliaJacod2009}, which aggregates power variation across sampling frequencies and is
naturally effective when jump evidence is spread over many increments. The second is the standardized
maximum statistic of \citet{LeeMykland2008}, which focuses on the largest local excursion and is
naturally effective when jumps are isolated and pronounced. These two statistics therefore target
different configurations of alternatives. The evidence in \citet{DumitruUrga2012} and
\citet{ManeesoonthornMartinForbes2020} shows that no single jump test dominates uniformly across
dense and sparse regimes or across sampling frequencies.

Our starting point is that the A\"it-Sahalia--Jacod and Lee--Mykland procedures should be viewed as complementary rather than competing. The A\"it-Sahalia--Jacod statistic aggregates evidence across many increments and is naturally sensitive to alternatives in which jump evidence is spread over time, whereas the Lee--Mykland statistic is driven by local extremes and is naturally sensitive to alternatives in which jumps are sparse and concentrated. This motivates us to study the relation between the two statistics directly. Our main theoretical result shows that, after the appropriate marginal normalization, the two statistics are asymptotically independent. 

In this respect, our paper is connected to a growing literature in high-dimensional hypothesis testing that studies the asymptotic independence between $\ell_q$-type and $\ell_\infty$-type statistics, or, more broadly, between sum-type and max-type statistics; see the classical probability references \citet{Hsing1995,HoHsing1996} and the more recent contributions \citet{LiXue2015,HeEtAl2021,FengEtAl2022,FengEtAl2024,WangEtAl2024}. Related adaptive $L_p$-type procedures include \citet{XuEtAl2016}. The present problem is nevertheless fundamentally different from the classical high-dimensional setting. That literature typically starts from i.i.d.\ high-dimensional vectors, or from coordinate-array models with dependence across coordinates, whereas here both statistics are nonlinear functionals of the same discretely observed It\^o semimartingale, with overlapping high-frequency increments, stochastic volatility, leverage, and, in the noisy extension, market microstructure frictions. As a result, the required asymptotic independence cannot be imported from existing high-dimensional arguments and must instead be proved directly from the primitive continuous-time model for $X$. This constitutes the main theoretical contribution of the paper. 

Building on this result, we propose a Cauchy combination test in the spirit of \citet{LiuXie2020}. We do not stop at the continuous-path null: under dense local jumps we again establish asymptotic independence, while under fixed finite-activity jumps and their noisy counterparts we derive the corresponding joint limits needed for the combined procedure and its consistency. Consequently, the proposed test remains analytically tractable across qualitatively different jump regimes. The simulation evidence supports this message: the A\"it-Sahalia--Jacod statistic is stronger when jump evidence is dense, the Lee--Mykland statistic is stronger when jumps are sparse and isolated, and the Cauchy combination inherits the strengths of both and delivers the most robust overall performance across the alternatives considered in the paper.

A second challenge appears when the observation frequency becomes very high: market microstructure
noise changes the marginal asymptotics of classical jump statistics. Early noise-aware contributions
include \citet{FanWang2007} and \citet{JiangOomen2008}. The general pre-averaging toolkit is
developed by \citet{JacodLiMyklandPodolskijVetter2009} and clarified by
\citet{MyklandZhang2016}. Direct noise-robust extensions of the two statistics used in this paper are
\citet{AitSahaliaJacodLi2012} for the sum-type ratio test and \citet{LeeMykland2012} for the
max-type test. More recent work extends the literature to robust spot-volatility-based jump
detection, endogenous sampling, and one-sided order-book noise; see \citet{Sun2024},
\citet{LiLiNolteNolteYu2026}, and \citet{BibingerHautschRistig2026}. The finite-sample implications
of ultra-high-frequency frictions and spurious detections are emphasized by
\citet{ChristensenOomenPodolskij2014}, \citet{BajgrowiczScailletTreccani2016}, and
\citet{ManeesoonthornMartinForbes2020}.

Our noisy extension follows the same adaptive logic as the latent-process part of the paper. We keep the continuous-time assumptions on the latent process $X$ unchanged and add only a layer of assumptions on the observation error. On the sum side, we adopt the pre-averaged robustification of the A\"it-Sahalia--Jacod statistic developed by \citet{AitSahaliaJacodLi2012}. On the max side, we start from the local-average extreme-value idea of \citet{LeeMykland2012}, but our treatment goes beyond that paper in an essential way. The theory in \citet{LeeMykland2012} is derived under a constant-volatility specification, whereas in our paper the latent price continues to satisfy the same stochastic It\^o semimartingale framework as in the noiseless sections, with volatility itself evolving as an It\^o process and leverage still allowed. For this reason we do not rely on the volatility standardization used in \citet{LeeMykland2012}. Instead, we construct a new feasible Lee--Mykland-type statistic by standardizing local averages with a spot-volatility estimator based on the two-scale realized spot variance of \citet{ZuBoswijk2014}. We then prove that this feasible standardization is asymptotically equivalent to the corresponding oracle normalization, and that the resulting noisy max statistic still satisfies the Gumbel extreme-value limit under the noisy continuous-path null.

This noisy extension is therefore not a routine modification of the latent-process analysis. Unlike the classical i.i.d.\ setting, both components are nonlinear functionals of the same noisy high-frequency It\^o semimartingale, and the proof must simultaneously handle pre-averaging, local averaging, additive sub-Gaussian noise, stochastic volatility, leverage, and feasible spot-volatility estimation. Within this framework we establish asymptotic independence between the pre-averaged AJJ statistic and the new Lee--Mykland-type statistic not only under the noisy null, but also under the noisy alternatives studied in the paper, including the dense local regime and the fixed finite-activity jump regime. This leads to a noise-robust Cauchy combination test with analytic asymptotic calibration. In finite samples, however, the convergence of extreme-value statistics is well known to be slow, and the additional noise layer further complicates size control. To improve finite-sample accuracy, we therefore implement a double bootstrap in the simulation study. The Monte Carlo and empirical evidence both support the same conclusion: the pre-averaged sum statistic is more effective when jump evidence is diffuse, the noisy max statistic is more effective when jumps are sparse and concentrated, and the Cauchy combination inherits the strengths of both and delivers a robust overall procedure for jump detection in the presence of microstructure noise.

The remainder of the paper is organized as follows. Section \ref{sec:model} introduces the process
model, the sampling scheme, and the two marginal jump statistics. Section \ref{sec:null} studies the
continuous-path null and derives the Cauchy combination test. Section \ref{sec:h1} studies a dense
local alternative and a fixed finite-activity alternative. Section \ref{sec:noise} extends the adaptive
construction to additive microstructure noise. Section \ref{sec:simulation} reports the Monte Carlo
evidence. Section \ref{sec:data} give a real data application. Section \ref{sec:conclusion} concludes. All proofs are collected in the appendix.

\section{Model and Methods}
\label{sec:model}

\subsection{Sampling scheme and primitive assumptions}

Let $T>0$ be fixed. The process $X=\{X_t:0\le t\le T\}$ is observed at the regular times
\begin{align*}
    t_i=i\Delta_n,
\qquad
\Delta_n=\frac{T}{n},
\qquad
 i=0,1,\dots,n,
\qquad
\Delta_n\to 0.
\end{align*}
Write
\begin{align*}
    \Delta_i^n X = X_{t_i}-X_{t_{i-1}}.
\end{align*}
Fix an integer $k\ge 2$ and a power $p>3$. Let
\begin{align*}
    m_n=\lfloor n/k\rfloor.
\end{align*}
Throughout, $B=(B^{(1)},\dots,B^{(d_B)})^\top$ denotes a $d_B$-dimensional standard Brownian
motion, for a fixed integer $d_B\ge 1$, on a filtered probability space
$(\Omega,\F,(\F_t)_{0\le t\le T},\Pbb)$. We write
\begin{align*}
    W=B^{(1)},
\qquad
\bar B=(B^{(2)},\dots,B^{(d_B)})^\top.
\end{align*}
The first component $W$ drives the return innovation. The remaining components in $\bar B$ collect
additional orthogonal coefficient drivers. Leverage is allowed because the same component $W$ may
also enter the It\^o dynamics of $b_t$ and $\sigma_t$.\par

The null hypothesis is imposed directly on the latent process.

\begin{assumption}[Continuous-path null]\label{ass:H0}
Under $H_0$,
\begin{align*}
&\dd X_t = b_t\dd t + \sigma_t\dd W_t, \\
&b_t = b_0+\int_0^t b_s^{(0)}\dd s+\int_0^t \beta_s^\top\dd B_s, \\
&\sigma_t = \sigma_0+\int_0^t \sigma_s^{(0)}\dd s+\int_0^t \gamma_s^\top\dd B_s, 
\end{align*}
where $b^{(0)}$ and $\sigma^{(0)}$ are real-valued c\`adl\`ag locally bounded processes, $\beta$ and
$\gamma$ are $\R^{d_B}$-valued c\`adl\`ag locally bounded processes, and there exist constants
$0<\underline\sigma<\overline\sigma<\infty$ such that
\begin{align*}
    \underline\sigma\le \sigma_t\le \overline\sigma,
\qquad 0\le t\le T,
\qquad \text{a.s.}
\end{align*}
\end{assumption}

Assumption \ref{ass:H0} is a primitive model assumption on $X$ itself. The local smoothness
conditions used by \citet{LeeMykland2008} are not assumed separately. \ref{app:aux}
derives from Assumption \ref{ass:H0} the required local regularity of $(b_t,\sigma_t)$, the
stochastic-Taylor expansion of the increments, and the block martingale representation used below.
The common notation $B$ is only a bookkeeping device: if the drift and volatility coefficients are
driven by several Brownian motions, they can be stacked into a single higher-dimensional Brownian
vector $B$. The leverage case corresponds to a nonzero first component in $\beta_t$ or $\gamma_t$.

\subsection{The A\"it-Sahalia--Jacod sum statistic}

Define the fine- and coarse-scale power variations
\begin{equation}\label{eq:AJvariations}
B_n(p)=\sum_{i=1}^n \abs{\Delta_i^n X}^p,
\qquad
B_n^{(k)}(p)=\sum_{j=1}^{m_n}\abs{\sum_{\ell=1}^k \Delta_{(j-1)k+\ell}^n X}^p.
\end{equation}
The sum-type statistic of \citet{AitSahaliaJacod2009} is the ratio
\begin{equation}\label{eq:AJratio}
R_n^{AJ}=\frac{B_n^{(k)}(p)}{B_n(p)}.
\end{equation}
Under a continuous path, the limit is $k^{p/2-1}$, whereas under fixed jumps the limit is one.\par

Let
\begin{align*}
    m_r = \E\abs{N}^r,
\qquad N\sim N(0,1),
\qquad
A_r = \int_0^T |\sigma_t|^r\dd t,
\end{align*}
and define the block kernel
\begin{equation}\label{eq:Upk}
U_{p,k}(x_1,\dots,x_k)=\abs{x_1+\cdots+x_k}^p-k^{p/2-1}\sum_{\ell=1}^k\abs{x_\ell}^p.
\end{equation}
Set
\begin{equation}\label{eq:tau0}
\varsigma_{p,k}^2 = \Var\paren{U_{p,k}(E_1,\dots,E_k)},
\qquad
\tau_0^2 = \frac{\varsigma_{p,k}^2 A_{2p}}{k m_p^2 A_p^2},
\end{equation}
where $E_1,\dots,E_k\stackrel{i.i.d.}{\sim}N(0,1)$. A feasible estimator of $\tau_0^2$ is
\begin{equation}\label{eq:tau0hat}
\hat\tau_{n,0}^2 = \frac{\varsigma_{p,k}^2}{k m_p^2}
\frac{\hat A_{2p,n}}{\hat A_{p,n}^2},
\qquad
\hat A_{r,n}=m_r^{-1}\Delta_n^{1-r/2}\sum_{i=1}^n \abs{\Delta_i^n X}^r.
\end{equation}
The feasible null statistic is
\begin{equation}\label{eq:AJnullstat}
Z_n^{AJ}=\frac{\Delta_n^{-1/2}\paren{R_n^{AJ}-k^{p/2-1}}}{\hat\tau_{n,0}},
\qquad
p_n^{AJ}=2\paren{1-\Phi\paren{\abs{Z_n^{AJ}}}}.
\end{equation}

\subsection{The Lee--Mykland max statistic}

Let $K_n\to\infty$ satisfy
\begin{equation}\label{eq:Kncond}
\frac{K_n}{(\log n)^3}\to\infty,\qquad K_n\Delta_n(\log n)^3\to 0,\qquad K_n\sqrt{\Delta_n}\to\infty.
\end{equation}
A convenient choic is $K_n=\lfloor \Delta_n^{-\varrho}\rfloor$ with $\varrho\in(1/2,1)$. Define the local
bipower estimator
\begin{equation}\label{eq:localbipower}
\hat V_{n,i}=\frac{\pi}{2(K_n-1)}\sum_{j=i-K_n+2}^{i} \abs{\Delta_j^n X}\,\abs{\Delta_{j-1}^n X},
\qquad i=K_n,\dots,n,
\end{equation}
and the standardized increments
\begin{equation}\label{eq:LML}
L_{n,i}=\frac{\Delta_i^n X}{\sqrt{\hat V_{n,i}}},
\qquad i=K_n,\dots,n.
\end{equation}
The max-type statistic is
\begin{equation}\label{eq:maxstat}
M_n=\max_{K_n\le i\le n}\abs{L_{n,i}}.
\end{equation}
Let
\begin{equation}\label{eq:LMnorm}
C_n=\sqrt{2\log n}-\frac{\log\pi+\log\log n}{2\sqrt{2\log n}},
\qquad
\an=\frac{1}{\sqrt{2\log n}},
\end{equation}
and define the normalized maximum
\begin{equation}\label{eq:xi_n}
\xi_n=\frac{M_n-C_n}{\an}.
\end{equation}
Under the continuous-path null, $\xi_n$ converges to the standard Gumbel law with distribution
function $x\mapsto \exp\{-e^{-x}\}$. The corresponding marginal $p$-value is
\begin{equation}\label{eq:pLM}
p_n^{LM}=1-\exp\set{-e^{-\xi_n}}.
\end{equation}

\subsection{The Cauchy combination statistic}

The equal-weight Cauchy combination statistic of \citet{LiuXie2020} is
\begin{equation}\label{eq:Cauchystat}
\T_n^C=
\frac12\tan\brac{\pi\paren{\frac12-p_n^{AJ}}}
+
\frac12\tan\brac{\pi\paren{\frac12-p_n^{LM}}}.
\end{equation}
The combined $p$-value is
\begin{equation}\label{eq:Cauchyp}
p_n^C=\frac12-\frac{1}{\pi}\arctan\paren{\T_n^C}.
\end{equation}
For $0<\alpha<1$, the upper $\alpha$-quantile of the standard Cauchy law is
\begin{align*}
    c_\alpha=\cot(\pi\alpha).
\end{align*}
The level-$\alpha$ Cauchy combination test rejects when $\T_n^C>c_\alpha$.

%\section{The continuous-path null}
%\label{sec:null}

This section develops the joint null theory directly from Assumption \ref{ass:H0}. No additional
assumption is imposed on the statistics themselves.

The appendix derives two intermediate ingredients from the primitive null model. The first is a block
expansion for the sum statistic. The second is a Gaussian approximation for the max statistic. With
those ingredients available, the joint null theorem can be stated directly in the main text.

\subsection{Asymptotic independence under \texorpdfstring{$H_0$}{H0}}\label{sec:null}

\begin{theorem}[Joint null limit]\label{thm:H0indep}
Suppose Assumption \ref{ass:H0} and \eqref{eq:Kncond} hold. Then, 
% for every $x,y\in\R$,
% \begin{equation*}
% \Pbb\paren{Z_n^{AJ}\le x,\ \xi_n\le y}
% \to
% \Phi(x)\exp\set{-e^{-y}}.
% \end{equation*}
% Equivalently,
\begin{equation*}
\paren{Z_n^{AJ},\xi_n}\toD (Z,\xi),
\end{equation*}
where
\begin{align*}
   Z\sim N(0,1),\qquad \Pbb(\xi\le y)=\exp\{-e^{-y}\},\qquad Z\ind \xi.
\end{align*}
\end{theorem}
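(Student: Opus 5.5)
We reduce both statistics to functionals of the Gaussian increments $\sigma_{t_{i-1}}\Delta_i^n W$ and then establish asymptotic independence through a block decomposition.

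\textbf{Step 1: linearize the sum statistic.} Using the stochastic-Taylor expansion of the increments derived in the appendix from Assumption~\ref{ass:H0}, we write
\begin{align*}
\Delta_n^{-1/2}\paren{R_n^{AJ}-k^{p/2-1}}
=\frac{\Delta_n^{1/2-p/2}\sum_{j=1}^{m_n}U_{p,k}\paren{\Delta^n_{(j-1)k+1}X,\dots,\Delta^n_{jk}X}}{\Delta_n^{1-p/2}B_n(p)}\,\Delta_n^{-1/2}\Delta_n^{1/2}.
\end{align*}
Here $\Delta_n^{1-p/2}B_n(p)\toP m_pA_p$, and $\hat A_{r,n}\toP A_r$, so $\hat\tau_{n,0}\toP\tau_0$. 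Freezing $\sigma$ at the block start, the numerator equals $S_n:=\sum_j \zeta_{n,j}+o_P(1)$, where
\[
\zeta_{n,j}=\Delta_n^{1/2}\sigma^p_{t_{(j-1)k}}\,U_{p,k}\paren{\Delta_n^{-1/2}\Delta^n_{(j-1)k+\ell}W,\ \ell=1,\dots,k}.
\]
These are martingale differences with $\E U_{p,k}(E)=0$. The drift, the volatility variation, and the leverage terms contribute $O_P(\Delta_n^{1/2})$ after summation; the leverage bias cancels because $U_{p,k}$ is even in its arguments, which kills the odd third-order term. We then apply the stable martingale CLT to obtain mixed normality with conditional variance $\varsigma^2_{p,k}A_{2p}/k$. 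After studentization this gives $Z_n^{AJ}\toD N(0,1)$ stably in law.

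\textbf{Step 2: reduce the max statistic to a Gaussian maximum.} Uniformly in $i$, we show
\[
\max_i\abs{\hat V_{n,i}/(\sigma^2_{t_{i-K_n}}\Delta_n)-1}=O_P\paren{\sqrt{\log n/K_n}+\sqrt{K_n\Delta_n\log n}}=o_P\paren{(\log n)^{-1}}.
\]
This follows from \eqref{eq:Kncond}, Bernstein-type concentration for the local bipower sums, and the modulus of continuity of $\sigma$. Consequently $\xi_n=\xi_n^*+o_P(1)$, where $\xi_n^*$ is the normalized maximum of $\abs{\Delta_i^nW}/\sqrt{\Delta_n}$, with the frozen-volatility ratio also handled; this is the appendix's Gaussian approximation. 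Since these are i.i.d.\ $N(0,1)$ variables, $\xi_n^*\to$ Gumbel.

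\textbf{Step 3: asymptotic independence, the main obstacle.} Both $S_n$ and $\xi^*_n$ depend on the same increments of $W$, and $S_n$ additionally depends on $\sigma$. Fix $y$ and let $A_n=\{\xi^*_n\le y\}=\bigcap_i\{\abs{\Delta_i^nW}\le u_n\sqrt{\Delta_n}\}$. We replace $S_n$ by its truncated version
\[
\tilde S_n=\sum_j\zeta_{n,j}\1\set{\text{no increment in block }j\text{ exceeds }u_n}.
\]
The difference $S_n-\tilde S_n$ is supported on an event of bounded expected cardinality, of order $n\,\Pbb(\abs{N}>u_n)=O(1)$ blocks, and each such block contributes $O_P(\Delta_n^{1/2}(\log n)^{p/2})=o_P(1)$, so the difference is $o_P(1)$. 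Next, conditioning on the exceedance point process (equivalently on $A_n$), the blocks stay independent given $\sigma$-freezing, and the truncated summands keep mean $0$ up to an error $O(n\cdot\Delta_n^{1/2}\cdot\text{tail})=o(1)$ and keep the same variance. A conditional Lindeberg CLT then gives
\[
\Pbb(\tilde S_n\le x\,|\,A_n)\to\Phi(x)
\]
in the mixed-normal sense. Here we use stable convergence, since the $\sigma$ path is $\F_T$-measurable and the exceedance count is asymptotically Poisson and independent of $\sigma$. Alternatively, and more cleanly, we compute the joint characteristic functional
\[
\E\brac{e^{isS_n}\1_{A_n}}\to \E\brac{e^{-s^2V/2}}\,e^{-e^{-y}}
\]
by a martingale-in-blocks argument: $\1_{A_n}$ factorizes over blocks, and each factor equals $1-O(1/n)$ with an $O(\Delta_n^{1/2}(\log n)^{p/2}/n)$ perturbation of its conditional characteristic function.

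\textbf{Step 4: conclude.} Slutsky's lemma together with the studentization $\hat\tau_{n,0}\toP\tau_0$ (which is $\F_T$-measurable, so stability allows division) yields the joint limit of $(Z_n^{AJ},\xi_n)$ to independent $N(0,1)$ and Gumbel components. The delicate part is Step 3, specifically controlling the effect of the drift and leverage terms uniformly across extreme indices; we expect this to be handled by the appendix's block martingale representation, with bounds uniform over $i$ obtained via the BDG inequality and a union bound.
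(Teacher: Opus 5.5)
Your Steps 1--2 are the paper's reductions (Lemmas~\ref{lem:AJexpansionH0} and~\ref{lem:LMgauss}). Your Step~3, however, reaches independence by a genuinely different route.

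\textbf{The paper's route.} The paper never conditions on the no-exceedance event. It expands $\E[\1\{\mcS_n\le x\}(N_n(y))_r]$ over $r$-tuples of exceedance indices and discards tuples with two indices in one coarse block. For the remaining tuples it uses Lemma~\ref{lem:finiteremove} to delete the finitely many blocks carrying the exceedances, obtaining a Gaussian limit for the rest stably with respect to them. It then evaluates the generating function at $s=0$.

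\textbf{Your route.} You condition once on $A_n=\{\tilde\xi_n\le y\}$, whose probability tends to $e^{-e^{-y}}$. Because $A_n$ is a product over increments of independent events, under $\Pbb(\cdot\mid A_n)$ each block vector stays independent of $\F_{t_{(j-1)k}}$, now with a symmetrically truncated Gaussian law. Hence the block sum remains a martingale with total drift $O(\Delta_n^{1/2}(\log n)^{p/2})$ and asymptotically unchanged conditional variance. Equivalently, $\abs{\psi_j-\varphi_j\chi_j}=O(\Delta_n^{1/2}(\log n)^{p/2}/n)$ with deterministic $\chi_j$. This is the noiseless counterpart of the paper's own noisy-case argument (Lemma~\ref{lem:noise_block_decouple}).

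\textbf{What each buys.} Your route avoids factorial moments and the interchange of limit and series. It also avoids needing an $o(\Pbb(A_I))$ error uniformly over $O(n^r)$ tuples. The paper's route, in exchange, gives joint convergence with the whole exceedance count, not only the maximum.

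\textbf{What you still owe.} You must run the CLT under a measure that changes with $n$. Because of leverage, the conditional variance $V_n$ is random and is built from the same $W$ that defines $A_n$. So both your limit $\E[e^{-s^2V/2}]e^{-e^{-y}}$ and ``stability allows division'' require an $\F_T$-stable statement under $\Pbb(\cdot\mid A_n)$. The key fact is that the truncation is symmetric, so $\E[U_{p,k}(\bm E)E_\ell\mid\max_m\abs{E_m}\le u_n]=0$ exactly. Thus the block sum stays orthogonal to $W$, and trivially to martingales orthogonal to $W$.

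\textbf{Minor points.}
\begin{itemize}
\item Under leverage the blocks are martingale differences, not independent.
\item $\tilde S_n=S_n$ on $A_n$ apart from the first $K_n$ indices, so the truncation step is superfluous once you condition.
\item Without extra smoothness of $\gamma$, the drift/leverage remainder in Step~1 is only $o_P(1)$, not $O_P(\Delta_n^{1/2})$.
\end{itemize}
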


\begin{corollary}[Null validity of the Cauchy combination test]\label{cor:nullcauchy}
Under the assumptions of Theorem \ref{thm:H0indep},
\begin{equation*}
\Pbb\paren{p_n^C\le \alpha}\to \alpha,
\qquad 0<\alpha<1.
\end{equation*}
% Equivalently,
% \begin{equation*}\label{eq:nullcauchy}
% \T_n^C\toD C,
% \end{equation*}
% where $C$ is standard Cauchy.
\end{corollary}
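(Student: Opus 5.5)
The statement to prove is Corollary \ref{cor:nullcauchy}, and it follows from Theorem \ref{thm:H0indep} by the continuous mapping theorem together with the Liu--Xie property of independent Cauchy transforms.

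\textbf{Step 1: the $p$-values.} The maps $z\mapsto 2(1-\Phi(|z|))$ and $y\mapsto 1-\exp\{-e^{-y}\}$ are continuous. Hence Theorem \ref{thm:H0indep} and the continuous mapping theorem give
\begin{align*}
(p_n^{AJ},p_n^{LM})\toD (U_1,U_2),
\end{align*}
where $U_1=2(1-\Phi(|Z|))$ and $U_2=1-\exp\{-e^{-\xi}\}$. Since $|Z|$ is continuous, $U_1$ is uniform on $(0,1)$. Likewise $U_2=1-F_\xi(\xi)$ is uniform by the probability integral transform. Independence of $Z$ and $\xi$ gives $U_1\ind U_2$.

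\textbf{Step 2: the combined statistic.} The map
\begin{align*}
(u_1,u_2)\mapsto \tfrac12\tan[\pi(\tfrac12-u_1)]+\tfrac12\tan[\pi(\tfrac12-u_2)]
\end{align*}
is continuous on $(0,1)^2$. Its only discontinuities lie on the boundary of $[0,1]^2$, and that boundary has probability zero under the limit law. A second application of the continuous mapping theorem therefore gives
\begin{align*}
\T_n^C\toD \tfrac12 C_1+\tfrac12 C_2,\qquad C_j=\tan[\pi(\tfrac12-U_j)].
\end{align*}
Each $C_j$ is standard Cauchy, because $\pi(\tfrac12-U_j)$ is uniform on $(-\pi/2,\pi/2)$. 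The two are independent. Stability of the Cauchy law (the characteristic function is $e^{-|t|}$) then shows that $\tfrac12C_1+\tfrac12C_2$ is standard Cauchy.

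\textbf{Step 3: conclusion.} The event $\{p_n^C\le\alpha\}$ is the same as $\{\T_n^C\ge \tan(\pi(\tfrac12-\alpha))\}=\{\T_n^C\ge c_\alpha\}$. The limiting Cauchy distribution function is continuous at $c_\alpha$. By the portmanteau theorem,
\begin{align*}
\Pbb(p_n^C\le\alpha)\to \Pbb(C\ge c_\alpha)=\tfrac12-\tfrac1\pi\arctan(\cot\pi\alpha)=\alpha.
\end{align*}

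There is no real obstacle. All the substantive work is the asymptotic independence in Theorem \ref{thm:H0indep}, which is taken as given. The only care needed is to check that the discontinuity set of the tangent transform has limit-probability zero, and this holds because the limiting $p$-values are uniform.
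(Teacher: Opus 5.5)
Your proposal is correct and follows essentially the same route as the paper. In both, the continuous mapping theorem applied to the joint limit of Theorem \ref{thm:H0indep} gives independent uniform $p$-values, hence independent standard Cauchy transforms whose equal-weight average is standard Cauchy, so the rejection probability converges to $\alpha$. You write the rejection event as $\{\T_n^C\ge c_\alpha\}$, where the paper uses a strict inequality, and you invoke continuity of the Cauchy distribution function at $c_\alpha$ explicitly; this is slightly more careful but changes nothing of substance.
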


%\section{Alternative hypotheses}
%\label{sec:h1}

\subsection{A dense local alternative with many small jumps}\label{sec:h1}

\begin{assumption}[Dense local small jumps]\label{ass:H1dense}
Let $\mu_n$ be a Poisson random measure on $[0,T]\times\R$, independent of $(W,B)$, with
compensator
\begin{equation*}
\nu_n(\dd t,\dd y)=\theta\Delta_n^{-1/2}\dd t\,F(\dd y),
\qquad \theta>0,
\end{equation*}
where $F$ is a probability law on $\R$ with compact support and finite $(2p+2)$-moment. Under
$H_{1,n}^{D}$,
\begin{equation*}
X_t^{(n)}=X_0+\int_0^t b_s\dd s+\int_0^t \sigma_s\dd W_s
+\int_0^t\int_{\R} \sigma_{s-}\sqrt{\Delta_n}\,y\,\mu_n(\dd s,\dd y),
\qquad 0\le t\le T,
\end{equation*}
where $(b,\sigma)$ satisfy the conditions in Assumption
\ref{ass:H0}.
\end{assumption}

For $Y\sim F$ and independent $E_1,\dots,E_k\iid N(0,1)$, define
\begin{equation}\label{eq:dpk}
d_{p,k}=
\E\Big[
\abs{Y+E_1+\cdots+E_k}^p-
\abs{E_1+\cdots+E_k}^p-
 k^{p/2-1}\paren{\abs{Y+E_1}^p-\abs{E_1}^p}
\Big].
\end{equation}
The local mean shift of the AJ statistic is
\begin{equation}\label{eq:muD}
\mu_D=\frac{\theta d_{p,k}}{m_p\tau_0}.
\end{equation}

\begin{theorem}[Dense local alternative]\label{thm:densealt}
Suppose Assumption \ref{ass:H1dense} and \eqref{eq:Kncond} hold. Then
\begin{equation*}
\paren{Z_n^{AJ},\xi_n}\toD (Z_D,\xi),
\end{equation*}
where
\begin{equation*}
Z_D\sim N(\mu_D,1),
\qquad
\Pbb(\xi\le y)=\exp\set{-e^{-y}},
\qquad
Z_D\ind \xi.
\end{equation*}
% Equivalently,
% \begin{equation*}
% \paren{p_n^{AJ},p_n^{LM}}\toD \paren{P_D,U},
% \end{equation*}
% where
% \begin{equation*}
% P_D=2\paren{1-\Phi\paren{\abs{Z_D}}},
% \qquad
% U\sim \mathrm{Unif}(0,1),
% \qquad
% P_D\ind U.
% \end{equation*}
\end{theorem}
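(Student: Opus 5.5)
Under $H_{1,n}^D$ the process is the null process plus a sparse-in-time perturbation of size $\sqrt{\Delta_n}$. The plan is to reduce the dense case to the null case of Theorem \ref{thm:H0indep}, with the jump perturbation contributing a deterministic drift to $Z_n^{AJ}$ and nothing to $\xi_n$. Write $X^{(n)}=X^c+J^{(n)}$, where $X^c$ satisfies Assumption \ref{ass:H0}. Since $\mu_n$ has intensity $\theta\Delta_n^{-1/2}$, the number of increments $\Delta_i^nX$ carrying at least one jump is of order $\theta T\Delta_n^{-1/2}$. The probability that a given increment carries two jumps is $O(\Delta_n)$, so the expected number of such increments is $O(\sqrt{\Delta_n})\to0$. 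Hence, with probability tending to one, each increment carries at most one jump $\sigma_{s-}\sqrt{\Delta_n}Y$, and each block of length $k$ carries at most one as well. Because $\mu_n$ is independent of $(W,B)$, I will work conditionally on $\mu_n$ and use the strong Markov/independence structure.

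\textbf{Step 1 (sum side).} Decompose $B_n(p)$ and $B_n^{(k)}(p)$ into contributions from jump-free increments (blocks) and from jump-carrying ones. For the ratio we have
\[
\Delta_n^{-1/2}\bigl(R_n^{AJ}-k^{p/2-1}\bigr)=\Delta_n^{-1/2}\sum_j U_{p,k}\bigl(\Delta^n_{(j-1)k+1}X,\dots\bigr)/B_n(p).
\]
Using the block expansion from the appendix for $X^c$, the jump-free blocks reproduce the null fluctuation $\tau_0 Z$ to first order. On a jump-carrying block with jump at position $\ell$, freeze $\sigma$ at the block start. The increments are then approximately $\sigma\sqrt{\Delta_n}(E_1,\dots,E_k)$, with $Y$ added to $E_\ell$, so the increment in $U_{p,k}$ has conditional mean $\sigma^p\Delta_n^{p/2}\,d_{p,k}$ by exchangeability. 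Multiplying by $\theta\Delta_n^{-1/2}$ blocks, normalizing by $B_n(p)\approx m_pA_p\Delta_n^{1-p/2}$, and rescaling by $\Delta_n^{-1/2}$ gives a mean shift of $\theta d_{p,k}A_p/(m_pA_p)$, which is $\theta d_{p,k}/m_p$. The centred jump contributions have variance $O(\Delta_n^{-1/2})$ before scaling and are therefore $O_P(\Delta_n^{1/4})$ after scaling, so they are negligible. The compact support of $F$ and the moment condition supply the uniform integrability needed here. The same count shows that $\hat A_{p,n}$ and $\hat A_{2p,n}$ change by $O_P(\Delta_n^{1/2})$, so $\hat\tau_{n,0}\toP\tau_0$ still holds. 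Consequently $Z_n^{AJ}=Z_n^{AJ,c}+\mu_D+o_P(1)$, where $Z_n^{AJ,c}$ is the statistic computed from $X^c$.

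\textbf{Step 2 (max side).} Jumps have size $O(\sqrt{\Delta_n})$ uniformly because $F$ has compact support. A jump increment therefore gives $|L_{n,i}|\le |L^c_{n,i}|+C$, with roughly $\Delta_n^{-1/2}$ affected indices. In $\hat V_{n,i}$ each window of length $K_n$ contains about $K_n\sqrt{\Delta_n}$ jumps. Since $K_n\Delta_n\to0$, this is $o(K_n)$, so $\hat V_{n,i}/\hat V^c_{n,i}\to1$ uniformly at rate $O(\sqrt{\Delta_n})$, and the induced perturbation of $\xi_n$ is $o_P(1)$ on the unaffected indices. For the affected indices, I will bound $\Pbb(\max_{i\in\mathcal J}|L_{n,i}|>C_n+a_ny)$ by a union bound over the $O(\Delta_n^{-1/2})=O(n^{1/2})$ jump indices. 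Each term is a Gaussian tail at level $C_n-C\approx\sqrt{2\log n}$, of order $n^{-1}\sqrt{\log n}\,e^{C\sqrt{2\log n}}$. Summed over the indices, this gives $n^{-1/2+o(1)}\to0$. Hence $\xi_n=\xi_n^c+o_P(1)$, where $\xi_n^c$ is built from the jump-free indices with the null Gaussian approximation. Removing a sparse index set only changes the count in the normalization negligibly.

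\textbf{Step 3 (joint limit).} Combining Steps 1 and 2 gives $(Z_n^{AJ},\xi_n)=(Z_n^{AJ,c}+\mu_D,\xi_n^c)+o_P(1)$. Theorem \ref{thm:H0indep}, applied to $X^c$, yields $(Z_n^{AJ,c},\xi_n^c)\toD(Z,\xi)$ with $Z\ind\xi$, and Slutsky's lemma then concludes. One subtlety is that $X^c$ here has the same $(b,\sigma)$ as the observed path and is unaffected by $\mu_n$, which is true because the jumps do not feed back into $\sigma$. The main obstacle is Step 2's union bound, which must hold jointly with the feasible standardization and conditionally on the leverage-driven volatility. I would handle this by using the appendix's uniform bound $\max_i|\hat V_{n,i}/\sigma^2_{t_i}\Delta_n-1|=o_P(1/\log n)$ and conditioning on $\mu_n$, which is independent of the Gaussian drivers.
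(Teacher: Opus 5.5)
Your route is the paper's: split $X^{(n)}=X^0+J^{(n)}$, import Theorem~\ref{thm:H0indep} for $X^0$, get the shift $\theta d_{p,k}/m_p$ from the expected increment $U_{p,k}(E+Ye_\ell)-U_{p,k}(E)$ on jump blocks with $O(\sqrt{\Delta_n})$ variance, check that $\hat\tau_{n,0}$ and the bipower denominators are unaffected, and remove the $O_p(\Delta_n^{-1/2})$ jump indices by a union bound at level $C_n-C$.

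\textbf{The counting claim is false.} A fine interval carries two or more jumps with probability $O(\Delta_n)$, but there are $n\asymp\Delta_n^{-1}$ intervals. Likewise there are $m_n\asymp\Delta_n^{-1}$ coarse blocks, each with probability $O(\Delta_n)$. So the expected number of multi-jump increments or blocks is $O(1)$, not $O(\sqrt{\Delta_n})$. Thus ``each block carries at most one jump with probability tending to one'' fails, and the number of such blocks has a non-vanishing Poisson-type limit. What is true, and what the paper uses in Lemma~\ref{lem:densecount}, is:
\begin{itemize}
\item $\#\mathcal B_n^{(2)}=O_p(1)$;
\item $\max_i N_{i,n}^J\le 2$ with probability tending to one, since $n\Delta_n^{3/2}\to0$.
\end{itemize}
The repair is cheap. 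A multi-jump block contributes $\sigma^p\Delta_n^{p/2}D_{j,n}$ with $D_{j,n}=O_p(1)$ by the compact support of $F$, so the total is $O_p(\sqrt{\Delta_n})$ after your normalization. In Step~2, bound the shift by $|\eta_{i,n}|\le 2M_F$ rather than by a single mark.

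\textbf{Smaller points.}
\begin{itemize}
\item The normalization is $B_n(p)\approx m_pA_p\Delta_n^{p/2-1}$, not $\Delta_n^{1-p/2}$.
\item Uniform control of $\hat V_{n,i}$ does not follow from $K_n\Delta_n\to0$ or from the average window count $K_n\sqrt{\Delta_n}$. You need the maximum over $O(n)$ windows of the number of jump-affected products. A Bernstein--union bound gives this as $O_p(\sqrt{\Delta_n})+O_p(\sqrt{\log n/K_n})$, which is not $O(\sqrt{\Delta_n})$ in general under \eqref{eq:Kncond}. It is $K_n/(\log n)^3\to\infty$ that makes it $o_p(a_n^2)$.
\item Your Step~3 coupling $\xi_n=\xi_n^c+o_P(1)$ is the right way to obtain the joint limit. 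Take $\xi_n^c=\xi_n(X^0)$ over all indices, so that Theorem~\ref{thm:H0indep} applies verbatim, rather than a maximum over jump-free indices with an adjusted count. Because the jump index set is independent of $W$, the same union bound shows $\max_{i\in\mathcal J_n}|E_i|$ also stays below $C_n+a_ny$. Hence the two maxima coincide with probability tending to one.
\end{itemize}
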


\begin{corollary}[Asymptotic power under the dense local alternative]\label{cor:densepower}
Under the assumptions of Theorem \ref{thm:densealt}, the power of the level-$\alpha$ Cauchy
combination test satisfies
\begin{equation*}
\Pbb\paren{p_n^C\le \alpha}\to \beta_D(\alpha;\mu_D),
\end{equation*}
where
\begin{align*}
\beta_D(\alpha;\mu_D)
=
\int_{\R}&
\paren{
\frac12-
\frac{1}{\pi}
\arctan\paren{2c_\alpha-A(z)}
}
\phi(z-\mu_D)\dd z,\\
A(z)=&\tan\brac{\pi\paren{\frac12-2\paren{1-\Phi\paren{\abs{z}}}}},
\end{align*}
$\Phi(\cdot)$ and $\phi(\cdot)$ are the cumulative distribution function and probability density function of the standard normal distribution, respectively.
\end{corollary}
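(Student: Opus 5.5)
The corollary follows from Theorem~\ref{thm:densealt} by the continuous mapping theorem, plus a continuity-set argument. The plan has three steps: rewrite the event $\{p_n^C\le\alpha\}$ in terms of $(Z_n^{AJ},\xi_n)$, pass to the limit, and evaluate the limiting probability by conditioning on $Z_D$.

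\emph{Rewriting the event.} Since $p\mapsto \frac12-\frac1\pi\arctan(p)$ is strictly decreasing, $p_n^C\le\alpha$ is equivalent to $\T_n^C\ge \tan(\pi(\frac12-\alpha))=\cot(\pi\alpha)=c_\alpha$. Write $\T_n^C=g(Z_n^{AJ},\xi_n)$, where
\[
g(z,y)=\tfrac12 A(z)+\tfrac12\tan\brac{\pi\paren{\tfrac12-1+\exp\set{-e^{-y}}}} .
\]
Here $A(z)=\tan[\pi(\frac12-2(1-\Phi(|z|)))]$, as in the statement. The map $g$ is continuous from $\R^2$ to $\R$: the first summand is continuous because $2(1-\Phi(|z|))\in(0,1]$. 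At $z=0$ it equals $\tan(-\pi/2)$ as a limit, but $\Phi(|z|)=1/2$ only at $z=0$, a Lebesgue-null point. So $g$ is continuous on $\R^2\setminus(\{0\}\times\R)$, a set of full measure under the limit law, and this suffices for the continuous mapping theorem.

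\emph{Passing to the limit.} Theorem~\ref{thm:densealt} and the continuous mapping theorem give $\T_n^C\toD g(Z_D,\xi)$. To conclude $\Pbb(\T_n^C\ge c_\alpha)\to\Pbb(g(Z_D,\xi)\ge c_\alpha)$, we need $\Pbb(g(Z_D,\xi)=c_\alpha)=0$. This holds because, conditionally on $Z_D$, the variable $g(Z_D,\xi)$ has a continuous distribution, as shown next.

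\emph{Evaluating the limit.} Since $\exp\{-e^{-\xi}\}$ is standard uniform, $U:=1-\exp\{-e^{-\xi}\}\sim\mathrm{Unif}(0,1)$, and $\tan(\pi(\frac12-U))$ is standard Cauchy. Call it $C$. By the independence $Z_D\ind\xi$ from Theorem~\ref{thm:densealt}, $C$ is independent of $Z_D$. Conditioning on $Z_D=z$,
\[
\Pbb\paren{\tfrac12A(z)+\tfrac12C\ge c_\alpha}=\Pbb\paren{C\ge 2c_\alpha-A(z)}=\tfrac12-\tfrac1\pi\arctan\paren{2c_\alpha-A(z)}.
\]
Integrating against the $N(\mu_D,1)$ density $\phi(z-\mu_D)$ gives $\beta_D(\alpha;\mu_D)$.

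The only delicate point is the boundary issue in the second step. It is handled by the conditional continuity of the Cauchy component, since $C$ has a continuous law and the exceptional set $\{Z_D=0\}$ has probability zero. Everything else is routine.
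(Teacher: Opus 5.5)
Your proposal is correct and follows the same route as the paper: apply the continuous mapping theorem to the joint limit $(Z_D,\xi)$ of Theorem~\ref{thm:densealt}, identify $\tan[\pi(\frac12-p_n^{LM})]$ in the limit with a standard Cauchy variable $C$ independent of $Z_D$, and integrate $\Pbb(C>2c_\alpha-A(z))$ against $\phi(z-\mu_D)$. Your explicit treatment of the singularity of $A$ at $z=0$ and of the no-atom condition $\Pbb(\frac12A(Z_D)+\frac12C=c_\alpha)=0$ supplies details the paper leaves implicit; just drop your opening claim that $g$ is continuous on all of $\R^2$, since $A(0)$ is not finite.
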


\subsection{Fixed separated finite-activity jumps}

\begin{assumption}[Fixed finite-activity jumps]\label{ass:H1fixed}
Under $H_1^F$,
\begin{equation*}
X_t=X_0+\int_0^t b_s\dd s+\int_0^t \sigma_s\dd W_s+\sum_{q=1}^{N_T}\kappa_q\1\set{\tau_q\le t},
\qquad 0\le t\le T,
\end{equation*}
where $(b,\sigma)$ satisfy the conditions in Assumption
\ref{ass:H0}. Moreover,
\begin{equation*}
N_T<\infty\ \text{a.s.},
\qquad
0<\tau_1<\cdots<\tau_{N_T}<T\ \text{a.s.},
\qquad
\inf_{1\le q\le N_T}\abs{\kappa_q}>0\ \text{a.s.},
\end{equation*}
and
\begin{equation*}
\min_{q\ne r}\abs{\tau_q-\tau_r}>0\ \text{a.s.}
\end{equation*}
\end{assumption}

Define
\begin{equation*}
B_p=\sum_{q=1}^{N_T}\abs{\kappa_q}^p,\qquad
\tau_F^2=
\frac{p^2(k-1)}{B_p^2}
\sum_{q=1}^{N_T}\abs{\kappa_q}^{2p-2}\sigma_{\tau_q}^2.
\end{equation*}
The jump-centered oracle AJ statistic is
\begin{equation}\label{eq:oraclefixedAJ}
\widetilde Z_{n,1}^{AJ}=\frac{\Delta_n^{-1/2}\paren{R_n^{AJ}-1}}{\tau_F}.
\end{equation}

\begin{theorem}[Fixed finite-activity jumps]\label{thm:fixedalt}
Suppose Assumption \ref{ass:H1fixed} and \eqref{eq:Kncond} hold. Then
\begin{equation*}
\paren{\widetilde Z_{n,1}^{AJ},p_n^{LM}}\toD (Z,0),
\qquad Z\sim N(0,1).
\end{equation*}
In particular,
\begin{equation*}
\Pbb\paren{p_n^C\le \alpha}\to 1,
\qquad 0<\alpha<1.
\end{equation*}
\end{theorem}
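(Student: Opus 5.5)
We prove the two components separately. Since the second limit is the degenerate constant $0$, joint convergence then follows from Slutsky's lemma. Throughout, we localize so that $b,\sigma,b^{(0)},\sigma^{(0)},\beta,\gamma$ are bounded and $\sigma\in[\underline\sigma,\overline\sigma]$. We also work on the event $\Omega_n$, of probability tending to one, on which each jump time $\tau_q$ lies in a distinct fine interval $((i_q-1)\Delta_n,i_q\Delta_n]$ and a distinct coarse block. Moreover, on $\Omega_n$ the separation $\min_{q\ne r}|\tau_q-\tau_r|>0$ ensures that each local window $\{i-K_n+1,\dots,i\}$ contains at most one jump; this uses $K_n\Delta_n\to0$.

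\textbf{Step 1 (AJ component).} Write $X=X^c+J$. On $\Omega_n$ the jump part of $B_n(p)$ is $\sum_q|\kappa_q+\Delta_{i_q}^nX^c|^p$. The corresponding term in $B_n^{(k)}(p)$ is $\sum_q|\kappa_q+\Delta_{j_q}^{(k)}X^c|^p$, where $\Delta^{(k)}_{j_q}X^c$ is the sum of the $k$ continuous increments in the block $j_q$ containing $\tau_q$. The remaining continuous contributions are $O_P(n\Delta_n^{p/2})=O_P(\Delta_n^{p/2-1})$, which is $o_P(\Delta_n^{1/2})$ because $p>3$. A Taylor expansion $|\kappa+h|^p=|\kappa|^p+p|\kappa|^{p-1}\sgn(\kappa)h+O(h^2)$ then gives
\begin{equation*}
\Delta_n^{-1/2}\paren{R_n^{AJ}-1}=\frac{p}{B_p}\sum_{q=1}^{N_T}|\kappa_q|^{p-1}\sgn(\kappa_q)\,\Delta_n^{-1/2}\paren{\Delta_{j_q}^{(k)}X^c-\Delta_{i_q}^nX^c}+o_P(1).
\end{equation*}
The difference $\Delta_{j_q}^{(k)}X^c-\Delta_{i_q}^nX^c$ is the sum of the $k-1$ other fine increments in the block. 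After rescaling, these are approximately $\sigma_{\tau_q}\sqrt{k-1}\,N_q$, because $\sigma$ is continuous, the drift contributes $O(\Delta_n)$, and the Brownian increments are independent of $(\tau_q,\kappa_q)$ up to negligible position effects. This is the step I expect to be the main obstacle: $\sigma_{\tau_q}$, the $\kappa_q$ and the $N_q$ must be decoupled rigorously. I would try stable convergence in law for the Brownian increments around stopping-type times, as in Jacod's theory (Ait-Sahalia--Jacod 2009, Thm.~4). This yields an $\F$-conditionally Gaussian mixed limit with conditional variance $\tau_F^2$, so dividing by $\tau_F$ (which is $\F$-measurable) gives an exact $N(0,1)$ limit, independent of $\F$.

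\textbf{Step 2 (LM component).} Fix $q$ and take $i=i_q$. For all $i_q\ge K_n$, which holds on $\Omega_n$ since $\tau_1>0$, the window of $\hat V_{n,i_q}$ contains only the jump increment itself. That increment enters at most two bipower products, contributing $O_P(|\kappa_q|\Delta_n^{1/2}/K_n)=o_P(1)$. The remaining terms give $\hat V_{n,i_q}=\Delta_n\sigma^2_{\tau_q-}(1+o_P(1))$ by the local bipower consistency established in the appendix for the null. Hence $|L_{n,i_q}|\ge |\kappa_q|/(2\overline\sigma\sqrt{\Delta_n})$ with probability tending to one. Since $M_n\ge |L_{n,i_q}|$, we get $\xi_n\ge (c\Delta_n^{-1/2}-C_n)/a_n\to\infty$. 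Therefore $p_n^{LM}=1-\exp\{-e^{-\xi_n}\}\toP0$.

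\textbf{Step 3 (Cauchy test).} Because $p_n^{LM}\toP0$, we have $\tan[\pi(1/2-p_n^{LM})]\toP+\infty$. The AJ term must be shown not to go to $-\infty$ too fast. Note $R_n^{AJ}\toP1\neq k^{p/2-1}$, so $|Z_n^{AJ}|\to\infty$ provided $\hat\tau_{n,0}=O_P(1)$. To check this, observe that $\hat A_{p,n}\asymp\Delta_n^{1-p/2}B_p$ and $\hat A_{2p,n}\asymp\Delta_n^{1-p}B_{2p}$, so the ratio $\hat A_{2p,n}/\hat A_{p,n}^2$ is $O_P(\Delta_n^{-1})$, giving $\hat\tau_{n,0}\asymp\Delta_n^{-1/2}$. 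In that case $Z_n^{AJ}=O_P(1)$ rather than diverging. In either case $p_n^{AJ}\in[0,1]$, and the tangent term satisfies $\tan[\pi(1/2-p)]\ge\tan(-\pi/2+\pi\epsilon)$ unless $p_n^{AJ}$ is near $1$. Near $1$ the tangent is at most of order $1/(1-p_n^{AJ})$ in magnitude, which is $O_P(1)$ because $Z_n^{AJ}$ does not concentrate at $0$. The LM term, by contrast, is of order $1/p_n^{LM}\ge e^{\xi_n}$, which is super-polynomially large. Hence $\T_n^C\toP+\infty$ and $\Pbb(p_n^C\le\alpha)\to1$.
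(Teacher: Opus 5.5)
Your proposal is correct. Steps 1 and 2 follow the paper's proof essentially verbatim:
\begin{itemize}
\item one jump per coarse block and per bipower window;
\item the first-order expansion of $|\kappa_q+h|^p$, which leaves only the $k-1$ non-jump increments of each jump block;
\item the $O_P(\Delta_n^{p/2-1})=o_P(\Delta_n^{1/2})$ continuous remainder, which is exactly where $p>3$ enters;
\item $|L_{n,i_q}|\asymp\Delta_n^{-1/2}$.
\end{itemize}
Your appeal to Jacod-type stable convergence is the right tool for decoupling $\sigma_{\tau_q}$, $\kappa_q$ and the Gaussian increments, and for dividing by the random $\tau_F$. The paper only asserts this step.

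One sharpening in Step 2: the jump's contribution $O_P(\Delta_n^{1/2}/K_n)$ to $\hat V_{n,i_q}$ must be $o_P(\Delta_n)$, not merely $o_P(1)$. This is exactly where $K_n\sqrt{\Delta_n}\to\infty$ in \eqref{eq:Kncond} is used.

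The genuine difference is Step 3, where you are more careful than the paper. The paper concludes from $\T_n^C\ge\frac12\tan[\pi(\frac12-p_n^{LM})]$, which tacitly requires $p_n^{AJ}\le\frac12$. Your scaling $\hat\tau_{n,0}\asymp\Delta_n^{-1/2}$ in fact gives
\[
Z_n^{AJ}\toP\frac{(1-k^{p/2-1})\sqrt{k\,m_{2p}}\,B_p}{\varsigma_{p,k}\sqrt{B_{2p}}},\qquad B_{2p}=\sum_{q}\abs{\kappa_q}^{2p},
\]
which is a finite, a.s.\ negative limit. For a single jump with $k=2$, $p=4$ (where $\varsigma_{4,2}^2=960$), this limit equals $-\sqrt{7/32}\approx-0.47$. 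Then $p_n^{AJ}\toP$ roughly $0.64$, so the AJ Cauchy term is negative with probability tending to one. The paper's inequality therefore fails, even though its conclusion survives.

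What is really needed is that $p_n^{AJ}$ stays bounded away from $1$, so that the AJ term is $O_P(1)$ while the LM term diverges. Your argument delivers this once ``$Z_n^{AJ}$ does not concentrate at $0$'' is replaced by the explicit nonzero limit above. This mirrors what the paper itself does in the proof of Theorem \ref{thm:noisefixed}, and it is what actually justifies $\Pbb(p_n^C\le\alpha)\to1$ in the noiseless case.
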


\begin{remark}
The statistic $\widetilde Z_{n,1}^{AJ}$ is an oracle normalization because $\tau_F^2$ depends on the fixed jump times and sizes. It is used only to describe the centered AJ fluctuation under fixed finite-activity jumps. The consistency statement for the feasible Cauchy combination does not require estimating $\tau_F$: it follows from the feasible Lee--Mykland component, for which $p_n^{LM}\toP0$.
\end{remark}

\section{Extension to additive microstructure noise}
\label{sec:noise}

We now extend the adaptive construction to the case where the observed process is contaminated by
additive market microstructure noise. The continuous-path assumptions on the latent process are not
changed. Under the noisy continuous-path null we keep Assumption \ref{ass:H0}. Under the noisy
alternatives we keep exactly the same latent-process architecture as in Section \ref{sec:h1}; only the
local jump scale in the dense regime is retuned from $\sqrt{\Delta_n}$ to $\Delta_n^{1/4}$ so as to match
the slower $\Delta_n^{-1/4}$ normalization induced by pre-averaging. The genuinely new assumptions in
this section concern the observation error. This keeps the paper's process assumptions aligned across
its noiseless and noisy parts and makes the role of the additional noise layer transparent.\par

\begin{assumption}[Additive microstructure noise]\label{ass:noise}
The observations are
\begin{equation*}
Y_{t_i}=X_{t_i}+\epsilon_i,
\qquad i=0,1,\dots,n,
\end{equation*}
where $\epsilon_0,\epsilon_1,\dots,\epsilon_n$ are i.i.d., independent of $\F_T$, and satisfy
\begin{equation*}
\E(\epsilon_0)=0,
\qquad
\E(\epsilon_0^2)=\omega^2>0,
\qquad
\|\epsilon_0\|_{\psi_2}\le K_\epsilon<\infty.
\end{equation*}
% Equivalently, there exists a finite constant $K_\epsilon^\star$ such that
% \begin{equation*}
% \E\exp(t\epsilon_0)\le \exp\paren{\frac12 K_\epsilon^{\star 2} t^2},
% \qquad t\in\R.
% \end{equation*}
\end{assumption}

The Gaussian law is therefore not imposed on the observation error. The Gaussian distribution enters
this section only through the asymptotic law of the averaged noise blocks. In the pre-averaged sum
component, the relevant noise term is a weighted sum of many i.i.d. sub-Gaussian variables and hence
is asymptotically Gaussian. In the max component, the same averaging mechanism together with a
Cram\'er-type moderate deviation argument yields the Gaussian tail approximation needed for the
Gumbel limit at the extreme-value scale. Whenever squares or products of noise increments appear, we use the standard implication that products and squares of sub-Gaussian variables are sub-exponential, so Bernstein/Rosenthal bounds are applied at the sub-exponential rather than sub-Gaussian level. The sum component follows the pre-averaging construction of
\citet{AitSahaliaJacodLi2012}. The max component follows the local-average extreme-value
construction of \citet{LeeMykland2012}. The main point of the section is that, after marginal
calibration, the two noisy components remain asymptotically independent under the continuous-path
null and under the noisy dense local alternative.

\subsection{The pre-averaged sum statistic}

Fix an even integer $p\ge 4$. Let $g$ and $h$ be continuous weight functions on $\R$, piecewise
$C^1$, supported on $[0,1]$, with
\[
\bar g(2)=\int_0^1 g(u)^2\dd u>0,
\qquad
\bar h(2)=\int_0^1 h(u)^2\dd u>0.
\]
For a generic weight $\phi\in\{g,h\}$ and $q>0$, write
\[
\bar\phi(q)=\int_0^1 \abs{\phi(u)}^q\dd u,
\qquad
\bar\phi'(q)=\int_0^1 \abs{\phi'(u)}^q\dd u.
\]
Let $k_n\to\infty$ satisfy
\begin{equation}\label{eq:noise_kn}
k_n\sqrt{\Delta_n}\to\theta\in(0,\infty),
\qquad
\frac{k_n}{\log n}\to\infty.
\end{equation}
Define
\[
\phi_j^n=\phi\paren{\frac{j}{k_n}},
\qquad
\Delta_j\phi^n=\phi_j^n-\phi_{j-1}^n,
\qquad
j=0,1,\dots,k_n,
\]
with $\phi_0^n=\phi_{k_n}^n=0$. For $i=0,\dots,n-k_n$, set
\begin{equation*}
\bar Y_i^n(\phi)=\sum_{j=1}^{k_n-1}\phi_j^n\Delta_{i+j}^nY,
\qquad
\hat Y_i^n(\phi)=\sum_{j=1}^{k_n}\paren{\Delta_j\phi^n\,\Delta_{i+j}^nY}^2.
\end{equation*}
For integers $q\ge 0$ and $r\ge 0$ with $q+2r=p$, define
\begin{equation*}
V_n(Y,\phi,q,r)=\sum_{i=0}^{n-k_n}\abs{\bar Y_i^n(\phi)}^q\abs{\hat Y_i^n(\phi)}^r.
\end{equation*}
Let $\rho_0^{(p)},\dots,\rho_{p/2}^{(p)}$ be the unique solution of the triangular system
\begin{equation}\label{eq:noise_rho}
\rho_0^{(p)}=1,
\qquad
\sum_{l=0}^{j}2^l m_{2j-2l}\binom{p-2l}{p-2j}\rho_l^{(p)}=0,
\qquad
j=1,\dots,\frac{p}{2},
\end{equation}
and set
\begin{equation*}
V_n(Y,\phi,p)=\sum_{l=0}^{p/2}\rho_l^{(p)}V_n(Y,\phi,p-2l,l).
\end{equation*}
Define
\begin{equation*}
\gamma=\frac{\bar g(2)}{\bar h(2)},
\qquad
\gamma'=\frac{\bar g(p)}{\bar h(p)},
\qquad
\gamma''=\frac{\gamma^{p/2}}{\gamma'},
\qquad
\gamma''>1,
\end{equation*}
The inequality $\gamma''>1$ is a calibration restriction on the pair $(g,h)$, not an automatic
consequence of the preceding smoothness assumptions. It is easy to satisfy. For example, let
$w(u)=\sin(\pi u)$ on $[0,1]$, extended by zero outside $[0,1]$, and take
$g=w$ and $h=w^a$ for any $a>1$. If $L(q)=\log\int_0^1w(u)^q\dd u$, then $L$ is strictly convex and
\[
\frac{\dd}{\dd a}\left\{\frac12L(2a)-\frac1pL(pa)\right\}
=L'(2a)-L'(pa)<0,
\qquad p>2.
\]
Thus $a\mapsto \norm{w^a}_2/\norm{w^a}_p$ is strictly decreasing, so
\[
\gamma''
=\left(\frac{\norm{g}_2/\norm{g}_p}{\norm{h}_2/\norm{h}_p}\right)^p>1.
\]
The noise-robust sum statistic is
\begin{equation*}
R_n^{PA}=\frac{V_n(Y,g,p)}{\gamma'V_n(Y,h,p)}.
\end{equation*}
\ref{app:noise_sum_aux} constructs a block self-normalizer $\hat\varsigma_n^{PA}$ and proves
the central limit theorem
\begin{equation}\label{eq:noise_ZPA}
Z_n^{PA}=\frac{\Delta_n^{-1/4}\paren{R_n^{PA}-\gamma''}}{\hat\varsigma_n^{PA}},
\qquad
p_n^{PA}=2\paren{1-\Phi\paren{\abs{Z_n^{PA}}}}.
\end{equation}

\subsection{The local-average max statistic}

Let $M_n\to\infty$ satisfy
\begin{equation}\label{eq:noise_Mn}
M_n\sqrt{\Delta_n}\to\lambda\in(0,\infty),
\qquad
\frac{M_n}{\log n}\to\infty.
\end{equation}
For $j=0,\dots,n-M_n$, define the local averages $\bar Y_{n,j}$, and for
$j=0,\dots,n-2M_n$, define their adjacent-window difference by
\begin{equation}\label{eq:noise_localavg}
\bar Y_{n,j}=\frac{1}{M_n}\sum_{u=0}^{M_n-1}Y_{t_{j+u}},
\qquad
L_{n,j}=\bar Y_{n,j+M_n}-\bar Y_{n,j}.
\end{equation}
To obtain asymptotically independent windows, we work on the disjoint grid
\begin{equation*}
\mathcal J_n=\set{0,2M_n,4M_n,\dots,2M_n(N_n-1)},
\qquad
N_n=\left\lfloor \frac{n-2M_n}{2M_n}\right\rfloor+1.
\end{equation*}
For $j\in\mathcal J_n$, define the oracle variance proxy
\begin{equation}\label{eq:noise_vnj}
v_{n,j}^2=2\omega^2+\sigma_{t_j}^2\paren{\frac{2}{3}M_n^2\,\Delta_n+\frac{1}{3}\,\Delta_n},
\end{equation}
and the oracle standardized local-average statistic
\begin{equation}\label{eq:noise_Xnj}
X_{n,j}=\frac{\sqrt{M_n}\,L_{n,j}}{v_{n,j}}.
\end{equation}
Only the variance proxy $v_{n,j}$ needs to be estimated. We therefore use a plug-in array based on the
TSRSV spot-variance estimator of \citet{ZuBoswijk2014}. Let
\begin{equation}\label{eq:noise_tsrsv_tuning}
K_n^{SV}=\lfloor c_K\Delta_n^{-2/3}\rfloor,
\qquad
H_n=\lfloor c_h\Delta_n^{-5/6}\rfloor,
\qquad
h_n=H_n\Delta_n,
\end{equation}
where $c_K,c_h\in(0,\infty)$ are fixed constants. With this choice $h_n\asymp\Delta_n^{1/6}$, $(K_n^{SV})^{-2}(h_n\Delta_n)^{-1}\asymp\Delta_n^{1/6}$, and $K_n^{SV}\Delta_n/h_n\asymp\Delta_n^{1/6}$, matching the bias, noise and discretization terms in the filtering TSRSV expansion.  Define the noise-variance estimator
\begin{equation}\label{eq:noise_omegahat}
\hat\omega_n^2=\frac{1}{2n}\sum_{i=1}^n\paren{\Delta_i^nY}^2.
\end{equation}
For $r\ge K_n^{SV}$, let
\begin{equation*}
\operatorname{TSRV}_n(t_r;K_n^{SV})
=
\frac{1}{K_n^{SV}}\sum_{i=K_n^{SV}}^{r}\paren{Y_{t_i}-Y_{t_{i-K_n^{SV}}}}^2
-
\frac{r-K_n^{SV}+1}{K_n^{SV}r}\sum_{i=1}^{r}\paren{Y_{t_i}-Y_{t_{i-1}}}^2,
\end{equation*}
with $\operatorname{TSRV}_n(t_0;K_n^{SV})=0$. For $j\in\mathcal J_n$, let
\begin{equation}\label{eq:noise_tsrsv_sigmahat}
j^{\star}=j\vee H_n,
\qquad
\hat\sigma_{n,j}^2
=
\frac{\operatorname{TSRV}_n(t_{j^{\star}};K_n^{SV})-\operatorname{TSRV}_n(t_{j^{\star}-H_n};K_n^{SV})}{h_n},
\end{equation}
and define
\begin{equation}\label{eq:noise_vhatXhat}
\hat v_{n,j}^2=2\hat\omega_n^2+\hat\sigma_{n,j}^2\paren{\frac{2}{3}M_n^2\,\Delta_n+\frac{1}{3}\,\Delta_n},
\qquad
\hat X_{n,j}=\frac{\sqrt{M_n}\,L_{n,j}}{\hat v_{n,j}}.
\end{equation}
Lemma \ref{lem:noise_tsrsv_feasible} in \ref{app:noise_max_aux} proves that this plug-in array
satisfies
\begin{equation}\label{eq:noise_feas}
\frac{1}{B_{N_n}}\max_{j\in\mathcal J_n}\abs{\hat X_{n,j}-X_{n,j}}\toP 0.
\end{equation}
More generally, any feasible array satisfying \eqref{eq:noise_feas} may be used. The feasible max
statistic is then
\begin{equation}\label{eq:noise_MLA}
\mathcal M_n^{LA}=\max_{j\in\mathcal J_n}\abs{\hat X_{n,j}}.
\end{equation}
Let
\begin{equation*}
A_{N_n}=\sqrt{2\log N_n}-\frac{\log\pi+\log\log N_n}{2\sqrt{2\log N_n}},
\qquad
B_{N_n}=\frac{1}{\sqrt{2\log N_n}},
\end{equation*}
and define
\begin{equation}\label{eq:noise_xi}
\xi_n^{LA}=\frac{\mathcal M_n^{LA}-A_{N_n}}{B_{N_n}},
\qquad
p_n^{LA}=1-\exp\set{-e^{-\xi_n^{LA}}}.
\end{equation}

\subsection{The noisy Cauchy combination test}

For the joint proof we fix a block length $r_n$ such that
\begin{equation}\label{eq:noise_rn}
r_n\to\infty,
\qquad
\frac{r_n}{k_n}\to\infty,
\qquad
\frac{r_n}{M_n}\to\infty,
\qquad
r_n\Delta_n\to 0,\qquad \frac{k_n+M_n}{r_n^2\Delta_n}\to0.
\end{equation}
Equivalently, set $r_n=\Delta_n^{-a}$ with $3/4<a<1$ when $k_n\asymp M_n\asymp \Delta_n^{-1/2}$. Define the noisy Cauchy combination statistic
\begin{equation}\label{eq:noise_Cauchy}
\mathcal T_n^{C,N}
=
\frac12\tan\brac{\pi\paren{\frac12-p_n^{PA}}}
+
\frac12\tan\brac{\pi\paren{\frac12-p_n^{LA}}}
\end{equation}
and the associated combined $p$-value
\begin{equation}\label{eq:noise_Cauchyp}
p_n^{C,N}=\frac12-\frac{1}{\pi}\arctan\paren{\mathcal T_n^{C,N}}.
\end{equation}

\begin{theorem}[Noisy null: marginal limits and asymptotic independence]
\label{thm:noisejoint}
Suppose Assumptions \ref{ass:H0} and \ref{ass:noise} hold, and suppose \eqref{eq:noise_kn},
\eqref{eq:noise_Mn}, and \eqref{eq:noise_rn} hold. Then, 
% for every $x,y\in\R$,
% \begin{equation*}
% \Pbb\paren{Z_n^{PA}\le x,\ \xi_n^{LA}\le y}
% \to
% \Phi(x)\exp\set{-e^{-y}}.
% \end{equation*}
% Equivalently,
\begin{equation*}
\paren{Z_n^{PA},\xi_n^{LA}}\toD (Z,\xi),
\end{equation*}
where
\begin{align*}
   Z\sim N(0,1),\qquad \Pbb(\xi\le y)=\exp\{-e^{-y}\},\qquad Z\ind \xi.
\end{align*}
\end{theorem}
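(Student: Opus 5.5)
The plan is a big-block/small-block decoupling argument, run conditionally on the volatility path. Both marginal limits are taken as given from the appendix ingredients: the pre-averaged CLT for $Z_n^{PA}$ in \ref{app:noise_sum_aux}, and the Gumbel limit for $\xi_n^{LA}$ together with the feasibility bound \eqref{eq:noise_feas}. The main work is the joint statement.

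\textbf{Step 1: reduce to frozen, oracle versions.}
\begin{itemize}[leftmargin=*]
\item By \eqref{eq:noise_feas}, $\xi_n^{LA}$ may be replaced by the oracle $\max_{j\in\mathcal J_n}|X_{n,j}|$ centered at $A_{N_n}$ and scaled by $B_{N_n}$.
\item Since $\hat\varsigma_n^{PA}$ converges in probability to an $\F_T$-measurable limit, it may be replaced by that limit.
\item Partition $[0,T]$ into consecutive big blocks of $r_n$ observations, separated by small gaps of length $k_n+M_n$. By \eqref{eq:noise_rn}, the gaps carry a vanishing share of observations.
\item Freeze $b$ and $\sigma$ at each block's left endpoint. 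The stochastic-Taylor expansion of the appendix and the rate $r_n\Delta_n\to0$ show that freezing changes $\Delta_n^{-1/4}(R_n^{PA}-\gamma'')$ by $o_P(1)$, and changes each $X_{n,j}$ uniformly by $o_P(B_{N_n})$.
\item The condition $(k_n+M_n)/(r_n^2\Delta_n)\to0$ controls the boundary effects created by dropping the gap contributions.
\end{itemize}

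\textbf{Step 2: write the sum statistic as a martingale sum over blocks.} After linearizing the ratio, $\Delta_n^{-1/4}(R_n^{PA}-\gamma'')$ becomes $\sum_b \zeta_{n,b}+o_P(1)$. Each $\zeta_{n,b}$ is a centered functional of the Brownian increments and noise inside big block $b$, with conditional variance summing to the limiting variance. The max statistic becomes $\max_b \eta_{n,b}$, where $\eta_{n,b}$ is the maximum of $|X_{n,j}|$ over grid points whose windows lie in block $b$. Because the grid $\mathcal J_n$ uses disjoint windows, the pairs $(\zeta_{n,b},\eta_{n,b})$ are conditionally independent across blocks, given the frozen volatilities.

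\textbf{Step 3: show the joint limit factorizes.} Fix $x,y$ and let $E_n=\{\max_j|X_{n,j}|\le u_n(y)\}$, where $u_n(y)=A_{N_n}+B_{N_n}y$. The goal is
\[
\Pbb\bigl(Z_n^{PA}\le x,\ E_n\bigr)-\Phi(x)\,\Pbb(E_n)\to0 .
\]

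Write $E_n^c$ as the union of the exceedance events $\{|X_{n,j}|>u_n\}$. Each exceedance has probability of order $N_n^{-1}$, by a Cramér moderate-deviation bound for the averaged sub-Gaussian noise plus a Gaussian latent part. Apply Bonferroni inclusion–exclusion to a fixed depth $\ell$, in the style of \citet{FengEtAl2022}. It then suffices to show
\[
\sum_{j_1<\dots<j_\ell}\Bigl|\Pbb\bigl(Z_n^{PA}\le x,\ |X_{n,j_1}|>u_n,\dots,|X_{n,j_\ell}|>u_n\bigr)-\Phi(x)\,\Pbb\bigl(|X_{n,j_1}|>u_n,\dots\bigr)\Bigr|\to0 .
\]

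For a fixed tuple, remove from the sum statistic the $\ell$ big blocks containing $j_1,\dots,j_\ell$, together with their pre-averaging neighbourhoods. Each block contributes $O_P\bigl((r_n\Delta_n)^{1/2}\bigr)=o_P(1)$. The resulting truncated sum is conditionally independent of the exceedance events. The truncated sum still satisfies the CLT, uniformly over tuples, by a Lindeberg/Berry–Esseen bound for the independent blocks. The approximation error is therefore $o(1)$, multiplied by the probability of the exceedance event. Summing over tuples gives $o(1)\cdot O\bigl(N_n^{\ell}N_n^{-\ell}\bigr)=o(1)$.

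Finally, integrate out the $\F_T$-conditioning. The Gaussian limit is mixed normal with $\F_T$-measurable variance, and the self-normalization in \eqref{eq:noise_ZPA} makes it standard normal. The Gumbel limit does not depend on $\sigma$, so the product form survives unconditioning. Combining this with the marginal limits proves the theorem.

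\textbf{Expected main obstacle.} The hardest step is the uniform control in Step 3. The removal of blocks must be negligible uniformly over tuples, which requires an anti-concentration or smoothing argument for $Z_n^{PA}$. The conditional CLT must hold at a rate that survives multiplication by the exceedance probabilities, which are moderate-deviation events driven by non-Gaussian noise. The first attempt would be to smooth the indicator of $\{Z\le x\}$ and use a conditional Lindeberg replacement on the blocks, combined with Cramér-type bounds for the local averages.
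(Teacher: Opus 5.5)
Your reduction to the oracle pair, followed by a Bonferroni expansion in which the big blocks carrying the exceedances are deleted from the sum, is a legitimate alternative to the paper's route. It is essentially how the paper proves the noiseless Theorem~\ref{thm:H0indep}. For the noisy case the paper instead conditions sequentially on the block filtration $\mathcal H_{n,b-1}$:
\begin{itemize}
\item In Lemma~\ref{lem:noise_block_decouple} it shows that $\psi_{n,b}(t,y)$ equals $\varphi_{n,b}(t)\chi_{n,b}(y)$ up to errors summable over $b$.
\item To get there it deletes only the $O(k_n+M_n)$ pre-averaging terms that overlap each local-average window, and pays for them with H\"older's inequality against the exceedance probability.
\item It then telescopes the product over blocks.
\end{itemize}

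\textbf{Where your argument breaks.} The gap is in the device driving your Steps~2--3. You condition on the volatility path and assert three things: the pairs $(\zeta_{n,b},\eta_{n,b})$ are conditionally independent given the frozen volatilities; the truncated sum is conditionally independent of the exceedance events; and a Berry--Esseen bound for independent blocks applies. None of these holds under Assumption~\ref{ass:H0}, which allows leverage: $\gamma^{(1)}$ or $\beta^{(1)}$ may be nonzero, so $\sigma$ is driven by the same $W$ as the returns.
\begin{itemize}
\item Conditioning on the path of $\sigma$, or even on the frozen values $\sigma_{t_{(b-1)r_n}}$, conditions on functionals of the Brownian increments inside the blocks. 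Indeed $\sigma_{t_{br_n}}-\sigma_{t_{(b-1)r_n}}$ contains, to leading order, the term $\gamma^{(1)}_{t_{(b-1)r_n}}(W_{t_{br_n}}-W_{t_{(b-1)r_n}})$.
\item Given that information the increments are no longer independent Gaussians, and the blocks are not conditionally independent.
\item The frozen volatility of every block after an exceedance block is a function of that block's innovations.
\item Your final step, integrating out ``the $\F_T$-conditioning'', has the same defect and worse, since $W$ itself is $\F_T$-measurable.
\end{itemize}
Only the sequential statement survives: given $\mathcal H_{n,b-1}$, block $b$ is a function of $\mathcal H_{n,b-1}$-measurable coefficients and of fresh innovations.

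\textbf{How to repair it.} Replace independence by a martingale argument. For each exceedance tuple $I$, enlarge the block filtration by the Brownian increments and noise of the deleted blocks. The remaining block terms are still martingale differences for the enlarged filtration. Their predictable variance is a Riemann sum of $\sigma^{2p}$ converging pathwise to the limiting variance, so a conditional martingale CLT as in Lemma~\ref{lem:finiteremove} applies.

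Write $A_I=\bigcap_{j\in I}\{|X_{n,j}|>u_n(y)\}$. You need $\Pbb(\text{reduced sum}\le x,\,A_I)=\Phi(x)\Pbb(A_I)+o(\Pbb(A_I))$ uniformly over the $O(N_n^{\ell})$ tuples. That requires two further pieces of work:
\begin{itemize}
\item Control the conditional CLT error on $A_I$. One way is to work on a good event for the variance Riemann sum, and discard its complement using uniform integrability of $\binom{\mathcal E_n(y)}{\ell}$, where $\mathcal E_n(y)=\sum_{j\in\mathcal J_n}\1\{|X_{n,j}|>u_n(y)\}$.
\item Show that the deleted blocks are negligible conditionally on $A_I$, not merely $O_P((r_n\Delta_n)^{1/2})$ unconditionally. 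The part of $\Gamma_{n,b}$ overlapping the exceedance windows is correlated with $A_I$. It needs a H\"older bound with moments of order $s>2\ell$, the $\ell$-fold analogue of \eqref{eq:noise_single_exceed_cov}.
\end{itemize}

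\textbf{A smaller point.} Freezing $\sigma$ at big-block endpoints moves each $X_{n,j}$ by up to $O_P(\sqrt{r_n\Delta_n}\,\log n)$. This is $o_P(B_{N_n})$ only under a logarithmic strengthening such as $r_n\Delta_n(\log n)^3\to0$, not under $r_n\Delta_n\to0$ alone. It is simpler to freeze at $t_j$ inside each local-average window, as in Lemma~\ref{lem:noise_max_gaussian}.
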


\begin{corollary}[Noisy Cauchy combination]
\label{cor:noisecauchy}
Under the assumptions of Theorem \ref{thm:noisejoint},
% \begin{equation*}
% \mathcal T_n^{C,N}\toD C,
% \qquad
% p_n^{C,N}\toD U(0,1),
% \end{equation*}
% where $C$ is standard Cauchy. Hence, 
\begin{equation*}
\Pbb\paren{p_n^{C,N}\le \alpha}\to \alpha,\qquad 0<\alpha<1.
\end{equation*}
\end{corollary}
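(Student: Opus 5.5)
The corollary follows from Theorem \ref{thm:noisejoint} by the continuous mapping theorem plus one standard fact: an equal-weight average of Cauchy transforms of two independent uniforms is exactly standard Cauchy.

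\textbf{Step 1: the marginal $p$-values are continuous functions of the statistics.} Define $f_1(z)=2(1-\Phi(|z|))$ and $f_2(y)=1-\exp\{-e^{-y}\}$. Both are continuous on $\R$, and
\[
p_n^{PA}=f_1(Z_n^{PA}),\qquad p_n^{LA}=f_2(\xi_n^{LA}).
\]
By Theorem \ref{thm:noisejoint} and the continuous mapping theorem,
\[
(p_n^{PA},p_n^{LA})\toD (U_1,U_2),\qquad U_1=f_1(Z),\quad U_2=f_2(\xi).
\]
Independence of $Z$ and $\xi$ gives $U_1\ind U_2$. Each $U_i$ is Unif$(0,1)$: $2(1-\Phi(|Z|))$ is uniform for $Z\sim N(0,1)$, and $f_2(\xi)=1-G(\xi)$ with $G$ the continuous Gumbel cdf, so it is uniform by the probability integral transform.

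\textbf{Step 2: pass to $\mathcal T_n^{C,N}$.} The map
\[
(u_1,u_2)\mapsto \tfrac12\tan[\pi(\tfrac12-u_1)]+\tfrac12\tan[\pi(\tfrac12-u_2)]
\]
is continuous on $(0,1)^2$. It blows up only on the boundary, and the limit law puts no mass there. Hence the extended continuous mapping theorem (discontinuity set of limit-probability zero) gives
\[
\mathcal T_n^{C,N}\toD T=\tfrac12 C_1+\tfrac12 C_2,\qquad C_i=\tan[\pi(\tfrac12-U_i)].
\]
Each $C_i$ is standard Cauchy, and $C_1\ind C_2$. Standard Cauchy is a stable law with characteristic function $e^{-|t|}$, so $T$ has characteristic function $e^{-|t|/2}e^{-|t|/2}=e^{-|t|}$, i.e.\ $T$ is standard Cauchy.

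\textbf{Step 3: conclude.} The map $t\mapsto \tfrac12-\pi^{-1}\arctan t$ is continuous and strictly decreasing. Therefore
\[
\{p_n^{C,N}\le\alpha\}=\{\mathcal T_n^{C,N}\ge \cot(\pi\alpha)\}=\{\mathcal T_n^{C,N}\ge c_\alpha\}.
\]
The limit $T$ has a continuous distribution, so $c_\alpha$ is a continuity point. Portmanteau then gives
\[
\Pbb(p_n^{C,N}\le\alpha)\to\Pbb(T\ge c_\alpha)=\tfrac12-\pi^{-1}\arctan(\cot\pi\alpha)=\alpha.
\]

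\textbf{Main obstacle.} There is no genuinely hard step; all the substance is in Theorem \ref{thm:noisejoint}. The only point needing care is Step 2, where the tangent map is unbounded near $0$ and $1$. This is handled by applying the continuous mapping theorem on the open square $(0,1)^2$, which carries full limit mass.
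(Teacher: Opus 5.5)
Your proposal is correct and follows essentially the same route as the paper: you apply continuous mapping to Theorem \ref{thm:noisejoint} to get independent uniform $p$-values, pass to independent standard Cauchy transforms whose equal-weight average is again standard Cauchy, and then return to the combined $p$-value. Your explicit treatment of the tangent map's blow-up at the boundary, via the extended continuous mapping theorem, is a small technicality the paper leaves implicit; it does not change the argument.
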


\subsection{A noisy dense local alternative with many small jumps}

The noisy analogue of the dense local alternative keeps the same Poissonian latent-process structure
as Section \ref{sec:h1}, but the jump scale has to be retuned to the pre-averaging normalization.
After summation by parts, the pre-averaged noise term has variance of order $\Delta_n^{1/2}$, so the
unscaled noise contribution to $\bar Y_i^n(\phi)$ is of size $\Delta_n^{1/4}$ and
$\Delta_n^{-1/4}\bar Y_i^n(\phi)$ has a non-degenerate limit. A local jump must therefore also enter a
pre-averaging window at the $\Delta_n^{1/4}$ scale if it is to be comparable with the noise-robust
continuous component. The compensator below increases the jump intensity to order
$\Delta_n^{-1/4}$; with the $\Delta_n^{1/4}$ jump size this produces a finite first-order mean shift in
$\Delta_n^{-1/4}(R_n^{PA}-\gamma^{\prime\prime})$. The assumptions on $(b,\sigma)$ are not changed.

\begin{assumption}[Noisy dense local small jumps]\label{ass:noisedense}
Let $\mu_n^N$ be a Poisson random measure on $[0,T]\times\R$, independent of
$(B,\epsilon_0,\dots,\epsilon_n)$, with compensator
\begin{equation*}
\nu_n^N(\dd t,\dd y)=\vartheta\Delta_n^{-1/4}\dd t\,F(\dd y),
\qquad \vartheta>0,
\end{equation*}
where $F$ is a probability law on $\R$ with compact support and finite $(2p+2)$-moment. Under
$H_{1,n}^{D,N}$,
\begin{equation*}
X_t^{(n)}
=
X_0+
\int_0^t b_s\dd s+
\int_0^t \sigma_s\dd W_s+
\int_0^t\int_{\R}\sigma_{s-}\Delta_n^{1/4}y\,\mu_n^N(\dd s,\dd y),
\qquad 0\le t\le T,
\end{equation*}
where $(b,\sigma)$ satisfy the conditions in Assumption
\ref{ass:H0}.
\end{assumption}

For $\phi\in\{g,h\}$, let $E\sim N(0,1)$ and define, for $(y,u)\in\R\times[0,1]$,
\begin{equation*}
D_\phi(y,u)
=
\E\abs{\sqrt{\theta\bar\phi(2)}\,E+y\phi(u)}^p
-
m_p\paren{\theta\bar\phi(2)}^{p/2},\qquad
d_\phi^N
=
\int_{\R}\int_0^1 D_\phi(y,u)\,\dd u\,F(\dd y).
\end{equation*}
Let $\Sigma$ be the asymptotic variance of $\Delta_n^{-1/4}(R_n^{PA}-\gamma'')$ under the null (see Lemma \ref{lem:noise_block_clt}). The local mean shift of the pre-averaged ratio statistic is
\begin{equation}\label{eq:noise_dense_mu}
\mu_{D,N}
=
\frac{\vartheta}{\Sigma^{1/2}\gamma'm_p\theta^{p/2-1}\bar h(2)^{p/2}}
\paren{d_g^N-\gamma'\gamma''d_h^N}.
\end{equation}

\begin{theorem}[Noisy dense local alternative]\label{thm:noisedense}
Suppose Assumptions \ref{ass:noisedense} and \ref{ass:noise} hold, and suppose
\eqref{eq:noise_kn}, \eqref{eq:noise_Mn}, and \eqref{eq:noise_rn} hold. Then
\begin{equation*}
\paren{Z_n^{PA},\xi_n^{LA}}\toD (Z_{D,N},\xi),
\end{equation*}
where
\begin{equation*}
Z_{D,N}\sim N(\mu_{D,N},1),
\qquad
\Pbb(\xi\le y)=\exp\set{-e^{-y}},
\qquad
Z_{D,N}\ind \xi.
\end{equation*}
% Equivalently,
% \begin{equation*}
% \paren{p_n^{PA},p_n^{LA}}\toD \paren{P_{D,N},U},
% \end{equation*}
% where
% \begin{equation*}
% P_{D,N}=2\paren{1-\Phi\paren{\abs{Z_{D,N}}}},
% \qquad
% U\sim \mathrm{Unif}(0,1),
% \qquad
% P_{D,N}\ind U.
% \end{equation*}
\end{theorem}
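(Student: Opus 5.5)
We will prove Theorem \ref{thm:noisedense} by reducing to the noisy null Theorem \ref{thm:noisejoint}. We decompose $X^{(n)}=X^c+J^{(n)}$, where $X^c$ is the continuous part from Assumption \ref{ass:H0} and $J^{(n)}_t=\int_0^t\int_\R\sigma_{s-}\Delta_n^{1/4}y\,\mu_n^N(\dd s,\dd y)$. The argument has three parts: the sum side acquires a deterministic mean shift, the max side is asymptotically unaffected, and the blocking argument used for the null carries over with the jump measure treated as extra independent-increment randomness.

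\emph{Sum side.} First we treat $R_n^{PA}$. The number of jumps in a pre-averaging window of length $k_n\Delta_n\asymp\Delta_n^{1/2}$ is Poisson with mean $\vartheta\Delta_n^{-1/4}k_n\Delta_n\asymp\Delta_n^{1/4}\to0$. So, up to an event of vanishing contribution (two or more jumps in a window, of order $\Delta_n^{1/2}$ per window), each window contains at most one jump.
\begin{itemize}
\item If a window contains a jump at relative position $u$, then $\Delta_n^{-1/4}\bar Y_i^n(\phi)$ is approximately $\sqrt{\theta\bar\phi(2)}\,E+\sigma y\phi(u)$ after conditional Gaussian approximation, where $E$ collects the continuous and noise parts. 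Expanding $|\cdot|^p$ and using the $\rho^{(p)}$ bias corrections (which remove the noise-induced moments regardless of jumps) gives an excess in $V_n(Y,\phi,p)$ equal to $\Delta_n^{p/4}\sigma^p D_\phi(y,u)$, suitably normalized.
\item Summing over windows, the expected number of windows hit is about $n\cdot\Delta_n^{1/4}\cdot\Delta_n^{1/2}$ times the density. Normalizing by the null leading term $\Delta_n^{-1}\cdot m_p\theta^{p/2}\bar\phi(2)^{p/2}\Delta_n^{p/4}\int\sigma^p$, the relative shift in $V_n(Y,\phi,p)$ is $\Delta_n^{1/4}\vartheta d_\phi^N/(m_p\theta^{p/2-1}\bar\phi(2)^{p/2})$. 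The integration over $u$ arises from the uniform location of the jump in the window, and the $\sigma^p$ factors cancel in the ratio because the jump size is proportional to $\sigma_{s-}$.
\end{itemize}
A delta-method expansion of the ratio then yields the shift $\mu_{D,N}\Sigma^{1/2}$ for $\Delta_n^{-1/4}(R_n^{PA}-\gamma'')$. The fluctuation of the jump contribution around its compensator has variance of order $\Delta_n^{1/4}$ after normalization and is therefore negligible. Next we need the self-normalizer $\hat\varsigma_n^{PA}$ to remain consistent for $\Sigma^{1/2}$. Its jump contamination involves only the windows hit, a fraction $\Delta_n^{1/4}$, carrying $2p$-th powers of order one, so it vanishes; the finite $(2p+2)$-moment of $F$ controls the uniform integrability here.

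\emph{Max side.} For each $j\in\mathcal J_n$, $L_{n,j}$ picks up a jump contribution of order $\Delta_n^{1/4}|y|$. After multiplication by $\sqrt{M_n}/v_{n,j}\asymp\Delta_n^{-1/4}$, this becomes an $O(1)$ perturbation in the few windows that contain a jump. However, $\max_j|\hat X_{n,j}|\approx\sqrt{2\log N_n}$, and the Gumbel scale is $B_{N_n}$, so the question is whether the perturbation changes the extreme tail. We bound it as follows. The number of windows containing a jump is about $\vartheta\Delta_n^{-1/4}T$, a polynomially small fraction $N_n^{-1/2}$ of all windows. For these windows, the probability that the perturbed Gaussian exceeds $A_{N_n}+xB_{N_n}$ is at most $e^{c\sqrt{\log N_n}}/N_n$, by a tail shift with $y$ bounded (compact support of $F$). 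The union over $O(N_n^{1/2})$ such windows is therefore $o(1)$. On the complement, the windows are jump-free and the null extreme-value analysis applies. Finally, the TSRSV estimator $\hat\sigma_{n,j}^2$ and $\hat\omega_n^2$ incur jump bias of order $\Delta_n^{1/4}\cdot$ polynomially small factors, so condition \eqref{eq:noise_feas} is preserved. This gives $\xi_n^{LA}\toD$ Gumbel.

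\emph{Independence.} We repeat the big-block/small-block construction of the null proof with blocks $r_n$ from \eqref{eq:noise_rn}. The Poisson measure has independent increments and is independent of the noise and Brownian motion, so conditionally on $(\sigma,b)$ frozen at block starts, the block variables remain independent across blocks. The sum statistic is a sum of block contributions whose jump part is asymptotically deterministic. The max event is a product over blocks of small-probability exceedances. The null Poisson-approximation argument then gives factorization of $\Pbb(Z_n^{PA}\le x,\xi_n^{LA}\le y)$, since each exceedance block contributes only $O(r_n\Delta_n^{1/2})$ to the normalized sum.

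\textbf{Main obstacle.} The hardest step is the uniform control on the max side: showing that the $O(1)$ jump perturbation in $O(\Delta_n^{-1/4})$ windows cannot create an exceedance. This requires a moderate-deviation bound for the non-Gaussian noise averages in the perturbed windows that is uniform in the shift, which we will obtain by refining the Cramér-type argument used for the null.
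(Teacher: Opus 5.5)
Your sum-side computation agrees with the paper: single-jump windows, Lemma \ref{lem:noise_shift_identity} with a deterministic shift, Poisson compensation, and negligible multi-jump windows, with relative shifts that reproduce $\mu_{D,N}$. Your max-side union bound over the $O_p(\Delta_n^{-1/4})\asymp N_n^{1/2}$ contaminated disjoint windows, together with the check on $\hat\omega_n^2$ and the TSRSV plug-in, also agrees with the paper.

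\textbf{The gap is in the independence step.} Your justification is that each exceedance block contributes $O(r_n\Delta_n^{1/2})$ to the normalized sum. This does not yield negligibility. Condition \eqref{eq:noise_rn} requires $r_n/k_n\to\infty$ with $k_n\asymp\Delta_n^{-1/2}$ (in fact $r_n\gg\Delta_n^{-3/4}$), so $r_n\Delta_n^{1/2}\to\infty$. The correct size of a block sum is $O_p((r_n\Delta_n)^{1/2})$. Even with that correction, you would be asserting rather than proving that the block representation and the decoupling bound of Lemma \ref{lem:noise_block_decouple} survive for arrays in which every block contains many jumps (since $J_n\ll\Delta_n^{-1/4}$). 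Also, ``freezing $(\sigma,b)$ at block starts'' does not make blocks independent under leverage; the null argument works through sequential conditioning on $\mathcal H_{n,b-1}$.

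\textbf{You do not need to redo the blocking at all.} The paper couples $Y^{(n)}$ with $Y^0=X^0+\epsilon$ (same $B$ and $\epsilon$) and applies Theorem \ref{thm:noisejoint} to $(Z_n^{PA}(Y^0),\xi_n^{LA}(Y^0))$. It then proves two pathwise statements: $Z_n^{PA}(Y^{(n)})=Z_n^{PA}(Y^0)+\mu_{D,N}+o_p(1)$ and $\xi_n^{LA}(Y^{(n)})-\xi_n^{LA}(Y^0)\toP 0$. Slutsky's theorem then gives the joint limit, with independence inherited from the null.

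Your sum step already gives the first statement. Your max step, however, ends in a marginal Gumbel limit, which says nothing about the joint law. You should conclude pathwise instead. Your union bound shows that, for every fixed $y$, with probability tending to one no contaminated window of either array exceeds $A_{N_n}+B_{N_n}y$. So both maxima are attained on jump-free windows, where the two arrays share the numerator $\sqrt{M_n}L_{n,j}$ and differ only through $\hat v_{n,j}$, which \eqref{eq:noise_feas} controls for both.

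\textbf{Your ``main obstacle'' is not one.} With probability tending to one, each disjoint window contains at most two jumps, and $F$ has compact support. Hence the jump perturbation of $\sqrt{M_n}L_{n,j}/v_{n,j}$ is bounded by a deterministic constant $C$. Since you only need an upper bound, it suffices to bound the null array's tail at $A_{N_n}+B_{N_n}y-C=O(\sqrt{\log N_n})=o(M_n^{1/6})$. The Cram\'er step in Lemma \ref{lem:noise_max_gaussian} already does this, with no uniformity in the shift required.
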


\begin{corollary}[Noisy asymptotic power under the dense local alternative]
\label{cor:noisedensepower}
Under the assumptions of Theorem \ref{thm:noisedense},
\begin{equation*}
\Pbb\paren{p_n^{C,N}\le \alpha}\to \beta_{D,N}(\alpha;\mu_{D,N}),
\end{equation*}
where
\begin{equation*}
\beta_{D,N}(\alpha;\mu_{D,N})
=
\int_{\R}
\paren{
\frac12-
\frac{1}{\pi}
\arctan\paren{2c_\alpha-A(z)}
}
\phi(z-\mu_{D,N})\dd z,
\end{equation*}
and $A(\cdot)$ is given in Corollary \ref{cor:densepower}.
\end{corollary}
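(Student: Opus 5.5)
The corollary follows from Theorem \ref{thm:noisedense} by the continuous mapping theorem and a direct computation, as in the proof of Corollary \ref{cor:densepower}.

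\emph{Step 1: reduce to a function of the joint limit.} Write $G(x,y)=\frac12\tan[\pi(\frac12-2(1-\Phi(|x|)))]+\frac12\tan[\pi(\frac12-(1-\exp\{-e^{-y}\}))]$, so that $\mathcal T_n^{C,N}=G(Z_n^{PA},\xi_n^{LA})$. The map $G$ is continuous on $\R^2$ with values in $\R$. The first $p$-value lies in $(0,1]$, where $\tan$ is finite; the point $x=0$ gives $p=1$ and $\tan(-\pi/2)$ is infinite, but this is a single Lebesgue-null set. The limit law of $Z_{D,N}$ has a density, so this set carries no mass. By Theorem \ref{thm:noisedense} and the continuous mapping theorem, $\mathcal T_n^{C,N}\toD \mathcal T:=\frac12 A(Z_{D,N})+\frac12\tan[\pi(\frac12-U)]$. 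Here $U=1-\exp\{-e^{-\xi}\}$ is uniform on $(0,1)$ and independent of $Z_{D,N}$.

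\emph{Step 2: identify the law of $\mathcal T$.} The quantity $C=\tan[\pi(\frac12-U)]$ is standard Cauchy and independent of $Z_{D,N}$. The event $p_n^{C,N}\le\alpha$ is equivalent to $\mathcal T_n^{C,N}\ge \cot(\pi\alpha)=c_\alpha$, because $t\mapsto \frac12-\frac1\pi\arctan t$ is strictly decreasing.

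\emph{Step 3: pass to the limit at the boundary.} The law of $\mathcal T$ is continuous, being a convolution with a Cauchy density, so the Portmanteau theorem gives $\Pbb(p_n^{C,N}\le\alpha)\to \Pbb(\mathcal T\ge c_\alpha)$.

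\emph{Step 4: condition on $Z_{D,N}=z$.} Given $Z_{D,N}=z$, we have $\Pbb(C\ge 2c_\alpha-A(z))=\frac12-\frac1\pi\arctan(2c_\alpha-A(z))$. Integrating this against the $N(\mu_{D,N},1)$ density $\phi(z-\mu_{D,N})$ yields $\beta_{D,N}(\alpha;\mu_{D,N})$.

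The only delicate point is the handling of the discontinuity or infinite value of $G$ at $\{x=0\}$ and of the rejection boundary. Both are disposed of by the absolute continuity of the limit, so no real obstacle remains beyond Theorem \ref{thm:noisedense} itself.
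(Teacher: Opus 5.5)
Your proof is correct and follows the paper's argument. You use Theorem \ref{thm:noisedense} and continuous mapping to get $\mathcal T_n^{C,N}\toD \frac12 A(Z_{D,N})+\frac12 C$, with $C$ standard Cauchy and independent of $Z_{D,N}$; you then condition on $Z_{D,N}=z$ and integrate against $\phi(z-\mu_{D,N})$. You are slightly more careful than the paper about the null set $\{0\}\times\R$, where $A$ blows up, and about the Portmanteau step at $c_\alpha$; just drop the opening claim that $G$ is continuous and real-valued on all of $\R^2$, which your next sentence already corrects.
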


\begin{theorem}[Noisy fixed finite-activity jumps]
\label{thm:noisefixed}
Suppose Assumptions \ref{ass:H1fixed} and \ref{ass:noise} hold, suppose \eqref{eq:noise_kn}
and \eqref{eq:noise_Mn} hold, and suppose that the fixed jumps are not asymptotically aligned with the boundary points of the disjoint local-average grid. More precisely, let $i_{q,n}$ be the unique index satisfying $t_{i_{q,n}-1}<\tau_q\le t_{i_{q,n}}$, put
\[
a_{M_n,u}=u\wedge(2M_n-u),\qquad 1\le u\le 2M_n-1,
\]
and define
\[
D_n=
\max_{1\le q\le N_T}\max_{j\in\mathcal J_n}
 a_{M_n,i_{q,n}-j}
 \1\set{1\le i_{q,n}-j\le 2M_n-1}.
\]
Assume
\begin{equation}\label{eq:noisy_fixed_grid_condition}
\frac{D_n}{\sqrt{M_n\log n}}\toP\infty.
\end{equation}
Then
\begin{equation*}
R_n^{PA}\toP 1,
\qquad
p_n^{LA}\toP 0.
\end{equation*}
Consequently,
\begin{equation*}
\Pbb\paren{p_n^{C,N}\le \alpha}\to 1,
\qquad 0<\alpha<1.
\end{equation*}
\end{theorem}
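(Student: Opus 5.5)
The proof splits into three parts: $R_n^{PA}\toP 1$, then $p_n^{LA}\toP0$, then the consistency of the combined test. The last part is immediate from the second. The Cauchy term $\tan[\pi(\frac12-p_n^{LA})]\to+\infty$ in probability. The other term $\tan[\pi(\frac12-p_n^{PA})]$ is bounded below by a tight variable whenever $p_n^{PA}$ stays away from $1$. Even without that, it is $\ge\tan[\pi(\frac12-1)]$ only in a degenerate sense, so we argue directly. By $R_n^{PA}\toP1\ne\gamma''$ and $\hat\varsigma_n^{PA}=O_P(1)$ (shown below), $|Z_n^{PA}|\toP\infty$, hence $p_n^{PA}\toP0$. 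Both tangents then diverge to $+\infty$, so $\mathcal T_n^{C,N}\toP\infty$ and $\Pbb(p_n^{C,N}\le\alpha)\to1$.

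\emph{Sum component.} Decompose $Y=X^c+J+\epsilon$, where $J$ is the finite jump part. For $\phi\in\{g,h\}$, the pre-averaged values $\bar Y^n_i(\phi)$ for windows containing no jump are $O_P(\Delta_n^{1/4})$ uniformly up to $\log$ factors, by the sub-Gaussian bound on the noise and Burkholder for $X^c$. For each $i$ whose window contains $\tau_q$, one has $\bar Y^n_i(\phi)=\kappa_q\phi_{i_{q,n}-i}^n+O_P(\Delta_n^{1/4})$. The correction terms $\hat Y^n_i(\phi)=O_P(\Delta_n^{1/2})$ uniformly, since the jump contributes only $(\Delta_j\phi^n)^2\kappa_q^2=O(k_n^{-2})$. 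Hence the $l\ge1$ terms of $V_n(Y,\phi,p)$ are negligible relative to the jump part. Summing over the $k_n$ windows that straddle $\tau_q$ gives
\[
V_n(Y,\phi,p)=k_n\bar\phi(p)B_p(1+o_P(1))+O_P(n\Delta_n^{p/4}).
\]
Because $p\ge4$ and $k_n\asymp\Delta_n^{-1/2}$, the continuous-plus-noise contribution $n\Delta_n^{p/4}=O(\Delta_n^{p/4-1})$ is $o(k_n)$ for $p>2$. The ratio therefore converges to $\bar g(p)/(\gamma'\bar h(p))=1$. The same bounds show that $\hat\varsigma_n^{PA}$ is $O_P(1)$ after the normalization used in \ref{app:noise_sum_aux}; if its definition makes it diverge, I will check its rate is slower than $\Delta_n^{-1/4}$.

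\emph{Max component (the main step).} I decompose $\sqrt{M_n}L_{n,j}$ into three pieces. The first is a continuous-plus-noise part, whose maximum over $\mathcal J_n$ is $O_P(\sqrt{\log n})$ by the Gaussian-approximation/Cramér bound from the null proof. The second is the jump part, equal to $\kappa_q a_{M_n,i_{q,n}-j}/M_n$ multiplied by $\sqrt{M_n}$, i.e.\ of size $|\kappa_q|D_n/\sqrt{M_n}$ at the maximizing $j$; this is the reason for the triangular weights $a_{M_n,u}$ in the grid condition. The third is the denominator $\hat v_{n,j}$, for which I must show $\hat v_{n,j}=O_P(1)$ uniformly even with jumps present. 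First, $\hat\omega_n^2\toP\omega^2$, since the finitely many jumps contribute $O(n^{-1})$. Second, $\hat\sigma^2_{n,j}M_n^2\Delta_n$ is a bounded factor times $\hat\sigma^2_{n,j}$. A jump inside the TSRSV window inflates $\hat\sigma^2_{n,j}$ by $O(\kappa^2/h_n)=O(\Delta_n^{-1/6})$, so $\hat v_{n,j}=O_P(\Delta_n^{-1/12})$ at worst. This is the delicate point. If it becomes problematic, I will use $D_n/\sqrt{M_n\log n}\to\infty$ together with $\hat v_{n,j}\ge \sqrt2\hat\omega_n$. Then $|\hat X_{n,j}|\ge c\,\sqrt{M_n}|L_{n,j}|/\hat v_{n,j}$, and I need the jump signal $D_n/\sqrt{M_n}$ to dominate $\hat v_{n,j}\sqrt{\log N_n}$. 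Should the inflation rate $\Delta_n^{-1/12}$ spoil this, I would instead bound $\hat\sigma^2_{n,j}$ at the relevant $j$ more carefully: the window is backward-looking, so $j^\star$ lies before the jump for $j$ near $i_{q,n}-M_n$. With $\hat v_{n,j}=O_P(1)$ at the maximizing $j$, we get $\mathcal M_n^{LA}\ge c|\kappa_q|D_n/\sqrt{M_n}-O_P(\sqrt{\log n})$. The grid condition gives $\mathcal M_n^{LA}/\sqrt{\log N_n}\toP\infty$, so $\xi_n^{LA}\toP\infty$ and $p_n^{LA}\toP0$.
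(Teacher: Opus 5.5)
Your treatment of $R_n^{PA}\toP 1$ and of the max component follows the paper's route. The $k_n$ jump-straddling windows contribute $k_n\bar\phi(p)B_p$, and the backward-looking TSRSV window at the detecting grid point $j_n$ ends before $\tau_{q_n}$, so $\hat v_{n,j_n}$ stays bounded above. The lower bound $\hat v_{n,j}\ge\sqrt2\,\hat\omega_n$ you mention cannot give a lower bound on $|\hat X_{n,j}|$, but your backward-looking argument does.

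The gap is in the final consistency step. You claim $\hat\varsigma_n^{PA}=O_P(1)$, hence $|Z_n^{PA}|\toP\infty$ and $p_n^{PA}\toP 0$, so that both Cauchy terms diverge to $+\infty$. This is false under fixed jumps.
\begin{itemize}
\item On the roughly $k_n$ windows straddling $\tau_q$ one has $\bar Y_i^n(\phi)=O_P(1)$. The centred summands $\Xi_{n,i}$ are therefore of order $\Delta_n^{1-p/4}$ or larger, instead of $\Delta_n^{3/4}$; for $p=4$ they are $O(1)$.
\item Summing these strongly dependent terms makes the block $\Gamma_{n,b}$ containing the jump generically of order $\Delta_n^{1/2-p/4}$, i.e.\ at least $\Delta_n^{-1/2}$.
\item So $\hat\varsigma_n^{PA}$ diverges faster than $\Delta_n^{-1/4}$, and your fallback fails too. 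Consequently $Z_n^{PA}\toP 0$, $p_n^{PA}\toP 1$, and $\tan[\pi(\frac12-p_n^{PA})]\to-\infty$, roughly like $-c/|Z_n^{PA}|$.
\end{itemize}
This is exactly why the paper's remark warns that the self-normalizer is contaminated by jump windows. Knowing only $p_n^{LA}\toP 0$ does not rule out cancellation in $\mathcal T_n^{C,N}$.

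The missing idea is a rate comparison, and you already have the ingredients.
\begin{itemize}
\item Your bound $\mathcal M_n^{LA}/\sqrt{\log n}\toP\infty$ gives $\xi_n^{LA}=\sqrt{2\log N_n}\,(\mathcal M_n^{LA}-A_{N_n})$ with $\xi_n^{LA}/\log n\toP\infty$. Hence $p_n^{LA}\le e^{-\xi_n^{LA}}$ is $o_P(\Delta_n^{L})$ for every fixed $L$. The term $\cot(\pi p_n^{LA})$ therefore grows faster than any power of $\Delta_n^{-1}$.
\item On the sum side you only need a polynomial bound. After localization and using $\max_i|\epsilon_i|=O_P(\sqrt{\log n})$, $\hat\varsigma_n^{PA}=O_P(\Delta_n^{-C})$. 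Since $R_n^{PA}\toP 1$ and $\gamma''>1$, the numerator of $Z_n^{PA}$ is bounded away from zero times $\Delta_n^{-1/4}$. So $|Z_n^{PA}|\ge\Delta_n^{C'}$ with probability tending to one.
\item Via $1-p_n^{PA}\ge c\,(|Z_n^{PA}|\wedge 1)$ and $\cot(\pi u)\ge -1/(\pi(1-u))$, the negative part of the AJJ Cauchy term is $O_P(\Delta_n^{-C'})$.
\end{itemize}
Choosing $L>C'$ gives $\mathcal T_n^{C,N}\toP\infty$, and hence $\Pbb(p_n^{C,N}\le\alpha)\to 1$. With this replacing your last step, the argument goes through as in the paper.
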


\begin{remark}
Condition \eqref{eq:noisy_fixed_grid_condition} is only needed for the noisy local-average maximum computed on the single disjoint grid $\mathcal J_n$. It rules out the pathological case in which every fixed jump falls within $O(\sqrt{M_n\log n})$ observation points of a local-average grid boundary. If at least one fixed jump time has a bounded conditional density near its location, then \eqref{eq:noisy_fixed_grid_condition} holds; for example,
$\Pbb(D_n\le M_n^{3/4})=O(M_n^{-1/4})\to0$ and $M_n^{3/4}/\sqrt{M_n\log n}\to\infty$.
The pre-averaged ratio still satisfies $R_n^{PA}\toP1$, but the block self-normalizer in \eqref{eq:noise_ZPA} can be affected by fixed jump windows. The consistency statement is therefore proved from the local-average maximum component, which is sufficient for the Cauchy combination.
\end{remark}

\section{Simulation}
\label{sec:simulation}

This section reports Monte Carlo evidence on size and power. The designs are chosen to mirror the
continuous-path null, the dense local alternatives, and the sparse finite-activity alternatives studied
in the theory. Across all experiments one trading day contains $23{,}400$ one-second observations, and
we normalize the trading day to the unit interval. The latent log-price process is generated from
\begin{align}
\dd X_t &= -\frac12 V_t\dd t + \sqrt{V_t}\,\dd W_t + \dd J_t, \label{eq:sim_X}\\
\dd V_t &= \kappa(\bar\beta-V_t)\dd t + \gamma\sqrt{V_t}\,\dd B_t, \label{eq:sim_V}\\
\dd\langle W,B\rangle_t &= \rho\,\dd t, \qquad t\in[0,1], \label{eq:sim_cov}
\end{align}
with
\[
V_0=\bar\beta=0.16,
\qquad
\kappa=5,
\qquad
\gamma=0.5,
\qquad
\rho=-0.5.
\]
We simulate \eqref{eq:sim_X}--\eqref{eq:sim_cov} on the one-second grid and aggregate to sampling
intervals $\delta\in\{1,5,10,15,30\}$ seconds. For the A\"it-Sahalia--Jacod procedures we use the
same block sizes as in \citet{AitSahaliaJacod2009}: $k\in\{2,3,4\}$ at the 1- and 5-second
frequencies and $k=2$ at the 10-, 15-, and 30-second frequencies. The Lee--Mykland statistic is
implemented with the demeaned and locally standardized return and with window length
$K=\lceil n^{0.6}\rceil$; to improve finite-sample size, its critical values are obtained from a pathwise
parametric bootstrap with $199$ resamples. The Cauchy-combined procedures are denoted by CC-$k$.
The size results are based on $5{,}000$ replications, and the power results are based on $1{,}000$
replications.

\subsection{Size under the continuous-path null}

For the size experiment we set $J_t\equiv 0$ in \eqref{eq:sim_X}. Table
\ref{tab:size_aj_lm_cc} reports empirical rejection frequencies at the $10\%$ and $5\%$ nominal
levels.

\begin{table}[htbp]
\centering
\caption{Empirical size under the continuous-path null}
\label{tab:size_aj_lm_cc}
\small
\begin{tabular}{lcccc}
\toprule
$\delta$ (seconds) & $n$ & Method & 10\% level & 5\% level \\
\midrule
1  & 23,400 & AJ-2 & 0.0970 & 0.0434 \\
1  & 23,400 & AJ-3 & 0.1016 & 0.0472 \\
1  & 23,400 & AJ-4 & 0.0970 & 0.0416 \\
1  & 23,400 & CC-2 & 0.0946 & 0.0438 \\
1  & 23,400 & CC-3 & 0.0990 & 0.0432 \\
1  & 23,400 & CC-4 & 0.0920 & 0.0410 \\
1  & 23,400 & LM   & 0.1056 & 0.0502 \\
\addlinespace
5  & 4,680  & AJ-2 & 0.0988 & 0.0446 \\
5  & 4,680  & AJ-3 & 0.0892 & 0.0388 \\
5  & 4,680  & AJ-4 & 0.0870 & 0.0404 \\
5  & 4,680  & CC-2 & 0.0922 & 0.0420 \\
5  & 4,680  & CC-3 & 0.0842 & 0.0376 \\
5  & 4,680  & CC-4 & 0.0834 & 0.0382 \\
5  & 4,680  & LM   & 0.0916 & 0.0430 \\
\addlinespace
10 & 2,340  & AJ-2 & 0.0972 & 0.0488 \\
10 & 2,340  & CC-2 & 0.0906 & 0.0398 \\
10 & 2,340  & LM   & 0.0852 & 0.0404 \\
\addlinespace
15 & 1,560  & AJ-2 & 0.0930 & 0.0384 \\
15 & 1,560  & CC-2 & 0.0822 & 0.0326 \\
15 & 1,560  & LM   & 0.0832 & 0.0420 \\
\addlinespace
30 & 780    & AJ-2 & 0.0878 & 0.0356 \\
30 & 780    & CC-2 & 0.0764 & 0.0302 \\
30 & 780    & LM   & 0.0814 & 0.0384 \\
\bottomrule
\end{tabular}
\end{table}

The size results are satisfactory throughout. All three procedures are close to nominal levels at the
1- and 5-second frequencies, and the differences across methods are small. As the sampling interval
becomes coarser, LM and the corresponding Cauchy combinations become mildly conservative, but the
departures remain moderate. This makes the subsequent power comparisons informative about
adaptivity rather than about size distortions.

\subsection{Dense local alternatives}

To study the dense regime, we let the jump component in \eqref{eq:sim_X} satisfy
\begin{equation}
\dd J_t^{(n)} = \sqrt{V_{t-}}\int_{\R}\sqrt{\Delta_n}\,y\,\mu_n(\dd t,\dd y),
\qquad
\nu_n(\dd t,\dd y)=\theta\Delta_n^{-1/2}\dd t\,F(\dd y),
\label{eq:sim_dense}
\end{equation}
which matches Assumption \ref{ass:H1dense}. For the one-second experiments we take
$F=N(0,\sigma_Y^2)$ with $\sigma_Y^2\in\{0.5,1\}$. For the 5-, 10-, 15-, and 30-second grids we use
the corresponding Gaussian mark variances reported in Tables
\ref{tab:gaussian-y-var-half-sec-sq-5sec}--\ref{tab:dense-local-power-10-15-30sec}. Table
\ref{tab:gaussian-y-var-half}  report the one-second results,
and Tables \ref{tab:gaussian-y-var-half-sec-sq-5sec}--\ref{tab:dense-local-power-10-15-30sec}
report the coarser-grid results.

\begin{table}[htbp]
\centering
\caption{One-second power under the dense-local alternative with $Y_{i,j}\sim N(0,0.5)$}
\label{tab:gaussian-y-var-half}
\begin{tabular}{c|ccccccc}
\toprule
$\theta$ & AJ-2 & AJ-3 & AJ-4 & LM & CC-2 & CC-3 & CC-4 \\ \hline
\multicolumn{8}{c}{$Y_{i,j}\sim N(0,0.5)$}\\
\midrule
100 & 0.365 & 0.332 & 0.305 & 0.547 & 0.576 & 0.572 & 0.544 \\
200 & 0.521 & 0.523 & 0.474 & 0.676 & 0.749 & 0.749 & 0.718 \\
300 & 0.579 & 0.581 & 0.524 & 0.649 & 0.751 & 0.744 & 0.712 \\
400 & 0.606 & 0.596 & 0.553 & 0.616 & 0.748 & 0.741 & 0.707 \\
500 & 0.598 & 0.546 & 0.534 & 0.569 & 0.718 & 0.704 & 0.693 \\
\midrule
\multicolumn{8}{c}{$Y_{i,j}\sim N(0,1)$}\\ \hline
100 & 0.918 & 0.907 & 0.878 & 0.968 & 0.990 & 0.988 & 0.984 \\
200 & 0.957 & 0.945 & 0.923 & 0.953 & 0.991 & 0.993 & 0.988 \\
300 & 0.941 & 0.939 & 0.927 & 0.922 & 0.987 & 0.991 & 0.990 \\
400 & 0.910 & 0.916 & 0.889 & 0.885 & 0.977 & 0.970 & 0.964 \\
500 & 0.886 & 0.891 & 0.872 & 0.824 & 0.950 & 0.958 & 0.954 \\
\bottomrule
\end{tabular}
\end{table}

\begin{table}[htbp]
\centering
\caption{Five-second power under the dense-local alternative with $Y_{i,j}\sim N(0,12.5)$}
\label{tab:gaussian-y-var-half-sec-sq-5sec}
\begin{tabular}{c|ccccccc}
\toprule
$\theta$ & AJ-2 & AJ-3 & AJ-4 & LM & CC-2 & CC-3 & CC-4 \\
\midrule
100 & 0.702 & 0.718 & 0.703 & 0.676 & 0.820 & 0.833 & 0.827 \\
200 & 0.518 & 0.507 & 0.502 & 0.374 & 0.590 & 0.578 & 0.570 \\
300 & 0.420 & 0.431 & 0.398 & 0.251 & 0.472 & 0.459 & 0.442 \\
400 & 0.401 & 0.372 & 0.377 & 0.204 & 0.413 & 0.395 & 0.392 \\
500 & 0.319 & 0.293 & 0.294 & 0.146 & 0.315 & 0.306 & 0.300 \\
\bottomrule
\end{tabular}
\end{table}

\begin{table}[htbp]
\centering
\caption{Power under the dense-local alternative at coarser sampling frequencies}
\label{tab:dense-local-power-10-15-30sec}
\small
\setlength{\tabcolsep}{4.5pt}
\begin{tabular}{c ccc ccc ccc}
\toprule
& \multicolumn{3}{c}{10-second} & \multicolumn{3}{c}{15-second} & \multicolumn{3}{c}{30-second} \\
& \multicolumn{3}{c}{$Y_{i,j}\sim N(0,50)$} 
& \multicolumn{3}{c}{$Y_{i,j}\sim N(0,112.5)$} 
& \multicolumn{3}{c}{$Y_{i,j}\sim N(0,450)$} \\
\cmidrule(lr){2-4}\cmidrule(lr){5-7}\cmidrule(lr){8-10}
$\theta$ & AJ-2 & LM & CC-2 & AJ-2 & LM & CC-2 & AJ-2 & LM & CC-2 \\
\midrule
100 & 0.438 & 0.281 & 0.483 & 0.306 & 0.174 & 0.322 & 0.117 & 0.079 & 0.107 \\
200 & 0.306 & 0.165 & 0.304 & 0.181 & 0.099 & 0.184 & 0.063 & 0.047 & 0.061 \\
300 & 0.231 & 0.124 & 0.229 & 0.136 & 0.080 & 0.119 & 0.043 & 0.049 & 0.045 \\
400 & 0.187 & 0.110 & 0.184 & 0.113 & 0.070 & 0.107 & 0.033 & 0.043 & 0.038 \\
500 & 0.149 & 0.084 & 0.141 & 0.086 & 0.071 & 0.080 & 0.018 & 0.050 & 0.024 \\
\bottomrule
\end{tabular}
\end{table}

The dense-local designs make the complementarity of AJ and LM transparent. At the one-second
frequency with the weaker Gaussian marks in Table \ref{tab:gaussian-y-var-half}, LM is strongest
only when the dense signal is relatively weak; as $\theta$ increases, AJ catches up and eventually
overtakes LM. When the mark variance is increased to one,
this reversal appears earlier: LM is still slightly ahead at $\theta=100$, AJ is already competitive at
$\theta=200$, and AJ is stronger than LM for $\theta\ge 300$.

The same pattern is even clearer on the 5-, 10-, and 15-second grids. In Tables
\ref{tab:gaussian-y-var-half-sec-sq-5sec}--\ref{tab:dense-local-power-10-15-30sec}, AJ is above LM
for most designs, exactly as one would expect when jump evidence is distributed over many intervals.
On the coarsest 30-second grid all procedures lose power, and LM becomes comparable to or slightly
stronger than AJ in the weakest designs, but the broad message remains unchanged: in dense regimes
the sum statistic is the more informative marginal component. Across all dense-local designs, the
Cauchy-combined procedures are uniformly strongest or very close to strongest. The combined test
therefore retains LM's advantage when the dense signal is weak and inherits AJ's advantage once the
signal becomes more diffuse, which is precisely the adaptive gain delivered by our theory.

\subsection{Sparse alternatives}

To study the sparse regime, we replace the jump component in \eqref{eq:sim_X} by a finite-activity
compound Poisson process,
\begin{equation}
\dd J_t = \int_{\R} y\,\mu(\dd t,\dd y),
\qquad
\nu(\dd t,\dd y)=\lambda\dd t\,G(\dd y),
\qquad
G=N(0,0.05),
\label{eq:sim_sparse}
\end{equation}
while keeping the stochastic-volatility dynamics in \eqref{eq:sim_V}--\eqref{eq:sim_cov} unchanged.
Table \ref{tab:sparse-alternative-power} reports rejection frequencies at the $5\%$ level.

\begin{table}[htbp]
\centering
\small
\caption{Power under the sparse finite-activity alternative with $Y\sim N(0,0.05)$}
\label{tab:sparse-alternative-power}
\begin{tabular}{ccrrrrrrr}
\toprule
$\lambda$ & sec & AJ-2 & AJ-3 & AJ-4 & LM & CC-2 & CC-3 & CC-4 \\
\midrule
1.0 & 1  & 0.633 & 0.619 & 0.620 & 0.632 & 0.640 & 0.637 & 0.636 \\
1.0 & 5  & 0.584 & 0.584 & 0.586 & 0.615 & 0.611 & 0.609 & 0.609 \\
1.0 & 10 & 0.557 & -     & -     & 0.590 & 0.602 & -     & -     \\
1.0 & 15 & 0.516 & -     & -     & 0.571 & 0.580 & -     & -     \\
1.0 & 30 & 0.436 & -     & -     & 0.537 & 0.545 & -     & -     \\
\addlinespace
1.5 & 1  & 0.762 & 0.759 & 0.761 & 0.780 & 0.785 & 0.781 & 0.785 \\
1.5 & 5  & 0.718 & 0.720 & 0.723 & 0.747 & 0.755 & 0.753 & 0.752 \\
1.5 & 10 & 0.687 & -     & -     & 0.720 & 0.734 & -     & -     \\
1.5 & 15 & 0.645 & -     & -     & 0.706 & 0.726 & -     & -     \\
1.5 & 30 & 0.542 & -     & -     & 0.663 & 0.683 & -     & -     \\
\addlinespace
2.0 & 1  & 0.842 & 0.841 & 0.842 & 0.860 & 0.858 & 0.860 & 0.858 \\
2.0 & 5  & 0.815 & 0.815 & 0.804 & 0.834 & 0.846 & 0.840 & 0.838 \\
2.0 & 10 & 0.768 & -     & -     & 0.822 & 0.826 & -     & -     \\
2.0 & 15 & 0.716 & -     & -     & 0.797 & 0.811 & -     & -     \\
2.0 & 30 & 0.604 & -     & -     & 0.757 & 0.775 & -     & -     \\
\addlinespace
2.5 & 1  & 0.891 & 0.893 & 0.891 & 0.905 & 0.905 & 0.905 & 0.904 \\
2.5 & 5  & 0.852 & 0.851 & 0.844 & 0.881 & 0.886 & 0.885 & 0.882 \\
2.5 & 10 & 0.801 & -     & -     & 0.867 & 0.872 & -     & -     \\
2.5 & 15 & 0.760 & -     & -     & 0.846 & 0.856 & -     & -     \\
2.5 & 30 & 0.629 & -     & -     & 0.808 & 0.824 & -     & -     \\
\bottomrule
\end{tabular}
\end{table}

Table \ref{tab:sparse-alternative-power} reverses the dense-local ranking. Under sparse finite-activity
jumps, LM is generally more powerful than AJ, and the advantage becomes more pronounced on the
coarser sampling grids. This is exactly the regime in which isolated extreme returns carry most of the
information, so a max-type procedure is naturally favored. The Cauchy-combined procedures remain
best or near-best throughout, with CC-2 especially robust across frequencies. Taken together with the
dense-local tables, the sparse results make the main contribution of the paper transparent: AJ is the
stronger marginal procedure in dense regimes, LM is the stronger marginal procedure in sparse
regimes, and the Cauchy combination effectively synthesizes the strengths of both.

\subsection{Microstructure noise}

\begin{table}[htbp]
\centering
\small
\setlength{\tabcolsep}{5pt}
\caption{Empirical size with corrected LM bootstrap}
\label{tab:size-corrected-lm-bootstrap}
\begin{tabular}{lllcccccc}
\toprule
Noise & $q$ & sec &
\multicolumn{3}{c}{10\% level} &
\multicolumn{3}{c}{5\% level} \\
\cmidrule(lr){4-6} \cmidrule(lr){7-9}
 &  &  & AJJ & LM & CC & AJJ & LM & CC \\
\midrule
\multicolumn{9}{l}{\textit{Panel A: Gaussian noise}} \\
normal & 0.005 & 5  & 0.1060 & 0.0952 & 0.0998 & 0.0550 & 0.0452 & 0.0476 \\
normal & 0.005 & 10 & 0.1122 & 0.0924 & 0.0986 & 0.0500 & 0.0448 & 0.0408 \\
normal & 0.005 & 15 & 0.1138 & 0.0934 & 0.0962 & 0.0390 & 0.0484 & 0.0348 \\
normal & 0.010 & 5  & 0.1110 & 0.0966 & 0.1024 & 0.0530 & 0.0484 & 0.0450 \\
normal & 0.010 & 10 & 0.1250 & 0.0946 & 0.1028 & 0.0548 & 0.0478 & 0.0476 \\
normal & 0.010 & 15 & 0.1062 & 0.0964 & 0.0896 & 0.0364 & 0.0440 & 0.0308 \\
\midrule
\multicolumn{9}{l}{\textit{Panel B: Standardized $t_8$ noise}} \\
$t_8$ & 0.005 & 5  & 0.1146 & 0.0980 & 0.1042 & 0.0586 & 0.0468 & 0.0508 \\
$t_8$ & 0.005 & 10 & 0.1232 & 0.1006 & 0.1042 & 0.0516 & 0.0494 & 0.0446 \\
$t_8$ & 0.005 & 15 & 0.1036 & 0.1060 & 0.0968 & 0.0354 & 0.0566 & 0.0338 \\
$t_8$ & 0.010 & 5  & 0.1134 & 0.0952 & 0.0968 & 0.0526 & 0.0422 & 0.0422 \\
$t_8$ & 0.010 & 10 & 0.1176 & 0.0968 & 0.1016 & 0.0540 & 0.0462 & 0.0416 \\
$t_8$ & 0.010 & 15 & 0.1166 & 0.0996 & 0.0928 & 0.0386 & 0.0466 & 0.0310 \\
\bottomrule
\end{tabular}
\end{table}

\begin{figure}[htbp]
\centering
\includegraphics[width=0.95\textwidth]{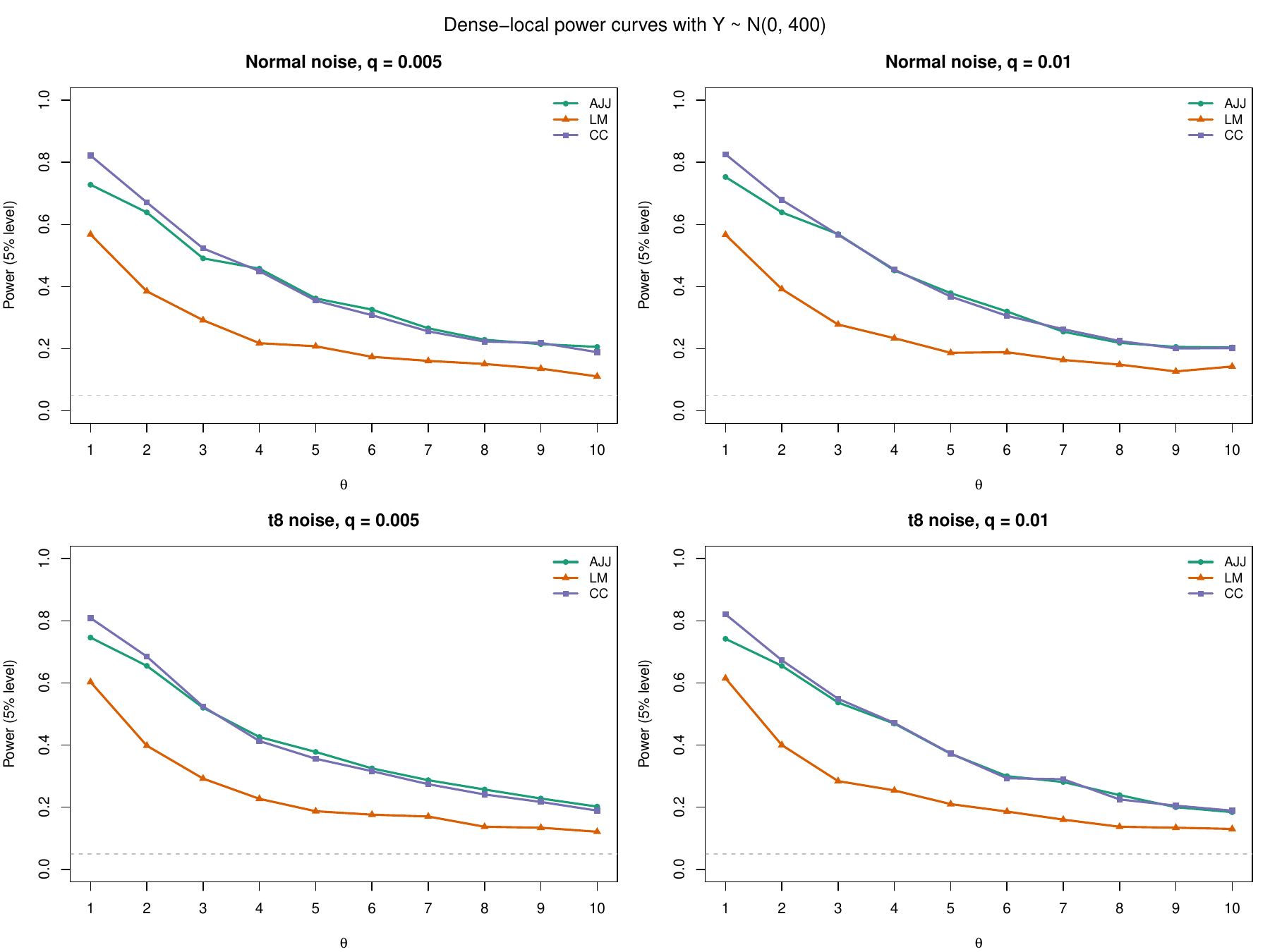}
\caption{Power curves at the $5\%$ level under the dense-local alternative with
$Y_{im}\sim N(0,400)$ and 5 secs.}
\label{fig:dense-power-gaussianY-theta1to10}
\end{figure}

\begin{figure}[htbp]
\centering
\includegraphics[width=0.95\textwidth]{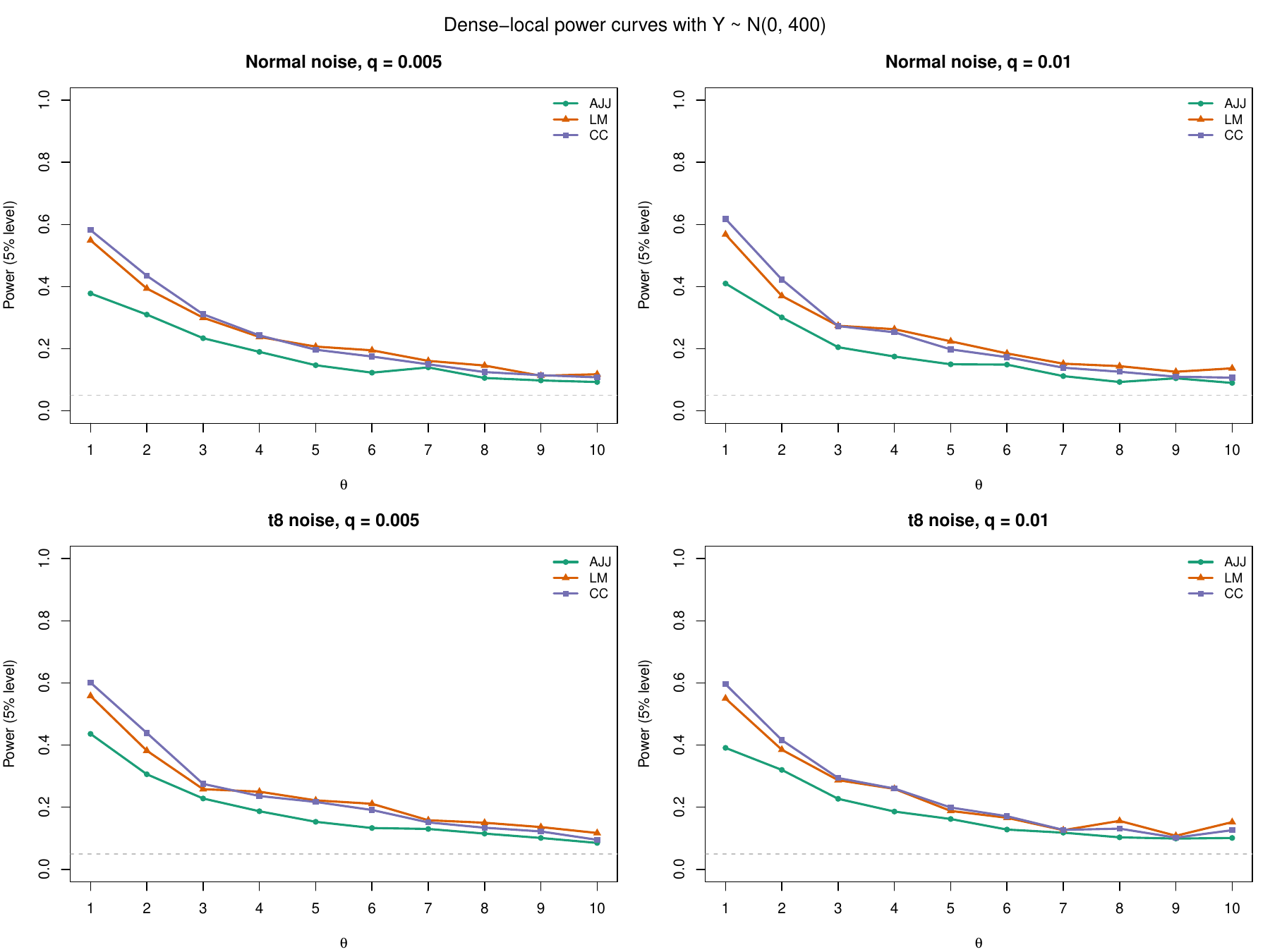}
\caption{Power curves at the $5\%$ level under the dense-local alternative with
$Y_{im}\sim N(0,400)$ and 10 secs.}
\label{fig:dense-power-gaussianY-theta1to10dt10}
\end{figure}

\begin{figure}[htbp]
\centering
\includegraphics[width=0.95\textwidth]{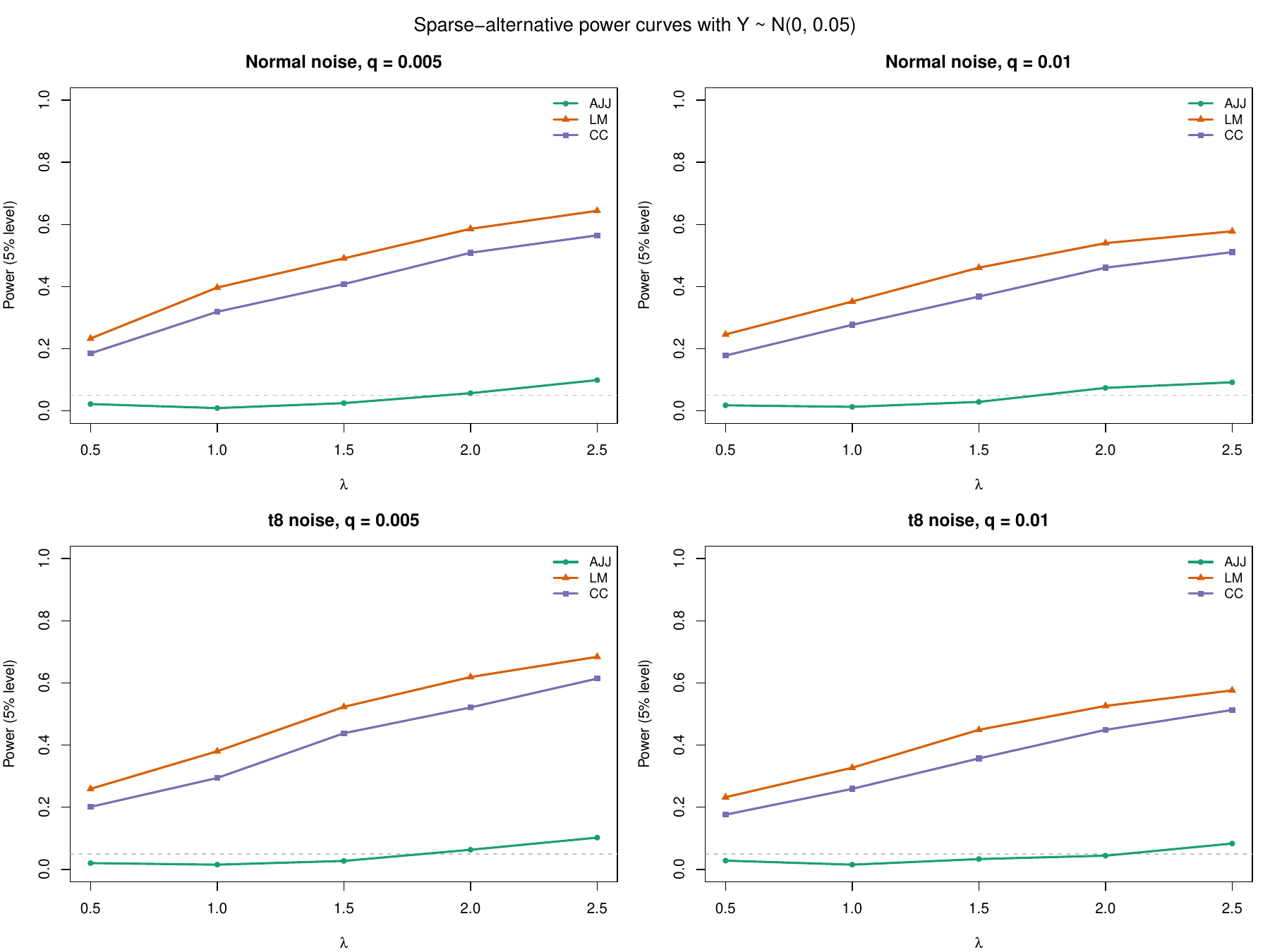}
\caption{Sparse-alternative power at the $5\%$ level, $5$-second sampling}
\label{fig:sparse-noise-dt5}
\end{figure}

\begin{figure}[htbp]
\centering
\includegraphics[width=0.95\textwidth]{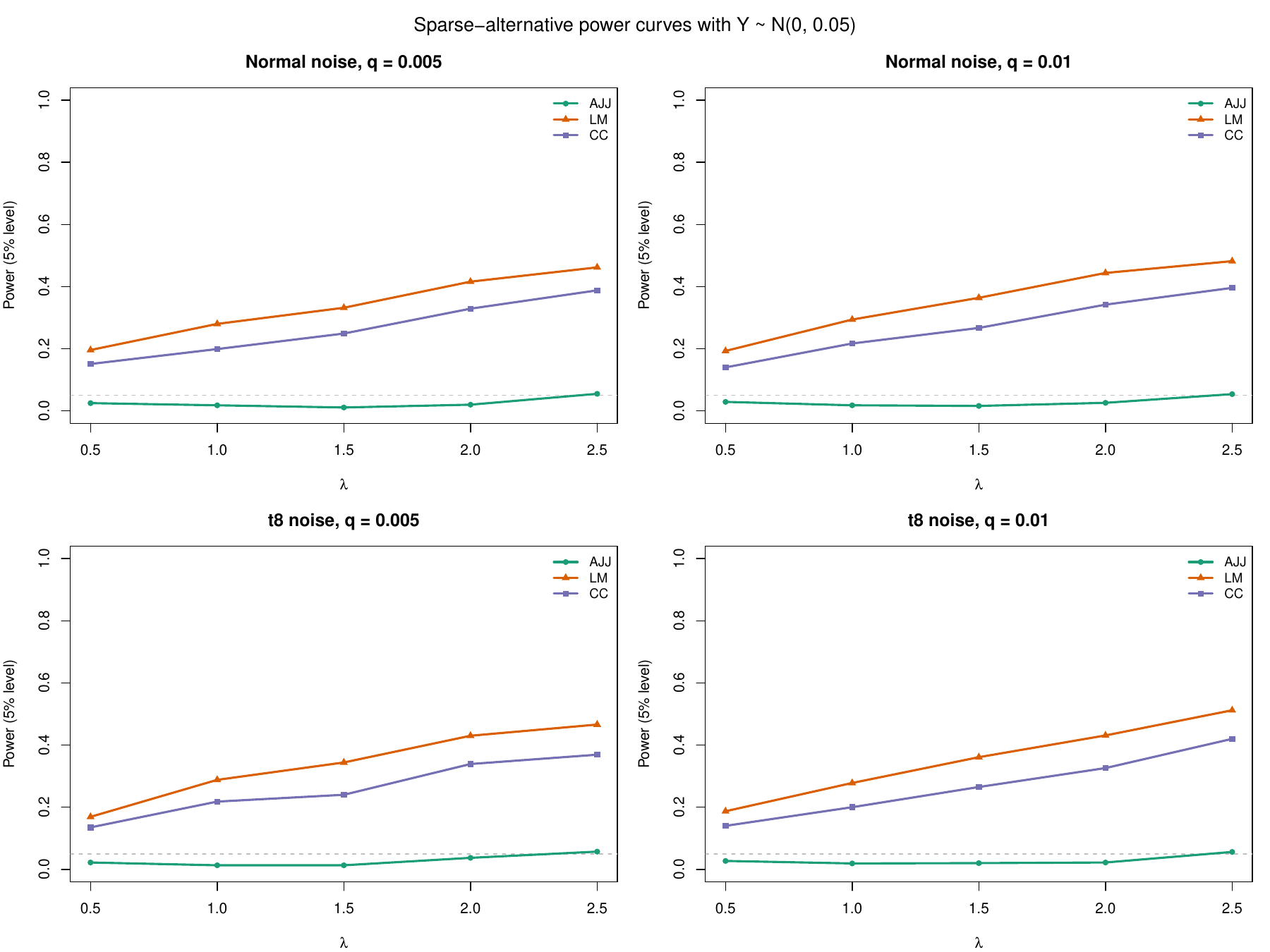}
\caption{Sparse-alternative power at the $5\%$ level, $10$-second sampling}
\label{fig:sparse-noise-dt10}
\end{figure}

We next examine the performance of AJJ, LM, and CC when the observed price is contaminated by market microstructure noise. To keep the comparison with the baseline experiments transparent, we retain the same latent efficient-price dynamics and parameter values as in the previous Monte Carlo designs and only modify the observation equation to
\[
Y_{t_i}=X_{t_i}+\varepsilon_i,
\qquad
t_i=i\Delta_n,
\qquad
\Delta_n=\frac{dt_{\mathrm{sec}}}{23400},
\]
over a fixed horizon of \(T=5\) trading days. We consider \(dt_{\mathrm{sec}}\in\{5,10\}\) and two i.i.d. noise specifications with common variance \(q^2\),
\[
\varepsilon_i\sim N(0,q^2),
\qquad\text{or}\qquad
\varepsilon_i=q\sqrt{\frac{3}{4}}\,T_i,
\qquad
T_i\stackrel{iid}{\sim} t_8,
\]
with \(q\in\{0.005,0.01\}\). All remaining tuning parameters are kept as close as possible to the noise-free experiments: AJJ uses \(k_n=100\) and \(r_n=1000\), while LM is implemented under the i.i.d.-noise specification with \(k=1\), block length \(M_n\) chosen as in \citet{LeeMykland2012}, and spot volatility estimated by the local TSRSV method. The combined procedure CC is computed from the AJJ and LM \(p\)-values exactly as before.

\paragraph{Size under the null.}
We begin with the continuous-path null. Relative to the previous implementation, the main change is the calibration of LM. Instead of the one-step recursive bootstrap with an ex post critical-value adjustment, we now use a double bootstrap. Let \(T_n^{\mathrm{LM}}\) denote the LM statistic computed from the observed sample, and let \(\hat q\) and \(\hat\sigma_t^2\) denote the corresponding estimates of the noise variance and spot volatility. At the first stage, we generate \(B_1\) bootstrap samples from the fitted continuous no-jump model, re-estimate the nuisance quantities on each bootstrap path, and recompute the LM statistic, yielding \(T_n^{*(1)},\ldots,T_n^{*(B_1)}\). The first-stage bootstrap \(p\)-value is
\[
\hat p_n^*
=
\frac{1}{B_1}\sum_{b=1}^{B_1}
\mathbf{1}\!\left\{T_n^{*(b)}\ge T_n^{\mathrm{LM}}\right\}.
\]
For each first-stage sample \(b\), we then generate \(B_2\) second-stage bootstrap samples from the fitted first-stage model, again re-estimate the nuisance quantities, and compute
\[
\hat p_b^{**}
=
\frac{1}{B_2}\sum_{\ell=1}^{B_2}
\mathbf{1}\!\left\{T_n^{**(b,\ell)}\ge T_n^{*(b)}\right\},
\qquad
b=1,\ldots,B_1.
\]
The LM test rejects when \(\hat p_n^*\) is smaller than the empirical \(\alpha\)-quantile of
\(\{\hat p_b^{**}:1\le b\le B_1\}\). At both bootstrap stages, the bootstrap paths are generated from the fitted continuous Gaussian model, with noise variance set equal to the corresponding estimated \(q^2\). Table~\ref{tab:size-corrected-lm-bootstrap} shows that this calibration keeps the size of LM close to the nominal level. More importantly, size is well controlled for all three procedures across the two sampling frequencies, the two noise levels, and both the Gaussian and standardized-\(t_8\) noise designs. In the noisy experiments, the relevant differences across procedures therefore come mainly from power.

\paragraph{Dense alternatives.}
We next return to the dense-local alternative, now under noisy observations. As in the baseline dense-jump design, the latent process contains the interval-level jump component
\[
\Delta J_i
=
\sqrt{v_{t_{i-1}}}\sqrt{\Delta_n}\sum_{m=1}^{N_i}Y_{im},
\qquad
N_i\sim \mathrm{Poisson}(\theta\sqrt{\Delta_n}),
\qquad
Y_{im}\sim N(0,400),
\]
and we report 5\%-level power for \(\theta=1,\ldots,10\). Figures~\ref{fig:dense-power-gaussianY-theta1to10} and \ref{fig:dense-power-gaussianY-theta1to10dt10} reveal a clear frequency effect. At the 5-second frequency, AJJ is uniformly more powerful than LM across the four noisy designs, and CC is typically on or very near the upper envelope. At the 10-second frequency, however, this ranking reverses: LM becomes the stronger marginal procedure, while CC again remains close to the best-performing test. The natural interpretation is that AJJ benefits more from ultra-high-frequency information and is therefore better able to exploit dense alternatives when the sampling grid is very fine. Once the grid is coarsened, that advantage weakens substantially. The ranking is stable across Gaussian and standardized-\(t_8\) noise. As in the benchmark dense design, power is highest for small \(\theta\) and then declines as \(\theta\) increases, because after local normalization the dense jump component becomes progressively harder to distinguish from a smooth continuous perturbation.

\paragraph{Sparse alternatives.}
We finally consider the sparse finite-activity alternative under the same noisy observation scheme. The latent price now contains a compound Poisson jump component
\[
J_t=\sum_{m=1}^{N_t}\xi_m,
\qquad
N_t\sim \mathrm{Poisson}(\lambda t),
\qquad
\xi_m\stackrel{iid}{\sim}N(0,0.05),
\]
and we report 5\%-level power for \(\lambda\in\{0.5,1,1.5,2,2.5\}\). Figures~\ref{fig:sparse-noise-dt5} and \ref{fig:sparse-noise-dt10} deliver a much more stable ordering. LM is the most powerful marginal procedure throughout, AJJ is the weakest, and CC lies between the two but remains much closer to LM than to AJJ. This is exactly what one would expect in a sparse-jump environment, where detection depends on a relatively small number of isolated discontinuities rather than on the accumulation of many small jump increments. As \(\lambda\) increases, all three procedures become more powerful, but the ranking does not change. The same pattern survives the move from Gaussian to standardized-\(t_8\) noise and from 5-second to 10-second sampling, although the coarser frequency shifts all power curves downward.

Taken together, the microstructure-noise experiments point to a simple conclusion. Once LM is calibrated by the double bootstrap, size is satisfactory for AJJ, LM, and CC. In terms of power, AJJ has an advantage in dense alternatives only at very high sampling frequencies, whereas LM is preferable in sparse alternatives and becomes relatively stronger as the sampling interval increases. CC provides the best overall compromise: it closely tracks the better marginal test across designs and is therefore the most reliable default choice in the presence of microstructure noise.

\section{Real data application}
\label{sec:data}

We illustrate the noise-robust procedures on public intraday data from the Kaggle dataset
\emph{Intraday market data} maintained by brtnsmth and the companion data-collection repository
\emph{TOS-RTD-core}. According to the dataset description, the files are generated from the
Thinkorswim real-time data interface, updated at about three-second intervals, written as one CSV file
per day, and captured over broad trading windows that include futures, stocks, and ETFs rather than
only the regular cash session \citep{brtnsmthKaggleIntraday2020,brtnsmthTOSRTD2020}. This
provenance is important for the empirical design. The raw files should be viewed as platform-captured
intraday market snapshots rather than as pre-filtered exchange transaction records. We therefore restrict
attention to the regular U.S. cash session from 09:30:00 to 16:00:00, retain the raw 3-second sampling
frequency, and remove sessions whose intraday path is entirely flat. These choices harmonize the sample
across assets and reduce the influence of overnight futures trading, stale snapshots, and inactive
periods.

Within our local archive we focus on six highly traded equity instruments,
\[
\mathrm{SPY},\qquad \mathrm{QQQ},\qquad \mathrm{IWM},\qquad \mathrm{AAPL},\qquad \mathrm{MSFT},\qquad \mathrm{NVDA}.
\]
The sample runs from January 27, 2020 to April 2, 2026. After restricting to regular trading hours,
there are $1540$ candidate weekday sessions for each asset. We remove $62$ sessions whose intraday
path is constant over the full day, leaving $1478$ valid trading days per asset. Each trading day is
then treated as one testing unit. For every day and every asset we compute the daily $p$-values of the
pre-averaged ratio test $\mathrm{AJJ}$, the Lee--Mykland-type local-average maximum test
$\mathrm{LM}$, and their Cauchy combination $\mathrm{CC}$.

A first empirical question is whether microstructure noise is quantitatively relevant at the 3-second
frequency. Figure~\ref{fig:realdata-qhat} plots the daily estimates $\hat q$ used in the LM
standardization. The mean values of $\hat q$ are $7.60\times 10^{-5}$ for SPY, $1.00\times 10^{-4}$
for QQQ, $1.13\times 10^{-4}$ for IWM, $1.33\times 10^{-4}$ for AAPL, $1.26\times 10^{-4}$ for
MSFT, and $2.31\times 10^{-4}$ for NVDA. These magnitudes are small in absolute terms, but they are
not negligible relative to 3-second returns. This is precisely the frequency range in which a
noise-robust implementation is preferable to the frictionless counterparts.

\begin{figure}[htbp]
\centering
\IfFileExists{alltime_3s_qhat_hist_panels_3x2.pdf}{%
\includegraphics[width=0.85\textwidth]{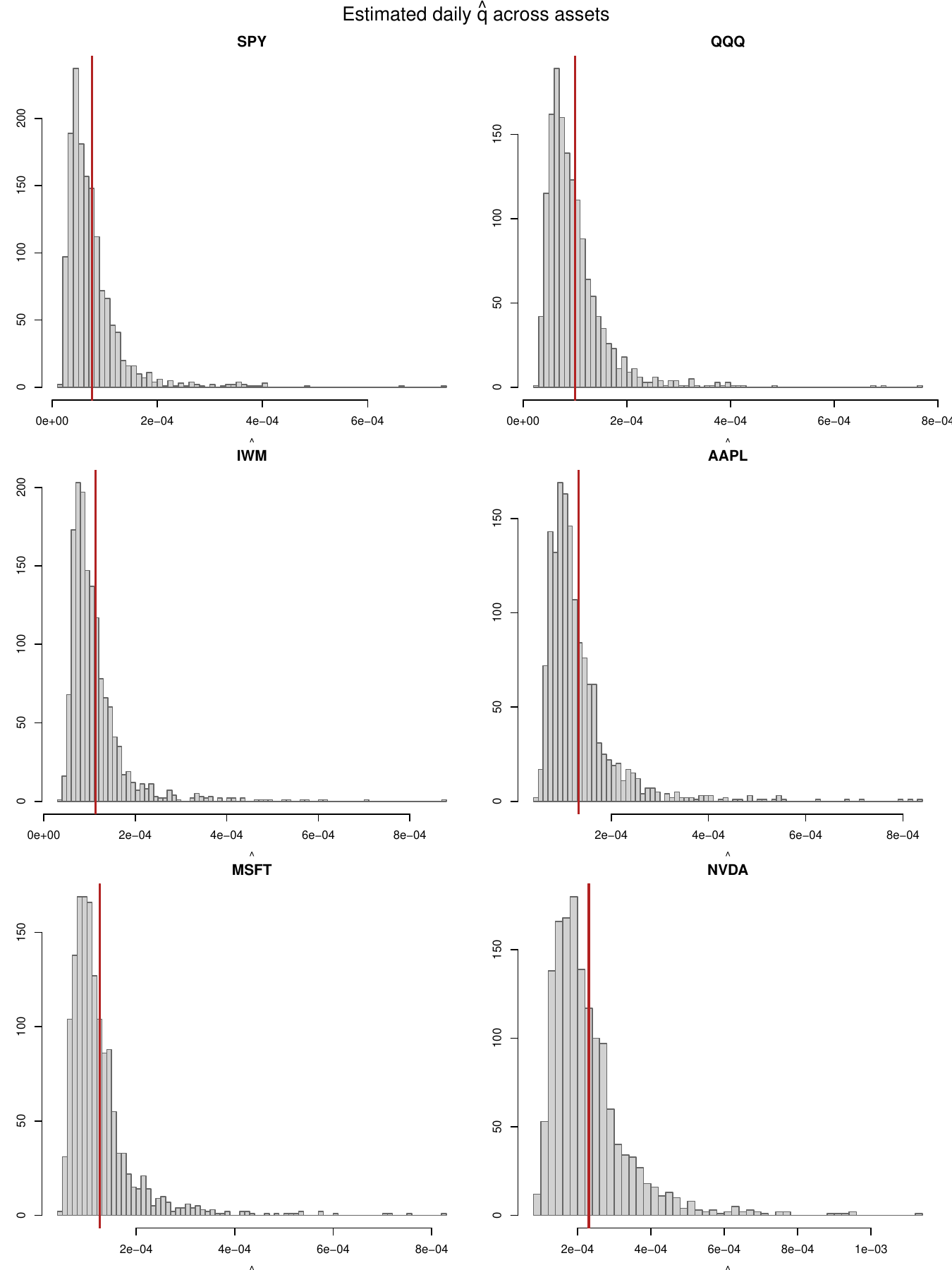}%
}{%
\fbox{\parbox{0.82\textwidth}{\centering Placeholder for the histogram panel of the daily noise estimates $\hat q$ for SPY, QQQ, IWM, AAPL, MSFT, and NVDA.}}%
}
\caption{Histograms of the daily noise estimates $\hat q$ for SPY, QQQ, IWM, AAPL, MSFT, and NVDA.}
\label{fig:realdata-qhat}
\end{figure}

Table~\ref{tab:realdata-rejection} reports the empirical rejection frequencies at the $5\%$ level. The
ranking is stable across all six assets: $\mathrm{LM}>\mathrm{CC}\gg \mathrm{AJJ}$. The LM test rejects
between $51.0\%$ and $61.5\%$ of daily sessions, whereas the pre-averaged ratio test rejects only
between $0.3\%$ and $2.0\%$. The combined test lies systematically between the two components and is
always much closer to LM than to AJJ. This pattern mirrors the sparse-regime logic of the Monte Carlo
section. In the real data, jump evidence appears to be concentrated in a few short intraday windows,
for which a local extreme-value statistic is more sensitive than a statistic that aggregates information
over the whole day. At the same time, the Cauchy combination substantially improves upon the AJJ
component and preserves most of the sparse-regime sensitivity of LM, which is exactly the adaptive gain
our theory is designed to deliver.

\begin{table}[htbp]
\centering
\caption{Empirical rejection frequencies at the $5\%$ level}
\label{tab:realdata-rejection}
\begin{tabular}{lccc}
\toprule
Asset & AJJ & LM & CC \\
\midrule
SPY  & 0.0203 & 0.6150 & 0.5419 \\
QQQ  & 0.0149 & 0.5710 & 0.5061 \\
IWM  & 0.0074 & 0.5169 & 0.4290 \\
AAPL & 0.0074 & 0.5724 & 0.4817 \\
MSFT & 0.0034 & 0.5792 & 0.4973 \\
NVDA & 0.0061 & 0.5101 & 0.4154 \\
\bottomrule
\end{tabular}
\end{table}

To isolate the days with stronger evidence of discontinuities, we apply the Benjamini--Hochberg
procedure to the LM daily $p$-values for each asset separately at false-discovery-rate level $0.2$.
The numbers of selected days are $1172$ for SPY, $1113$ for QQQ, $1037$ for IWM, $1132$ for AAPL,
$1136$ for MSFT, and $1023$ for NVDA. There are $449$ dates that are selected simultaneously for all
six assets. This concentration of detections across the full cross-section suggests that an important
fraction of the discontinuities is market-wide rather than idiosyncratic.

Figures~\ref{fig:realdata-20200309} and \ref{fig:realdata-20221110} display two representative common-
jump days, March 9, 2020 and November 10, 2022. For all six assets, the largest standardized LM
excursion occurs within seconds of the market open. This timing is economically plausible in view of
the source construction of the data: the underlying files are not pre-trimmed to cash-market hours,
so the opening auction and the first minutes of regular trading provide a natural point at which
overnight information is incorporated discontinuously into equity prices. The empirical evidence thus
reinforces the message of the simulation study. Noise should not be ignored at the 3-second frequency,
and the dominant alternatives in these data appear closer to sparse common jumps than to dense diffuse
jump activity.

\begin{figure}[htbp]
\centering
\IfFileExists{intraday_jump_panels_20200309_3x2.pdf}{%
\includegraphics[width=0.9\textwidth]{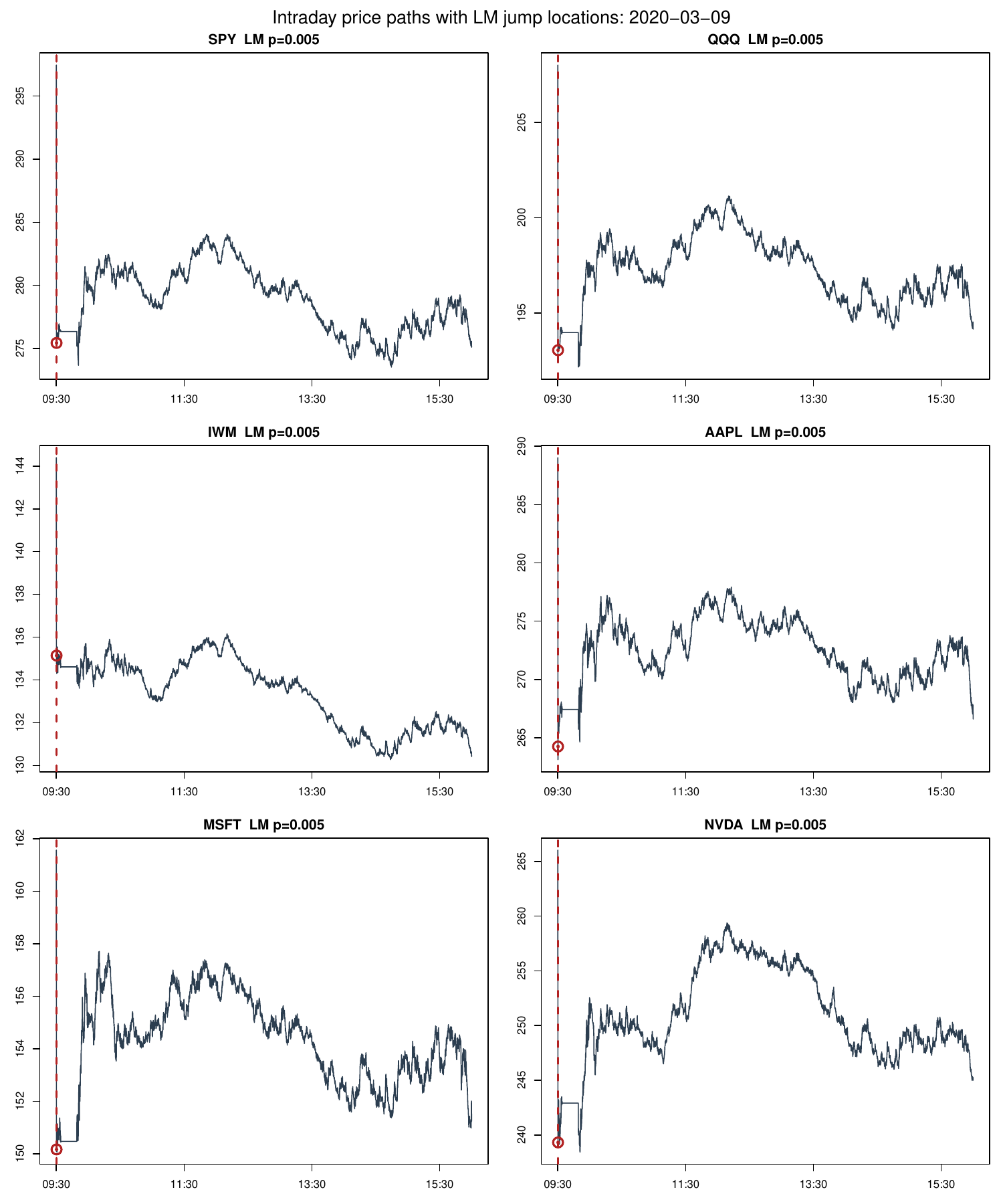}%
}{%
\fbox{\parbox{0.86\textwidth}{\centering Placeholder for the intraday price panels on March 9, 2020, with the LM-based jump location marked for each asset.}}%
}
\caption{Intraday price paths on March 9, 2020, with the LM-based jump location marked for each asset.}
\label{fig:realdata-20200309}
\end{figure}

\begin{figure}[htbp]
\centering
\IfFileExists{intraday_jump_panels_20221110_3x2.pdf}{%
\includegraphics[width=0.9\textwidth]{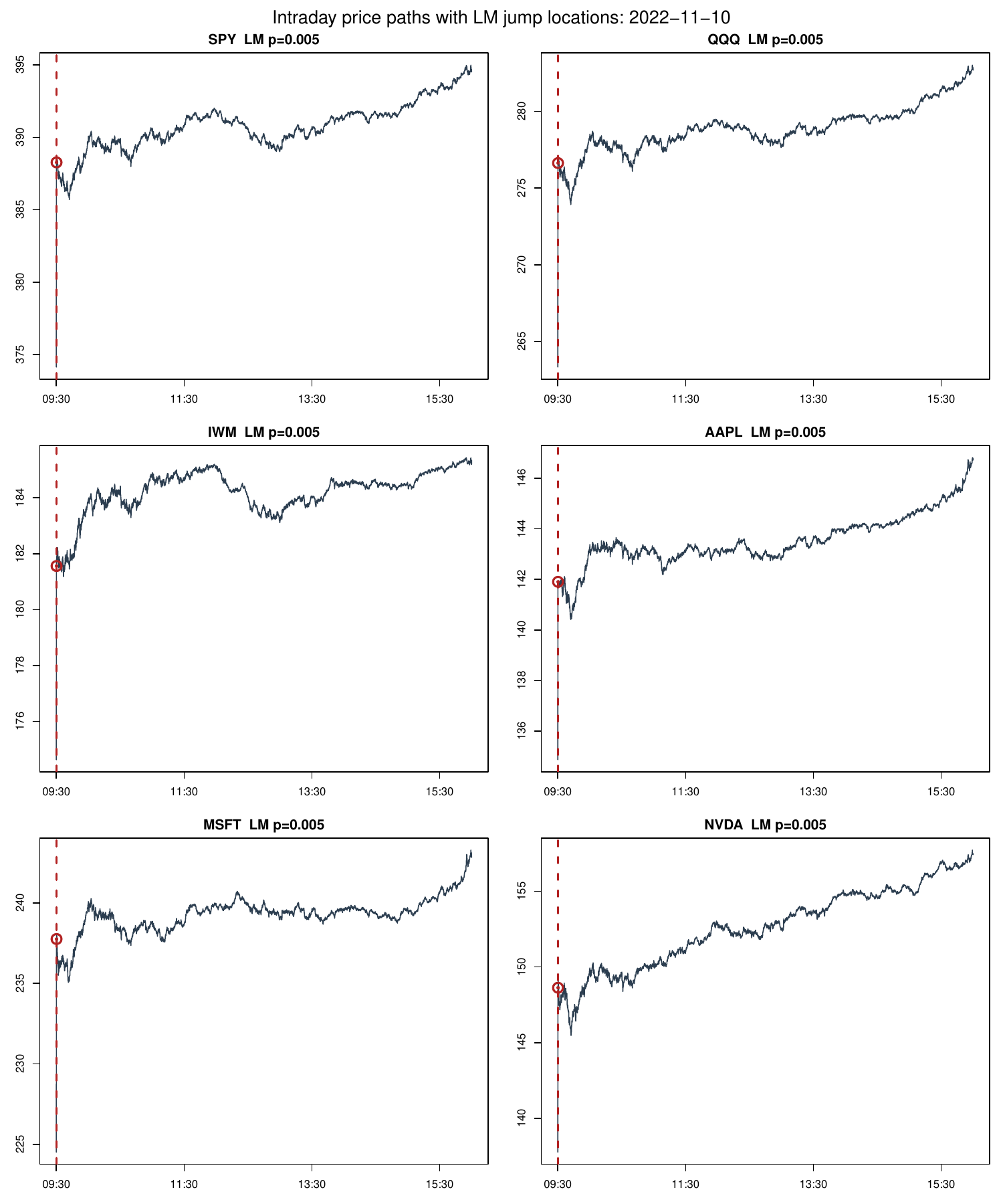}%
}{%
\fbox{\parbox{0.86\textwidth}{\centering Placeholder for the intraday price panels on November 10, 2022, with the LM-based jump location marked for each asset.}}%
}
\caption{Intraday price paths on November 10, 2022, with the LM-based jump location marked for each asset.}
\label{fig:realdata-20221110}
\end{figure}

We next revisit the empirical analysis at coarser sampling frequencies, where the effect of market microstructure noise is substantially attenuated. A large literature shows that noise is most severe at very high sampling rates and becomes much less important after temporal aggregation or sparse sampling; see, among others, \citet{AitSahaliaMyklandZhang2005,HansenLunde2006,ZuBoswijk2014,AitSahaliaJacodLi2012}. For this reason, starting from the cleaned regular-session \(3\)-second data for the six assets \(\{\text{SPY},\text{QQQ},\text{IWM},\text{AAPL},\text{MSFT},\text{NVDA}\}\), we construct \(1\)-minute, \(2\)-minute, and \(5\)-minute series by retaining the last available observation within each interval. We then apply the frictionless versions of the AJ, LM, and CC tests to the resulting intraday return sequences and compute daily rejection frequencies. Table~\ref{tab:clean-nonoise-05} reports the empirical rejection frequencies at the \(5\%\) level.

Two features stand out. First, rejection frequencies decline systematically as the sampling interval becomes coarser. This is consistent with the idea that temporal aggregation smooths short-lived intraday discontinuities and weakens the jump signal available to the tests. Second, the relative performance of AJ and LM varies with the frequency in a way that closely matches the adaptive motivation of the paper. At the \(1\)-minute frequency, LM remains competitive with AJ, which is consistent with a setting in which some discontinuities are still sufficiently sharp to be captured by a local extreme-value statistic. At the \(2\)-minute frequency, and even more clearly at the \(5\)-minute frequency, AJ becomes systematically stronger than LM, suggesting that once the data are sampled more sparsely the remaining jump evidence is less concentrated in a few extreme increments and is better captured by the ratio-based statistic.

The main message of Table~\ref{tab:clean-nonoise-05}, however, is the robustness of the Cauchy combination. At the \(1\)- and \(2\)-minute frequencies, CC is uniformly the strongest procedure across the six assets. At the \(5\)-minute frequency, although AJ slightly overtakes CC on average, the combined procedure remains very close to the stronger component and continues to dominate LM by a substantial margin for most assets. Hence the empirical evidence at coarser frequencies reinforces the central contribution of the paper: AJ is relatively more effective when jump evidence is more diffuse, LM is relatively more effective when jumps are sharper and more localized, and the proposed CC procedure combines these two sources of power and delivers the most stable overall performance across empirically relevant configurations.

\begin{table}[htbp]
\centering
\caption{Empirical rejection frequencies at the 5\% level for the no-noise AJ, LM, and CC tests}
\label{tab:clean-nonoise-05}
\begin{tabular}{lccccccccc}
\toprule
 & \multicolumn{3}{c}{1 min} & \multicolumn{3}{c}{2 min} & \multicolumn{3}{c}{5 min} \\
\cmidrule(lr){2-4}\cmidrule(lr){5-7}\cmidrule(lr){8-10}
Asset & AJ & LM & CC & AJ & LM & CC & AJ & LM & CC \\
\midrule
SPY & 0.332 & 0.366 & 0.431 & 0.287 & 0.221 & 0.334 & 0.193 & 0.112 & 0.191 \\
QQQ & 0.305 & 0.357 & 0.405 & 0.275 & 0.212 & 0.317 & 0.188 & 0.085 & 0.183 \\
IWM & 0.340 & 0.268 & 0.406 & 0.303 & 0.156 & 0.328 & 0.237 & 0.085 & 0.202 \\
AAPL & 0.392 & 0.388 & 0.491 & 0.340 & 0.229 & 0.384 & 0.250 & 0.131 & 0.251 \\
MSFT & 0.390 & 0.355 & 0.479 & 0.326 & 0.226 & 0.367 & 0.220 & 0.129 & 0.231 \\
NVDA & 0.305 & 0.258 & 0.377 & 0.261 & 0.167 & 0.296 & 0.168 & 0.078 & 0.151 \\
\bottomrule
\end{tabular}
\end{table}

\section{Conclusion}
\label{sec:conclusion}

This paper develops an adaptive jump test by combining a sum-type statistic with a max-type statistic. The key result is that the two components are asymptotically independent in both the latent and noisy settings, which makes an analytically calibrated Cauchy combination test feasible. The resulting procedure adapts well to dense and sparse alternatives and remains consistent against fixed finite-activity jumps, so it provides a simple and robust default when the form of the jump alternative is unknown.

Future research can proceed in several directions. One is to relax the i.i.d.\ noise assumption and allow serially dependent microstructure noise, building on \citet{AitSahaliaMyklandZhang2011} and \citet{LiLaevenVellekoop2020}. Another is to extend the adaptive idea to multivariate high-frequency data and study common-jump or co-jump testing under asynchronous sampling and noise; see \citet{JacodTodorov2009} and \citet{BibingerWinkelmann2015}.

\appendix
\makeatletter
\renewcommand\thetheorem{\Alph{section}.\arabic{theorem}}
\renewcommand\theproposition{\Alph{section}.\arabic{proposition}}
\renewcommand\thelemma{\Alph{section}.\arabic{lemma}}
\renewcommand\thecorollary{\Alph{section}.\arabic{corollary}}
\renewcommand\theassumption{\Alph{section}.\arabic{assumption}}
\renewcommand\thedefinition{\Alph{section}.\arabic{definition}}
\renewcommand\theremark{\Alph{section}.\arabic{remark}}
\numberwithin{equation}{section}
\renewcommand\theequation{\Alph{section}.\arabic{equation}}
\renewcommand\p@equation{}
\makeatother

\section{Auxiliary inequalities, local regularity, and stochastic-Taylor expansion}
\label{app:aux}

\begin{lemma}
\label{lem:LMregularity}
By Assumption \ref{ass:H0}, for every $\varepsilon\in(0,1/2)$,
\begin{equation*}
\max_{0\le i\le n-1}\sup_{t_i\le u\le t_{i+1}}\abs{b_u-b_{t_i}}
=O_p\paren{\Delta_n^{1/2-\varepsilon}},
\qquad
\max_{0\le i\le n-1}\sup_{t_i\le u\le t_{i+1}}\abs{\sigma_u-\sigma_{t_i}}
=O_p\paren{\Delta_n^{1/2-\varepsilon}}.
\end{equation*}
\end{lemma}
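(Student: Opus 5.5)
The plan is to treat $b$ and $\sigma$ in the same way, since each has the Itô form $Y_t=Y_0+\int_0^t a_s\dd s+\int_0^t c_s^\top\dd B_s$ with $a,c$ càdlàg and locally bounded. For $u\in[t_i,t_{i+1}]$, split
\begin{equation*}
Y_u-Y_{t_i}=\int_{t_i}^u a_s\dd s+\int_{t_i}^u c_s^\top\dd B_s .
\end{equation*}

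First localize. Local boundedness gives stopping times $T_m\uparrow\infty$ with $\abs{a_s}+\norm{c_s}\le m$ on $[0,T_m]$. Since $\Pbb(T_m<T)\to0$, it suffices to prove the bound for the stopped processes. On the event $\set{T_m\ge T}$ the stopped and original processes agree, which yields the $O_p$ statement. In what follows we therefore assume $a$ and $c$ are bounded by a constant $K$.

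Second, the drift part is handled trivially: $\sup_u\abs{\int_{t_i}^u a_s\dd s}\le K\Delta_n$ uniformly in $i$, which is $o(\Delta_n^{1/2-\varepsilon})$.

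Third, the martingale part is the real work, and the key point is the maximum over $n$ blocks. Set $S_i=\sup_{t_i\le u\le t_{i+1}}\abs{\int_{t_i}^u c_s^\top\dd B_s}$. Apply the Burkholder--Davis--Gundy inequality with an arbitrary even integer $q$:
\begin{equation*}
\E S_i^q\le C_q\,\E\paren{\int_{t_i}^{t_{i+1}}\norm{c_s}^2\dd s}^{q/2}\le C_qK^q\Delta_n^{q/2}.
\end{equation*}
A union bound combined with Markov's inequality then gives
\begin{equation*}
\Pbb\paren{\max_i S_i>\Delta_n^{1/2-\varepsilon}}\le n\,C_qK^q\Delta_n^{q/2}\Delta_n^{-q(1/2-\varepsilon)}=C\,T\,\Delta_n^{q\varepsilon-1}.
\end{equation*}
Choosing $q>1/\varepsilon$ makes this bound tend to zero, so $\max_i S_i=o_p(\Delta_n^{1/2-\varepsilon})$.

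As an alternative to BDG, one could use an exponential martingale bound, since $\int c^\top\dd B$ has bounded quadratic variation rate. This would give the sharper rate $\sqrt{\Delta_n\log n}$, but the moment argument already suffices.

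The only delicate step is the localization: BDG needs the moment bound, and that is exactly where local boundedness of $\beta,\gamma$ (respectively $b^{(0)},\sigma^{(0)}$ for the drift part) enters. Combining the three steps gives both displayed bounds.
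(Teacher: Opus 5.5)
Your proof is correct and follows essentially the same route as the paper: a Burkholder--Davis--Gundy moment bound of order $\Delta_n^{q/2}$ on each interval, then Markov's inequality and a union bound over the $n$ intervals with $q>1/\varepsilon$. Your explicit localization step is a useful addition, because the paper's moment bound tacitly needs the locally bounded coefficients to be actually bounded.
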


\begin{proof}
Fix $\varepsilon\in(0,1/2)$ and choose $q>1/\varepsilon$. Under Assumption \ref{ass:H0}, by the Burkholder--Davis--
Gundy inequality \citep[Chapter II]{Protter2005}, and the c\`adl\`ag local boundedness of
$(b^{(0)},\beta)$,
\begin{equation}\label{eq:bmodmom_v4}
\E\sup_{t_i\le u\le t_{i+1}}\abs{b_u-b_{t_i}}^q
\le C_q\Delta_n^{q/2},
\qquad i=0,\dots,n-1.
\end{equation}
Therefore
\begin{align*}
\Pbb\paren{
\max_{0\le i\le n-1}\sup_{t_i\le u\le t_{i+1}}\abs{b_u-b_{t_i}}>
\Delta_n^{1/2-\varepsilon}}
&\le
\sum_{i=0}^{n-1}
\frac{\E\sup_{t_i\le u\le t_{i+1}}\abs{b_u-b_{t_i}}^q}{\Delta_n^{q(1/2-\varepsilon)}}\\
&\le C_q n\Delta_n^{q\varepsilon}
= C_qT\Delta_n^{-1+q\varepsilon}
\to 0.
\end{align*}
Hence
\[
\max_{0\le i\le n-1}\sup_{t_i\le u\le t_{i+1}}\abs{b_u-b_{t_i}}
=O_p\paren{\Delta_n^{1/2-\varepsilon}}.
\]
The same argument applied to $\sigma$ gives
\[
\max_{0\le i\le n-1}\sup_{t_i\le u\le t_{i+1}}\abs{\sigma_u-\sigma_{t_i}}
=O_p\paren{\Delta_n^{1/2-\varepsilon}}.
\]
\end{proof}

\begin{lemma}[One-step stochastic-Taylor expansion under leverage]
\label{lem:taylor_increment}
For $i=1,\dots,n$, define
\[
E_i=\frac{\Delta_i^nW}{\sqrt{\Delta_n}},
\qquad
I_{i,n}^{(r,1)}=
\int_{t_{i-1}}^{t_i}\int_{t_{i-1}}^s\dd B_u^{(r)}\,\dd W_s,
\qquad r=2,\dots,d_B.
\]
Then $E_1,\dots,E_n$ are independent standard normal random variables, and under Assumption
\ref{ass:H0} there exist remainder variables $\rho_{i,n}$ such that
\begin{equation*}
\Delta_i^nX
=
\sigma_{t_{i-1}}\sqrt{\Delta_n}E_i
+b_{t_{i-1}}\Delta_n
+\frac12\gamma_{t_{i-1}}^{(1)}\paren{(\Delta_i^nW)^2-\Delta_n}
+\sum_{r=2}^{d_B}\gamma_{t_{i-1}}^{(r)}I_{i,n}^{(r,1)}
+\rho_{i,n}.
\end{equation*}
Moreover, for every $\varepsilon\in(0,1/2)$,
\begin{equation*}
\max_{1\le i\le n}\abs{\rho_{i,n}}=O_p\paren{\Delta_n^{3/2-\varepsilon}},
\qquad
\max_{1\le i\le n}
\abs{\Delta_i^nX-\sigma_{t_{i-1}}\sqrt{\Delta_n}E_i}
=O_p\paren{\Delta_n^{1-\varepsilon}}.
\end{equation*}
\end{lemma}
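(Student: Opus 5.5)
Write $\Delta_i^nX=\int_{t_{i-1}}^{t_i}b_s\dd s+\int_{t_{i-1}}^{t_i}\sigma_s\dd W_s$ and expand both coefficients around $t_{i-1}$ using the Itô representations of Assumption \ref{ass:H0}. Independence and standard normality of the $E_i$ are immediate, since $W$ is a standard Brownian motion with independent increments.

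For the drift integral, $\int_{t_{i-1}}^{t_i}b_s\dd s=b_{t_{i-1}}\Delta_n+\int_{t_{i-1}}^{t_i}(b_s-b_{t_{i-1}})\dd s$. The second piece goes into $\rho_{i,n}$. By the moment bound \eqref{eq:bmodmom_v4}, its $q$-th moment is $O(\Delta_n^{3q/2})$.

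For the diffusion integral, substitute
\[
\sigma_s=\sigma_{t_{i-1}}+\int_{t_{i-1}}^s\sigma_u^{(0)}\dd u+\int_{t_{i-1}}^s\gamma_u^\top\dd B_u ,
\]
which gives $\sigma_{t_{i-1}}\sqrt{\Delta_n}E_i$ plus two further terms.

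The first further term is $\int_{t_{i-1}}^{t_i}\int_{t_{i-1}}^s\sigma^{(0)}_u\dd u\,\dd W_s$. By BDG and local boundedness, its $q$-th moment is $O(\Delta_n^{3q/2})$, so it goes into $\rho_{i,n}$.

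The second further term is $\int_{t_{i-1}}^{t_i}\int_{t_{i-1}}^s\gamma_u^\top\dd B_u\,\dd W_s$. Freeze $\gamma_u$ at $\gamma_{t_{i-1}}$. The $r=1$ component gives $\gamma^{(1)}_{t_{i-1}}\int\!\int\dd W_u\dd W_s=\frac12\gamma^{(1)}_{t_{i-1}}((\Delta_i^nW)^2-\Delta_n)$ by Itô's formula. The components $r\ge2$ give $\gamma^{(r)}_{t_{i-1}}I^{(r,1)}_{i,n}$. The freezing error is $\int\!\int(\gamma_u-\gamma_{t_{i-1}})^\top\dd B_u\dd W_s$.

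The freezing error is the delicate point, because $\gamma$ is only càdlàg locally bounded, not an Itô process. I would bound it using BDG twice:
\[
\E|\cdot|^q\le C\Delta_n^{q/2}\,\E\sup_{u\in[t_{i-1},t_i]}\Big|\int_{t_{i-1}}^{u}(\gamma_v-\gamma_{t_{i-1}})^\top\dd B_v\Big|^q
\le C\Delta_n^{q}\,\E\sup_u|\gamma_u-\gamma_{t_{i-1}}|^q .
\]
Under mere càdlàg regularity, $\gamma$ jumps are what make this piece $O(\Delta_n^{1+\epsilon'})$ rather than $O(\Delta_n^{3/2})$. Two ways to handle it:
\begin{itemize}
\item[(a)] Interpret the assumption, as the paper implicitly does via Lemma \ref{lem:LMregularity}, so that the coefficients of the coefficients are regular enough that $\E\sup|\gamma_u-\gamma_{t_{i-1}}|^q\le C\Delta_n^{q/2}$. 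This is the analogue of \eqref{eq:bmodmom_v4} one level up.
\item[(b)] Absorb this term into the weaker second claim. That claim needs only $O_p(\Delta_n^{1-\varepsilon})$, and boundedness of $\gamma$ alone gives that.
\end{itemize}
I would state (a) as the standing regularity, with a localization argument first to make all coefficients bounded; that is the main obstacle. By standard localization, it suffices to prove both claims on events where the locally bounded processes are bounded by a constant.

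Once every remainder piece $\zeta_i$ satisfies $\E|\zeta_i|^q\le C_q\Delta_n^{3q/2}$ for all $q$, the maximal bound follows from the union-bound trick of Lemma \ref{lem:LMregularity}:
\[
\Pbb\big(\max_i|\rho_{i,n}|>\Delta_n^{3/2-\varepsilon}\big)\le C_q n\Delta_n^{q\varepsilon}\to0
\quad\text{for } q>1/\varepsilon .
\]

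For the second claim, $\max_i|\Delta_i^nX-\sigma_{t_{i-1}}\sqrt{\Delta_n}E_i|$ is bounded by the sum of three maxima:
\begin{itemize}
\item $\max_i|b_{t_{i-1}}|\Delta_n=O_p(\Delta_n)$;
\item the maxima of the second-order Wiener terms, which are $O_p(\Delta_n\log n)=O_p(\Delta_n^{1-\varepsilon})$ by Gaussian-chaos tail bounds;
\item the remainder, by the bound just shown.
\end{itemize}
Alternatively, the same moment-plus-union argument applies directly with the moment bound $\E|\cdot|^q\le C\Delta_n^q$.
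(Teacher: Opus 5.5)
Your decomposition and bounding strategy coincide with the paper's proof. You freeze $b$ and $\sigma$ at $t_{i-1}$, insert the It\^o representation of $\sigma_s-\sigma_{t_{i-1}}$, and identify the $r=1$ iterated integral with $\frac12((\Delta_i^nW)^2-\Delta_n)$. You then put the drift-modulus, $\sigma^{(0)}$ and $\gamma$-freezing terms into $\rho_{i,n}$, and pass from $L^q$ bounds to maxima by the union bound. The only divergence is at the $\gamma$-freezing term, and there you are more careful than the paper. The paper bounds $\max_i|\int_{t_{i-1}}^{t_i}\int_{t_{i-1}}^s(\gamma^{(r)}_u-\gamma^{(r)}_{t_{i-1}})\,\dd B^{(r)}_u\,\dd W_s|$ by $O_p(\Delta_n^{3/2-\varepsilon})$, citing Lemma~\ref{lem:LMregularity}, Cauchy--Schwarz and BDG. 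But Lemma~\ref{lem:LMregularity} only controls the local moduli of $b$ and $\sigma$, which are It\^o processes. Assumption~\ref{ass:H0} makes $\gamma$ merely c\`adl\`ag and locally bounded, and nothing in the paper supplies $\E\sup_{u\in[t_{i-1},t_i]}|\gamma_u-\gamma_{t_{i-1}}|^q\le C_q\Delta_n^{q/2}$. So your option (a) is not something the paper derives via Lemma~\ref{lem:LMregularity}. It is an extra hypothesis that the paper's argument needs and uses tacitly.

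The obstruction is also sharper than your $O(\Delta_n^{1+\epsilon'})$ suggests: without (a) the first claim is false. Here is a counterexample satisfying Assumption~\ref{ass:H0}:
\begin{itemize}
\item take $d_B\ge2$, $b\equiv0$, $\sigma_0=1$ with $\underline\sigma<1<\overline\sigma$;
\item set $\dd\sigma_t=c\,\1\{t\ge\tau_0\}\psi(\sigma_t)\,\dd B^{(2)}_t$, with $\tau_0/T\in(0,1)$ irrational;
\item let $\psi$ be Lipschitz, vanishing outside $(\underline\sigma,\overline\sigma)$, with $\psi(1)=1$.
\end{itemize}
Consider the cell with $t_{i-1}<\tau_0<t_i$. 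There $\rho_{i,n}$, which is pinned down by the displayed identity, consists only of the freezing error. It equals $c\int_{\tau_0}^{t_i}(B^{(2)}_s-B^{(2)}_{\tau_0})\,\dd W_s+O_p(\Delta_n^{3/2})$. The leading part has standard deviation $c(t_i-\tau_0)/\sqrt2$, which is at least $c\Delta_n/(2\sqrt2)$ for infinitely many $n$. So $\max_i|\rho_{i,n}|$ is of exact order $\Delta_n$ along that subsequence.

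Some strengthening like your (a) is therefore unavoidable for the first claim; a sufficient example is $\gamma$ itself an It\^o process with locally bounded coefficients. With it, your double-BDG bound gives $C_q\Delta_n^{3q/2}$, and the union bound finishes exactly as you describe. Your option (b) is rigorous under the stated assumption and yields the second claim. That claim is the only part used in, for example, Lemma~\ref{lem:LMgauss}.
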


\begin{proof}
Under Assumption~\ref{ass:H0},
\[
\Delta_i^nX
=\int_{t_{i-1}}^{t_i}b_s\,\dd s
+\int_{t_{i-1}}^{t_i}\sigma_s\,\dd W_s.
\]
The drift term satisfies
\[
\int_{t_{i-1}}^{t_i}b_s\,\dd s
=b_{t_{i-1}}\Delta_n
+\int_{t_{i-1}}^{t_i}(b_s-b_{t_{i-1}})\,\dd s.
\]
Further,
\[
\int_{t_{i-1}}^{t_i}\sigma_s\,\dd W_s
=
\sigma_{t_{i-1}}\Delta_i^nW
+\int_{t_{i-1}}^{t_i}(\sigma_s-\sigma_{t_{i-1}})\,\dd W_s.
\]
Inserting the It\^o decomposition of $\sigma_s-\sigma_{t_{i-1}}$ gives
\begin{align*}
\int_{t_{i-1}}^{t_i}(\sigma_s-\sigma_{t_{i-1}})\,\dd W_s
&=
\int_{t_{i-1}}^{t_i}\int_{t_{i-1}}^s\sigma_u^{(0)}\,\dd u\,\dd W_s
+\sum_{r=1}^{d_B}\int_{t_{i-1}}^{t_i}\int_{t_{i-1}}^s\gamma_u^{(r)}\,\dd B_u^{(r)}\,\dd W_s
\\
&=
\sum_{r=1}^{d_B}\gamma_{t_{i-1}}^{(r)}
\int_{t_{i-1}}^{t_i}\int_{t_{i-1}}^s\dd B_u^{(r)}\,\dd W_s
+\widetilde\rho_{i,n},
\end{align*}
where
\begin{align*}
\widetilde\rho_{i,n}
&=
\int_{t_{i-1}}^{t_i}\int_{t_{i-1}}^s\sigma_u^{(0)}\,\dd u\,\dd W_s
+\sum_{r=1}^{d_B}
\int_{t_{i-1}}^{t_i}\int_{t_{i-1}}^s\paren{\gamma_u^{(r)}-\gamma_{t_{i-1}}^{(r)}}\,\dd B_u^{(r)}\,\dd W_s.
\end{align*}
For $r=1$,
\begin{equation*}
\int_{t_{i-1}}^{t_i}\int_{t_{i-1}}^s\dd W_u\,\dd W_s
=\frac12\paren{(\Delta_i^nW)^2-\Delta_n}.
\end{equation*}
Combining the previous displays yields
\begin{align*}
    \Delta_i^nX
=
\sigma_{t_{i-1}}\sqrt{\Delta_n}E_i
+b_{t_{i-1}}\Delta_n
+\frac12\gamma_{t_{i-1}}^{(1)}\paren{(\Delta_i^nW)^2-\Delta_n}
+\sum_{r=2}^{d_B}\gamma_{t_{i-1}}^{(r)}I_{i,n}^{(r,1)}
+\rho_{i,n},
\end{align*} 
with
\begin{equation*}
    \rho_{i,n}
=
\int_{t_{i-1}}^{t_i}(b_s-b_{t_{i-1}})\,\dd s
+\widetilde\rho_{i,n}.
\end{equation*}

Lemma \ref{lem:LMregularity}, the Cauchy--Schwarz inequality, and the Burkholder--Davis--Gundy
inequality imply, for every $\varepsilon\in(0,1/2)$,
\begin{align*}
\max_{1\le i\le n}\abs{\int_{t_{i-1}}^{t_i}(b_s-b_{t_{i-1}})\,\dd s}
&\le
\Delta_n\max_{0\le i\le n-1}\sup_{t_i\le s\le t_{i+1}}\abs{b_s-b_{t_i}}
=O_p\paren{\Delta_n^{3/2-\varepsilon}},
\\
\max_{1\le i\le n}
\abs{\int_{t_{i-1}}^{t_i}\int_{t_{i-1}}^s\sigma_u^{(0)}\,\dd u\,\dd W_s}
&=O_p\paren{\Delta_n^{3/2-\varepsilon}},
\\
\max_{1\le i\le n}
\abs{\int_{t_{i-1}}^{t_i}\int_{t_{i-1}}^s\paren{\gamma_u^{(r)}-\gamma_{t_{i-1}}^{(r)}}\,\dd B_u^{(r)}\,\dd W_s}
&=O_p\paren{\Delta_n^{3/2-\varepsilon}},
\qquad r=1,\dots,d_B.
\end{align*}
Therefore
\[
\max_{1\le i\le n}\abs{\rho_{i,n}}=O_p\paren{\Delta_n^{3/2-\varepsilon}}.
\]
Since
\[
\begin{aligned}
\max_{1\le i\le n}\abs{b_{t_{i-1}}\Delta_n}&=O_p(\Delta_n),\\
\max_{1\le i\le n}\abs{(\Delta_i^nW)^2-\Delta_n}
&=O_p\paren{\Delta_n\log n},\\
\max_{1\le i\le n}\abs{I_{i,n}^{(r,1)}}&=O_p\paren{\Delta_n\log n}.
\end{aligned}
\]
and $\Delta_n\log n=o(\Delta_n^{1-\varepsilon})$, we derive the remaining desired results.
\end{proof}

\begin{lemma}[Analytic inequalities]\label{lem:analytic}
Let $p>3$. Then there exists a finite constant $C_p$ such that for all $x_1,\dots,x_k,y_1,\dots,y_k\in\R$,
\begin{equation*}
\begin{aligned}
&\abs{U_{p,k}(x_1+y_1,\dots,x_k+y_k)-U_{p,k}(x_1,\dots,x_k)}\\
&\qquad\le C_p
\paren{\sum_{\ell=1}^k\abs{x_\ell}^{p-1}+\sum_{\ell=1}^k\abs{y_\ell}^{p-1}}
\sum_{\ell=1}^k\abs{y_\ell}.
\end{aligned}
\end{equation*}
\end{lemma}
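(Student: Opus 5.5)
The plan is to reduce Lemma \ref{lem:analytic} to a one-dimensional mean-value bound for $f(u)=\abs{u}^p$. Since $p>3>1$, $f$ is $C^1$ with $f'(u)=p\,\sgn(u)\abs{u}^{p-1}$. For all $a,h\in\R$ the mean value theorem gives
\[
\abs{\abs{a+h}^p-\abs{a}^p}\le p\,\max\paren{\abs{a},\abs{a+h}}^{p-1}\abs{h}\le p\,2^{(p-2)_+}\paren{\abs{a}^{p-1}+\abs{h}^{p-1}}\abs{h}.
\]
The second inequality uses $\abs{a+h}\le\abs{a}+\abs{h}$ together with the elementary convexity bound $(s+t)^{p-1}\le 2^{p-2}(s^{p-1}+t^{p-1})$ for $s,t\ge0$, which is valid because $p-1\ge1$.

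Next I write $U_{p,k}(x+y)-U_{p,k}(x)$ as the difference of the block term minus $k^{p/2-1}$ times the sum of the $k$ coordinatewise differences, and handle each piece with the bound above.

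For the block term, take $a=\sum_\ell x_\ell$ and $h=\sum_\ell y_\ell$. Then $\abs{h}\le\sum_\ell\abs{y_\ell}$. By the power-mean (Hölder) inequality, $\abs{a}^{p-1}\le k^{p-2}\sum_\ell\abs{x_\ell}^{p-1}$ and $\abs{h}^{p-1}\le k^{p-2}\sum_\ell\abs{y_\ell}^{p-1}$. So this piece is bounded by a constant times $\paren{\sum_\ell\abs{x_\ell}^{p-1}+\sum_\ell\abs{y_\ell}^{p-1}}\sum_\ell\abs{y_\ell}$.

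For each coordinate term, the bound gives $C\paren{\abs{x_\ell}^{p-1}+\abs{y_\ell}^{p-1}}\abs{y_\ell}$. This is at most $C\paren{\sum_m\abs{x_m}^{p-1}+\sum_m\abs{y_m}^{p-1}}\sum_m\abs{y_m}$, because every summand is nonnegative. Summing over $\ell$ and multiplying by $k^{p/2-1}$ only changes the constant.

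Adding the two pieces by the triangle inequality gives the claim with $C_p$ depending on $p$ and $k$. Since $k$ is fixed throughout the paper, this dependence is harmless. I expect no real obstacle. The only point needing care is that the mean-value step requires $p\ge1$ for differentiability and $p\ge2$ for the convexity constant, both of which hold since $p>3$.
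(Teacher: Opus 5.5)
Your proposal is correct and follows the paper's proof essentially step for step. Both apply a one-dimensional mean-value bound $\bigl||a+h|^p-|a|^p\bigr|\le C_p(|a|^{p-1}+|h|^{p-1})|h|$ once to the block sum and once to each coordinate term, then use the power-mean bounds $|\sum_\ell x_\ell|^{p-1}\le k^{p-2}\sum_\ell|x_\ell|^{p-1}$, the analogous bound for the $y_\ell$, and $|\sum_\ell y_\ell|\le\sum_\ell|y_\ell|$. Two minor remarks: the dependence of the constant on $k$ that you note matches the paper's implicit $C_{p,k}$, and differentiability of $|u|^p$ at the origin needs $p>1$ strictly rather than $p\ge1$, which is harmless here since $p>3$.
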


\begin{proof}
The function $f(x)=\abs{x}^p$ is $C^2$ on $\R\setminus\{0\}$. By the Mean Value Theorem and the fact that
\begin{equation*}
    |x+ty|^{p-1}\le C_p(|x|^{p-1}+|y|^{p-1}),\qquad 0\le t\le 1,
\end{equation*}
we have
\begin{align}\label{eq:taylorineq1}
    |f(x+y)-f(x)|
    \le p|y|\int_0^1|x+ty|^{p-1}\dd t\le C_p(|x|^{p-1}+|y|^{p-1})|y|.
\end{align}
Applying \eqref{eq:taylorineq1} once to the block sum and once to
each coordinate term in \eqref{eq:Upk} yields
\begin{align*}
    &\abs{U_{p,k}(x_1+y_1,\dots,x_k+y_k)-U_{p,k}(x_1,\dots,x_k)}\\
    \le&C_p\left(\left|\sum_{l=1}^kx_l\right|^{p-1}+\left|\sum_{l=1}^ky_l\right|^{p-1}\right)\left|\sum_{l=1}^ky_l\right|+C_p\sum_{l=1}^k(|x_l|^{p-1}+|y_l|^{p-1})|y_l|,
\end{align*}
which, combining with the fact $\left|\sum_{l=1}^kx_l\right|^{p-1}\le C_{p,k}\sum_{l=1}^k|x_l|^{p-1}$ and $\left|\sum_{l=1}^ky_l\right|\le \sum_{l=1}^k|y_l|$, gives the desired result.
\end{proof}

\section{Proofs under the continuous-path null}
\label{app:null}

\begin{lemma}[Block martingale expansion for the sum statistic]\label{lem:AJexpansionH0}
Under Assumption \ref{ass:H0},
\begin{equation*}
\bar Z_n^{AJ}:=\frac{\Delta_n^{-1/2}\paren{R_n^{AJ}-k^{p/2-1}}}{\tau_0}
=\sum_{j=1}^{m_n}\omega_{j,n}\Psi_0\paren{E_{j,1},\dots,E_{j,k}}+o_p(1),
\end{equation*}
where
\begin{equation*}
\Psi_0(\mathbf e)=\frac{U_{p,k}(\mathbf e)}{\varsigma_{p,k}},
\qquad
\omega_{j,n}=\frac{\sqrt{k\Delta_n}\,\sigma_{(j-1)k\Delta_n}^p}{\sqrt{A_{2p}}},
\qquad
E_{j,\ell}=E_{(j-1)k+\ell}.
\end{equation*}
In particular,
\begin{equation*}
\sum_{j=1}^{m_n}\omega_{j,n}^2\toP 1,
\qquad
\max_{1\le j\le m_n}\abs{\omega_{j,n}}\toP 0.
\end{equation*}
Moreover,
\begin{equation*}
Z_n^{AJ}-\bar Z_n^{AJ}\toP 0.
\end{equation*}
\end{lemma}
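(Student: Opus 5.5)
We linearize the ratio around its null limit. Write
\[
R_n^{AJ}-k^{p/2-1}=\frac{B_n^{(k)}(p)-k^{p/2-1}B_n(p)}{B_n(p)},
\]
and note that the numerator is exactly a sum of block kernels,
\[
B_n^{(k)}(p)-k^{p/2-1}\textstyle\sum_{i\le km_n}\abs{\Delta_i^nX}^p=\sum_{j=1}^{m_n}U_{p,k}\paren{\Delta_{(j-1)k+1}^nX,\dots,\Delta_{jk}^nX},
\]
up to at most $k-1$ leftover fine increments. These leftover increments contribute $O_p(\Delta_n^{p/2})$, which is negligible after multiplying by $\Delta_n^{-1/2}\Delta_n^{1-p/2}$. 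For the denominator, the standard law of large numbers for power variations (using Lemma \ref{lem:taylor_increment}) gives
\[
\Delta_n^{1-p/2}B_n(p)\toP m_pA_p .
\]
It therefore suffices to study $\Delta_n^{1/2-p/2}\sum_j U_{p,k}(\cdot)$.

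Next we freeze volatility at block starts. Lemma \ref{lem:taylor_increment} is applied within each block, but expanding around $s_j=t_{(j-1)k}$: write
\[
\Delta_{(j-1)k+\ell}^nX=\sigma_{s_j}\sqrt{\Delta_n}E_{j,\ell}+y_{j,\ell},
\]
where $y_{j,\ell}$ collects the drift, the leverage term $\tfrac12\gamma^{(1)}((\Delta W)^2-\Delta_n)$, the cross integrals $I^{(r,1)}$, $(\sigma_{t_{i-1}}-\sigma_{s_j})\sqrt{\Delta_n}E_i$, and $\rho_{i,n}$. Homogeneity gives
\[
U_{p,k}\paren{\sigma_{s_j}\sqrt{\Delta_n}\mathbf E_j}=\sigma_{s_j}^p\Delta_n^{p/2}U_{p,k}(\mathbf E_j),
\]
and the normalizations then combine to produce the weights $\omega_{j,n}$ and the factor $\tau_0$. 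The main obstacle is the remainder
\[
\Delta_n^{1/2-p/2}\sum_j\brac{U_{p,k}(\sigma_{s_j}\sqrt{\Delta_n}\mathbf E_j+\mathbf y_j)-U_{p,k}(\sigma_{s_j}\sqrt{\Delta_n}\mathbf E_j)}.
\]
Lemma \ref{lem:analytic} only bounds each summand by roughly $\Delta_n^{(p-1)/2}\abs{y}$ with $\abs{y}=O(\Delta_n)$. Summed over $\Delta_n^{-1}$ blocks this gives $O_p(1)$, not $o_p(1)$, so a crude bound fails.

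We handle the remainder by splitting it into a conditional-mean part and a martingale part. Expand $U_{p,k}$ to first order, using $C^2$ smoothness since $p>3$; the second-order terms are $O(\Delta_n^{p/2-1}\abs{y}^2)$ per block, i.e.\ $O(\Delta_n^{1/2})$ in total. The first-order term is $\nabla U_{p,k}(\sigma\sqrt{\Delta_n}\mathbf E_j)\cdot\mathbf y_j$. Its martingale part is $O_p(\Delta_n^{1/2})$ by a Burkholder-type bound. Its $\mathcal F_{s_j}$-conditional mean must vanish at leading order:
\begin{itemize}
\item The drift contributes $b\Delta_n$ times an odd function of $\mathbf E$, since $\nabla U$ is odd; this has zero mean.
\item The leverage term $\gamma^{(1)}(E^2-1)$ times an odd function also has zero mean.
\item The $I^{(r,1)}$ terms are independent of $W$ with mean zero.
\item The volatility-increment terms $(\sigma_{t_{i-1}}-\sigma_{s_j})E_i$ involve Brownian increments from earlier in the block. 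Their mean against $\nabla U$ is a product of odd moments, hence zero, or is handled by an explicit Gaussian computation showing the covariance vanishes by symmetry.
\end{itemize}
This is the Jacod-type argument. What survives is $O_p(\Delta_n^{1/2-\varepsilon})\to0$ after summation.

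It remains to control the weights and the feasible normalization. By a Riemann sum with cadlag $\sigma$,
\[
\sum_j\omega_{j,n}^2=\frac{k\Delta_n\sum_j\sigma_{s_j}^{2p}}{A_{2p}}\to1,
\]
and $\max_j\omega_{j,n}\le\overline\sigma^p\sqrt{k\Delta_n}/\sqrt{A_{2p}}\to0$, with $A_{2p}\ge T\underline\sigma^{2p}>0$. Finally, the law of large numbers for $\hat A_{r,n}$ with $r=p,2p$ gives $\hat\tau_{n,0}\toP\tau_0>0$. Since $\bar Z_n^{AJ}=O_p(1)$ (it has second moment approximately $\sum\omega^2$), we get
\[
Z_n^{AJ}-\bar Z_n^{AJ}=\bar Z_n^{AJ}\paren{\tau_0/\hat\tau_{n,0}-1}\toP0 .
\]
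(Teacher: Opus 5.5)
Your proposal is correct and follows essentially the same route as the paper. You freeze volatility at block starts via Lemma \ref{lem:taylor_increment}, show that the leading conditional mean of the first-order term vanishes by oddness of $\nabla U_{p,k}$ (the key case being the leverage term $\partial_\ell U_{p,k}(\mathbf E)E_\ell\sum_{m<\ell}E_m$) and by independence of $\bar B$ from $W$, control the martingale part by second moments, and conclude with the power-variation LLN and Slutsky. The paper differs only in applying the martingale split to the whole difference $G_{j,n}-H_{j,n}$ before Taylor expanding, and in ignoring the leftover increments when $k$ does not divide $n$, which you handle.

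Your one overstatement is the rate $O_p(\Delta_n^{1/2-\varepsilon})$ you claim for what survives. Since $\gamma$ is merely c\`adl\`ag, freezing $\gamma$ at $s_j$ inside a block leaves an error that is $o_p(1)$ after summation but has no rate; correspondingly, the paper claims only $o_p(\Delta_n^{1/2})$ per block. That $o_p(1)$ is all the lemma requires.
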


\begin{proof}
By Lemma \ref{lem:taylor_increment},
\begin{equation*}
\Delta_{(j-1)k+\ell}^nX
=
\sigma_{t_{(j-1)k}}\sqrt{\Delta_n}E_{j,\ell}+\chi_{j,\ell,n},
\end{equation*}
where
\begin{align}\label{eq:block_increment_leverage}
\chi_{j,\ell,n}
=&(\sigma_{t_{(j-1)k+\ell-1}}-\sigma_{t_{(j-1)k}})\sqrt{\Delta_n}E_{j,\ell}
+b_{t_{(j-1)k+\ell-1}}\Delta_n
+\frac12\gamma_{t_{(j-1)k+\ell-1}}^{(1)}\Delta_n(E_{j,\ell}^2-1)\nonumber\\
&+\sum_{r=2}^{d_B}\gamma_{t_{(j-1)k+\ell-1}}^{(r)}I_{(j-1)k+\ell,n}^{(r,1)}
+\rho_{(j-1)k+\ell,n}.
\end{align}
For $j=1,\dots,m_n$, define 
\begin{align*}
G_{j,n}
:=&U_{p,k}(\Delta_{(j-1)k+1}^nX,\dots,\Delta_{jk}^nX)\\
=&\Delta_n^{p/2}\sigma_{t_{(j-1)k}}^pU_{p,k}(E_{j,1}+\Delta_n^{-1/2}\sigma_{t_{(j-1)k}}^{-1}\chi_{j,1,n},\dots,E_{j,k}+\Delta_n^{-1/2}\sigma_{t_{(j-1)k}}^{-1}\chi_{j,k,n})\\
H_{j,n}:=&\Delta_n^{p/2}\sigma_{t_{(j-1)k}}^pU_{p,k}(E_{j,1},\dots,E_{j,k}).
\end{align*}
Lemma \ref{lem:analytic} yields
\begin{equation*}
\abs{G_{j,n}-H_{j,n}}
\le
C\Delta_n^{(p-1)/2}
\paren{\sum_{\ell=1}^k\abs{E_{j,\ell}}^{p-1}+\Delta_n^{-(p-1)/2}\sum_{\ell=1}^k\abs{\chi_{j,\ell,n}}^{p-1}}
\sum_{\ell=1}^k\abs{\chi_{j,\ell,n}},
\end{equation*}
which, noting that the perturbations $\chi_{j,\ell,n}$ are of order $\Delta_n$ and the Gaussian moments of
$(E_{j,1},\dots,E_{j,k})$ are bounded, implies
\begin{equation}\label{eq:G-Hvariation}
\sum_{j=1}^{m_n}\E\abs{G_{j,n}-H_{j,n}}^2
=O\paren{\Delta_n^{p}}.
\end{equation}

Let $\bm E_{j}=(E_{j,1},\dots,E_{j,k})^\top$ and $\bm\chi_{j}'=\Delta_n^{-1/2}\sigma_{t_{(j-1)k}}^{-1}(\chi_{j,1,n},\dots,\chi_{j,k,n})^\top$. Taylor expansion gives
\begin{align*}
    U_{p,k}(\bm E_j+\bm\chi_j')-U_{p,k}(\bm E_j)=&\sum_{\ell=1}^{k} \partial_\ell U_{p,k}(\bm E_j)\chi'_{j,\ell}+\frac12\sum_{\ell,m=1}^{k} \partial_{\ell m}U_{p,k}(\bm E_j+\theta\bm\chi_j')\chi'_{j,\ell}\chi'_{j,m}\\
    =&:T_{1,j}+T_{2,j},
\end{align*}
for some $\theta\in(0,1)$. Using the fact that $\chi'_{j,l},\chi'_{j,m}$ are of order $\Delta_n^{1/2}$ and standard Gaussian moment bounds, 
\begin{equation}\label{eq:T2_bound}
\E_{j-1}(T_{2,j}) = O_p(\Delta_n),
\end{equation}
where $\E_{j-1}(\cdot)=\E(\cdot\mid\mathcal{F}_{t_{(j-1)k}})$. 

Next, we analyze $\E_{j-1}(T_{1,j})$. We substitute $\chi'_{j,\ell}=\sum_{a=1}^5\chi'_{j,\ell,(a)}$ via \eqref{eq:block_increment_leverage}. 
For $a=1$, conditioning directly on $\mathcal{F}_{t_{(j-1)k}}$ fails to isolate the leverage effect. Instead, by It\^{o} formula for the semimartingale $\sigma$ from $t_{(j-1)k}$ to $t_{(j-1)k+\ell-1}$,
\begin{equation*}
\sigma_{t_{(j-1)k+\ell-1}}-\sigma_{t_{(j-1)k}} = \gamma^{(1)}_{t_{(j-1)k}}\sqrt{\Delta_n}\sum_{m=1}^{\ell-1}E_{j,m} + \sum_{r=2}^{d_B}\gamma^{(r)}_{t_{(j-1)k}}(B^{(r)}_{t_{(j-1)k+\ell-1}}-B^{(r)}_{t_{(j-1)k}}) + \zeta_{j,\ell,n},
\end{equation*}
where 
\begin{equation*}
\zeta_{j,\ell,n}=\int_{t_{(j-1)k}}^{t_{(j-1)k+\ell-1}}\sigma_s^{(0)}\dd s+\int_{t_{(j-1)k}}^{t_{(j-1)k+\ell-1}}(\gamma_s-\gamma_{t_{(j-1)k}})^\top\dd B_s.
\end{equation*}
Hence,
\begin{align*}
    \E_{j-1}\{\partial_\ell U_{p,k}(\bm E_j)\chi'_{j,\ell,(1)}\} 
    =&\sigma_{t_{(j-1)k}}^{-1}\gamma^{(1)}_{t_{(j-1)k}}\sqrt{\Delta_n}\E_{j-1}\Big\{\partial_\ell U_{p,k}(\bm E_j)E_{j,\ell}\sum_{m=1}^{\ell-1}E_{j,m}\Big\} \\
    &+\sigma_{t_{(j-1)k}}^{-1}\sum_{r=2}^{d_B}\gamma^{(r)}_{t_{(j-1)k}}\E_{j-1}\{\partial_\ell U_{p,k}(\bm E_j)E_{j,\ell}\Delta B^{(r)}\} \\
    &+\sigma_{t_{(j-1)k}}^{-1}\E_{j-1}\{\partial_\ell U_{p,k}(\bm E_j)E_{j,\ell}\zeta_{j,\ell,n}\}.
\end{align*}
By the parity of $U_{p,k}$, the function $\bm E \mapsto \partial_\ell U_{p,k}(\bm E)E_\ell \sum E_m$ is an odd function of $\bm E$; thus, its expectation with respect to the standard Gaussian vector exactly vanishes. This continuous-path cancellation zeroes out the leading term. The second term vanishes because $B^{(r)}\ind W$. The third term is bounded by $C\E_{j-1}[\zeta_{j,\ell,n}^2]^{1/2} = o_p(\Delta_n^{1/2})$.

Similarly, for $a=2$, we approximate $b_{t_{(j-1)k+\ell-1}}$ by $b_{t_{(j-1)k}}$. Since $\partial_\ell U_{p,k}$ is odd, the main term $\sigma_{t_{(j-1)k}}^{-1}\Delta_n^{1/2}b_{t_{(j-1)k}}\E\{\partial_\ell U_{p,k}(\bm E_j)\} = 0$. The approximation error is bounded by $C\Delta_n^{1/2}\E_{j-1}(b_{t_{(j-1)k+\ell-1}}-b_{t_{(j-1)k}}) = o_p(\Delta_n^{1/2})$. 
For $a=3$, substituting $\gamma^{(1)}_{t_{(j-1)k+\ell-1}}$ with $\gamma^{(1)}_{t_{(j-1)k}}$, the leading term again vanishes because $\bm E \mapsto \partial_\ell U_{p,k}(\bm E)(E_\ell^2-1)$ is an odd function. The error is bounded by $C\Delta_n^{1/2}\E_{j-1}(\gamma^{(1)}_{t_{(j-1)k+\ell-1}}-\gamma^{(1)}_{t_{(j-1)k}}) = o_p(\Delta_n^{1/2})$. 
For $a=4$, approximating $\gamma^{(r)}_{t_{(j-1)k+\ell-1}}$ by $\gamma^{(r)}_{t_{(j-1)k}}$, the main term vanishes since $\E(I_{(j-1)k+\ell,n}^{(r,1)}\mid W)=0$. The error is bounded by Cauchy-Schwarz: $C\Delta_n^{-1/2}\E_{j-1}\{(\gamma^{(r)}_{t_{(j-1)k+\ell-1}}-\gamma^{(r)}_{t_{(j-1)k}})^2\}^{1/2}\E_{j-1}\{(I^{(r,1)})^2\}^{1/2} = \Delta_n^{-1/2}o_p(1)O_p(\Delta_n) = o_p(\Delta_n^{1/2})$. 
For $a=5$, $\chi'_{j,\ell,(5)} = O_p(\Delta_n)$, contributing $O_p(\Delta_n)$. 
In summary,
\begin{equation}\label{eq:T1_bound}
    \E_{j-1}(T_{1,j}) = \sum_{\ell=1}^k\sum_{a=1}^5\E_{j-1}\{\partial_\ell U_{p,k}(\bm E_j)\chi'_{j,\ell,(a)}\} = o_p(\Delta_n^{1/2}).
\end{equation}

By \eqref{eq:T2_bound} and \eqref{eq:T1_bound},
\begin{equation*}
\sum_{j=1}^{m_n}\E_{j-1}(G_{j,n}-H_{j,n})
=m_n\Delta_n^{p/2} o_p(\Delta_n^{1/2})=o_p(\Delta_n^{p/2-1/2}),
\end{equation*}
which gives
\begin{align*}
\sum_{j=1}^{m_n}\paren{G_{j,n}-H_{j,n}}
=\sum_{j=1}^{m_n}\{(G_{j,n}-H_{j,n})-\E_{j-1}(G_{j,n}-H_{j,n})\}+o_p(\Delta_n^{p/2-1/2})=o_p\paren{\Delta_n^{p/2-1/2}},
\end{align*}
where the last step follows from the fact that the martingale has predictable quadratic variation of order $O(\Delta_n^p)$ by \eqref{eq:G-Hvariation}. Hence
\begin{equation}\label{eq:numexpandnull_leverage}
B_n^{(k)}(p)-k^{p/2-1}B_n(p)
=
\Delta_n^{p/2}
\sum_{j=1}^{m_n}\sigma_{t_{(j-1)k}}^pU_{p,k}(E_{j,1},\dots,E_{j,k})
+o_p\paren{\Delta_n^{p/2-1/2}}.
\end{equation}
For the denominator, Lemma \ref{lem:taylor_increment} and \eqref{eq:taylorineq1} imply
\begin{align}
\sum_{i=1}^n\abs{\Delta_i^nX}^p
&=
\sum_{i=1}^n\abs{\sigma_{t_{i-1}}\sqrt{\Delta_n}E_i}^p
+o_p\paren{\Delta_n^{p/2-1}}
\nonumber\\
&=
\Delta_n^{p/2}\sum_{i=1}^n|\sigma_{t_{i-1}}|^p\abs{E_i}^p
+o_p\paren{\Delta_n^{p/2-1}}
\nonumber\\
&=
m_pA_p\Delta_n^{p/2-1}+o_p\paren{\Delta_n^{p/2-1}},
\label{eq:BnLLN_leverage}
\end{align}
where the last step follows from the martingale-difference decomposition
\[
\Delta_n^{p/2}\sum_{i=1}^n\sigma_{t_{i-1}}^p\paren{\abs{E_i}^p-m_p}
+
 m_p\Delta_n^{p/2}\sum_{i=1}^n\sigma_{t_{i-1}}^p,
\]
the Riemann approximation of $A_p$, and the variance bound
\[
\Var\paren{\Delta_n^{p/2}\sum_{i=1}^n\sigma_{t_{i-1}}^p\paren{\abs{E_i}^p-m_p}}
=O\paren{\Delta_n^{p-1}}.
\]
Combining \eqref{eq:numexpandnull_leverage} and \eqref{eq:BnLLN_leverage},
\begin{equation*}
\Delta_n^{-1/2}\paren{R_n^{AJ}-k^{p/2-1}}
=
\frac{\sqrt{\Delta_n}}{m_pA_p}
\sum_{j=1}^{m_n}\sigma_{t_{(j-1)k}}^pU_{p,k}(E_{j,1},\dots,E_{j,k})
+o_p(1).
\end{equation*}
Dividing by $\tau_0$ and using \eqref{eq:tau0},
\[
\bar Z_n^{AJ}
=
\sum_{j=1}^{m_n}\frac{\sqrt{k\Delta_n}\,\sigma_{t_{(j-1)k}}^p}{\sqrt{A_{2p}}}
\frac{U_{p,k}(E_{j,1},\dots,E_{j,k})}{\varsigma_{p,k}}+o_p(1).
\]
Moreover,
\[
\sum_{j=1}^{m_n}\omega_{j,n}^2
=
\frac{k\Delta_n\sum_{j=1}^{m_n}\sigma_{(j-1)k\Delta_n}^{2p}}{A_{2p}}\toP 1,
\qquad
\max_{1\le j\le m_n}\abs{\omega_{j,n}}
\le
\frac{\overline\sigma^p\sqrt{k\Delta_n}}{\sqrt{A_{2p}}}\to 0.
\]
The same denominator argument with $r=p$ and $r=2p$ gives
\[
\hat A_{p,n}\toP A_p,
\qquad
\hat A_{2p,n}\toP A_{2p},
\qquad
\hat\tau_{n,0}\toP\tau_0.
\]
Hence, by Slutsky's theorem, and the fact that $\bar{Z}^{AJ}=O_p(1)$ shown in Lemma \ref{lem:finiteremove}, $Z_n^{AJ}-\bar Z_n^{AJ}\toP 0$.
\end{proof}

\begin{lemma}[Leverage-robust Gaussian approximation for the max statistic]\label{lem:LMgauss}
Under Assumption \ref{ass:H0} and \eqref{eq:Kncond},
\begin{equation*}
\max_{K_n\le i\le n}
\abs{\frac{\hat V_{n,i}}{\sigma_{t_{i-1}}^2\Delta_n}-1}
=o_p\paren{\an^2}.
\end{equation*}
Consequently,
\begin{equation*}
\xi_n-\tilde\xi_n\toP 0,
\qquad
\tilde\xi_n=\frac{\max_{K_n\le i\le n}\abs{E_i}-C_n}{\an}.
\end{equation*}
\end{lemma}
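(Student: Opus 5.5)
The argument has two stages. First we prove uniform consistency of the local bipower estimator at the rate $o_p(a_n^2)=o_p(1/\log n)$. Then we transfer this, together with the increment approximation of Lemma \ref{lem:taylor_increment}, to the normalized maximum.

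\emph{Step 1 (oracle bipower).} By Lemma \ref{lem:taylor_increment}, write $\Delta_j^nX=\sigma_{t_{j-1}}\sqrt{\Delta_n}E_j+r_j$ with $\max_j|r_j|=O_p(\Delta_n^{1-\varepsilon})$. Define the oracle
\[
\tilde V_{n,i}=\frac{\pi\Delta_n}{2(K_n-1)}\sum_{j=i-K_n+2}^{i}\sigma_{t_{j-1}}\sigma_{t_{j-2}}|E_j||E_{j-1}|.
\]
Since $\max_j|E_j|=O_p(\sqrt{\log n})$, we get $\max_i|\hat V_{n,i}-\tilde V_{n,i}|/\Delta_n=O_p(\Delta_n^{1/2-\varepsilon}\sqrt{\log n})$, which is $o(1/\log n)$.

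Next, $\sigma_{t_{j-1}}\sigma_{t_{j-2}}$ is replaced by $\sigma_{t_{i-1}}^2$. Over a window of length $K_n\Delta_n$, BDG and the maximal argument of Lemma \ref{lem:LMregularity} applied on windows give $\max_i\sup_{|u-t_{i-1}|\le K_n\Delta_n}|\sigma_u-\sigma_{t_{i-1}}|=O_p((K_n\Delta_n)^{1/2}\Delta_n^{-\varepsilon})$. Combined with $K_n\Delta_n(\log n)^3\to0$ (and $K_n=\Delta_n^{-\varrho}$, so polynomial slack is available), this error is $o_p(1/\log n)$. Leverage causes no difficulty here, because we only use pathwise bounds.

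It then remains to show
\[
\max_i\Big|\frac{\pi}{2(K_n-1)}\sum_j(|E_j||E_{j-1}|-2/\pi)\Big|=o_p(1/\log n).
\]
Each summand is a 1-dependent, centered variable with sub-exponential tails. Splitting into odd and even $j$ gives two sums of independent terms, and Bernstein's inequality yields a deviation probability of $\exp(-cK_n/(\log n)^2)$ at level $1/\log n$. A union bound over $n$ windows then suffices, since $K_n/(\log n)^3\to\infty$. Because $\sigma$ is bounded below, the three bounds together give the ratio claim.

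\emph{Step 2 (transfer to the max).} Write
\[
|L_{n,i}|=\frac{|\sigma_{t_{i-1}}\sqrt{\Delta_n}E_i+r_i|}{\sigma_{t_{i-1}}\sqrt{\Delta_n}}\,(1+o_p(a_n^2))^{-1/2}.
\]
Hence
\[
\big|M_n-\max_i|E_i|\big|\le\max_i|E_i|\,o_p(a_n^2)+O_p(\Delta_n^{1/2-\varepsilon}).
\]
Since $\max_i|E_i|=O_p(\sqrt{\log n})$, the first term is $o_p(a_n)$, and the second is trivially $o_p(a_n)$. Dividing by $a_n$ gives $\xi_n-\tilde\xi_n\toP0$. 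The index range starts at $K_n$ in both $\xi_n$ and $\tilde\xi_n$, so no boundary issue arises.

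\emph{Main obstacle.} The delicate step is the uniform concentration of the local bipower average at rate $1/\log n$ rather than merely $o_p(1)$. The variance term alone gives fluctuations of size $K_n^{-1/2}$, and the union over $n$ windows costs a $\sqrt{\log n}$ factor. Bernstein's inequality must therefore be applied with care to the sub-exponential products, and this is exactly where the condition $K_n/(\log n)^3\to\infty$ is used. Similarly, the window-level volatility-smoothness bound must beat $1/\log n$ uniformly, which uses $K_n\Delta_n(\log n)^3\to0$ together with the $\Delta_n^{-\varepsilon}$ slack from high-moment BDG bounds.
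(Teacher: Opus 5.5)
Your proposal is correct and follows essentially the paper's route: replace increments by the oracle via Lemma \ref{lem:taylor_increment}, control the volatility variation over the $K_n$-window, apply Bernstein's inequality with a union bound to the $1$-dependent products $|E_j||E_{j-1}|-2/\pi$ (exactly where $K_n/(\log n)^3\to\infty$ enters), and transfer to the maximum via $\max_i|E_i|=O_p(\sqrt{\log n})$; your oracle $\sigma_{t_{j-1}}\sigma_{t_{j-2}}$, with $\sigma_{t_{i-1}}^2$ factored out before concentrating, is slightly cleaner than the paper's $D_{n,i}^{(2)}$, which keeps the random weights $\sigma_{t_{j-2}}^2/\sigma_{t_{i-1}}^2$ inside the Bernstein step. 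The one caveat is the window-volatility bound: the $\Delta_n^{-\varepsilon}$ slack from moment BDG bounds only suffices for polynomial choices such as $K_n=\lfloor\Delta_n^{-\varrho}\rfloor$, so to cover the general condition \eqref{eq:Kncond} you should use an exponential maximal inequality for the localized martingale part of $\sigma$, which gives $O_p(\sqrt{K_n\Delta_n\log n})=o_p(a_n^2)$ precisely because $K_n\Delta_n(\log n)^3\to0$.
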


\begin{proof}
Let
\[
R_{i,n}=\Delta_i^nX-\sigma_{t_{i-1}}\sqrt{\Delta_n}E_i.
\]
Lemma \ref{lem:taylor_increment} yields, for every $\varepsilon\in(0,1/2)$,
\begin{equation}\label{eq:Ribound_max}
\max_{1\le i\le n}\abs{R_{i,n}}=O_p\paren{\Delta_n^{1-\varepsilon}}.
\end{equation}
For $i=K_n,\dots,n$, write
\[
\hat V_{n,i}=A_{n,i}+B_{n,i},
\]
where
\begin{align*}
A_{n,i}
&=
\frac{\pi\Delta_n}{2(K_n-1)}\sum_{j=i-K_n+2}^{i}
\sigma_{t_{j-2}}^2\abs{E_j}\abs{E_{j-1}},\\
B_{n,i}
&=
\hat V_{n,i}-A_{n,i}.
\end{align*}
Using
\[
\Delta_j^nX
=\sigma_{t_{j-1}}\sqrt{\Delta_n}E_j+R_{j,n},
\qquad
\Delta_{j-1}^nX
=\sigma_{t_{j-2}}\sqrt{\Delta_n}E_{j-1}+R_{j-1,n},
\]
and the inequality
\[
\abs{\abs{a+b}\abs{c+d}-\abs{a}\abs{c}}
\le \abs{b}\abs{c+d}+\abs{d}\abs{a},
\]
we obtain
\begin{align}\label{eq:Bni_leverage}
\abs{B_{n,i}}
&\le
\frac{C}{K_n}\sum_{j=i-K_n+2}^{i}
\Big(
\Delta_n\abs{\sigma_{t_{j-1}}-\sigma_{t_{j-2}}}\abs{E_jE_{j-1}}
+\sqrt{\Delta_n}\abs{R_{j,n}}\abs{E_{j-1}}
+\sqrt{\Delta_n}\abs{R_{j-1,n}}\abs{E_j}
\Big).
\end{align}
We use \eqref{eq:Ribound_max} only with a fixed small exponent. Choose once and for all
$\varepsilon\in(0,1/6)$. Since
\[
\max_{K_n\le i\le n}\frac1{K_n}\sum_{j=i-K_n+2}^{i}\abs{E_j}
=O_p(1)
\]
by the same Bernstein--union-bound argument used below for $Z_j$, \eqref{eq:Bni_leverage} gives
\begin{align*}
\max_{K_n\le i\le n}\abs{B_{n,i}}
&=O_p\paren{\sqrt{\Delta_n}\,\Delta_n^{1-\varepsilon}}
  +O_p\paren{\Delta_n^{2-2\varepsilon}}  \\
&=O_p\paren{\Delta_n^{3/2-\varepsilon}}.
\end{align*}
Because $\varepsilon<1/2$ already implies
$\Delta_n^{1/2-\varepsilon}=o((\log n)^{-1})$, this is more than enough for the present normalization:
\begin{equation}\label{eq:Bni_leverage2}
\max_{K_n\le i\le n}\abs{B_{n,i}}=o_p\paren{\Delta_n\an^2}.
\end{equation}
The stronger restriction $\varepsilon<1/6$ will also be used after division by
$\sqrt{\hat V_{n,i}}$, where the error has size
$O_p(\Delta_n^{1/2-\varepsilon})=o_p(a_n)$. Next,
\[
\frac{A_{n,i}}{\sigma_{t_{i-1}}^2\Delta_n}-1=D_{n,i}^{(1)}+D_{n,i}^{(2)},
\]
where
\begin{align*}
D_{n,i}^{(1)}
&=
\frac{1}{K_n-1}\sum_{j=i-K_n+2}^{i}
\paren{\frac{\sigma_{t_{j-2}}^2}{\sigma_{t_{i-1}}^2}-1},\\
D_{n,i}^{(2)}
&=
\frac{\pi}{2(K_n-1)}\sum_{j=i-K_n+2}^{i}
\frac{\sigma_{t_{j-2}}^2}{\sigma_{t_{i-1}}^2}
\paren{\abs{E_j}\abs{E_{j-1}}-\frac{2}{\pi}}.
\end{align*}
Lemma \ref{lem:LMregularity} implies
\begin{equation}\label{eq:D1_leverage}
\max_{K_n\le i\le n}\abs{D_{n,i}^{(1)}}
\le C\sup_{\abs{u-v}\le K_n\Delta_n}\abs{\sigma_u-\sigma_v}
=o_p\paren{\an^2}.
\end{equation}
Let
\[
Z_j=\abs{E_j}\abs{E_{j-1}}-\frac{2}{\pi}.
\]
The sequence $(Z_j)$ is $1$-dependent and has exponential moments. Splitting into odd and even
subsequences and applying Bernstein's inequality to each subsequence \citep[Chapter 2]{Vershynin2018}
gives
\[
\Pbb\paren{\abs{\sum_{j=r}^{r+K_n-1}Z_j}>x}
\le 4\exp\paren{-c_1\min\set{\frac{x^2}{K_n},x}}
\]
for finite constants $c_1,c_2>0$. Taking $x=M\sqrt{K_n\log n}$ and using the union bound over the
$O(n)$ windows,
\begin{equation}\label{eq:D2_leverage}
\max_{K_n\le i\le n}
\abs{\frac{1}{K_n-1}\sum_{j=i-K_n+2}^{i}Z_j}
=O_p\paren{\sqrt{\frac{\log n}{K_n}}}
=o_p\paren{\an^2}.
\end{equation}
Since $\underline\sigma\le\sigma_t\le\overline\sigma$, \eqref{eq:D2_leverage} implies
\begin{equation}\label{eq:D2_leverage2}
\max_{K_n\le i\le n}\abs{D_{n,i}^{(2)}}=o_p\paren{\an^2}.
\end{equation}
Combining \eqref{eq:Bni_leverage2}, \eqref{eq:D1_leverage}, and \eqref{eq:D2_leverage2},
\begin{equation}\label{eq:Vconsistency}
\max_{K_n\le i\le n}
\abs{\frac{\hat V_{n,i}}{\sigma_{t_{i-1}}^2\Delta_n}-1}
=o_p\paren{\an^2}.    
\end{equation}
Finally,
\[
L_{n,i}-E_i
=
E_i\paren{\frac{\sigma_{t_{i-1}}\sqrt{\Delta_n}}{\sqrt{\hat V_{n,i}}}-1}
+\frac{R_{i,n}}{\sqrt{\hat V_{n,i}}}.
\]
The preceding choice $\varepsilon<1/6$ gives
\[
\max_i\frac{\abs{R_{i,n}}}{\sqrt{\Delta_n}}=O_p(\Delta_n^{1/2-\varepsilon})=o_p(\an),
\]
and \eqref{eq:Vconsistency}, together with $\max_{1\le i\le n}\abs{E_i}=O_p(1/\an)$, gives
\[
\max_i\abs{E_i}
\abs{\frac{\sigma_{t_{i-1}}\sqrt{\Delta_n}}{\sqrt{\hat V_{n,i}}}-1}
=O_p(\an^{-1})o_p(\an^2)=o_p(\an).
\]
Consequently
\[
\max_{K_n\le i\le n}\abs{L_{n,i}-E_i}=o_p(\an).
\]
Therefore
\[
\abs{\xi_n-\tilde\xi_n}
\le
\frac{\max_{K_n\le i\le n}\abs{L_{n,i}-E_i}}{\an}
\toP 0.
\]
\end{proof}

\begin{lemma}[Finite removal and stable Gaussian limit for the AJ block sum]\label{lem:finiteremove}
Let $\mathcal J_n\subset\{1,\dots,m_n\}$ be deterministic with $\sup_n\#\mathcal J_n<\infty$, and let
\[
\mathcal H_n=\sigma\set{E_{j,1},\dots,E_{j,k}:j\in\mathcal J_n}.
\]
Under Assumption \ref{ass:H0},
\begin{equation*}
\sum_{j\in\mathcal J_n}\omega_{j,n}\Psi_0(E_{j,1},\dots,E_{j,k})\toP 0,
\end{equation*}
and
\begin{equation*}
\sum_{j\notin\mathcal J_n}\omega_{j,n}\Psi_0(E_{j,1},\dots,E_{j,k})
\toD N(0,1)
\end{equation*}
stably with respect to $\mathcal H_n$.
\end{lemma}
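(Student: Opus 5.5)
The plan has two parts: a second-moment bound for the finitely many removed blocks, and a martingale CLT, with stable convergence and nesting, for the remaining sum. Write $\Psi_{j}=\Psi_0(E_{j,1},\dots,E_{j,k})$ and $\mathcal G_j=\F_{t_{jk}}$. Three facts drive everything.
\begin{itemize}[label={}]
\item The vector $\bm E_j$ is independent of $\mathcal G_{j-1}$, since it is built from Brownian increments after $t_{(j-1)k}$.
\item The weight $\omega_{j,n}$ is $\mathcal G_{j-1}$-measurable.
\item We have $\E\Psi_j=0$ and $\E\Psi_j^2=1$. Indeed $\E U_{p,k}(\bm E)=m_p(k^{p/2}-k^{p/2-1}\cdot k)=0$, because $E_1+\cdots+E_k\sim N(0,k)$, and $\E\Psi_j^2=1$ by the definition of $\varsigma_{p,k}$.
\end{itemize}
Hence $\{\omega_{j,n}\Psi_j\}$ is a martingale difference array with respect to $(\mathcal G_j)$.

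\textbf{Finite removal.} By independence, $\E(\omega_{j,n}^2\Psi_j^2)=\E\omega_{j,n}^2$. Also $\omega_{j,n}^2\le \overline\sigma^{2p}k\Delta_n/A_{2p}$ and $A_{2p}\ge T\underline\sigma^{2p}$, so $\omega_{j,n}^2\le C\Delta_n$ deterministically. It follows that
\[
\E\Big(\sum_{j\in\mathcal J_n}\omega_{j,n}\Psi_j\Big)^2\le \#\mathcal J_n\cdot C\Delta_n\to0 .
\]
(Cross terms vanish by the martingale property.) This proves the first claim.

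\textbf{CLT for the remaining sum.} I would apply the martingale CLT with stable convergence, e.g.\ Jacod--Shiryaev/Hall--Heyde (Theorem 3.2), to $S_n'=\sum_{j\notin\mathcal J_n}\omega_{j,n}\Psi_j$. Three conditions must be checked.
\begin{itemize}[label={}]
\item \emph{Conditional variance.} We have $\sum_{j\notin\mathcal J_n}\E(\omega_{j,n}^2\Psi_j^2\mid\mathcal G_{j-1})=\sum_{j\notin\mathcal J_n}\omega_{j,n}^2$. This tends to $1$ in probability by Lemma \ref{lem:AJexpansionH0}, after subtracting at most $\#\mathcal J_n\cdot C\Delta_n\to0$. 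Note the limit is the deterministic constant $1$: the $\sigma$-dependence is absorbed by dividing by $A_{2p}$, so the limit is a genuine $N(0,1)$ rather than a mixture.
\item \emph{Lindeberg.} This follows from the Lyapunov bound $\sum_j\E|\omega_{j,n}\Psi_j|^4\le C\Delta_n\sum_j\E\omega_{j,n}^2\,\E\Psi^4\to0$. Here $\Psi$ has all moments because $U_{p,k}$ has polynomial growth.
\item \emph{Orthogonality.} For stability with respect to the Brownian motion one needs $\sum_j\E(\omega_{j,n}\Psi_j\,\Delta^{(j)}B^{(r)}\mid\mathcal G_{j-1})\toP0$, where $\Delta^{(j)}B^{(r)}$ is the increment of $B^{(r)}$ over block $j$. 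For $r=1$ this holds by parity: $U_{p,k}$ is even, while the block $W$-increment is odd in $\bm E_j$, so each term is exactly zero. For $r\ge2$ it holds by independence of $B^{(r)}$ and $W$. Orthogonality to bounded martingales orthogonal to $B$ holds trivially. Thus $S_n'$ converges $\F$-stably to $N(0,1)$ independent of $\F$.
\end{itemize}

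\textbf{Stability with respect to the moving $\mathcal H_n$ (main obstacle).} Since $\mathcal H_n$ changes with $n$, $\F$-stability does not immediately give it. I would argue directly through characteristic functions. Fix bounded $\mathcal H_n$-measurable $\eta_n$; it suffices to show
\[
\E\big[\eta_n e^{iuS_n'}\big]-e^{-u^2/2}\E\eta_n\to0 .
\]
Rerun the CLT with the filtration enlarged to $\mathcal G_j^\ast=\mathcal G_j\vee\mathcal H_n$. This enlargement costs a small amount of care, since the enlarged filtration is not one generated by Brownian motion, but for $j\notin\mathcal J_n$ the vector $\bm E_j$ remains independent of $\mathcal G_{j-1}\vee\mathcal H_n$. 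Thus the martingale-difference structure, the conditional variance $\sum\omega_{j,n}^2\to1$ and Lyapunov's condition are unchanged, with one caveat: $\omega_{j,n}$ involves $\sigma$ at block starts later than the blocks in $\mathcal J_n$, and these are not $\mathcal G_{j-1}$-measurable in the enlarged sense unless $\sigma_{t_{(j-1)k}}$ is adapted to $\mathcal G^\ast_{j-1}$, which it is, since $\mathcal G^\ast_{j-1}\supset\F_{t_{(j-1)k}}$. The standard telescoping/Lindeberg-replacement argument for martingale characteristic functions, which uses $\E[e^{iu\omega\Psi}\mid\mathcal G^\ast_{j-1}]=1-u^2\omega^2/2+O(|u|^3\omega^3)$, then yields
\[
\E\big[\eta_n e^{iuS_n'}\big]=e^{-u^2/2}\E\eta_n+o(1),
\]
uniformly over $|\eta_n|\le1$. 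Taking $\eta_n=1$ gives in particular $\bar Z_n^{AJ}=O_p(1)$, which is what Lemma \ref{lem:AJexpansionH0} invokes.
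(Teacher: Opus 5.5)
Your proposal is correct and follows essentially the paper's proof. Both use the same second-moment bound for the removed blocks. For the remaining sum, both apply the martingale CLT to $\omega_{j,n}\Psi_0(\bm E_j)$ with respect to the enlarged filtration $\mathcal H_n\vee\F_{t_{jk}}$, using that $\bm E_j$ is independent of it for $j\notin\mathcal J_n$, that $\sum_{j\notin\mathcal J_n}\omega_{j,n}^2\toP 1$, and a Lyapunov bound.

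Your $\F$-stability/orthogonality detour is unnecessary. Your explicit characteristic-function telescoping, which works because $\sum_j\omega_{j,n}^2$ is deterministically bounded with constant limit $1$, is a slightly more careful substitute for the paper's direct citation of the Hall--Heyde stable CLT, whose nesting condition does not literally hold for a moving $\mathcal H_n$.
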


\begin{proof}
Since $\Psi_0$ has mean zero, variance one, and finite $(2+\delta)$-moment for some $\delta>0$, 
\[
\E\abs{\sum_{j\in\mathcal J_n}\omega_{j,n}\Psi_0(E_{j,1},\dots,E_{j,k})}^2
=\sum_{j\in\mathcal J_n}\omega_{j,n}^2
\le \#\mathcal J_n\max_{1\le j\le m_n}\omega_{j,n}^2\toP 0,
\]
which proves 
\[\sum_{j\in\mathcal J_n}\omega_{j,n}\Psi_0(E_{j,1},\dots,E_{j,k})\toP 0.\]

For the complement, define the enlarged block filtration
\[
\F_{j,n}^{\mathcal J}=\mathcal H_n\vee\F_{jk\Delta_n},
\qquad j=0,1,\dots,m_n,
\]
and the array
\[
\zeta_{j,n}=\1\set{j\notin\mathcal J_n}\omega_{j,n}\Psi_0(E_{j,1},\dots,E_{j,k}).
\]
For $j\notin\mathcal J_n$, the weight $\omega_{j,n}$ is $\F_{j-1,n}^{\mathcal J}$-measurable, while the block
vector $(E_{j,1},\dots,E_{j,k})$ is independent of $\F_{j-1,n}^{\mathcal J}$. Hence
\[
\E(\zeta_{j,n}\mid \F_{j-1,n}^{\mathcal J})=0,
\qquad j=1,\dots,m_n,
\]
so $(\zeta_{j,n})$ is a martingale-difference array. Its predictable quadratic variation is
\[
\sum_{j=1}^{m_n}\E(\zeta_{j,n}^2\mid \F_{j-1,n}^{\mathcal J})
=\sum_{j\notin\mathcal J_n}\omega_{j,n}^2
=1+o_p(1)
\]
by Lemma \ref{lem:AJexpansionH0}. Further,
\[
\sum_{j=1}^{m_n}\E\abs{\zeta_{j,n}}^{2+\delta}
\le C\sum_{j=1}^{m_n}\abs{\omega_{j,n}}^{2+\delta}
\le C\max_{1\le j\le m_n}\abs{\omega_{j,n}}^{\delta}
\sum_{j=1}^{m_n}\omega_{j,n}^2
\toP 0.
\]
The stable martingale central limit theorem \citep[Chapter~3]{HallHeyde1980} therefore yields
\[\sum_{j\notin\mathcal J_n}\omega_{j,n}\Psi_0(E_{j,1},\dots,E_{j,k})
\toD N(0,1),\]
stably with respect to $\mathcal H_n$.
\end{proof}

\begin{proof}[Proof of Theorem \ref{thm:H0indep}]
By Lemmas \ref{lem:AJexpansionH0} and \ref{lem:LMgauss}, the vector difference between $(Z_n^{AJ},\xi_n)$ and its oracle counterpart converges to zero in probability. Slutsky's theorem in product spaces therefore reduces the joint convergence problem to the oracle pair. It is enough to prove
\begin{equation}\label{eq:H0oraclegoal}
\Pbb\paren{\mcS_n\le x,\ \tilde\xi_n\le y}\to \Phi(x)e^{-e^{-y}},
\end{equation}
where
\[
\mcS_n=\sum_{j=1}^{m_n}\omega_{j,n}\Psi_0(E_{j,1},\dots,E_{j,k}),\qquad
\tilde\xi_n=\frac{\max_{K_n\le i\le n}\abs{E_i}-C_n}{\an}.
\]

Fix $x,y\in\R$ and define
\[
\nu_n(y)=C_n+\an y,
\qquad
N_n(y)=\sum_{i=K_n}^{n}\1\set{\abs{E_i}>\nu_n(y)}.
\]
Then $\{\tilde\xi_n\le y\}=\{N_n(y)=0\}$. Let
\[
q_n(y)=\Pbb\paren{\abs{E_1}>\nu_n(y)}.
\]
Mills' ratio gives
\begin{equation}\label{eq:MillsH0}
(n-K_n+1)q_n(y)\to e^{-y}.
\end{equation}
For $r\ge1$, let $(N_n(y))_r=N_n(y)(N_n(y)-1)\cdots(N_n(y)-r+1)$. Then
\begin{equation}\label{eq:factorialH0}
\E\brac{\1\set{\mcS_n\le x}(N_n(y))_r}
=
\sum_{K_n\le i_1\ne\cdots\ne i_r\le n}
\Pbb\paren{\mcS_n\le x,\ \abs{E_{i_1}}>\nu_n(y),\dots,\abs{E_{i_r}}>\nu_n(y)}.
\end{equation}
The tuples having at least two indices in the same coarse block are $O(n^{r-1})$ in number, and each
probability is bounded by $q_n(y)^r$. Hence their total contribution is
\[
O\paren{n^{r-1}q_n(y)^r}=o(1)
\]
by \eqref{eq:MillsH0}. Consider now a tuple $I=(i_1,\dots,i_r)$ whose indices lie in distinct coarse
blocks. Let $\mathcal J(I)$ be the corresponding set of blocks, and let
\[
A_I=\set{\abs{E_{i_1}}>\nu_n(y),\dots,\abs{E_{i_r}}>\nu_n(y)}.
\]
The event $A_I$ is measurable with respect to
\[
\mathcal H_n(I)=\sigma\set{E_{j,1},\dots,E_{j,k}:j\in\mathcal J(I)}.
\]
Lemma \ref{lem:finiteremove} gives
\[
\mcS_n
=\sum_{j\notin\mathcal J(I)}\omega_{j,n}\Psi_0(E_{j,1},\dots,E_{j,k})+o_p(1),
\]
and the first term converges stably to $N(0,1)$ with respect to $\mathcal H_n(I)$. Therefore
\[
\Pbb\paren{\mcS_n\le x,A_I}
=\Phi(x)\Pbb(A_I)+o(1).
\]
Summing over all tuples with distinct blocks and restoring the repeated-block remainder yields
\begin{equation*}
\E\brac{\1\set{\mcS_n\le x}(N_n(y))_r}
\to \Phi(x)e^{-ry},
\qquad r=1,2,\dots.
\end{equation*}
Hence, for $s\in[0,1]$,
\begin{align*}
\E\brac{\1\set{\mcS_n\le x}s^{N_n(y)}}
&=
\sum_{r=0}^{\infty}\frac{(s-1)^r}{r!}\E\brac{\1\set{\mcS_n\le x}(N_n(y))_r}\\
&\to
\Phi(x)\sum_{r=0}^{\infty}\frac{(s-1)^r e^{-ry}}{r!}\\
&=
\Phi(x)\exp\set{e^{-y}(s-1)}.
\end{align*}
Setting $s=0$ gives
\[
\Pbb\paren{\mcS_n\le x,N_n(y)=0}\to \Phi(x)e^{-e^{-y}},
\]
which is \eqref{eq:H0oraclegoal}. Slutsky's theorem yields the desired result.
\end{proof}

\begin{proof}[Proof of Corollary \ref{cor:nullcauchy}]
Theorem \ref{thm:H0indep} implies
\[
p_n^{AJ}\toD U_1,
\qquad
p_n^{LM}\toD U_2,
\qquad
U_1\ind U_2,
\qquad
U_1,U_2\sim \mathrm{Unif}(0,1).
\]
The maps $(z,x)\mapsto (2(1-\Phi(|z|)),1-\exp\{-e^{-x}\})$ and $(u_1,u_2)\mapsto (\tan\{\pi(1/2-u_1)\},\tan\{\pi(1/2-u_2)\})$ are continuous at the limiting variables with probability one. Hence the continuous mapping theorem gives
\[
\tan\brac{\pi\paren{\frac12-p_n^{AJ}}}\toD C_1,
\qquad
\tan\brac{\pi\paren{\frac12-p_n^{LM}}}\toD C_2,
\]
where $C_1$ and $C_2$ are independent standard Cauchy variables. The average of two independent
standard Cauchy variables is again standard Cauchy. Hence
\[
\T_n^C\toD C,
\qquad C\sim\mathrm{Cauchy}(0,1),
\]
and so
\[
\Pbb\paren{p_n^C\le\alpha}
=
\Pbb\paren{\T_n^C>c_\alpha}
\to
\Pbb(C>c_\alpha)=\alpha.
\]
\end{proof}

\section{Proofs under the dense local alternative}
\label{app:dense}

\begin{lemma}[Jump counts under the dense local alternative]\label{lem:densecount}
Under Assumption \ref{ass:H1dense}, let $\mathcal J_n$ be the set of fine intervals containing at least
one jump, let $\mathcal B_n$ be the set of coarse blocks containing at least one jump, and let
$\mathcal B_n^{(2)}$ be the set of coarse blocks containing at least two jumps. Then
\begin{equation*}
\#\mathcal J_n=O_p\paren{\Delta_n^{-1/2}},
\qquad
\#\mathcal B_n=O_p\paren{\Delta_n^{-1/2}},
\qquad
\#\mathcal B_n^{(2)}=O_p(1).
\end{equation*}
Let 
$N_{i,n}^{J}=\mu_n((t_{i-1},t_i]\times\R)$ is the number of jumps in the $i$th fine interval,
\begin{equation*}
\Pbb\paren{\max_{1\le i\le n}N_{i,n}^{J}\ge3}\to0.
\end{equation*}
\end{lemma}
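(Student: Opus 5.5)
Everything in Lemma \ref{lem:densecount} is a statement about Poisson counts. Under Assumption \ref{ass:H1dense} the random measure $\mu_n$ has intensity $\theta\Delta_n^{-1/2}\dd t\,F(\dd y)$ with $F$ a probability law. Hence the ground process $N^{(n)}_t=\mu_n((0,t]\times\R)$ is a homogeneous Poisson process with rate $\lambda_n=\theta\Delta_n^{-1/2}$. Its counts over disjoint intervals are independent Poisson variables, and the fine interval counts $N^J_{i,n}$ are i.i.d.\ Poisson with mean $\lambda_n\Delta_n=\theta\Delta_n^{1/2}\to0$. Similarly, the counts over coarse blocks (length $k\Delta_n$) are i.i.d.\ Poisson with mean $\eta_n=k\theta\Delta_n^{1/2}$. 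The plan is to compute first moments and apply Markov's inequality in each case.

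\emph{First two claims.} Since $\#\mathcal J_n\le\sum_{i=1}^n N^J_{i,n}=N^{(n)}_T$ and
\[
\E N^{(n)}_T=\theta T\Delta_n^{-1/2},
\]
Markov's inequality gives $\#\mathcal J_n=O_p(\Delta_n^{-1/2})$. Every block in $\mathcal B_n$ contains a fine interval with a jump, so $\#\mathcal B_n\le\#\mathcal J_n$, and the same bound follows.

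\emph{Third claim.} Each block contains at least two jumps with probability
\[
1-e^{-\eta_n}-\eta_ne^{-\eta_n}\le\eta_n^2/2=\tfrac12 k^2\theta^2\Delta_n.
\]
There are $m_n\le n=T/\Delta_n$ blocks, so
\[
\E\#\mathcal B_n^{(2)}\le \tfrac12 k^2\theta^2 T=O(1).
\]
Markov's inequality then gives $\#\mathcal B_n^{(2)}=O_p(1)$.

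\emph{Fourth claim.} Use a union bound together with the Poisson tail bound
\[
\Pbb(N\ge3)\le \lambda^3/6 \qquad\text{for } N\sim\mathrm{Poisson}(\lambda).
\]
This yields
\[
\Pbb\paren{\max_i N^J_{i,n}\ge3}\le n\,\frac{\theta^3\Delta_n^{3/2}}{6}=\frac{T\theta^3}{6}\Delta_n^{1/2}\to0.
\]

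The one point that needs care, and the only step I would treat as a possible obstacle, is checking that the counts are exactly Poisson. This holds because the compensator is deterministic and $F(\R)=1$, so the number of points of $\mu_n$ in $(s,t]\times\R$ is Poisson with mean $\theta\Delta_n^{-1/2}(t-s)$, independent across disjoint intervals. The compact support and moment assumptions on $F$, and the independence from $(W,B)$, are not needed here.
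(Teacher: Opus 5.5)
Your proposal is correct and follows essentially the same route as the paper's proof. Both arguments bound the counts of jump-containing intervals and blocks by the total Poisson count with mean $\theta T\Delta_n^{-1/2}$, compute $\E\#\mathcal B_n^{(2)}=O(1)$ from the two-jump probability of a Poisson$(\theta k\sqrt{\Delta_n})$ block count, and use a union bound with $n\,\Pbb(N_{1,n}^J\ge 3)=O(\Delta_n^{1/2})$. Your explicit factorial-moment bounds $\eta_n^2/2$ and $\lambda^3/6$ simply make the paper's $O(\cdot)$ estimates concrete.
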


\begin{proof}
The total number of jumps on $[0,T]$ is Poisson with mean $\theta T\Delta_n^{-1/2}$, hence
\[
\#\mathcal J_n=O_p\paren{\Delta_n^{-1/2}},
\qquad
\#\mathcal B_n=O_p\paren{\Delta_n^{-1/2}}.
\]
A fixed coarse block has length $k\Delta_n$, hence its jump count is Poisson with mean
$\lambda_{n,k}=\theta k\sqrt{\Delta_n}$. Therefore
\[
\Pbb\paren{\text{a given block has at least two jumps}}
=1-e^{-\lambda_{n,k}}-\lambda_{n,k}e^{-\lambda_{n,k}}
=O\paren{\lambda_{n,k}^2}
=O(\Delta_n).
\]
Since $m_n=O(\Delta_n^{-1})$,
\[
\E\#\mathcal B_n^{(2)}=m_nO(\Delta_n)=O(1),
\]
which implies $\#\mathcal B_n^{(2)}=O_p(1)$. Finally, for a fine interval $N_{i,n}^{J}\sim\mathrm{Poisson}(\theta\sqrt{\Delta_n})$, and hence
\[
\Pbb\paren{\max_{1\le i\le n}N_{i,n}^{J}\ge3}
\le
n\,\Pbb\paren{N_{1,n}^{J}\ge3}
=O\paren{\Delta_n^{-1}\Delta_n^{3/2}}
\to0.
\]
\end{proof}

\begin{proof}[Proof of Theorem \ref{thm:densealt}]
Write
\[
X_t^{(n)}=X_t^0+J_t^{(n)},
\qquad
X_t^0=X_0+\int_0^t b_s\dd s+\int_0^t\sigma_s\dd W_s.
\]
Theorem \ref{thm:H0indep} gives
\begin{equation}\label{eq:H0forX0}
\paren{Z_n^{AJ}(X^0),\xi_n(X^0)}\toD (Z,\xi),
\qquad
Z\sim N(0,1),
\qquad
Z\ind\xi.
\end{equation}
For $i=1,\dots,n$, let
\[
\eta_{i,n}=\int_{(t_{i-1},t_i]\times\R} y\,\mu_n(\dd s,\dd y).
\]
Then
\begin{equation}\label{eq:denseincrement}
\Delta_i^nX^{(n)}
=\Delta_i^nX^0+\sigma_{t_{i-1}}\sqrt{\Delta_n}\eta_{i,n}+r_{i,n}^{J},
\qquad
\max_{1\le i\le n}\abs{r_{i,n}^{J}}=o_p\paren{\sqrt{\Delta_n}},
\end{equation}
where the remainder follows from Lemma \ref{lem:LMregularity} and the bounded support of $F$.
Also,
\begin{equation}\label{eq:etan_props}
\Pbb(\eta_{i,n}=0)=1-\theta\sqrt{\Delta_n}+O(\Delta_n),
\qquad
\sup_i\E\abs{\eta_{i,n}}^q=O\paren{\sqrt{\Delta_n}},
\qquad q\ge1.
\end{equation}
The leverage-robust block expansion of Lemma \ref{lem:AJexpansionH0}, applied to the shifted
increments \eqref{eq:denseincrement}, yields
\begin{equation*}
\Delta_n^{-1/2}\paren{R_n^{AJ}(X^{(n)})-k^{p/2-1}}
=
\frac{\sqrt{\Delta_n}}{m_pA_p}
\sum_{j=1}^{m_n}\sigma_{(j-1)k\Delta_n}^p
U_{p,k}\paren{E_{j,1}+\eta_{j,1,n},\dots,E_{j,k}+\eta_{j,k,n}}
+o_p(1),
\end{equation*}
where $\eta_{j,l,n}=\eta_{(j-1)k+l,n}$. Subtract and add the continuous-path term:
\begin{align}
&\Delta_n^{-1/2}\paren{R_n^{AJ}(X^{(n)})-k^{p/2-1}}
\nonumber\\
&\qquad=
\Delta_n^{-1/2}\paren{R_n^{AJ}(X^0)-k^{p/2-1}}
+
\frac{\sqrt{\Delta_n}}{m_pA_p}
\sum_{j=1}^{m_n}\sigma_{(j-1)k\Delta_n}^pD_{j,n}
+o_p(1),
\label{eq:denseAJexpand2}
\end{align}
where
\[
D_{j,n}
=
U_{p,k}\paren{E_{j,1}+\eta_{j,1,n},\dots,E_{j,k}+\eta_{j,k,n}}
-U_{p,k}(E_{j,1},\dots,E_{j,k}).
\]
Split the blocks into
\[
\mathcal B_n^{(1)}=\mathcal B_n\setminus\mathcal B_n^{(2)},
\qquad
\mathcal B_n^{(2)}=\set{j:\text{block }j\text{ has at least two jumps}}.
\]
Lemma \ref{lem:densecount} implies $\#\mathcal B_n^{(2)}=O_p(1)$. Since $F$ has compact support and
$U_{p,k}$ has finite moments under bounded shifts,
\[
\frac{\sqrt{\Delta_n}}{m_pA_p}
\sum_{j\in\mathcal B_n^{(2)}}\sigma_{(j-1)k\Delta_n}^pD_{j,n}
=O_p\paren{\sqrt{\Delta_n}}\to 0.
\]
Thus only the single-jump blocks matter. For $j\in\mathcal B_n^{(1)}$, exactly one coordinate has a
jump. By \eqref{eq:etan_props},
\begin{equation*}
\E\brac{D_{j,n}}=\theta k\sqrt{\Delta_n}\,d_{p,k}+O(\Delta_n).
\end{equation*}
Therefore
\begin{align}
\E\brac{
\frac{\sqrt{\Delta_n}}{m_pA_p}
\sum_{j=1}^{m_n}\sigma_{(j-1)k\Delta_n}^pD_{j,n}}
&=
\frac{\theta k\Delta_n d_{p,k}}{m_pA_p}
\sum_{j=1}^{m_n}\sigma_{(j-1)k\Delta_n}^p
+o(1)\to
\frac{\theta d_{p,k}}{m_p}.
\label{eq:denseshiftmean}
\end{align}
Also, by Lemma \ref{lem:analytic}, \eqref{eq:etan_props}, and the bounded support of $F$,
\[
\E D_{j,n}^2=O\paren{\sqrt{\Delta_n}}.
\]
Hence
\begin{align}
\Var\brac{
\frac{\sqrt{\Delta_n}}{m_pA_p}
\sum_{j=1}^{m_n}\sigma_{(j-1)k\Delta_n}^p
\paren{D_{j,n}-\E D_{j,n}}}
&\le
\frac{C\Delta_n}{A_p^2}
\sum_{j=1}^{m_n}\sigma_{(j-1)k\Delta_n}^{2p}\E D_{j,n}^2 =O\paren{\sqrt{\Delta_n}}\to 0.
\label{eq:denseshiftvar}
\end{align}
Combining \eqref{eq:denseAJexpand2}, \eqref{eq:denseshiftmean}, and \eqref{eq:denseshiftvar},
\begin{equation*}
\Delta_n^{-1/2}\paren{R_n^{AJ}(X^{(n)})-k^{p/2-1}}
=
\Delta_n^{-1/2}\paren{R_n^{AJ}(X^0)-k^{p/2-1}}
+\frac{\theta d_{p,k}}{m_p}+o_p(1).
\end{equation*}
The same denominator argument as in Lemma \ref{lem:AJexpansionH0} gives
\[
\hat\tau_{n,0}(X^{(n)})\toP\tau_0,
\]
because the jump-contaminated intervals are $O_p(\Delta_n^{-1/2})$ in number and each contributes
only $O_p(\Delta_n)$ to $\hat A_{p,n}$ and $\hat A_{2p,n}$. Therefore
\begin{equation}\label{eq:denseAJfinal}
Z_n^{AJ}(X^{(n)})=Z_n^{AJ}(X^0)+\mu_D+o_p(1).
\end{equation}

For the max statistic, let $\hat V_{n,i}^{D}=\hat V_{n,i}(X^{(n)})$ and $\hat V_{n,i}^{0}=\hat V_{n,i}(X^0)$.
Each pairwise product in $\hat V_{n,i}^{D}-\hat V_{n,i}^{0}$ is bounded by $C\Delta_n$ times the indicator
that at least one of the two adjacent fine intervals contains a jump. Hence
\[
\max_{K_n\le i\le n}\abs{\hat V_{n,i}^{D}-\hat V_{n,i}^{0}}
\le
C\Delta_n
\max_{K_n\le i\le n}
\frac{1}{K_n}\sum_{j=i-K_n+1}^{i}\1\set{\eta_{j,n}\ne0}.
\]
The window sum of the Bernoulli indicators $\1\{\eta_{j,n}\ne0\}$ has mean of order
$K_n\theta\sqrt{\Delta_n}$ and, by the same Bernstein--union-bound argument as in the proof of Lemma
\ref{lem:LMgauss},
\[
\max_{K_n\le i\le n}\frac1{K_n}\sum_{j=i-K_n+1}^{i}\1\set{\eta_{j,n}\ne0}
=O_p\paren{\sqrt{\Delta_n}}+O_p\paren{\sqrt{\frac{\log n}{K_n}}}.
\]
Therefore
\[
\max_{K_n\le i\le n}
\abs{\frac{\hat V_{n,i}^{D}-\hat V_{n,i}^{0}}{\sigma_{t_{i-1}}^2\Delta_n}}
=o_p\paren{\an^2}.
\]
Lemma \ref{lem:LMgauss} then implies
\begin{equation}\label{eq:denseVsame}
\max_{K_n\le i\le n}
\abs{\frac{\hat V_{n,i}^{D}}{\sigma_{t_{i-1}}^2\Delta_n}-1}
=o_p\paren{\an^2}.
\end{equation}
Now define
\[
\tilde\xi_n^{D}
=
\frac{\max_{K_n\le i\le n}\abs{E_i+\eta_{i,n}}-C_n}{\an}.
\]
Equation \eqref{eq:denseincrement} and \eqref{eq:denseVsame} imply
\begin{equation}\label{eq:densexiapprox}
\xi_n(X^{(n)})-\tilde\xi_n^{D}\toP 0.
\end{equation}
Let $\mathcal J_n=\{i:\eta_{i,n}\ne0\}$. By Lemma \ref{lem:densecount},
$\#\mathcal J_n=O_p(\Delta_n^{-1/2})$. Since $F$ has compact support by Assumption
\ref{ass:H1dense}, there exists $M_F<\infty$ such that $\abs{y}\le M_F$ $F$-a.s. Moreover,
Lemma \ref{lem:densecount} gives $\max_iN_{i,n}^{J}\le2$ with probability tending to one, and hence
$\max_i\abs{\eta_{i,n}}\le2M_F$ on that event. Hence, for every fixed $y\in\R$,
\begin{align*}
\Pbb\paren{\max_{i\in\mathcal J_n}\abs{E_i+\eta_{i,n}}>C_n+\an y}
&\le
\E\#\mathcal J_n\cdot
\Pbb\paren{\abs{E_1}+M>C_n+\an y}
\\
&\le
C\Delta_n^{-1/2}\overline\Phi\paren{C_n-M+\an y}
\to 0,
\end{align*}
where $\overline\Phi=1-\Phi$. Thus the shifted indices do not affect the extreme-value limit and
\begin{equation}\label{eq:densegumbel}
\tilde\xi_n^{D}\toD \xi,
\qquad
\Pbb(\xi\le y)=\exp\set{-e^{-y}}.
\end{equation}
Combining \eqref{eq:H0forX0}, \eqref{eq:denseAJfinal}, \eqref{eq:densexiapprox}, and
\eqref{eq:densegumbel},
\[
\paren{Z_n^{AJ}(X^{(n)}),\xi_n(X^{(n)})}\toD (Z+\mu_D,\xi),
\qquad Z\ind\xi.
\]
% The statement for $(p_n^{AJ},p_n^{LM})$ follows from the continuous
% mapping theorem.
\end{proof}

\begin{proof}[Proof of Corollary \ref{cor:densepower}]
Theorem \ref{thm:densealt} implies
\[
Z_n^{AJ}\toD Z_D,
\qquad
p_n^{LM}\toD U,
\qquad
Z_D\ind U,
\]
where $Z_D\sim N(\mu_D,1)$ and $U\sim\mathrm{Unif}(0,1)$. Therefore
\[
\T_n^C\toD \T_D:=\frac12A(Z_D)+\frac12C,
\]
where $A(\cdot)$ is given in Corollary \ref{cor:densepower} and
\[
C=\tan\brac{\pi\paren{\frac12-U}}
\]
is standard Cauchy and independent of $Z_D$. Hence
\begin{align*}
\beta_D(\alpha;\mu_D)
&=\Pbb\paren{\T_D>c_\alpha}\\
&=\int_{\R}
\Pbb\paren{C>2c_\alpha-A(z)}
\phi(z-\mu_D)\dd z\\
&=\int_{\R}
\paren{\frac12-\frac{1}{\pi}\arctan\paren{2c_\alpha-A(z)}}
\phi(z-\mu_D)\dd z.
\end{align*}
\end{proof}

\section{Proofs under fixed finite-activity jumps}
\label{app:fixed}

\begin{proof}[Proof of Theorem \ref{thm:fixedalt}]
Let $i_q$ be the unique fine index such that $\tau_q\in(t_{i_q-1},t_{i_q}]$, and let $j_q$ be the
corresponding coarse-block index. Since the jump times are separated, for all sufficiently large $n$ each
coarse block contains at most one jump and each local bipower window contains at most one jump.
Write
\[
X_t^0=X_0+\int_0^t b_s\dd s+\int_0^t\sigma_s\dd W_s.
\]
Then
\[
\Delta_{i_q}^nX=\kappa_q+\Delta_{i_q}^nX^0,
\qquad
\sum_{\ell=1}^k\Delta_{(j_q-1)k+\ell}^nX
=\kappa_q+\sum_{\ell=1}^k\Delta_{(j_q-1)k+\ell}^nX^0.
\]
The function $f(x)=\abs{x}^p$ is $C^2$ on $\R\setminus\{0\}$ and its generalized Taylor remainder is
bounded by
\[
\abs{f(x+y)-f(x)-f'(x)y}
\le C_p\paren{\abs{x}^{p-2}y^2+\abs{y}^p}. 
\]Applying this with $x=\kappa_q$ and $y=\Delta_{i_q}^nX^0$ yields
\begin{align*}
\abs{\Delta_{i_q}^nX}^p
&=
\abs{\kappa_q}^p+p\abs{\kappa_q}^{p-1}\sgn(\kappa_q)\Delta_{i_q}^nX^0+o_p\paren{\sqrt{\Delta_n}},
\\
\abs{\sum_{\ell=1}^k\Delta_{(j_q-1)k+\ell}^nX}^p
&=
\abs{\kappa_q}^p+p\abs{\kappa_q}^{p-1}\sgn(\kappa_q)
\sum_{\ell=1}^k\Delta_{(j_q-1)k+\ell}^nX^0+o_p\paren{\sqrt{\Delta_n}}.
\end{align*}
Summing over all jump blocks gives
\begin{align*}
B_n(p)
&=
B_p+p\sum_{q=1}^{N_T}\abs{\kappa_q}^{p-1}\sgn(\kappa_q)\Delta_{i_q}^nX^0+o_p\paren{\sqrt{\Delta_n}},
\\
B_n^{(k)}(p)
&=
B_p+p\sum_{q=1}^{N_T}\abs{\kappa_q}^{p-1}\sgn(\kappa_q)
\sum_{\ell=1}^k\Delta_{(j_q-1)k+\ell}^nX^0+o_p\paren{\sqrt{\Delta_n}}.
\end{align*}
Consequently,
\begin{align*}
\Delta_n^{-1/2}\paren{R_n^{AJ}-1}
&=
\frac{p}{B_p}\sum_{q=1}^{N_T}\abs{\kappa_q}^{p-1}\sgn(\kappa_q)
\frac{1}{\sqrt{\Delta_n}}
\sum_{\substack{1\le \ell\le k\\(j_q-1)k+\ell\ne i_q}}
\Delta_{(j_q-1)k+\ell}^nX^0
+o_p(1).
\end{align*}
Lemma \ref{lem:taylor_increment} implies that each nonjump increment inside the $q$th jump block can
be replaced by $\sigma_{\tau_q}\sqrt{\Delta_n}$ times an asymptotically standard Gaussian variable,
with an $o_p(\sqrt{\Delta_n})$ remainder. Since the jump times are separated, the resulting Gaussian
variables are asymptotically independent across different $q$. Therefore
\[
\Delta_n^{-1/2}\paren{R_n^{AJ}-1}\toD N(0,\tau_F^2),
\qquad
\widetilde Z_{n,1}^{AJ}\toD N(0,1).
\]

For the max statistic,
\[
\hat V_{n,i_q}=\sigma_{\tau_q}^2\Delta_n\paren{1+o_p(1)},
\qquad
L_{n,i_q}
=\frac{\kappa_q+\Delta_{i_q}^nX^0}{\sigma_{\tau_q}\sqrt{\Delta_n}\paren{1+o_p(1)}}
=\frac{\kappa_q}{\sigma_{\tau_q}\sqrt{\Delta_n}}\paren{1+o_p(1)}\toP\infty.
\]
Hence $M_n\toP\infty$, $\xi_n\toP\infty$, and
\[
p_n^{LM}=1-\exp\set{-e^{-\xi_n}}\toP 0.
\]
Thus
\[
\paren{\widetilde Z_{n,1}^{AJ},p_n^{LM}}\toD(Z,0),
\qquad Z\sim N(0,1).
\]
Finally,
\[
\T_n^C
\ge
\frac12\tan\brac{\pi\paren{\frac12-p_n^{LM}}}
\toP\infty,
\qquad
\Pbb\paren{p_n^C\le\alpha}=\Pbb\paren{\T_n^C>c_\alpha}\to 1.
\]
\end{proof}

\section{Auxiliary results for the noisy extension}
\label{app:noise_aux}

\subsection{The pre-averaged sum statistic}
\label{app:noise_sum_aux}

For $\phi\in\{g,h\}$ and $i=0,\dots,n-k_n$, write
\begin{equation}\label{eq:noise_Psi}
\Psi_{n,i}(\phi)=\sum_{l=0}^{p/2}\rho_l^{(p)}
\abs{\bar Y_i^n(\phi)}^{p-2l}\abs{\hat Y_i^n(\phi)}^l,
\qquad
V_n(Y,\phi,p)=\sum_{i=0}^{n-k_n}\Psi_{n,i}(\phi).
\end{equation}
Let
\begin{equation*}
c_h=m_p\theta^{p/2}\bar h(2)^{p/2}A_p,
\qquad
A_p=\int_0^T |\sigma_t|^p\dd t.
\end{equation*}
For $J_n=\lfloor n/r_n\rfloor$, define the blocks
\begin{equation*}
I_{n,b}=\set{(b-1)r_n,\dots,br_n-1},
\qquad
I_{n,b}^{\circ}=\set{(b-1)r_n,\dots,br_n-k_n-1},
\qquad
b=1,\dots,J_n.
\end{equation*}
Let
\[
\mathcal G_{n,i}=\F_{t_i}\vee\sigma(\epsilon_0,\dots,\epsilon_i),
\qquad
\mathcal H_{n,b}=\mathcal G_{n,br_n}.
\]
Define
\begin{align}\label{eq:noise_Xi}
    \Xi_{n,i}=\frac{\Delta_n^{-1/4}}{\gamma' c_h}\Big(\Delta_n^{1-p/4}\Psi_{n,i}(g)-\gamma'\gamma''\Delta_n^{1-p/4}\Psi_{n,i}(h)-\E\brac{\Delta_n^{1-p/4}\Psi_{n,i}(g)-\gamma'\gamma''\Delta_n^{1-p/4}\Psi_{n,i}(h)\mid \mathcal G_{n,i}}\Big),
\end{align}
\begin{equation*}
\Gamma_{n,b}=\sum_{i\in I_{n,b}^{\circ}}\Xi_{n,i},
\qquad
\hat\varsigma_n^{PA\,2}=\sum_{b=1}^{J_n}\Gamma_{n,b}^2.
\end{equation*}

\begin{lemma}[Noise-correction identity with a deterministic shift]
\label{lem:noise_shift_identity}
For every even integer $p\ge 4$, every $a\ge 0$, every $b\ge 0$, and every $x\in\R$,
\begin{equation*}
\sum_{l=0}^{p/2}\rho_l^{(p)}
\E\brac{\abs{\sqrt{a+b}\,N+x}^{p-2l}}(2b)^l
=
\E\abs{\sqrt{a}\,N+x}^p,
\qquad
N\sim N(0,1).
\end{equation*}
\end{lemma}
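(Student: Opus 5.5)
Because $p$ is even, every absolute power in the statement is an ordinary polynomial: $\abs{z}^{p-2l}=z^{p-2l}$ for $0\le l\le p/2$. The plan is therefore purely algebraic. I would expand both sides as polynomials in $b$, with coefficients given by Gaussian moments of a fixed shifted variable, and then read off that the triangular system \eqref{eq:noise_rho} is exactly the condition making all coefficients of $b^j$ with $j\ge1$ vanish. No limit argument is needed; this is an exact finite identity.

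\emph{Step 1 (splitting the variance).} Let $N_1,N_2$ be independent standard normals. Then $\sqrt{a+b}\,N+x$ has the same law as $U+\sqrt b\,N_2$, where $U=\sqrt a\,N_1+x$, and $U$ is independent of $N_2$. The right-hand side of the lemma is $\E U^p$.

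\emph{Step 2 (binomial expansion).} For each $l$, expand $(U+\sqrt b N_2)^{p-2l}$ binomially. Condition on $U$ and use independence. Odd moments of $N_2$ vanish, so
\[
\E\abs{\sqrt{a+b}\,N+x}^{p-2l}=\sum_{s=0}^{p/2-l}\binom{p-2l}{2s}m_{2s}\,b^{s}\,\E U^{p-2l-2s}.
\]
Here $m_{2s}=\E N_2^{2s}$, as defined in Section~\ref{sec:model}. All the expectations involved are finite Gaussian moments.

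\emph{Step 3 (regrouping by total power of $b$).} Multiply by $\rho_l^{(p)}(2b)^l$, sum over $l$, and set $j=l+s$. This gives
\[
\sum_{l=0}^{p/2}\rho_l^{(p)}\E\abs{\sqrt{a+b}\,N+x}^{p-2l}(2b)^l
=\sum_{j=0}^{p/2}b^{j}\,\E U^{p-2j}\sum_{l=0}^{j}2^l m_{2j-2l}\binom{p-2l}{2j-2l}\rho_l^{(p)}.
\]
By the symmetry $\binom{p-2l}{2j-2l}=\binom{p-2l}{p-2j}$, the inner sum is exactly the left-hand side of \eqref{eq:noise_rho}.

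\emph{Step 4 (reading off the coefficients).} For $j=0$ the inner sum is $m_0\rho_0^{(p)}=1$, so the $j=0$ term is $\E U^p$. For $1\le j\le p/2$ the inner sum is $0$ by \eqref{eq:noise_rho}. Hence the left-hand side equals $\E U^p=\E\abs{\sqrt a\,N+x}^p$, which proves the lemma.

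The only step needing care is the reindexing in Step 3. One must check that the ranges match: for fixed $j$, $l$ runs over $0,\dots,j$ and $s=j-l\le p/2-l$ holds automatically. One must also check the binomial symmetry, so that the coefficient aligns with \eqref{eq:noise_rho} term by term. The cases $a=0$ or $b=0$ need no separate treatment, since the expansion is valid for all $a,b\ge0$.
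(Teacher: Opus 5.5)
Your proof is correct, and it reaches the triangular system by a cleaner route than the paper's. The paper expands everything into monomials. It writes $\E\abs{\sqrt{a+b}\,N+x}^{p-2l}$ as a polynomial in $x$ with coefficients $m_{p-2l-2j}(a+b)^{p/2-l-j}$, expands $(a+b)^{p/2-l-j}$ binomially, and collects the coefficient $C_{j,s}$ of $a^{p/2-j-s}b^s x^{2j}$. It then needs the factorial identity $\binom{p-2l}{2j}m_{p-2l-2j}\binom{p/2-l-j}{s-l}=\binom{p-2s}{2j}m_{p-2s-2j}\binom{p-2l}{p-2s}m_{2s-2l}$ to pull out the common factor $\binom{p-2s}{2j}m_{p-2s-2j}$ and expose the left-hand side of \eqref{eq:noise_rho}.

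You instead split $\sqrt{a+b}\,N+x$ in law as $U+\sqrt b\,N_2$ and keep $U=\sqrt a\,N_1+x$ intact, so the $a$- and $x$-dependence is never expanded. The coefficient of $b^j$ is then $\E U^{p-2j}$ times the triangular-system sum, and only the trivial symmetry $\binom{p-2l}{2j-2l}=\binom{p-2l}{p-2j}$ is needed. In effect, your Gaussian-convolution step proves the paper's factorial identity for free.

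Your argument also gives slightly more. Steps 2--4 use only that $U$ is independent of $N_2$ with a finite $p$th moment, not that $U$ is Gaussian or that $x$ is deterministic. So you obtain $\sum_{l}\rho_l^{(p)}\E(U+\sqrt b\,N_2)^{p-2l}(2b)^l=\E U^p$ for any such $U$. This isolates what the noise correction actually requires: Gaussianity of the variance-$b$ component only.

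The paper's route is a direct check in monomials that needs no probabilistic reinterpretation. Its price is a nontrivial combinatorial identity whose explicit coefficients are not used afterwards.
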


\begin{proof}
For $l=0,1,\dots,p/2$, set $q_l=p-2l$. Since $q_l$ is even,
\[
\E\abs{\sqrt{a+b}\,N+x}^{q_l}
=
\sum_{j=0}^{q_l/2}
\binom{q_l}{2j}m_{q_l-2j}(a+b)^{q_l/2-j}x^{2j}.
\]
Expanding $(a+b)^{q_l/2-j}$ yields
\[
\E\abs{\sqrt{a+b}\,N+x}^{q_l}
=
\sum_{j=0}^{q_l/2}
\sum_{s=l}^{p/2-j}
\binom{p-2l}{2j}
 m_{p-2l-2j}
\binom{p/2-l-j}{s-l}
 a^{p/2-j-s}b^{s-l}x^{2j}.
\]
Furthermore,
\[
\sum_{l=0}^{p/2}\rho_l^{(p)}
\E\brac{\abs{\sqrt{a+b}\,N+x}^{p-2l}}(2b)^l=\sum_{j=0}^{p/2}
\sum_{s=0}^{p/2-j}
C_{j,s}
 a^{p/2-j-s}b^{s}x^{2j},
\]
where
\[
C_{j,s}
=
\sum_{l=0}^{s}
2^l\rho_l^{(p)}
\binom{p-2l}{2j}
 m_{p-2l-2j}
\binom{p/2-l-j}{s-l}.
\]
The elementary identity
\[
\binom{p-2l}{2j}
 m_{p-2l-2j}
\binom{p/2-l-j}{s-l}
=
\binom{p-2s}{2j}
 m_{p-2s-2j}
\binom{p-2l}{p-2s}
 m_{2s-2l}
\]
then gives
\[
C_{j,s}
=
\binom{p-2s}{2j}
 m_{p-2s-2j}
\sum_{l=0}^{s}
2^l m_{2s-2l}\binom{p-2l}{p-2s}\rho_l^{(p)}.
\]
By \eqref{eq:noise_rho}, the inner sum equals $0$ for $s\ge 1$ and equals $1$ for $s=0$. Hence
\[
C_{j,s}=0,
\qquad s\ge 1,
\qquad
C_{j,0}=\binom{p}{2j}m_{p-2j}.
\]
Consequently,
\[
\sum_{l=0}^{p/2}\rho_l^{(p)}
\E\brac{\abs{\sqrt{a+b}\,N+x}^{p-2l}}(2b)^l
=
\sum_{j=0}^{p/2}\binom{p}{2j}m_{p-2j}a^{p/2-j}x^{2j}
=
\E\abs{\sqrt{a}\,N+x}^p.
\]
\end{proof}

\begin{lemma}[Weighted sub-Gaussian noise becomes Gaussian]
\label{lem:noise_subgaussian_replacement}
Fix $\phi\in\{g,h\}$ and define
\[
c_{n,j}(\phi)=-\Delta_n^{-1/4}\Delta_j\phi^n,
\qquad
s_{n,\phi}^2=\sum_{j=1}^{k_n}c_{n,j}(\phi)^2,
\qquad
s_\phi^2=\theta^{-1}\bar\phi'(2).
\]
Then
\begin{equation*}
s_{n,\phi}^2\to s_\phi^2,
\qquad
\max_{1\le j\le k_n}\abs{c_{n,j}(\phi)}\to 0.
\end{equation*}

For each $i=0,\dots,n-k_n$, let
\begin{equation*}
\eta_{n,i}(\phi)=\omega^{-1}\sum_{j=1}^{k_n}c_{n,j}(\phi)\epsilon_{i+j-1}.
\end{equation*}
Then
% the law of $\eta_{n,i}(\phi)$ does not depend on $i$,
\begin{equation*}
\sup_{0\le i\le n-k_n}\norm{\eta_{n,i}(\phi)}_{\psi_2}\le C_\phi,\qquad
\eta_{n,i}(\phi)\toD s_\phi N,
\qquad N\sim N(0,1),
\end{equation*}
uniformly in $i$. 

Moreover, for every even integer $q\le p$, every bounded family of
$\mathcal G_{n,i}$-measurable shifts $x_{n,i}$, and every family of conditionally Gaussian variables
$G_{n,i}$ such that, conditionally on $\mathcal G_{n,i}$,
\[
G_{n,i}\sim N(0,a_{n,i}),
\qquad
0\le a_{n,i}\le C,
\]
and $G_{n,i}$ is independent of $\sigma(\epsilon_0,\epsilon_1,\dots)$, one has
\begin{equation*}
\sup_{0\le i\le n-k_n}
\abs{
\E\brac{\abs{G_{n,i}+\omega\eta_{n,i}(\phi)+x_{n,i}}^q\mid\mathcal G_{n,i}}
-
\E\abs{\sqrt{a_{n,i}+\omega^2 s_\phi^2}\,N+x_{n,i}}^q
}
\toP 0.
\end{equation*}
\end{lemma}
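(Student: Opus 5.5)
We prove the three assertions in order: the deterministic weight asymptotics, then the sub-Gaussian bound and the CLT, then the uniform conditional moment approximation.

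\emph{Weight asymptotics.} Since $\phi$ is continuous and piecewise $C^1$ on $[0,1]$, the mean value theorem gives $\Delta_j\phi^n=k_n^{-1}\phi'(\zeta_j)$ for some $\zeta_j\in((j-1)/k_n,j/k_n)$, except on the finitely many intervals containing a break point of $\phi'$; on those the difference is still $O(k_n^{-1})$. Hence
\[
s_{n,\phi}^2=\Delta_n^{-1/2}k_n^{-2}\sum_{j=1}^{k_n}\phi'(\zeta_j)^2+O(\Delta_n^{-1/2}k_n^{-2}).
\]
The sum equals $k_n(\bar\phi'(2)+o(1))$ by Riemann summation. Therefore $s_{n,\phi}^2=(k_n\sqrt{\Delta_n})^{-1}(\bar\phi'(2)+o(1))\to\theta^{-1}\bar\phi'(2)$ by \eqref{eq:noise_kn}. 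In the same way, $\max_j|c_{n,j}(\phi)|\le \Delta_n^{-1/4}\|\phi'\|_\infty k_n^{-1}=O(\Delta_n^{1/4})\to0$.

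\emph{Sub-Gaussian bound and CLT.} The $\epsilon$'s are independent, centred, and satisfy $\|\epsilon_0\|_{\psi_2}\le K_\epsilon$. The standard Hoeffding-type bound for weighted sums therefore gives
\[
\|\eta_{n,i}(\phi)\|_{\psi_2}^2\le C\omega^{-2}K_\epsilon^2 s_{n,\phi}^2,
\]
and this is bounded uniformly in $i$ by the first step. The law of $\eta_{n,i}(\phi)$ does not depend on $i$, because the noise is i.i.d. Thus the CLT needs to be shown only for $i=0$. It follows from Lindeberg--Feller, since the variance $s_{n,\phi}^2\to s_\phi^2$ and
\[
\sum_j c_{n,j}^4\E\epsilon_0^4\le \max_j c_{n,j}^2\, s_{n,\phi}^2\,C\to0,
\]
which yields Lyapunov's condition. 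Equivalently, one can use characteristic functions with the sub-Gaussian bound.

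\emph{Conditional moments.} The window $\{\epsilon_{i},\dots,\epsilon_{i+k_n-1}\}$ overlaps $\mathcal G_{n,i}$ only in $\epsilon_i$, so I would handle that term first. The corresponding coefficient satisfies $|c_{n,1}|\to0$, so $\epsilon_i$ can be separated off and treated as a bounded-in-$L^q$ perturbation of size $O(\Delta_n^{1/4})$, using Lemma \ref{lem:analytic}-type inequalities for $|x+y|^q-|x|^q$. The remaining noise $\tilde\eta$ is independent of $\mathcal G_{n,i}$ and of $G_{n,i}$.

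Since $q$ is even, the conditional expectation is then a polynomial. Expanding binomially,
\[
\E[|G+\omega\tilde\eta+x|^q\mid\mathcal G_{n,i}]=\sum_{u+v+w=q}\binom{q}{u,v,w}x^w\,\E[G^u\mid\mathcal G_{n,i}]\,\omega^v\E\tilde\eta^v .
\]
Here $\E[G^u\mid\mathcal G_{n,i}]=m_u a_{n,i}^{u/2}$ for even $u$ and $0$ for odd $u$. The Gaussian target $\E|\sqrt{a+\omega^2s_\phi^2}N+x|^q$ has exactly the same expansion with $\E\tilde\eta^v$ replaced by $s_\phi^v\E N^v$. The difference is therefore bounded by
\[
C\sum_{v\le q}|\E\tilde\eta^v-s_\phi^v\E N^v|,
\]
where the constant $C$ uses $|x_{n,i}|\le C$ and $a_{n,i}\le C$. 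This bound does not depend on $i$, since the law of $\tilde\eta$ is $i$-invariant up to the deleted first coefficient.

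\emph{Moment convergence (main step).} What remains is $\E\tilde\eta^v\to s_\phi^v m_v$ (with odd moments tending to $0$) for all $v\le p$. Convergence in distribution alone does not give this, so I would add uniform integrability: the uniform $\psi_2$ bound implies $\sup_n\E|\tilde\eta|^{v+1}<\infty$, and moment convergence then follows. Alternatively, one can expand $\E\tilde\eta^v$ combinatorially in the $c_{n,j}$ and use $\max_j|c_{n,j}|\to0$ to kill every partition except the pairings. The $\epsilon_i$-splitting error vanishes uniformly by Hölder's inequality and the $\psi_2$ bound. This gives a deterministic uniform bound, hence the convergence in probability claimed.
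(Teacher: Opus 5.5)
Your proof is correct. The first two parts coincide with the paper's argument: the mean-value and Riemann-sum asymptotics for $s_{n,\phi}^2$, the Hoeffding-type $\psi_2$ bound, and Lindeberg/Lyapunov. You are slightly more careful than the paper at the break points of $\phi'$.

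The routes differ in the conditional-moment part. The paper conditions on $\mathcal G_{n,i}$ and factorizes the conditional characteristic function as $\exp(-u^2a_{n,i}/2)\,\E\exp\{iu\omega\eta_{n,i}(\phi)\}$. From this it obtains conditional convergence in distribution uniformly in $a_{n,i}\in[0,C]$, and then simply asserts that uniform integrability upgrades this to uniform convergence of the shifted $q$-th moments. That factorization tacitly treats $\eta_{n,i}(\phi)$ as independent of $\mathcal G_{n,i}$, although its $j=1$ term $c_{n,1}\epsilon_i$ is $\mathcal G_{n,i}$-measurable.

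You spot this and split that term off. You then use the evenness of $q$ to write both sides as the same finite multinomial expansion. The comparison thus reduces to $\E\tilde\eta^v\to s_\phi^v\,\E N^v$ for $v\le q$, which you get from the $\psi_2$ bound plus the CLT. The resulting bound is deterministic and explicitly uniform in $i$, $a_{n,i}$ and $x_{n,i}$.

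Your route supplies exactly what the paper's last sentence leaves implicit: uniformity over shifts and variances, and the measurability issue. The paper's route is shorter and does not use evenness of $q$.

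One correction: the split-off error is not controlled by H\"older's inequality, and it is not deterministic. After applying $\abs{\abs{z+w}^q-\abs{z}^q}\le C(\abs{z}^{q-1}+\abs{w}^{q-1})\abs{w}$, you are left with $\max_{0\le i\le n}\abs{c_{n,1}\epsilon_i}$. To make this $o_p(1)$ you need two facts:
\begin{itemize}
\item the sub-Gaussian maximal bound $\max_{0\le i\le n}\abs{\epsilon_i}=O_p(\sqrt{\log n})$, which comes from a union bound;
\item $c_{n,1}=-\Delta_n^{-1/4}\phi(1/k_n)=O(\Delta_n^{1/4})$.
\end{itemize}
This is the one place where the convergence is genuinely only in probability.
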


\begin{proof}
Since $\Delta_j\phi^n=k_n^{-1}\phi'(\xi_{j,n})$ for some $\xi_{j,n}\in((j-1)/k_n,j/k_n)$,
\[
s_{n,\phi}^2
=
\Delta_n^{-1/2}\sum_{j=1}^{k_n}(\Delta_j\phi^n)^2
=
\frac{1}{k_n\sqrt{\Delta_n}}\frac{1}{k_n}\sum_{j=1}^{k_n}\phi'(\xi_{j,n})^2
\to
\theta^{-1}\int_0^1\phi'(u)^2\,\dd u=s_\phi^2.
\]
Moreover,
\[
\max_{1\le j\le k_n}\abs{c_{n,j}(\phi)}
\le C\Delta_n^{-1/4}k_n^{-1}
=C\frac{\Delta_n^{1/4}}{k_n\sqrt{\Delta_n}}\to0 .
\]
Write $\epsilon_i=\omega\widetilde\epsilon_i$, where
\[
\E\widetilde\epsilon_i=0,\qquad
\E\widetilde\epsilon_i^2=1,\qquad
\norm{\widetilde\epsilon_i}_{\psi_2}\le \omega^{-1}K_\epsilon .
\]
Then
\[
\eta_{n,i}(\phi)=\sum_{j=1}^{k_n}c_{n,j}(\phi)\widetilde\epsilon_{i+j-1}.
\]
By Proposition 2.6.1 of \citet{Vershynin2018},
\[
\norm{\eta_{n,i}(\phi)}_{\psi_2}
\le C\omega^{-1}K_\epsilon
\paren{\sum_{j=1}^{k_n}c_{n,j}(\phi)^2}^{1/2}
\le C_\phi,
\]
uniformly in $i$. Since the coefficients $c_{n,j}(\phi)$ do not depend on $i$ and the noise variables are i.i.d., the law of $\eta_{n,i}(\phi)$ is the same for all $i$. For every fixed $\varepsilon>0$,
\[
\sum_{j=1}^{k_n}
\E\brac{c_{n,j}(\phi)^2\widetilde\epsilon_{i+j-1}^2
\1\set{\abs{c_{n,j}(\phi)\widetilde\epsilon_{i+j-1}}>\varepsilon}}
\le
s_{n,\phi}^2
\E\brac{\widetilde\epsilon_0^2
\1\set{\abs{\widetilde\epsilon_0}>\varepsilon/\max_j\abs{c_{n,j}(\phi)}}}
\to0.
\]
The Lindeberg--Feller theorem therefore gives
\begin{equation}\label{eq:noise_eta_clt}
\eta_{n,i}(\phi)\toD s_\phi N,\qquad N\sim N(0,1),
\end{equation}
uniformly in $i$. The uniform sub-Gaussian bound implies
\[
\sup_{0\le i\le n-k_n}\E\abs{\eta_{n,i}(\phi)}^{q+2}<\infty
\]
for every fixed even $q\le p$. Hence convergence in \eqref{eq:noise_eta_clt} also holds for shifted polynomial moments, uniformly over bounded shifts.

Now condition on $\mathcal G_{n,i}$. By assumption,
\[
G_{n,i}=\sqrt{a_{n,i}}Z_{n,i},\qquad
Z_{n,i}\sim N(0,1),\qquad
Z_{n,i}\ind \sigma(\epsilon_0,\epsilon_1,\ldots),
\]
conditionally on $\mathcal G_{n,i}$. The Gaussian summand is kept exactly; only the weighted noise sum is approximated. For every real $u$,
\[
\E\{\exp(iu(G_{n,i}+\omega\eta_{n,i}(\phi)))\mid\mathcal G_{n,i}\}
=
\exp\paren{-\frac12u^2a_{n,i}}
\E\exp\{iu\omega\eta_{n,i}(\phi)\}
\to
\exp\paren{-\frac12u^2(a_{n,i}+\omega^2s_\phi^2)},
\]
uniformly for $0\le a_{n,i}\le C$. Thus
\[
G_{n,i}+\omega\eta_{n,i}(\phi)\toD
\sqrt{a_{n,i}+\omega^2s_\phi^2}\,N
\]
conditionally on $\mathcal G_{n,i}$, uniformly in $i$. The same uniform-integrability argument, after adding the bounded $\mathcal G_{n,i}$-measurable shift $x_{n,i}$, yields
\[
\sup_{0\le i\le n-k_n}
\abs{
\E\brac{\abs{G_{n,i}+\omega\eta_{n,i}(\phi)+x_{n,i}}^q\mid\mathcal G_{n,i}}
-
\E\abs{\sqrt{a_{n,i}+\omega^2s_\phi^2}\,N+x_{n,i}}^q
}
\toP0.
\]
\end{proof}

\begin{lemma}\label{lem:noise_pabar_decomp}
Fix $\phi\in\{g,h\}$. There exist centered Gaussian variables $\zeta_{n,i}(\phi)$, centered variables
$\eta_{n,i}(\phi)$, independent conditionally on $\mathcal G_{n,i}$, and remainder variables
$r_{n,i}(\phi)$ such that
\begin{equation*}
\Delta_n^{-1/4}\bar Y_i^n(\phi)
=
\sigma_{t_i}\zeta_{n,i}(\phi)+\omega\eta_{n,i}(\phi)+r_{n,i}(\phi),
\qquad i=0,\dots,n-k_n,
\end{equation*}
and
\begin{equation*}
\E\brac{\zeta_{n,i}(\phi)^2\mid\mathcal G_{n,i}}
=
\theta\bar\phi(2)+o_p(1),
\qquad
\E\brac{\eta_{n,i}(\phi)^2\mid\mathcal G_{n,i}}
=
\theta^{-1}\bar\phi'(2)+o_p(1),
\end{equation*}
uniformly in $i$. Moreover,
\begin{equation*}
\max_{0\le i\le n-k_n}\abs{r_{n,i}(\phi)}\toP 0,
\qquad
\sup_{0\le i\le n-k_n}\norm{\eta_{n,i}(\phi)}_{\psi_2}=O_p(1),
\end{equation*}
and
\begin{equation*}
\max_{0\le i\le n-k_n}
\abs{\Delta_n^{-1/2}\hat Y_i^n(\phi)-2\omega^2\theta^{-1}\bar\phi'(2)}
\toP 0.
\end{equation*}
\end{lemma}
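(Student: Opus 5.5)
Since $\phi_0^n=\phi_{k_n}^n=0$, summation by parts splits the pre-averaged increment into a latent part and a noise part:
\[
\bar Y_i^n(\phi)=\sum_{j=1}^{k_n-1}\phi_j^n\Delta_{i+j}^nX-\sum_{j=1}^{k_n}\Delta_j\phi^n\,\epsilon_{i+j-1}.
\]
After multiplying by $\Delta_n^{-1/4}$, the noise part equals $\omega\eta_{n,i}(\phi)$ with $\eta_{n,i}(\phi)$ defined as in Lemma \ref{lem:noise_subgaussian_replacement}. That lemma then supplies the $\psi_2$ bound, uniform in $i$, and the variance $s_{n,\phi}^2\to\theta^{-1}\bar\phi'(2)$. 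Conditional independence of $\eta_{n,i}$ from the latent part given $\mathcal G_{n,i}$ comes from independence of the noise and $\F_T$. One caveat: $\epsilon_i\in\mathcal G_{n,i}$, so strictly the $j=1$ noise term is $\mathcal G_{n,i}$-measurable. It is of size $\Delta_n^{-1/4}k_n^{-1}\asymp\Delta_n^{1/4}$, so it is either moved into $r_{n,i}$ or handled by shifting the index by one, whichever is cleaner.

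For the latent part, freeze volatility at $t_i$ and define
\[
\zeta_{n,i}(\phi)=\Delta_n^{-1/4}\sum_{j=1}^{k_n-1}\phi_j^n\Delta_{i+j}^nW.
\]
Conditionally on $\mathcal G_{n,i}$ this is centered Gaussian with variance
\[
\Delta_n^{1/2}\sum_j(\phi_j^n)^2=k_n\Delta_n^{1/2}\,k_n^{-1}\sum_j(\phi_j^n)^2\to\theta\bar\phi(2)
\]
by a Riemann sum, and this holds exactly deterministically, hence uniformly in $i$. The remainder is
\[
r_{n,i}=\Delta_n^{-1/4}\sum_j\phi_j^n\Big(\int_{t_{i+j-1}}^{t_{i+j}}b_s\dd s+\int_{t_{i+j-1}}^{t_{i+j}}(\sigma_s-\sigma_{t_i})\dd W_s\Big),
\]
together with the small noise boundary term. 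The drift piece is bounded by $C\Delta_n^{-1/4}k_n\Delta_n=O(\Delta_n^{1/4})$ uniformly. For the martingale piece, BDG gives $q$-th moments of order
\[
\Delta_n^{-q/4}\big(k_n\Delta_n\cdot k_n\Delta_n\big)^{q/2}=\Delta_n^{q/4},
\]
using $\E|\sigma_s-\sigma_{t_i}|^2\le Ck_n\Delta_n$ over the window. A union bound over $n$ indices with $q$ large, as in Lemma \ref{lem:LMregularity}, then yields $\max_i|r_{n,i}|=O_p(\Delta_n^{1/4-\varepsilon})\toP0$.

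For $\hat Y_i^n(\phi)$, write $\Delta_{i+j}^nY=\Delta_{i+j}^nX+\epsilon_{i+j}-\epsilon_{i+j-1}$ and expand the square. The pure-noise term is
\[
\sum_j(\Delta_j\phi^n)^2(\epsilon_{i+j}-\epsilon_{i+j-1})^2 .
\]
Its mean is $2\omega^2\sum_j(\Delta_j\phi^n)^2=2\omega^2\Delta_n^{1/2}s_{n,\phi}^2$. For the fluctuation, the summands are $1$-dependent sub-exponential variables with weights $(\Delta_j\phi^n)^2\asymp k_n^{-2}$. Splitting into odd and even subsequences and applying Bernstein, as in \eqref{eq:D2_leverage}, gives a deviation of order $k_n^{-2}(\sqrt{k_n\log n}+\log n)$. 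After multiplying by $\Delta_n^{-1/2}\asymp k_n^2$ this is $O(\sqrt{\log n/k_n})\to0$ by \eqref{eq:noise_kn}, uniformly over the $n$ windows. The $X$-squared term is $\le Ck_n^{-2}\sum_j(\Delta_{i+j}^nX)^2$, which after scaling is $O_p(k_n\Delta_n\log n)\to0$. The cross terms are controlled by Cauchy--Schwarz.

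The main obstacle is the uniform-in-$i$ control, specifically the maximal bounds for $r_{n,i}$ and for the quadratic noise form $\hat Y_i^n$. Both rest on concentration plus a union bound over $O(n)$ windows, which requires exploiting sub-Gaussian and sub-exponential tails together with the growth condition $k_n/\log n\to\infty$.
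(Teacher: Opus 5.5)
Your proposal is correct and follows the paper's proof essentially step for step: the same frozen-volatility Brownian sum $\zeta_{n,i}(\phi)$, a window-level BDG bound of order $\Delta_n^{q/4}$ plus a union bound for $r_{n,i}(\phi)$, summation by parts and Lemma~\ref{lem:noise_subgaussian_replacement} for the noise, and a $1$-dependent Bernstein argument for $\hat Y_i^n(\phi)$. Your remark that the $j=1$ noise term $\Delta_1\phi^n\epsilon_i$ is $\mathcal G_{n,i}$-measurable is a legitimate refinement the paper passes over. One slip to fix: under \eqref{eq:noise_kn} one has $\Delta_n^{-1/2}\asymp k_n$, not $k_n^2$, and it is this scaling that turns your Bernstein deviation $k_n^{-2}(\sqrt{k_n\log n}+\log n)$ into $O(\sqrt{\log n/k_n})$.
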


\begin{proof}
Write
\[
\bar Y_i^n(\phi)=\bar X_i^n(\phi)+\bar\epsilon_i^n(\phi),
\qquad
\bar X_i^n(\phi)=\sum_{j=1}^{k_n-1}\phi_j^n\Delta_{i+j}^nX,
\qquad
\bar\epsilon_i^n(\phi)=\sum_{j=1}^{k_n-1}\phi_j^n(\epsilon_{i+j}-\epsilon_{i+j-1}).
\]
For the latent part define
\[
\zeta_{n,i}(\phi)=\Delta_n^{-1/4}\sum_{j=1}^{k_n-1}\phi_j^n\Delta_{i+j}^nW.
\]
Then
\[
\Delta_n^{-1/4}\bar X_i^n(\phi)=\sigma_{t_i}\zeta_{n,i}(\phi)+r_{n,i}^{X}(\phi),
\]
where
\[
r_{n,i}^{X}(\phi)
=
\Delta_n^{-1/4}\sum_{j=1}^{k_n-1}\phi_j^n
\left\{
\int_{t_{i+j-1}}^{t_{i+j}}b_s\dd s
+
\int_{t_{i+j-1}}^{t_{i+j}}(\sigma_s-\sigma_{t_i})\dd W_s
\right\}.
\]
This expression freezes the volatility at the left endpoint of the whole pre-averaging window. It must be controlled at the window level, rather than by multiplying a pointwise increment bound by $k_n$. After the usual localization, the coefficients in Assumption \ref{ass:H0} are bounded. For every $q\ge2$, Burkholder--Davis--Gundy and Jensen's inequality give, uniformly in $i$,
\begin{align*}
\E\abs{\Delta_n^{-1/4}\sum_{j=1}^{k_n-1}\phi_j^n
\int_{t_{i+j-1}}^{t_{i+j}}(\sigma_s-\sigma_{t_i})\dd W_s}^{q}
&\le
C_q\Delta_n^{-q/4}
\E\paren{\int_{t_i}^{t_{i+k_n}}(\sigma_s-\sigma_{t_i})^2\dd s}^{q/2} \\
&\le
C_q\Delta_n^{-q/4}
\paren{\int_{0}^{k_n\Delta_n}u\dd u}^{q/2}
\le C_q\Delta_n^{q/4}.
\end{align*}
The drift part is of the same or smaller order:
\[
\sup_i
\E\abs{\Delta_n^{-1/4}\sum_{j=1}^{k_n-1}\phi_j^n
\int_{t_{i+j-1}}^{t_{i+j}}b_s\dd s}^{q}
\le C_q(\Delta_n^{-1/4}k_n\Delta_n)^q
\le C_q\Delta_n^{q/4}.
\]
Taking $q>4$ and applying the union bound yields
\[
\Pbb\paren{\max_{0\le i\le n-k_n}\abs{r_{n,i}^{X}(\phi)}>\varepsilon}
\le C_q\varepsilon^{-q}n\Delta_n^{q/4}\to0,
\]
so $\max_i\abs{r_{n,i}^{X}(\phi)}\toP0$. Moreover, conditionally on $\mathcal G_{n,i}$, $\zeta_{n,i}(\phi)$ is centered Gaussian and
\[
\E\brac{\zeta_{n,i}(\phi)^2\mid\mathcal G_{n,i}}
=
\Delta_n^{-1/2}\sum_{j=1}^{k_n-1}(\phi_j^n)^2\Delta_n
=
\sqrt{\Delta_n}\sum_{j=1}^{k_n-1}(\phi_j^n)^2
\to \theta\bar\phi(2)
\]
uniformly in $i$.

For the noise part, summation by parts gives
\[
\bar\epsilon_i^n(\phi)=-\sum_{j=1}^{k_n}\Delta_j\phi^n\epsilon_{i+j-1},
\]
because $\phi_0^n=\phi_{k_n}^n=0$. Set
\[
\eta_{n,i}(\phi)=-\omega^{-1}\Delta_n^{-1/4}\sum_{j=1}^{k_n}\Delta_j\phi^n\epsilon_{i+j-1}.
\]
Then $\eta_{n,i}(\phi)$ is centered and conditionally independent of $\zeta_{n,i}(\phi)$, and
\[
\E\brac{\eta_{n,i}(\phi)^2\mid\mathcal G_{n,i}}
=
\Delta_n^{-1/2}\sum_{j=1}^{k_n}(\Delta_j\phi^n)^2
\to \theta^{-1}\bar\phi'(2)
\]
uniformly in $i$. Lemma \ref{lem:noise_subgaussian_replacement} gives
\[
\sup_{0\le i\le n-k_n}\norm{\eta_{n,i}(\phi)}_{\psi_2}=O_p(1).
\]
Thus the asserted decomposition holds with $r_{n,i}(\phi)=r_{n,i}^{X}(\phi)$.

It remains to prove the uniform approximation for $\hat Y_i^n(\phi)$. Since
\[
\Delta_{i+j}^nY=\Delta_{i+j}^nX+\epsilon_{i+j}-\epsilon_{i+j-1},
\]
we have
\begin{align*}
\hat Y_i^n(\phi)
&=
\sum_{j=1}^{k_n}(\Delta_j\phi^n)^2(\Delta_{i+j}^n\epsilon)^2
+2\sum_{j=1}^{k_n}(\Delta_j\phi^n)^2\Delta_{i+j}^nX\Delta_{i+j}^n\epsilon
+\sum_{j=1}^{k_n}(\Delta_j\phi^n)^2(\Delta_{i+j}^nX)^2.
\end{align*}
The variables $(\Delta_{i+j}^n\epsilon)^2-2\omega^2$ form a centered 1-dependent sub-exponential array. Since
\[
\Delta_n^{-1/2}\sum_{j=1}^{k_n}(\Delta_j\phi^n)^2=O(1),
\qquad
\max_{1\le j\le k_n}\Delta_n^{-1/2}(\Delta_j\phi^n)^2=O(k_n^{-1})\to0,
\]
Bernstein's inequality for 1-dependent sub-exponential arrays gives, uniformly in $i$,
\[
\Delta_n^{-1/2}\sum_{j=1}^{k_n}(\Delta_j\phi^n)^2\{(\Delta_{i+j}^n\epsilon)^2-2\omega^2\}\toP0.
\]
Therefore
\[
\Delta_n^{-1/2}\sum_{j=1}^{k_n}(\Delta_j\phi^n)^2(\Delta_{i+j}^n\epsilon)^2
\to 2\omega^2\theta^{-1}\bar\phi'(2)
\]
uniformly in $i$. The two remaining terms are negligible. Indeed,
\[
\max_{1\le j\le n}\abs{\Delta_j^nX}=O_p(\sqrt{\Delta_n\log n}),
\qquad
\max_{0\le j\le n}\abs{\epsilon_j}=O_p(\sqrt{\log n}),
\]
and $\sum_{j=1}^{k_n}(\Delta_j\phi^n)^2=O(k_n^{-1})=O(\sqrt{\Delta_n})$, so
\[
\Delta_n^{-1/2}\sum_{j=1}^{k_n}(\Delta_j\phi^n)^2
\abs{\Delta_{i+j}^nX}\abs{\Delta_{i+j}^n\epsilon}
=O_p(\sqrt{\Delta_n}\log n)=o_p(1)
\]
uniformly in $i$, and
\[
\Delta_n^{-1/2}\sum_{j=1}^{k_n}(\Delta_j\phi^n)^2(\Delta_{i+j}^nX)^2
=O_p(\Delta_n\log n)=o_p(1).
\]
This proves
\[
\max_{0\le i\le n-k_n}
\abs{\Delta_n^{-1/2}\hat Y_i^n(\phi)-2\omega^2\theta^{-1}\bar\phi'(2)}
\toP0.
\]
\end{proof}

\begin{lemma}\label{lem:noise_V_limit}
Suppose Assumption \ref{ass:H0} and \eqref{eq:noise_kn} hold. Then, for $\phi\in\{g,h\}$,
\begin{equation*}
\Delta_n^{1-p/4}V_n(Y,\phi,p)\toP m_p\theta^{p/2}\bar\phi(2)^{p/2}A_p.
\end{equation*}
Consequently,
\begin{equation*}
R_n^{PA}\toP \gamma''.
\end{equation*}
\end{lemma}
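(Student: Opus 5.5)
The plan is a conditional-mean-plus-fluctuation argument. Write
\[
\Delta_n^{1-p/4}V_n(Y,\phi,p)=\Delta_n\sum_{i=0}^{n-k_n}\Delta_n^{-p/4}\Psi_{n,i}(\phi).
\]
Then split each summand into its $\mathcal G_{n,i}$-conditional expectation and a centered remainder.

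\textbf{Step 1: the conditional mean.} By Lemma \ref{lem:noise_pabar_decomp},
\[
\Delta_n^{-1/4}\bar Y_i^n(\phi)=\sigma_{t_i}\zeta_{n,i}(\phi)+\omega\eta_{n,i}(\phi)+r_{n,i}(\phi),
\qquad
\Delta_n^{-1/2}\hat Y_i^n(\phi)=2\omega^2\theta^{-1}\bar\phi'(2)+o_p(1)\ \text{uniformly in } i.
\]
I first replace $\Psi_{n,i}(\phi)$ by
\[
\sum_l\rho_l^{(p)}\abs{\sigma_{t_i}\zeta_{n,i}+\omega\eta_{n,i}}^{p-2l}\paren{2\omega^2s_\phi^2}^l .
\]
The error is controlled by the analogue of \eqref{eq:taylorineq1} for the polynomial $x\mapsto x^{p-2l}$ with $p$ even. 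It is bounded by $C(1+\abs{\zeta}^{p-1}+\abs{\eta}^{p-1})$ times $\max_i\abs{r_{n,i}}$ plus the uniform $o_p(1)$ error in $\hat Y$. The uniform sub-Gaussian bound on $\eta_{n,i}$ makes the average of these factors $O_p(1)$, so the replacement costs $o_p(1)$ after the factor $\Delta_n\sum_i$.

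Next apply Lemma \ref{lem:noise_subgaussian_replacement} with
\[
G_{n,i}=\sigma_{t_i}\zeta_{n,i}(\phi),\qquad a_{n,i}=\sigma_{t_i}^2\theta\bar\phi(2)+o_p(1),\qquad x_{n,i}=0,\qquad b=\omega^2s_\phi^2 .
\]
The conditional expectation of each term is then uniformly close to
\[
\sum_l\rho_l^{(p)}\E\abs{\sqrt{a_{n,i}+b}\,N}^{p-2l}(2b)^l .
\]
By Lemma \ref{lem:noise_shift_identity} with $x=0$, this equals
\[
\E\abs{\sqrt{a_{n,i}}\,N}^p=m_p\theta^{p/2}\bar\phi(2)^{p/2}\sigma_{t_i}^p+o_p(1).
\]
A Riemann sum then gives
\[
\Delta_n\sum_i m_p\theta^{p/2}\bar\phi(2)^{p/2}\sigma_{t_i}^p\toP m_p\theta^{p/2}\bar\phi(2)^{p/2}A_p .
\]

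\textbf{Step 2: the fluctuation (the main obstacle).} The centered summands $\chi_{n,i}$ are not martingale differences in $i$, because consecutive windows overlap over $k_n$ indices. I would group indices by residue modulo $2k_n$, or equivalently use big blocks of length $k_n$ separated by gaps of length $k_n$. Within one residue class the summands are $\mathcal G$-adapted martingale differences once they are centered at the conditional mean given $\mathcal G_{n,i}$. Each class has $O(n/k_n)$ terms with bounded second moments, by the $\psi_2$ bounds and the boundedness of $\sigma$ and $\zeta$. Hence each class sum, multiplied by $\Delta_n$, has variance $O(\Delta_n^2 n/k_n)$. With $2k_n$ classes, Cauchy--Schwarz gives the total bound
\[
O\paren{\Delta_n k_n\sqrt{n/k_n}}=O\paren{\sqrt{\Delta_n k_n}}=O\paren{\Delta_n^{1/4}}\to0 .
\]
The delicate part is that $\E[\chi_{n,i}\mid\mathcal G_{n,i'}]$ for $i'$ slightly ahead requires conditional centering at each step. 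To handle this I would center at $\mathcal G_{n,i}$ and use that $\chi_{n,i}$ is $\mathcal G_{n,i+k_n}$-measurable. That measurability is exactly what makes the residue-class sums martingales with respect to $(\mathcal G_{n,i+2k_nm})_m$.

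\textbf{Step 3: the ratio.} Combining Steps 1 and 2 gives the stated limit for each $\phi\in\{g,h\}$. Since $A_p>0$ and $\bar h(2)>0$, the continuous mapping theorem gives
\[
R_n^{PA}\toP\frac{\bar g(2)^{p/2}}{\gamma'\,\bar h(2)^{p/2}}=\frac{\gamma^{p/2}}{\gamma'}=\gamma'' .
\]
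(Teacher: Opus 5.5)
Your proposal is correct and follows the paper's proof essentially step for step. The conditional mean is handled by Lemma~\ref{lem:noise_subgaussian_replacement}, Lemma~\ref{lem:noise_shift_identity} with $x=0$, and a Riemann sum; the centered fluctuation vanishes because $\mathcal G_{n,i}$-centered summands are exactly orthogonal at lags $\ge k_n$, giving a second moment of order $k_n\Delta_n\asymp\Delta_n^{1/2}$. Your two small variations are, if anything, slightly cleaner than the paper: you replace $\Delta_n^{-1/2}\hat Y_i^n(\phi)$ by its limit and drop $r_{n,i}(\phi)$ pathwise before conditioning, and you use residue-class martingales instead of the paper's $r_n$-block split, which is unnecessary and relies on \eqref{eq:noise_rn}, a condition not among the lemma's hypotheses.
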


\begin{proof}
Fix $\phi\in\{g,h\}$ and write
\[
a_{n,i}(\phi)=\sigma_{t_i}^2\Delta_n^{1/2}\sum_{j=1}^{k_n-1}(\phi_j^n)^2,
\qquad
b_n(\phi)=\omega^2\Delta_n^{-1/2}\sum_{j=1}^{k_n}(\Delta_j\phi^n)^2.
\]
By Lemma \ref{lem:noise_pabar_decomp},
\[
\Delta_n^{-1/4}\bar Y_i^n(\phi)
=
\sigma_{t_i}\zeta_{n,i}(\phi)+\omega\eta_{n,i}(\phi)+o_p(1),
\]
where, conditionally on $\mathcal G_{n,i}$, the first term is Gaussian with variance $a_{n,i}(\phi)$ and
is independent of $\eta_{n,i}(\phi)$. Therefore Lemma \ref{lem:noise_subgaussian_replacement} gives,
for every $l=0,\dots,p/2$,
\[
\E\brac{\abs{\Delta_n^{-1/4}\bar Y_i^n(\phi)}^{p-2l}\mid\mathcal G_{n,i}}
=
m_{p-2l}\paren{a_{n,i}(\phi)+b_n(\phi)}^{(p-2l)/2}+o_p(1),
\]
uniformly in $i$. Moreover, the last result in Lemma \ref{lem:noise_pabar_decomp} yields
\[
\Delta_n^{-l/2}\abs{\hat Y_i^n(\phi)}^l
=
\paren{2b_n(\phi)}^l+o_p(1),
\]
uniformly in $i$. Since
\[
a_{n,i}(\phi)
\to
\theta\sigma_{t_i}^2\bar\phi(2),
\qquad
b_n(\phi)
\to
\omega^2\theta^{-1}\bar\phi'(2),
\]
we obtain
\begin{align}
\Delta_n^{-p/4}\E\brac{\Psi_{n,i}(\phi)\mid\mathcal G_{n,i}}
&=
\sum_{l=0}^{p/2}\rho_l^{(p)}m_{p-2l}\paren{a_{n,i}(\phi)+b_n(\phi)}^{(p-2l)/2}
\paren{2b_n(\phi)}^l
+o_p(1).
\label{eq:noise_Psi_expand_subg}
\end{align}
By Lemma \ref{lem:noise_shift_identity} with $x=0$,
\[
\sum_{l=0}^{p/2}\rho_l^{(p)}m_{p-2l}\paren{a_{n,i}(\phi)+b_n(\phi)}^{(p-2l)/2}
\paren{2b_n(\phi)}^l
=
m_p a_{n,i}(\phi)^{p/2}.
\]
Hence
\[
\Delta_n^{-p/4}\E\brac{\Psi_{n,i}(\phi)\mid\mathcal G_{n,i}}
=
m_p a_{n,i}(\phi)^{p/2}+o_p(1)
=
m_p\theta^{p/2}\bar\phi(2)^{p/2}\sigma_{t_i}^p+o_p(1),
\]
uniformly in $i$.

Now write
\[
\Delta_n^{1-p/4}V_n(Y,\phi,p)
=
\Delta_n\sum_{i=0}^{n-k_n}\Delta_n^{-p/4}\E\brac{\Psi_{n,i}(\phi)\mid\mathcal G_{n,i}}
+
R_n(\phi),
\]
where
\[
R_n(\phi)
=
\Delta_n^{1-p/4}\sum_{i=0}^{n-k_n}\paren{\Psi_{n,i}(\phi)-\E\brac{\Psi_{n,i}(\phi)\mid\mathcal G_{n,i}}}.
\]
% The same block-decomposition argument as in Lemma \ref{lem:noise_block_clt}, applied with a single
% weight $\phi$ and without the contrast between $g$ and $h$, gives $R_n(\phi)=o_p(1)$. Therefore
We now prove directly that $R_n(\phi)=o_p(1)$, so this lemma does not rely on the later block CLT.
Set
\[
Z_{n,i}(\phi)=\Delta_n^{p/4-1}\Bigl(\Psi_{n,i}(\phi)-
\E[\Psi_{n,i}(\phi)\mid\mathcal G_{n,i}]\Bigr).
\]
The moment estimates in Lemmas \ref{lem:noise_subgaussian_replacement}--\ref{lem:noise_pabar_decomp}
give the uniform bound $\E|Z_{n,i}(\phi)|^2\le C\Delta_n^2$ after localization. Moreover two such
terms are conditionally independent, up to the negligible volatility-freezing remainders, whenever their
pre-averaging windows are more than $k_n$ observations apart. Hence, for any interval $I$ of consecutive
indices,
\[
\E\left|\sum_{i\in I}Z_{n,i}(\phi)\right|^2
\le C\#I\, k_n\Delta_n^2+o(\#I\,k_n\Delta_n^2).
\]
Split $\{0,\ldots,n-k_n\}$ into block interiors of length $r_n$ and a boundary set $B_n$ consisting of
the last $k_n$ indices in each block and the incomplete tail. Since
$\#B_n\Delta_n\le k_n/r_n+r_n\Delta_n\to0$, the boundary contribution satisfies
\[
\E\left|\sum_{i\in B_n}Z_{n,i}(\phi)\right|
\le C\#B_n\Delta_n=o(1).
\]
For the interiors, the preceding variance bound and the fact that there are $O(1/(r_n\Delta_n))$ blocks
give
\[
\E\left|\sum_b\sum_{i\in I_{n,b}^{\circ}}Z_{n,i}(\phi)\right|^2
\le C k_n\Delta_n+o(1)=o(1),
\]
because $k_n\sqrt{\Delta_n}\to\theta$ implies $k_n\Delta_n\to0$. Therefore $R_n(\phi)=o_p(1)$.
Hence,
\begin{align}\label{eq:noise_V_limit_phi}
    \Delta_n^{1-p/4}V_n(Y,\phi,p)
=
m_p\theta^{p/2}\bar\phi(2)^{p/2}\sum_{i=0}^{n-k_n}\sigma_{t_i}^p\Delta_n+o_p(1)\toP m_p\theta^{p/2}\bar\phi(2)^{p/2}A_p,
\end{align}
since $\sigma$ is c\`adl\`ag and bounded, the Riemann sum converges to $A_p$.

Applying \eqref{eq:noise_V_limit_phi} with $\phi=g$ and $\phi=h$,
\[
R_n^{PA}
=
\frac{m_p\theta^{p/2}\bar g(2)^{p/2}A_p+o_p(1)}
{\gamma'\paren{m_p\theta^{p/2}\bar h(2)^{p/2}A_p+o_p(1)}}
\toP
\frac{\bar g(2)^{p/2}}{\gamma'\bar h(2)^{p/2}}
=
\gamma''.
\]
\end{proof}

\begin{lemma}\label{lem:noise_block_clt}
Suppose Assumption \ref{ass:H0}, Assumption \ref{ass:noise},
and \eqref{eq:noise_kn}--\eqref{eq:noise_rn} hold. Then
\begin{equation*}
\Delta_n^{-1/4}\paren{R_n^{PA}-\gamma''}
=
\sum_{b=1}^{J_n}\Gamma_{n,b}+o_p(1).
\end{equation*}
There exists an almost surely positive random variable $\Sigma$ such that
\begin{equation*}
\sum_{b=1}^{J_n}
\E\paren{\Gamma_{n,b}^2\mid\mathcal H_{n,b-1}}
\toP
\Sigma,
\qquad
\max_{1\le b\le J_n}|\Gamma_{n,b}|
\toP0.
\end{equation*}
Moreover,
\begin{equation*}
\hat\varsigma_n^{PA\,2}
-
\sum_{b=1}^{J_n}
\E\paren{\Gamma_{n,b}^2\mid\mathcal H_{n,b-1}}
\toP0.
\end{equation*}
Consequently,
\[
\Delta_n^{-1/4}(R_n^{PA}-\gamma'')
\to_{st}
MN(0,\Sigma),\qquad
Z_n^{PA}
=
\frac{
\Delta_n^{-1/4}(R_n^{PA}-\gamma'')
}{
\hat\varsigma_n^{PA}
}
\toD N(0,1).
\]
\end{lemma}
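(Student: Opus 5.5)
We will follow the route used for Lemma \ref{lem:AJexpansionH0}: first linearise the ratio around its probability limit, then reduce the numerator to a sum of conditionally centered terms, and finally apply a big-block/small-block martingale CLT. Put $\bar V_n(\phi)=\Delta_n^{1-p/4}V_n(Y,\phi,p)$ and $c_\phi=m_p\theta^{p/2}\bar\phi(2)^{p/2}A_p$. Lemma \ref{lem:noise_V_limit} gives $\bar V_n(h)\toP c_h$. Since $\gamma''=c_g/(\gamma'c_h)$, we have
\[
\Delta_n^{-1/4}\paren{R_n^{PA}-\gamma''}
=\frac{\Delta_n^{-1/4}\paren{\bar V_n(g)-\gamma'\gamma''\bar V_n(h)}}{\gamma'\bar V_n(h)}
=\frac{\Delta_n^{-1/4}\paren{\bar V_n(g)-\gamma'\gamma''\bar V_n(h)}}{\gamma' c_h}\paren{1+o_p(1)}.
\]
It therefore suffices to show that the numerator is $O_p(1)$ and to identify its leading term.

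\textbf{Step 1: the conditional means cancel.} Write the numerator as $\sum_i\Xi_{n,i}$ plus a term built from conditional means. Lemma \ref{lem:noise_shift_identity} and \eqref{eq:noise_Psi_expand_subg} show that $\Delta_n^{-p/4}\E[\Psi_{n,i}(\phi)\mid\mathcal G_{n,i}]=m_pa_{n,i}(\phi)^{p/2}+e_{n,i}(\phi)$. The leading parts cancel exactly: $m_p(\theta\bar g(2))^{p/2}\sigma_{t_i}^p=\gamma'\gamma''m_p(\theta\bar h(2))^{p/2}\sigma_{t_i}^p$ by the definition of $\gamma''$. The $o_p(1)$ errors $e_{n,i}$ need the sharper bound $\Delta_n\sum_i e_{n,i}=o_p(\Delta_n^{1/4})$. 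To get it we will
\begin{enumerate}[label=(\roman*)]
\item replace the Riemann sums $\sqrt{\Delta_n}\sum_j(\phi_j^n)^2$ and $b_n(\phi)$ by their limits, at cost $O(k_n^{-1})=O(\Delta_n^{1/2})$;
\item bound the volatility-freezing remainder $r^X_{n,i}$ in $L^q$ by $\Delta_n^{1/4}$, and show that its first-order contribution to the conditional mean is $o(\Delta_n^{1/4})$ by the same parity argument as in Lemma \ref{lem:AJexpansionH0} (odd Gaussian functionals vanish; the leverage terms are killed by symmetry);
\item handle the non-Gaussian noise through an Edgeworth-type correction of the cumulants of $\eta_{n,i}$. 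The third cumulant is $O(k_n^{-1/2})$, but it enters only through odd moments, which vanish because $p$ is even. The fourth cumulant is $O(k_n^{-1})$.
\end{enumerate}
Part (iii) requires strengthening Lemma \ref{lem:noise_subgaussian_replacement} from an $o_p(1)$ statement to a rate $O_p(\Delta_n^{1/2})$, and the same rate for the $\hat Y$ term.

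\textbf{Step 2: blocking.} Split $\sum_i\Xi_{n,i}$ into block interiors $I_{n,b}^\circ$ and the boundary strips of length $k_n$. Each $\Xi_{n,i}$ has second moment $O(\Delta_n^{3/2})$ and is correlated only within a range of $k_n$. Hence the boundary sum has variance $O(J_nk_n\cdot k_n\Delta_n^{3/2})=O(k_n^2\Delta_n^{1/2}/r_n)=O(k_n/r_n)\to0$ by \eqref{eq:noise_rn}. The interior sums $\Gamma_{n,b}$ are measurable with respect to $\mathcal H_{n,b}$, because window $i$ uses data up to $i+k_n\le br_n-1$. They are approximately centered given $\mathcal H_{n,b-1}$: the cross-term from freezing $\sigma$ at the start of the block is again controlled by the parity argument.

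\textbf{Step 3: martingale CLT conditions.} Freeze $\sigma$ at $t_{(b-1)r_n}$; the resulting error is $O(r_n\Delta_n)^{1/2}$ per block. Then $\E(\Gamma_{n,b}^2\mid\mathcal H_{n,b-1})$ equals $r_n\Delta_n\,\Sigma(\sigma_{t_{(b-1)r_n}},\omega)$, up to negligible errors. Here $\Sigma(\sigma,\omega)$ is the explicit long-run variance of the stationary $k_n$-dependent sequence, expressed through Gaussian covariances of pre-averaged windows as in \citet{AitSahaliaJacodLi2012}. The Gaussian-noise value is justified by the CLT of Lemma \ref{lem:noise_subgaussian_replacement} together with uniform integrability. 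Riemann summation then gives $\Sigma=\int_0^T\Sigma(\sigma_t,\omega)\dd t/(\gamma'c_h)^2$, which is a.s.\ positive because $\gamma''>1$ makes the $g$- and $h$-functionals non-collinear. For the Lindeberg condition, a Rosenthal bound for $k_n$-dependent sums gives $\E\Gamma_{n,b}^4=O((r_n\Delta_n)^2)$. Hence $\sum_b\E\Gamma_{n,b}^4=O(r_n\Delta_n)\to0$, which also yields $\max_b|\Gamma_{n,b}|\toP0$.

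The self-normalizer follows because $\sum_b(\Gamma_{n,b}^2-\E(\Gamma_{n,b}^2\mid\mathcal H_{n,b-1}))$ is a martingale with second moment $\sum_b\E\Gamma_{n,b}^4\to0$. The stable martingale CLT \citep{HallHeyde1980} and Slutsky's theorem then give both displayed limits.

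\textbf{Main obstacle.} The hardest step is Step 1, specifically the bias bound at the $\Delta_n^{1/4}$ rate. Lemma \ref{lem:noise_subgaussian_replacement} only gives $o_p(1)$ moment matching for sub-Gaussian noise. We need explicit cumulant-expansion rates, uniform in $i$, together with the leverage-induced cross terms between $r^X_{n,i}$ and $\zeta_{n,i}$, and we must show these cancel between $g$ and $h$ or vanish by symmetry.
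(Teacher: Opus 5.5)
Your skeleton is the paper's: linearize via Lemma~\ref{lem:noise_V_limit}, center each window at $\mathcal G_{n,i}$, drop the length-$k_n$ boundary strips by $k_n$-dependence, expand the block conditional variance through a local covariance kernel, and get Lindeberg and the self-normalizer from $\sum_b\E\Gamma_{n,b}^4=O(r_n\Delta_n)$.

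The substantive difference is your Step~1. The paper disposes of $\sum_i\E(\widetilde\Xi_{n,i}\mid\mathcal G_{n,i})$ with the single sentence ``The same holds for conditional expectations''. After the leading terms cancel, however, this sum is $\Delta_n^{-1/4}$ times an average of the $o_p(1)$ errors from the proof of Lemma~\ref{lem:noise_V_limit}. It is therefore negligible only under precisely the rate $\Delta_n\sum_ie_{n,i}=o_p(\Delta_n^{1/4})$ that you isolate, so your Step~1 is the ingredient the paper's proof omits.

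Your plan for it works, and it is easier than you fear. Since $p$ is even, $\abs{\bar Y_i^n}^{p-2l}\abs{\hat Y_i^n}^l$ is a polynomial. After a binomial expansion in $r^X_{n,i}$, the conditional mean is a finite combination of mixed moments, so no Edgeworth expansion or rate version of Lemma~\ref{lem:noise_subgaussian_replacement} is needed. The pieces are as follows.
\begin{itemize}
\item Lemma~\ref{lem:noise_shift_identity}, applied with the finite-$n$ values $a_{n,i}(\phi)$ and $b_n(\phi)$, handles the Gaussian part exactly.
\item Three sources each contribute $O(\Delta_n^{1/2})$: cumulants of $\eta_{n,i}$ of order $j\ge4$ (of size $O(\Delta_n^{j/4-1/2})$), products of two third-order cumulants, and the fluctuation of $\Delta_n^{-1/2}\hat Y_i^n$ around $2b_n(\phi)$.
\item In the $r^X$-free part, every term with an odd power of $\eta_{n,i}$ carries an odd power of $\zeta_{n,i}$ and vanishes.
\item The first-order term in the frozen-coefficient part of $r^X_{n,i}$ vanishes up to $O(\Delta_n^{1/2})$ under the sign flip $B\mapsto-B$ of the future Brownian increments, because the noise is independent of $B$.
\end{itemize}

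Your $\Sigma=\int_0^T\Sigma(\sigma_t,\omega)\,\dd t/(\gamma'c_h)^2$ is also the correct form. The paper's $\kappa_pA_{2p}/(\gamma'c_h)^2$ is not right in general, because cross-window covariances of $\abs{\sigma\zeta_{n,i}+\omega\eta_{n,i}}^p$ are polynomials in $(\sigma^2,\omega^2)$, not multiples of $\sigma^{2p}$. Since the statement only asserts existence of $\Sigma$, this affects the paper's identification of $\Sigma$, not the lemma itself.

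Three small corrections are needed.
\begin{itemize}
\item In (i), \eqref{eq:noise_kn} gives no rate for $k_n\sqrt{\Delta_n}-\theta$, so do not replace $\sqrt{\Delta_n}\sum_j(\phi_j^n)^2$ by $\theta\bar\phi(2)$. Keep the finite-$n$ quantities: the common factor $k_n\sqrt{\Delta_n}$ cancels in the $g$--$h$ contrast, leaving only the $O(k_n^{-1})$ error in the ratio of Riemann sums.
\item In (ii), $\gamma$ is only c\`adl\`ag, so the remainder after freezing $\gamma_{t_i}$ is $o(\Delta_n^{1/4})$ only on average over $i$. That is all you need.
\item In Step~2 no parity argument is involved. $\Gamma_{n,b}$ is exactly centered given $\mathcal H_{n,b-1}$ by the tower property, since each $\Xi_{n,i}$ is centered at $\mathcal G_{n,i}\supseteq\mathcal H_{n,b-1}$.
\end{itemize}

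Two points remain open in your sketch, as they do in the paper. The kernel step needs a bivariate version of Lemma~\ref{lem:noise_subgaussian_replacement} for $(\eta_{n,i},\eta_{n,i+u})$. Positivity of $\Sigma$ is asserted rather than proved: non-collinearity of $g$ and $h$ is suggestive, but it is not by itself an argument that the long-run variance of the contrast is nonzero.
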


\begin{proof}
By Lemma \ref{lem:noise_V_limit},
\[
\gamma'V_n(Y,h,p)
=
\gamma'c_h\Delta_n^{p/4-1}(1+o_p(1)).
\]
Hence
\begin{equation}\label{eq:noise_ratio_linear}
\Delta_n^{-1/4}(R_n^{PA}-\gamma'')
=
\frac{\Delta_n^{-1/4}}{\gamma'c_h}
\sum_{i=0}^{n-k_n}
\paren{
\Delta_n^{1-p/4}\Psi_{n,i}(g)
-
\gamma'\gamma''
\Delta_n^{1-p/4}\Psi_{n,i}(h)
}
+o_p(1).
\end{equation}

Define
\[
\widetilde\Xi_{n,i}
=
\frac{\Delta_n^{-1/4}}{\gamma'c_h}
\paren{
\Delta_n^{1-p/4}\Psi_{n,i}(g)
-
\gamma'\gamma''
\Delta_n^{1-p/4}\Psi_{n,i}(h)
},
\]
and the centered variables
\[
\Xi_{n,i}
=
\widetilde\Xi_{n,i}
-
\E(\widetilde\Xi_{n,i}\mid\mathcal G_{n,i}).
\]

By Lemma \ref{lem:noise_pabar_decomp}, both
$\Delta_n^{-1/4}\bar Y_i^n(\phi)$ and $\Delta_n^{-1/2}\hat Y_i^n(\phi)$ are uniformly $O_p(1)$, hence
\[
\Xi_{n,i}=O_p(\Delta_n^{3/4}),\qquad
\E(\Xi_{n,i}^2\mid\mathcal G_{n,i})=O_p(\Delta_n^{3/2}),\qquad
\E(\Xi_{n,i}^4)=O_p(\Delta_n^3).
\]
Moreover, $\Xi_{n,i}$ depends only on the local window
$[i+1,i+k_n]$, so
\[
\E(\Xi_{n,i}\Xi_{n,j}\mid\mathcal G_{n,i\vee j})=0,
\qquad |i-j|\ge k_n.
\]

\medskip

Split
\[
\sum_{i=0}^{n-k_n}
=
\sum_{b=1}^{J_n}\sum_{i\in I_{n,b}^{\circ}}
+
\sum_{i\in B_n},
\]
where $B_n$ is the boundary set. Then
\[
\#B_n\le J_nk_n+r_n=O\paren{\frac{k_n}{r_n\Delta_n}+r_n}.
\]
Using $k_n$-dependence,
\[
\E\brac{\paren{\sum_{i\in B_n}\Xi_{n,i}}^2}
\le
C\cdot \#B_n\cdot k_n\cdot \Delta_n^{3/2}
=
O\paren{
\frac{k_n^2}{r_n\Delta_n}\Delta_n^{3/2}
+
r_nk_n\Delta_n^{3/2}
}
\to0,
\]
by \eqref{eq:noise_rn}. Hence
\[
\sum_{i\in B_n}\Xi_{n,i}=o_p(1).
\]
The same holds for conditional expectations, so
\[
\Delta_n^{-1/4}(R_n^{PA}-\gamma'')
=
\sum_{b=1}^{J_n}\Gamma_{n,b}+o_p(1).
\]

\medskip

Since adjacent blocks are separated by at least $k_n$ indices,
\[
\E(\Gamma_{n,b}\mid\mathcal H_{n,b-1})=0,
\qquad
\E(\Gamma_{n,b}\Gamma_{n,c}\mid\mathcal H_{n,b\wedge c-1})=0,\quad b\neq c.
\]
Thus $(\Gamma_{n,b},\mathcal H_{n,b})$ is a martingale-difference array.

\medskip

Now consider the predictable quadratic variation. Since only indices with $|i-j|<k_n$ contribute,
\[
\E(\Gamma_{n,b}^2\mid\mathcal H_{n,b-1})
=
\sum_{|u|<k_n}
\sum_{i\in I_{n,b}^{\circ}}
\E(\Xi_{n,i}\Xi_{n,i+u}\mid\mathcal H_{n,b-1}).
\]
By the local covariance kernel expansion (from Lemma \ref{lem:noise_subgaussian_replacement} and the pre-averaging representation), there exists a deterministic kernel $\mathcal C_p(u)$ such that uniformly in $i$,
\[
\E(\Xi_{n,i}\Xi_{n,i+u}\mid\mathcal G_{n,i})
=
\frac{\Delta_n^{3/2}}{(\gamma'c_h)^2}
\sigma_{t_i}^{2p}\,\mathcal C_p(u)
+
o_p(\Delta_n^{3/2}).
\]
Define the asymptotic kernel mass
\[
\kappa_p
:=
\lim_{n\to\infty}
\Delta_n^{1/2}
\sum_{|u|<k_n}\mathcal C_p(u)
\in(0,\infty).
\]
Then
\begin{align*}
\E(\Gamma_{n,b}^2\mid\mathcal H_{n,b-1})
&=
\frac{\kappa_p}{(\gamma'c_h)^2}
\sum_{i\in I_{n,b}^{\circ}}\sigma_{t_i}^{2p}\Delta_n
+
o_p(r_n\Delta_n).
\end{align*}
Summing over blocks yields
\begin{equation}\label{eq:noise_block_pqv}
    \sum_{b=1}^{J_n}
\E(\Gamma_{n,b}^2\mid\mathcal H_{n,b-1})
=
\frac{\kappa_p}{(\gamma'c_h)^2}
\sum_{i=0}^{n-k_n}\sigma_{t_i}^{2p}\Delta_n
+
o_p(1)
\toP
\Sigma:=
\frac{\kappa_p}{(\gamma'c_h)^2}A_{2p}.
\end{equation}

\medskip

Using Burkholder-type bounds,
\[
\E(\Gamma_{n,b}^4)\le C(r_nk_n\Delta_n^{3/2})^2
=Cr_n^2\Delta_n^2,\qquad 
\sum_{b=1}^{J_n}\E(\Gamma_{n,b}^4)=O(r_n\Delta_n)\to0.
\]
Thus
\begin{equation}\label{eq:noise_block_pqv2}
\max_{b}|\Gamma_{n,b}|\toP0.
\end{equation}
Let
\[
\eta_{n,b}
=
\Gamma_{n,b}^2
-
\E(\Gamma_{n,b}^2\mid\mathcal H_{n,b-1}).
\]
Then $(\eta_{n,b})$ is a martingale difference array and
\[
\sum_{b=1}^{J_n}\E(\eta_{n,b}^2)
\le
\sum_{b=1}^{J_n}\E(\Gamma_{n,b}^4)
=
O(r_n\Delta_n)\to0.
\]
Hence
\begin{equation}\label{eq:noise_oqv_consistency}
    \hat\varsigma_n^{PA\,2}
-
\sum_{b=1}^{J_n}\E(\Gamma_{n,b}^2\mid\mathcal H_{n,b-1})
\toP0.
\end{equation}

\medskip

By \eqref{eq:noise_block_pqv} and \eqref{eq:noise_block_pqv2}, the martingale CLT \citep[Theorem 3.2]{HallHeyde1980} yields
\[
\frac{\sum_{b=1}^{J_n}\Gamma_{n,b}}{\sqrt{\Sigma}}
\toD N(0,1).
\]
Finally, by \eqref{eq:noise_oqv_consistency} and Slutsky's theorem,
\[
Z_n^{PA}
=
\frac{\Delta_n^{-1/4}(R_n^{PA}-\gamma'')}{\hat\varsigma_n^{PA}}
\toD N(0,1).
\]
\end{proof}

\subsection{The local-average max statistic}
\label{app:noise_max_aux}

\begin{lemma}\label{lem:noise_localavg_rep}
For $j=0,\dots,n-2M_n$, let
\[
a_{M_n,u}=
\begin{cases}
u, & 1\le u\le M_n,\\
2M_n-u, & M_n+1\le u\le 2M_n-1.
\end{cases}
\]
Then
\begin{equation*}
\sqrt{M_n}\,L_{n,j}
=
\frac{1}{\sqrt{M_n}}\sum_{u=1}^{2M_n-1}a_{M_n,u}\Delta_{j+u}^nX
+
\frac{1}{\sqrt{M_n}}\sum_{u=0}^{M_n-1}\paren{\epsilon_{j+M_n+u}-\epsilon_{j+u}},
\end{equation*}
and
\begin{equation*}
\sum_{u=1}^{2M_n-1}a_{M_n,u}^2
=
2\sum_{u=1}^{M_n-1}u^2+M_n^2
=
\frac{2M_n^3+M_n}{3}.
\end{equation*}
\end{lemma}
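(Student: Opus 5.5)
The identity is purely algebraic, so I will prove it by direct rearrangement and then compute the sum of squared weights.

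\textbf{Splitting the observations.} Since $Y_{t_i}=X_{t_i}+\epsilon_i$, definition \eqref{eq:noise_localavg} gives
\[
\sqrt{M_n}\,L_{n,j}=\frac{1}{\sqrt{M_n}}\sum_{u=0}^{M_n-1}\paren{X_{t_{j+M_n+u}}-X_{t_{j+u}}}+\frac{1}{\sqrt{M_n}}\sum_{u=0}^{M_n-1}\paren{\epsilon_{j+M_n+u}-\epsilon_{j+u}}.
\]
The noise part is already in the asserted form, so only the latent part needs rewriting.

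\textbf{Telescoping the latent part.} Each difference telescopes as $X_{t_{j+M_n+u}}-X_{t_{j+u}}=\sum_{v=u+1}^{u+M_n}\Delta_{j+v}^nX$. I then exchange the order of summation. For a fixed $v\in\{1,\dots,2M_n-1\}$, the increment $\Delta_{j+v}^nX$ appears once for each $u\in\{0,\dots,M_n-1\}$ with $u+1\le v\le u+M_n$. Equivalently, $u$ ranges over $\max(0,v-M_n)\le u\le \min(M_n-1,v-1)$. For $v\le M_n$ this range has $v$ elements. For $v\ge M_n+1$ it has $M_n-1-(v-M_n)+1=2M_n-v$ elements. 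This count is exactly $a_{M_n,v}$, and the two cases agree at $v=M_n$. Multiplying by $M_n^{-1/2}$ gives the first display.

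\textbf{Sum of squared weights.} The weights equal $u$ for $1\le u\le M_n$ and $2M_n-u$ for $M_n+1\le u\le 2M_n-1$. Substituting $w=2M_n-u$ in the second range, $w$ runs over $1,\dots,M_n-1$. Separating the term $u=M_n$, the sum is $2\sum_{u=1}^{M_n-1}u^2+M_n^2$. Using $\sum_{u=1}^{M-1}u^2=(M-1)M(2M-1)/6$,
\[
2\sum_{u=1}^{M-1}u^2+M^2=\frac{(M-1)M(2M-1)}{3}+M^2=\frac{2M^3-3M^2+M+3M^2}{3}=\frac{2M^3+M}{3}.
\]

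There is no real obstacle here. The only point needing care is the index bookkeeping in the order exchange, in particular the endpoint $v=M_n$, where both formulas for the weight coincide.
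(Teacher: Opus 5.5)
Your proof is correct and follows the same route as the paper. Both split $Y$ into latent and noise parts, telescope each latent difference into $M_n$ increments, count the occurrences of each $\Delta_{j+v}^nX$ to obtain the triangular weights $a_{M_n,v}$, and evaluate the squared-weight sum by symmetry together with $\sum_{u=1}^{M-1}u^2=(M-1)M(2M-1)/6$; your explicit range $\max(0,v-M_n)\le u\le\min(M_n-1,v-1)$ simply makes the paper's counting step more transparent.
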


\begin{proof}
From \eqref{eq:noise_localavg},
\[
L_{n,j}
=
\frac{1}{M_n}\sum_{u=0}^{M_n-1}\paren{Y_{t_{j+M_n+u}}-Y_{t_{j+u}}}.
\]
For the latent part,
\[
X_{t_{j+M_n+u}}-X_{t_{j+u}}
=
\sum_{v=1}^{M_n}\Delta_{j+u+v}^nX.
\]
Hence
\[
\frac{1}{M_n}\sum_{u=0}^{M_n-1}\sum_{v=1}^{M_n}\Delta_{j+u+v}^nX
=
\frac{1}{M_n}\sum_{r=1}^{2M_n-1}a_{M_n,r}\Delta_{j+r}^nX,
\]
because the increment $\Delta_{j+r}^nX$ appears exactly $r$ times for $1\le r\le M_n$ and exactly
$2M_n-r$ times for $M_n+1\le r\le 2M_n-1$. For the noise part,
\[
\frac{1}{M_n}\sum_{u=0}^{M_n-1}\paren{\epsilon_{j+M_n+u}-\epsilon_{j+u}}
\]
is already in the required form. Multiplying by $\sqrt{M_n}$ yields
\begin{equation*}
\sqrt{M_n}\,L_{n,j}
=
\frac{1}{\sqrt{M_n}}\sum_{u=1}^{2M_n-1}a_{M_n,u}\Delta_{j+u}^nX
+
\frac{1}{\sqrt{M_n}}\sum_{u=0}^{M_n-1}\paren{\epsilon_{j+M_n+u}-\epsilon_{j+u}}.
\end{equation*}
Finally,
\[
\sum_{u=1}^{2M_n-1}a_{M_n,u}^2
=
\sum_{u=1}^{M_n}u^2+\sum_{u=1}^{M_n-1}u^2
=
2\sum_{u=1}^{M_n-1}u^2+M_n^2
=
\frac{2M_n^3+M_n}{3},
\]
\end{proof}

\begin{lemma}[Sub-Gaussian Gumbel limit for the local-average max statistic]
\label{lem:noise_max_gaussian}
Suppose Assumptions \ref{ass:H0} and \ref{ass:noise} hold, and suppose \eqref{eq:noise_Mn} holds. Let
\begin{equation*}
\nu_n(y)=A_{N_n}+B_{N_n}y.
\end{equation*}
Then, for every fixed $y\in\R$,
\begin{equation*}
\max_{j\in\mathcal J_n}
\abs{
\frac{\Pbb\paren{\abs{X_{n,j}}>\nu_n(y)\mid \F_{t_j}}}{2\overline\Phi(\nu_n(y))}-1
}
\toP 0,
\qquad
\overline\Phi=1-\Phi.
\end{equation*}
Consequently,
\begin{equation*}
\frac{\max_{j\in\mathcal J_n}\abs{X_{n,j}}-A_{N_n}}{B_{N_n}}\toD \xi,
\end{equation*}
where $\xi$ is standard Gumbel. If \eqref{eq:noise_feas} holds, then
\begin{equation*}
\xi_n^{LA}\toD \xi.
\end{equation*}
\end{lemma}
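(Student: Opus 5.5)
The plan is to condition on the latent path, normalize, and treat each standardized local average as a weighted sum of independent pieces. First, use Lemma \ref{lem:noise_localavg_rep} to write $\sqrt{M_n}L_{n,j}$ as a latent part plus a noise part.

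For the latent part, freeze the volatility at $t_j$, exactly as in the proof of Lemma \ref{lem:noise_pabar_decomp}. This gives
\[
\frac{1}{\sqrt{M_n}}\sum_{u}a_{M_n,u}\Delta_{j+u}^nX=\sigma_{t_j}\frac{1}{\sqrt{M_n}}\sum_u a_{M_n,u}\Delta_{j+u}^nW+r_{n,j}.
\]
The Gaussian term has conditional variance $\sigma_{t_j}^2(\tfrac23M_n^2\Delta_n+\tfrac13\Delta_n)$ by the second display of Lemma \ref{lem:noise_localavg_rep}. The window has length $2M_n\asymp\Delta_n^{-1/2}$. A BDG bound of order $q$ with $q$ large, followed by a union bound over the $N_n\asymp\Delta_n^{-1/2}$ grid points, gives
\[
\max_j|r_{n,j}|=O_p(\Delta_n^{1/4-\varepsilon})=o_p(B_{N_n}).
\]
The noise part is $M_n^{-1/2}\sum_{u}(\epsilon_{j+M_n+u}-\epsilon_{j+u})$. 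It is a normalized sum of $2M_n$ independent sub-Gaussian variables with variance $2\omega^2$, and it is independent of $\F_T$. Hence, conditionally on $\F_{t_j}$, $X_{n,j}$ equals (Gaussian with variance $v_{n,j}^2-2\omega^2$ plus noise sum)$/v_{n,j}$ plus $o_p(B_{N_n})$. Because $v_{n,j}^2$ is bounded away from zero and infinity, the remainder shifts $\nu_n(y)$ only by $o(B_{N_n})$. By monotonicity of $\overline\Phi$ and the fact that $\overline\Phi(\nu_n+o(B_{N_n}))/\overline\Phi(\nu_n)\to1$ (since $\nu_n B_{N_n}=O(1)$), it can be absorbed.

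The main obstacle is the tail ratio at level $\nu_n(y)\approx\sqrt{2\log N_n}$, because the noise is not Gaussian. I would condition on $\F_T$ and convolve: the latent part is exactly Gaussian, so it suffices to control the noise sum $S_{n,j}=(2\omega^2M_n)^{-1/2}\cdot\sqrt{2}\,\omega\,\cdot$(its standardized form). For this I would use a Cramér-type moderate deviation theorem for sums of $2M_n$ i.i.d. variables with finite exponential moments. It gives $\Pbb(|S|>x)/(2\overline\Phi(x))\to1$ uniformly in $0\le x=o(M_n^{1/6})$. Here $x\asymp\sqrt{\log n}$, and $\sqrt{\log n}=o(M_n^{1/6})$ holds because $M_n\asymp\Delta_n^{-1/2}$.

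The sum is $\sqrt{\alpha}G+\sqrt{1-\alpha}S$, with mixing weight $\alpha=\alpha_{n,j}\in[0,1)$ bounded away from $1$, since the noise variance $2\omega^2>0$ is fixed. I then split on $|G|\le \nu_n^{1/2}$ and its complement. On the main event the Cramér ratio applies uniformly in the shifted level. The Gaussian convolution identity then reproduces $2\overline\Phi(\nu_n)(1+o(1))$. The event $|G|>\nu_n^{1/2}$ contributes $o(\overline\Phi(\nu_n))$ by a direct Gaussian tail bound combined with the sub-Gaussian bound on $S$. If the mixed moderate-deviation step is delicate, a cleaner alternative is to apply Cramér's theorem directly to the whole sum. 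Conditionally on $\F_{t_j}$, and after freezing volatility, it is a sum of independent non-identical summands, and Petrov's non-i.i.d. Cramér theorem applies with uniformly bounded Cramér conditions. Uniformity in $j$ follows because all constants depend only on $\underline\sigma,\overline\sigma,K_\epsilon,\omega$.

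For the Gumbel limit, note that the windows $[j,j+2M_n]$ for $j\in\mathcal J_n$ share at most an endpoint observation. Freezing volatility at $t_j$, the Brownian increments and the noise in distinct windows are therefore conditionally independent, except for the common boundary noise $\epsilon_{j+2M_n}$. That term enters with weight $M_n^{-1/2}$, so removing it costs $O_p(\sqrt{\log n/M_n})=o_p(B_{N_n})$ uniformly. Leverage prevents exact independence, since $\sigma_{t_j}$ depends on earlier windows. I would handle this by a sequential conditioning (telescoping) argument:
\[
\Pbb\Bigl(\max_{j\in\mathcal J_n}|X_{n,j}|\le\nu_n\Bigr)=\E\prod_{j}\bigl(1-p_{n,j}\bigr),\qquad p_{n,j}=2\overline\Phi(\nu_n)(1+o_p(1)),
\]
with the $o_p(1)$ uniform by the first claim. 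Since $N_n\,2\overline\Phi(\nu_n(y))\to e^{-y}$ by Mills' ratio, the product tends to $\exp\{-e^{-y}\}$ by dominated convergence. Finally, \eqref{eq:noise_feas} gives $|\xi_n^{LA}-(\max_j|X_{n,j}|-A_{N_n})/B_{N_n}|\le B_{N_n}^{-1}\max_j|\hat X_{n,j}-X_{n,j}|\toP0$, and Slutsky's theorem concludes.
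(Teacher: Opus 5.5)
Apart from one step, your plan is the paper's. You freeze $\sigma$ at $t_j$ with a window-level BDG bound plus a union bound over the $N_n\asymp\Delta_n^{-1/2}$ grid points, and absorb the $o_p(B_{N_n})$ threshold shift using $\nu_n(y)B_{N_n}=O(1)$. You get the conditional tail ratio from a Cram\'er-type moderate deviation in the range $\nu_n(y)=o(M_n^{1/6})$, obtain the Gumbel limit by sequential conditioning over the disjoint grid with $2N_n\overline\Phi(\nu_n(y))\to e^{-y}$, and finish with Slutsky.

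The step that fails is your primary convolution argument. The mixing weight $\alpha_{n,j}=\sigma_{t_j}^2(\tfrac23M_n^2\Delta_n+\tfrac13\Delta_n)/v_{n,j}^2$ is bounded away from $0$ as well as from $1$, because $M_n^2\Delta_n\to\lambda^2>0$ and $\sigma\ge\underline\sigma$. Hence the exceedance $\{\sqrt{\alpha}G+\sqrt{1-\alpha}S>\nu_n\}$ is produced by $G\approx\sqrt{\alpha}\,\nu_n$, far outside $\{|G|\le\nu_n^{1/2}\}$. Already when $S$ is Gaussian, conditionally on the sum being equal to $s\ge\nu_n$ one has $G\sim N(\sqrt{\alpha}s,1-\alpha)$. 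So the event $\{|G|>\nu_n^{1/2}\}$ carries asymptotically all of the tail mass $\overline\Phi(\nu_n)$, and your ``main event'' carries only $o(\overline\Phi(\nu_n))$. The claimed negligibility of the complement is false, not merely unproved.

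One repair is to truncate at $|G|\le K\nu_n$ with a constant $K>1$. Then $\Pbb(|G|>K\nu_n)=o(\overline\Phi(\nu_n))$, and the shifted levels $(\nu_n-\sqrt{\alpha}g)/\sqrt{1-\alpha}$ stay $O(\sqrt{\log n})=o(M_n^{1/6})$; use Cram\'er for positive levels and Berry--Esseen for nonpositive ones. The full Gaussian convolution identity then returns $\overline\Phi(\nu_n)(1+o(1))$ uniformly in $\sigma_{t_j}\in[\underline\sigma,\overline\sigma]$.

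Alternatively, use your fallback, which is exactly what the paper does. Apply the moderate deviation theorem for independent non-identically distributed summands to the whole conditional sum of weighted Brownian increments and noise terms. Each summand is of size $O(M_n^{-1/2})$ with uniformly bounded exponential moments, so the cumulants of order $m\ge3$ are $O(M_n^{1-m/2})$.

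Two minor points. First, the grid windows share no observation at all, since $L_{n,j}$ uses only $Y_{t_j},\dots,Y_{t_{j+2M_n-1}}$, so your boundary-noise correction is unnecessary. Second, because of leverage, $\Pbb(\max_j|X_{n,j}|\le\nu_n)=\E\prod_j(1-p_{n,j})$ is not an exact identity; the paper writes it the same way. The telescoping is cleanest when run against the deterministic $2\overline\Phi(\nu_n)$, after replacing $X_{n,j}$ by the frozen-volatility variable, whose conditional tail depends on $\sigma_{t_j}$ only.
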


\begin{proof}
Fix $j\in\mathcal J_n$. By Lemma \ref{lem:noise_localavg_rep},
\[
\sqrt{M_n}L_{n,j}=S_{n,j}^{X}+S_{n,j}^{\epsilon},
\]
where
\[
S_{n,j}^{X}=\frac{1}{\sqrt{M_n}}\sum_{u=1}^{2M_n-1}a_{M_n,u}\Delta_{j+u}^nX,
\qquad
S_{n,j}^{\epsilon}=\frac{1}{\sqrt{M_n}}\sum_{u=0}^{M_n-1}(\epsilon_{j+M_n+u}-\epsilon_{j+u}).
\]
The noise variables entering $S_{n,j}^{\epsilon}$ are disjoint across $j\in\mathcal J_n$; in addition,
\[
\norm{S_{n,j}^{\epsilon}}_{\psi_2}\le C\omega,\qquad
\Var(S_{n,j}^{\epsilon})=2\omega^2.
\]
For the latent part set
\[
G_{n,j}^{X}=\frac{1}{\sqrt{M_n}}\sum_{u=1}^{2M_n-1}a_{M_n,u}\Delta_{j+u}^nW
\]
and write
\[
S_{n,j}^{X}=\sigma_{t_j}G_{n,j}^{X}+R_{n,j},
\]
where
\[
R_{n,j}=\frac{1}{\sqrt{M_n}}\sum_{u=1}^{2M_n-1}a_{M_n,u}
\left\{
\int_{t_{j+u-1}}^{t_{j+u}}b_s\dd s
+
\int_{t_{j+u-1}}^{t_{j+u}}(\sigma_s-\sigma_{t_j})\dd W_s
\right\}.
\]
This remainder is again controlled over the whole triangular window. After localization, for every $q\ge2$,
\begin{align*}
\sup_{j\in\mathcal J_n}\E\abs{R_{n,j}}^q
&\le
C_q\left\{
\frac{1}{M_n}\sum_{u=1}^{2M_n-1}a_{M_n,u}^2
\int_{t_j}^{t_{j+2M_n}}(s-t_j)\dd s
\right\}^{q/2}
+C_q(M_n^{3/2}\Delta_n)^q \\
&\le C_q\Delta_n^{q/4}.
\end{align*}
Since $N_n\le C\Delta_n^{-1/2}$, taking $q>4$ gives
\[
\Pbb\paren{\frac{1}{B_{N_n}}\max_{j\in\mathcal J_n}\abs{R_{n,j}}>\varepsilon}
\le C_q\varepsilon^{-q}B_{N_n}^{-q}N_n\Delta_n^{q/4}\to0.
\]
Thus
\begin{equation}\label{eq:noise_R_localavg_red}
\frac{1}{B_{N_n}}\max_{j\in\mathcal J_n}\abs{R_{n,j}}\toP0.
\end{equation}
By Lemma \ref{lem:noise_localavg_rep},
\[
\Var(G_{n,j}^{X}\mid\F_{t_j})
=
\frac{\Delta_n}{M_n}\sum_{u=1}^{2M_n-1}a_{M_n,u}^2
=
\frac23M_n^2\Delta_n+\frac13\Delta_n.
\]
Consequently,
\[
\Var\{\sigma_{t_j}G_{n,j}^{X}+S_{n,j}^{\epsilon}\mid\F_{t_j}\}
=v_{n,j}^2,
\]
where $v_{n,j}^2$ is defined in \eqref{eq:noise_vnj}. Define
\[
T_{n,j}=\frac{\sigma_{t_j}G_{n,j}^{X}+S_{n,j}^{\epsilon}}{v_{n,j}}.
\]
Conditionally on $\F_{t_j}$, $T_{n,j}$ is a centered sum of a Gaussian term and a weighted sum of independent sub-Gaussian noise variables, with variance one. The largest noise coefficient is $O(M_n^{-1/2})$, and the cumulants of order $m\ge3$ are $O(M_n^{1-m/2})$, uniformly in $j$. Hence, uniformly for $|z|=O(\nu_n(y))$,
\[
\log\E\{\exp(zT_{n,j})\mid\F_{t_j}\}
=\frac12z^2+O\paren{\frac{|z|^3}{\sqrt{M_n}}}+o_p(1).
\]
Because $\nu_n(y)=O(\sqrt{\log N_n})=o(M_n^{1/6})$, the Cram\'er-type moderate deviation expansion for triangular arrays with finite exponential moments gives
\[
\max_{j\in\mathcal J_n}
\abs{\frac{\Pbb(T_{n,j}>\nu_n(y)\mid\F_{t_j})}{\overline\Phi(\nu_n(y))}-1}
\toP0,
\qquad
\max_{j\in\mathcal J_n}
\abs{\frac{\Pbb(T_{n,j}< -\nu_n(y)\mid\F_{t_j})}{\overline\Phi(\nu_n(y))}-1}
\toP0.
\]
By \eqref{eq:noise_R_localavg_red} and the lower bound $v_{n,j}^2\ge2\omega^2$, replacing $T_{n,j}$ by $X_{n,j}$ only shifts the threshold by $o_p(B_{N_n})$. Therefore
\[
\max_{j\in\mathcal J_n}
\abs{
\frac{\Pbb\paren{\abs{X_{n,j}}>\nu_n(y)\mid\F_{t_j}}}{2\overline\Phi(\nu_n(y))}-1
}
\toP0.
\]
Let
\[
p_{n,j}(y)=\Pbb\paren{\abs{X_{n,j}}>\nu_n(y)\mid\F_{t_j}}.
\]
The windows indexed by $\mathcal J_n$ are disjoint. Iterating conditional expectations over the ordered left endpoints gives
\[
\Pbb\paren{\max_{j\in\mathcal J_n}\abs{X_{n,j}}\le\nu_n(y)}
=
\E\prod_{j\in\mathcal J_n}(1-p_{n,j}(y)).
\]
The preceding tail expansion and Mills' ratio imply
\[
\sum_{j\in\mathcal J_n}p_{n,j}(y)\toP e^{-y},
\qquad
\sum_{j\in\mathcal J_n}p_{n,j}(y)^2\toP0.
\]
Hence
\[
\Pbb\paren{\max_{j\in\mathcal J_n}\abs{X_{n,j}}\le\nu_n(y)}
\to \exp\set{-e^{-y}}.
\]
Finally, if \eqref{eq:noise_feas} holds, then
\[
\frac{1}{B_{N_n}}
\abs{\max_{j\in\mathcal J_n}\abs{\hat X_{n,j}}-
\max_{j\in\mathcal J_n}\abs{X_{n,j}}}
\le
\frac{1}{B_{N_n}}\max_{j\in\mathcal J_n}\abs{\hat X_{n,j}-X_{n,j}}\toP0,
\]
and Slutsky's theorem yields $\xi_n^{LA}\toD\xi$.
\end{proof}

\begin{lemma}[TSRSV plug-in implies the feasible-array condition]
\label{lem:noise_tsrsv_feasible}
Suppose Assumptions \ref{ass:H0} and \ref{ass:noise} hold, and suppose \eqref{eq:noise_Mn} and
\eqref{eq:noise_tsrsv_tuning} hold. Let $\hat\omega_n^2$, $\hat\sigma_{n,j}^2$, $\hat v_{n,j}^2$, and
$\hat X_{n,j}$ be defined by \eqref{eq:noise_omegahat}, \eqref{eq:noise_tsrsv_sigmahat}, and
\eqref{eq:noise_vhatXhat}. Then
\begin{equation*}
\frac{1}{B_{N_n}}\max_{j\in\mathcal J_n}\abs{\hat X_{n,j}-X_{n,j}}\toP 0.
\end{equation*}
\end{lemma}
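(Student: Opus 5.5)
The plan is to reduce the claim to a uniform relative-error bound on the variance proxy. Since $\hat X_{n,j}-X_{n,j}=X_{n,j}\,(v_{n,j}/\hat v_{n,j}-1)$, it suffices to show two things:
\begin{equation*}
\max_{j\in\mathcal J_n}\abs{X_{n,j}}=O_p\paren{\sqrt{\log N_n}},
\qquad
\max_{j\in\mathcal J_n}\abs{\frac{\hat v_{n,j}^2}{v_{n,j}^2}-1}=o_p\paren{(\log N_n)^{-1}}.
\end{equation*}
Together these give $B_{N_n}^{-1}\max_j\abs{\hat X_{n,j}-X_{n,j}}=O_p(\log N_n)\cdot o_p((\log N_n)^{-1})\toP0$.

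The first bound follows directly from Lemma \ref{lem:noise_max_gaussian}, which gives the Gumbel limit for the oracle maximum. For the second, write
\begin{equation*}
\hat v_{n,j}^2-v_{n,j}^2=2(\hat\omega_n^2-\omega^2)+(\hat\sigma_{n,j}^2-\sigma_{t_j}^2)\paren{\tfrac23M_n^2\Delta_n+\tfrac13\Delta_n},
\end{equation*}
and use $v_{n,j}^2\ge2\omega^2>0$ together with $M_n^2\Delta_n\to\lambda^2$. It then suffices to show
\begin{equation*}
\hat\omega_n^2-\omega^2=o_p\paren{(\log n)^{-1}},
\qquad
\max_{j\in\mathcal J_n}\abs{\hat\sigma_{n,j}^2-\sigma_{t_j}^2}=o_p\paren{(\log n)^{-1}}.
\end{equation*}

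The noise estimator is the easy part. Expand $(\Delta_i^nY)^2=(\Delta_i^n\epsilon)^2+2\Delta_i^nX\Delta_i^n\epsilon+(\Delta_i^nX)^2$. The first sum, divided by $2n$, equals $\omega^2+O_p(n^{-1/2})$ by the sub-exponential Bernstein inequality for $1$-dependent arrays, as in Lemma \ref{lem:noise_pabar_decomp}. The term $\frac1{2n}\sum(\Delta_i^nX)^2$ is $O_p(n^{-1})$. The cross term is conditionally centered with standard deviation $O(n^{-1})$. All of this is polynomially small, hence $o((\log n)^{-1})$.

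For $\hat\sigma_{n,j}^2$, I would follow the TSRSV expansion of \citet{ZuBoswijk2014}, localized over the window $(t_{j^\star}-h_n,t_{j^\star}]$. The difference of the two TSRV evaluations equals $\int_{t_{j^\star}-h_n}^{t_{j^\star}}\sigma_s^2\dd s$ plus four terms:
\begin{enumerate}[label=(\roman*)]
\item a discretization and edge term of relative order $K_n^{SV}\Delta_n/h_n\asymp\Delta_n^{1/6}$;
\item a noise bias remainder from the second sum, which the subtraction in the TSRV formula cancels up to order $(K_n^{SV})^{-1}$ times an $O_p(\sqrt{h_n/\Delta_n})$ fluctuation;
\item a latent martingale fluctuation of relative size $O_p(\sqrt{K_n^{SV}\Delta_n/h_n})$;
\item a noise martingale fluctuation of relative size $O_p\bigl((K_n^{SV})^{-1}(h_n\Delta_n)^{-1/2}\bigr)$.
\end{enumerate}
Under \eqref{eq:noise_tsrsv_tuning} each of these is $\Delta_n^{\delta}$ for some $\delta>0$, with $\delta=1/12$ in the worst case. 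Dividing by $h_n$ and replacing $h_n^{-1}\int\sigma_s^2\dd s$ by $\sigma_{t_j}^2$ costs $O_p(h_n^{1/2-\varepsilon})$ by Lemma \ref{lem:LMregularity}. When $j<H_n$ the window is shifted to $[0,h_n]$, but it still lies within distance $h_n\to0$ of $t_j$, so the same bound applies.

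The main obstacle is making the fluctuations in (iii) and (iv) uniform over the $N_n\asymp\Delta_n^{-1/2}$ grid points. I would first localize so that $b,\sigma$ and their coefficients are bounded. Then I would bound $q$-th moments of each martingale piece via Burkholder and Rosenthal inequalities, handling the noise at the sub-exponential level: the noise pieces are quadratic forms in $\epsilon$, so Hanson--Wright gives exponential tails directly. A union bound over $N_n$ windows then costs only a factor $N_n^{1/q}$, or a $\log n$ factor under exponential tails. Choosing $q>6/\delta$ keeps the result at $\Delta_n^{\delta/2}=o((\log n)^{-1})$. The delicate point is the overlap of the $K_n^{SV}$-lagged increments in the first TSRV sum. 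I would handle it by writing the lagged sum as an average of $K_n^{SV}$ subsampled realized variances, each a sum of independent-increment terms. An alternative is to treat it as a $K_n^{SV}$-dependent array and split it into $K_n^{SV}$ interleaved independent subsequences before applying Bernstein, as in the proof of Lemma \ref{lem:LMgauss}.
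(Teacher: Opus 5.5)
Your proposal is correct and follows essentially the same route as the paper. You split $\hat v_{n,j}^2-v_{n,j}^2$ into the $\hat\omega_n^2$ error, which is polynomially small by concentration/Rosenthal bounds, and a TSRSV error made uniform over the $N_n\asymp\Delta_n^{-1/2}$ grid points by $L^q$ bounds of order $\Delta_n^{q/12}$ plus a union bound, with the $j<H_n$ boundary handled by the modulus of $\sigma$; you then multiply by $B_{N_n}^{-1}\max_j|X_{n,j}|=O_p(\log n)$ from Lemma \ref{lem:noise_max_gaussian}. The only differences are that you sketch the Zu--Boswijk TSRSV expansion term by term where the paper simply cites it, and you aim for the minimal rate $o_p((\log n)^{-1})$ where the paper proves $o_p((\log n)^{-2})$.
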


\begin{proof}
Let $j^{\star}=j\vee H_n$ and $s_j=t_{j^{\star}}$. The estimator $\hat\sigma_{n,j}^2$ is the filtering
TSRSV estimator of \citet{ZuBoswijk2014} evaluated at the time $s_j$. Under Assumptions
\ref{ass:H0} and \ref{ass:noise}, the latent process is continuous, $\sigma_t$ is an It\^o process, leverage
is allowed, and the noise has finite moments of every order because it is sub-Gaussian. Therefore the
TSRSV decomposition in Section 3.2 of \citet{ZuBoswijk2014} applies without change. Their bias,
noise, and discretization terms yield, for every $q\ge 2$,
\begin{equation*}
\sup_{j\in\mathcal J_n}
\E\abs{\hat\sigma_{n,j}^2-\sigma_{s_j}^2}^q
\le
C_q\paren{
 h_n^{q/2}
 +
 \paren{\frac{1}{K_n^{SV\,2}h_n\Delta_n}}^{q/2}
 +
 \paren{\frac{K_n^{SV}\Delta_n}{h_n}}^{q/2}}
\le C_q\Delta_n^{q/12},
\end{equation*}
where the last inequality follows from \eqref{eq:noise_tsrsv_tuning}. Since
$N_n\le C\Delta_n^{-1/2}$, Markov's inequality and the union bound imply, for every $q>6$,
\[
\Pbb\paren{
\max_{j\in\mathcal J_n}\abs{\hat\sigma_{n,j}^2-\sigma_{s_j}^2}>(\log n)^{-2}}
\le
(\log n)^{2q}
\sum_{j\in\mathcal J_n}
\E\abs{\hat\sigma_{n,j}^2-\sigma_{s_j}^2}^q
\le
C_q(\log n)^{2q}\Delta_n^{-1/2}\Delta_n^{q/12}
\to 0.
\]
Hence
\begin{equation*}
\max_{j\in\mathcal J_n}\abs{\hat\sigma_{n,j}^2-\sigma_{s_j}^2}=o_p\paren{(\log n)^{-2}}.
\end{equation*}

Under Assumption \ref{ass:H0},
\[
\dd(\sigma_t^2)=a_t\dd t+c_t^\top\dd B_t,
\qquad
 a_t=2\sigma_t\sigma_t^{(0)}+\norm{\gamma_t}^2,
\qquad
 c_t=2\sigma_t\gamma_t.
\]
After a standard localization, $a_t$ and $c_t$ are bounded on $[0,T]$. Since $s_j=t_j$ for
$j\ge H_n$ and $s_j=t_{H_n}=h_n$ for $j<H_n$,
\[
\max_{j\in\mathcal J_n}\abs{\sigma_{s_j}^2-\sigma_{t_j}^2}
=
\sup_{0\le t\le h_n}\abs{\sigma_{h_n}^2-\sigma_t^2}.
\]
Using the representation
\[
\sigma_{h_n}^2-\sigma_t^2
=
\int_t^{h_n}a_u\dd u+
\int_t^{h_n}c_u^\top\dd B_u,
\qquad 0\le t\le h_n,
\]
and the Burkholder--Davis--Gundy inequality,
\[
\E\sup_{0\le t\le h_n}\abs{\sigma_{h_n}^2-\sigma_t^2}^q
\le
C_q\paren{h_n^q+h_n^{q/2}}
\le C_q h_n^{q/2}.
\]
Therefore,
\[
\Pbb\paren{
\max_{j\in\mathcal J_n}\abs{\sigma_{s_j}^2-\sigma_{t_j}^2}>(\log n)^{-2}}
\le
C_q(\log n)^{2q}h_n^{q/2}
\to 0,
\]
so that
\begin{equation}\label{eq:noise_sigmahat_true}
\max_{j\in\mathcal J_n}\abs{\hat\sigma_{n,j}^2-\sigma_{t_j}^2}
\le
\max_{j\in\mathcal J_n}\abs{\hat\sigma_{n,j}^2-\sigma_{s_j}^2}
+
\max_{j\in\mathcal J_n}\abs{\sigma_{s_j}^2-\sigma_{t_j}^2}
=
 o_p\paren{(\log n)^{-2}}.
\end{equation}

For the noise variance estimator, write
\[
\hat\omega_n^2-\omega^2=A_n+B_n+C_n,
\]
where
\[
A_n=
\frac{1}{2n}\sum_{i=1}^n\paren{(\Delta_i^n\epsilon)^2-2\omega^2},
\qquad
B_n=
\frac{1}{n}\sum_{i=1}^n\Delta_i^nX\,\Delta_i^n\epsilon,
\qquad
C_n=
\frac{1}{2n}\sum_{i=1}^n(\Delta_i^nX)^2.
\]
The sequence $U_i=(\Delta_i^n\epsilon)^2-2\omega^2$ is centered and $1$-dependent. Its odd and even
subsequences are independent and have finite moments of every order. Rosenthal's inequality for sums of
independent centered variables, see \citet[Chapter~2]{Petrov1975}, gives for every $q\ge 2$,
\[
\E\abs{A_n}^q
\le
C_q n^{-q}
\paren{n^{q/2}+n}
\le C_q n^{-q/2}.
\]
Next,
$V_i=\Delta_i^nX\,\Delta_i^n\epsilon$ is also $1$-dependent, and the odd and even subsequences are
conditionally independent given $\F_T$. Conditional Rosenthal and then Hölder's inequality give
\[
\E\paren{\abs{B_n}^q\mid \F_T}
\le
C_q n^{-q}
\paren{
\paren{\sum_{i=1}^n(\Delta_i^nX)^2}^{q/2}
+
\sum_{i=1}^n\abs{\Delta_i^nX}^q}.
\]
Taking expectations and using the boundedness of the coefficients in Assumption \ref{ass:H0},
\[
\E\paren{\sum_{i=1}^n(\Delta_i^nX)^2}^{q/2}\le C_q,
\qquad
\sum_{i=1}^n\E\abs{\Delta_i^nX}^q\le C_q n\Delta_n^{q/2}\le C_q,
\]
whence
\[
\E\abs{B_n}^q\le C_q n^{-q}.
\]
Finally,
\[
\E\abs{C_n}^q
\le
(2n)^{-q}\E\paren{\sum_{i=1}^n(\Delta_i^nX)^2}^q
\le C_q n^{-q}.
\]
Therefore
\begin{equation}\label{eq:noise_omegahat_rate}
\E\abs{\hat\omega_n^2-\omega^2}^q\le C_q n^{-q/2}=C_q\Delta_n^{q/2},
\qquad
\hat\omega_n^2-\omega^2=o_p\paren{(\log n)^{-2}}.
\end{equation}

Since $M_n\sqrt{\Delta_n}\to\lambda$, the sequence $(M_n^2\Delta_n)$ is bounded. Combining
\eqref{eq:noise_sigmahat_true} and \eqref{eq:noise_omegahat_rate},
\begin{equation}\label{eq:noise_vhat_sq_rate}
\max_{j\in\mathcal J_n}\abs{\hat v_{n,j}^2-v_{n,j}^2}
\le
2\abs{\hat\omega_n^2-\omega^2}
+
\paren{\frac{2}{3}M_n^2\,\Delta_n+\frac{1}{3}\,\Delta_n}
\max_{j\in\mathcal J_n}\abs{\hat\sigma_{n,j}^2-\sigma_{t_j}^2}
=
 o_p\paren{(\log n)^{-2}}.
\end{equation}
Because $v_{n,j}^2\ge 2\omega^2$ for all $j$, \eqref{eq:noise_vhat_sq_rate} implies
\[
\Pbb\paren{\min_{j\in\mathcal J_n}\hat v_{n,j}^2\ge \omega^2}\to 1.
\]
On this event,
\[
\max_{j\in\mathcal J_n}
\abs{\frac{1}{\hat v_{n,j}}-\frac{1}{v_{n,j}}}
=
\max_{j\in\mathcal J_n}
\frac{\abs{\hat v_{n,j}^2-v_{n,j}^2}}{\hat v_{n,j}v_{n,j}(\hat v_{n,j}+v_{n,j})}
\le
C\max_{j\in\mathcal J_n}\abs{\hat v_{n,j}^2-v_{n,j}^2}
=
 o_p\paren{(\log n)^{-2}}.
\]
Now
\[
\frac{1}{B_{N_n}}\max_{j\in\mathcal J_n}\abs{\hat X_{n,j}-X_{n,j}}
\le
\frac{1}{B_{N_n}}\max_{j\in\mathcal J_n}\abs{\sqrt{M_n}L_{n,j}}\cdot
\max_{j\in\mathcal J_n}
\abs{\frac{1}{\hat v_{n,j}}-\frac{1}{v_{n,j}}}.
\]
Since $\sigma_t\le\overline\sigma$ and $M_n\Delta_n^{1/2}\to\lambda$,
\[
\max_{j\in\mathcal J_n}v_{n,j}
\le
\bar v_n,
\qquad
\bar v_n^2=2\omega^2+\overline\sigma^2\paren{\frac{2}{3}M_n^2\,\Delta_n+\frac{1}{3}\,\Delta_n}=O(1).
\]
Hence
\[
\frac{1}{B_{N_n}}\max_{j\in\mathcal J_n}\abs{\sqrt{M_n}L_{n,j}}
\le
\bar v_n\frac{1}{B_{N_n}}\max_{j\in\mathcal J_n}\abs{X_{n,j}}
\le
\bar v_n\paren{
\frac{\max_{j\in\mathcal J_n}\abs{X_{n,j}}-A_{N_n}}{B_{N_n}}
+
\frac{A_{N_n}}{B_{N_n}}}
=O_p(\log n),
\]
where the last relation follows from Lemma \ref{lem:noise_max_gaussian} and
$A_{N_n}/B_{N_n}=O(\log N_n)=O(\log n)$. Therefore
\[
\frac{1}{B_{N_n}}\max_{j\in\mathcal J_n}\abs{\hat X_{n,j}-X_{n,j}}
\le
O_p(\log n)\cdot o_p\paren{(\log n)^{-2}}
=
o_p(1).
\]
\end{proof}

\begin{lemma}\label{lem:noise_fixed_sum}
Suppose Assumption \ref{ass:H1fixed} and \eqref{eq:noise_kn} hold. Then, for $\phi\in\{g,h\}$,
\begin{equation*}
\frac{1}{k_n}V_n(Y,\phi,p)\toP \bar\phi(p)\sum_{q=1}^{N_T}\abs{\kappa_q}^p.
\end{equation*}
Consequently,
\begin{equation*}
R_n^{PA}\toP 1.
\end{equation*}
\end{lemma}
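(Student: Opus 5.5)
The plan is to split $V_n(Y,\phi,p)=\sum_i\Psi_{n,i}(\phi)$ into windows that contain a jump and windows that do not, and to show that only the jump windows survive at the $k_n$ scale.

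\emph{Jump-free windows.} Let $\mathcal I_n$ be the set of indices $i$ whose window $(t_i,t_{i+k_n}]$ contains some $\tau_q$. Since the jump times are separated and $k_n\Delta_n\to0$, for large $n$ each such window contains exactly one jump, and $\#\mathcal I_n\le N_T k_n$. For $i\notin\mathcal I_n$ the increments are those of the continuous process $X^0$ plus noise. Lemma \ref{lem:noise_pabar_decomp} applies to them unchanged, so $\Delta_n^{-1/4}\bar Y_i^n(\phi)$ and $\Delta_n^{-1/2}\hat Y_i^n(\phi)$ are $O_p(1)$ uniformly, with sub-Gaussian control giving $\max_i$ bounds of order $\sqrt{\log n}$. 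Hence $\Psi_{n,i}(\phi)=O_p(\Delta_n^{p/4}(\log n)^{p/2})$ uniformly on these windows. Summing over at most $n$ indices gives $O_p(\Delta_n^{p/4-1}(\log n)^{p/2})$. After dividing by $k_n\asymp\Delta_n^{-1/2}$ this is $O_p(\Delta_n^{p/4-1/2}(\log n)^{p/2})\to0$, because $p\ge4$. For $p=4$ the bound leaves only a log factor, so the sharper route is to use Lemma \ref{lem:noise_V_limit}: the jump-free part is $\Delta_n^{p/4-1}O_p(1)$, which is $o_p(k_n)$ precisely when $p>2$.

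\emph{Jump windows.} For $i\in\mathcal I_n$, write $i_q$ for the fine interval containing $\tau_q$ and set $j=i_q-i\in\{1,\dots,k_n\}$. Then
\[
\bar Y_i^n(\phi)=\phi_j^n\kappa_q+O_p(\Delta_n^{1/4}\sqrt{\log n})
\]
uniformly. Also $\hat Y_i^n(\phi)=(\Delta_j\phi^n)^2\kappa_q^2+O_p(\Delta_n^{1/2})$, which is $O_p(k_n^{-2}+\Delta_n^{1/2})\to0$. Consequently $\Psi_{n,i}(\phi)=\abs{\phi_j^n\kappa_q}^p+o_p(1)$ uniformly, since every $l\ge1$ term carries a vanishing factor $\abs{\hat Y}^l$ multiplied by bounded $\abs{\bar Y}^{p-2l}$. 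Summing over $j=1,\dots,k_n$ gives
\[
\frac1{k_n}\sum_{i\in\mathcal I_n}\Psi_{n,i}(\phi)=\sum_q\abs{\kappa_q}^p\,\frac1{k_n}\sum_{j}\abs{\phi(j/k_n)}^p+o_p(1),
\]
and the inner average is a Riemann sum converging to $\bar\phi(p)$.

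\emph{Conclusion and main obstacle.} Combining the two parts and using the definition of $\gamma'$ yields
\[
R_n^{PA}\toP\frac{\bar g(p)B_p}{\gamma'\bar h(p)B_p}=1.
\]
The main obstacle is the uniformity of the $o_p(1)$ error over the roughly $N_Tk_n$ jump windows. I would handle it by noting that $\max_i\abs{\bar Y_i^n(\phi)-\phi_j^n\kappa_q}$ involves only the continuous-plus-noise pre-averages. The sub-Gaussian maximal bound from Lemma \ref{lem:noise_subgaussian_replacement} together with the union bound over $n$ indices controls that maximum. Expanding $\abs{a+x}^{p-2l}$ then propagates the error polynomially into $\Psi_{n,i}$.
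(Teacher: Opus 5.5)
Your proposal is correct and follows essentially the paper's route. Each jump-containing window contributes $\abs{\phi_j^n\kappa_q}^p+o_p(1)$ (the $l\ge1$ terms vanish since $\hat Y_i^n(\phi)\toP0$), and these sum to the Riemann approximation $k_n\bar\phi(p)B_p$, while the jump-free windows total $O_p(n\Delta_n^{p/4})=o_p(k_n)$; your sub-Gaussian maximal bound supplies the uniformity the paper leaves implicit. One harmless slip: at $p=4$ your crude bound is $\Delta_n^{1/2}(\log n)^2\to0$, so there is no leftover log factor and the detour through Lemma~\ref{lem:noise_V_limit} is not needed.
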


\begin{proof}
Fix $\phi\in\{g,h\}$ and let
\[
B_p=\sum_{q=1}^{N_T}\abs{\kappa_q}^p.
\]
For each jump time $\tau_q$, choose the unique index $i_q$ satisfying
$t_{i_q-1}<\tau_q\le t_{i_q}$. If $\abs{i-i_q}\le k_n$, then
\[
\bar Y_i^n(\phi)
=
\phi\paren{\frac{i_q-i}{k_n}}\kappa_q+o_p(1),
\]
because exactly one jump enters the pre-averaging window and all continuous and noise terms are
$o_p(1)$ relative to the fixed jump size. Therefore
\[
\sum_{\abs{i-i_q}\le k_n}\abs{\bar Y_i^n(\phi)}^p
=
\sum_{\abs{i-i_q}\le k_n}\abs{\phi\paren{\frac{i_q-i}{k_n}}}^p\abs{\kappa_q}^p+o_p(k_n)
=
k_n\bar\phi(p)\abs{\kappa_q}^p+o_p(k_n).
\]
Summing over $q$ gives the jump contribution
\begin{equation}\label{eq:noise_fixed_jump_contrib}
\sum_{q=1}^{N_T}\sum_{\abs{i-i_q}\le k_n}\abs{\bar Y_i^n(\phi)}^p
=
k_n\bar\phi(p)B_p+o_p(k_n).
\end{equation}
Outside all jump neighborhoods, $\bar Y_i^n(\phi)=O_p(\Delta_n^{1/4})$ and
$\hat Y_i^n(\phi)=O_p(\Delta_n^{1/2})$, so
\[
\sum_{i:\min_q\abs{i-i_q}>k_n}\Psi_{n,i}(\phi)
=
O_p\paren{n\Delta_n^{p/4}}
=
O_p\paren{\Delta_n^{-1+p/4}}
=
o_p(k_n),
\]
because $p\ge 4$ and $k_n\asymp \Delta_n^{-1/2}$. Furthermore, for windows containing a fixed jump,
\[
\hat Y_i^n(\phi)=O_p(\Delta_n^{1/2}),
\]
so every term with $l\ge 1$ in \eqref{eq:noise_Psi} is $o_p(1)$ relative to the $l=0$ term
$\abs{\bar Y_i^n(\phi)}^p$. Consequently,
\[
V_n(Y,\phi,p)=\sum_{i=0}^{n-k_n}\abs{\bar Y_i^n(\phi)}^p+o_p(k_n).
\]
Combining this with \eqref{eq:noise_fixed_jump_contrib} yields \begin{equation}\label{eq:noise_fixed_V}
\frac{1}{k_n}V_n(Y,\phi,p)\toP \bar\phi(p)\sum_{q=1}^{N_T}\abs{\kappa_q}^p.
\end{equation}

Applying \eqref{eq:noise_fixed_V} with $\phi=g$ and $\phi=h$ gives
\[
R_n^{PA}
=
\frac{k_n\bar g(p)B_p+o_p(k_n)}
{\gamma'\paren{k_n\bar h(p)B_p+o_p(k_n)}}
\toP
\frac{\bar g(p)}{\gamma'\bar h(p)}
=
1.
\]
\end{proof}

\section{Proofs for the microstructure-noise section}
\label{app:noise_proofs}

\begin{lemma}[Block decoupling for the noisy sum and max arrays]
\label{lem:noise_block_decouple}
Under the assumptions of Theorem \ref{thm:noisejoint}, fix $t\in\R$ and $y\in\R$. For
$b=1,\dots,J_n$, let
\[
\mathcal J_{n,b}=\mathcal J_n\cap I_{n,b},
\qquad
\Upsilon_{n,b}(y)=
\mathbf{1}\set{\max_{j\in\mathcal J_{n,b}}\abs{X_{n,j}}\le A_{N_n}+B_{N_n}y},
\]
and define
\[
\psi_{n,b}(t,y)=
\E\paren{\exp\set{it\Gamma_{n,b}}\Upsilon_{n,b}(y)\mid\mathcal H_{n,b-1}},
\]
\[
\varphi_{n,b}(t)=
\E\paren{\exp\set{it\Gamma_{n,b}}\mid\mathcal H_{n,b-1}},
\qquad
\chi_{n,b}(y)=
\E\paren{\Upsilon_{n,b}(y)\mid\mathcal H_{n,b-1}}.
\]
Then, after a standard localization to a compact set where $\sigma_t$ is uniformly bounded and
$\underline\sigma>0$,
\begin{equation*}
\max_{1\le b\le J_n}
\abs{\psi_{n,b}(t,y)-\varphi_{n,b}(t)\chi_{n,b}(y)}
=o_p(r_n\Delta_n)=o_p(J_n^{-1}).
\end{equation*}
Moreover the accumulated replacement error is summable:
\begin{equation*}
\sum_{b=1}^{J_n}
\E\abs{\psi_{n,b}(t,y)-\varphi_{n,b}(t)\chi_{n,b}(y)}\to0.
\end{equation*}
\end{lemma}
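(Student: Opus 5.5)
Inside a block, the sum functional $\Gamma_{n,b}$ and the max indicator $\Upsilon_{n,b}(y)$ are built from the same increments and noise, so they are not exactly independent. The plan is to exploit the fact that the max event has tiny probability and involves only $O(1)$ short local windows. I will then transfer the argument of Theorem \ref{thm:H0indep} (finite removal plus stable CLT) to the block level.

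\textbf{Complement form.} Write $\Upsilon_{n,b}(y)=1-\bar\Upsilon_{n,b}(y)$, with $\bar\Upsilon_{n,b}=\1\{\max_{j\in\mathcal J_{n,b}}|X_{n,j}|>\nu_n(y)\}$. Then
\[
\psi_{n,b}-\varphi_{n,b}\chi_{n,b}
=-\Big(\E(e^{it\Gamma_{n,b}}\bar\Upsilon_{n,b}\mid\mathcal H_{n,b-1})-\varphi_{n,b}\,\E(\bar\Upsilon_{n,b}\mid\mathcal H_{n,b-1})\Big).
\]
By the union bound over the $O(r_n/M_n)$ grid points in $I_{n,b}$, together with the conditional tail estimate of Lemma \ref{lem:noise_max_gaussian} and Mills' ratio, we get $\E(\bar\Upsilon_{n,b}\mid\mathcal H_{n,b-1})\le\sum_{j\in\mathcal J_{n,b}}p_{n,j}(y)=O_p((r_n/M_n)N_n^{-1})=O_p(r_n\Delta_n)$, uniformly in $b$. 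Writing the bracket as the sum over $j\in\mathcal J_{n,b}$ of $\E\big((e^{it\Gamma_{n,b}}-\varphi_{n,b})\1\{|X_{n,j}|>\nu_n\}\mid\mathcal H_{n,b-1}\big)$ (up to the overlap correction), it therefore suffices to show that for each single grid point $j$,
\begin{equation*}
\E\big(e^{it\Gamma_{n,b}}\1\{|X_{n,j}|>\nu_n\}\mid\mathcal H_{n,b-1}\big)=\varphi_{n,b}(t)\,p_{n,j}(y)+o_p(p_{n,j}(y)),
\end{equation*}
uniformly in $j$ and $b$. Summing over $j$ then gives $o_p(r_n\Delta_n)$.

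\textbf{Localize the exceedance window.} The variable $X_{n,j}$ depends only on observations with indices in $[j,j+2M_n]$. Split $\Gamma_{n,b}=\Gamma_{n,b}^{(j)}+\Gamma_{n,b}^{(-j)}$, where $\Gamma^{(j)}_{n,b}$ collects the $\Xi_{n,i}$ whose pre-averaging window $[i+1,i+k_n]$ meets $[j-k_n,j+2M_n+k_n]$. That is $O(M_n+k_n)$ terms, each of size $O_p(\Delta_n^{3/4})$, so $\Gamma^{(j)}_{n,b}=O_p((M_n+k_n)\Delta_n^{3/4})=O_p(\Delta_n^{1/4})$. The key point is that this must hold conditionally on the rare event $\{|X_{n,j}|>\nu_n\}$. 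For that I would use Hölder with the sub-Gaussian moment bounds of Lemmas \ref{lem:noise_subgaussian_replacement}--\ref{lem:noise_pabar_decomp}, on which the exceedance raises the local window values by at most a $\sqrt{\log n}$ factor. This gives $\E(|\Gamma^{(j)}_{n,b}|\1\{|X_{n,j}|>\nu_n\}\mid\mathcal H_{n,b-1})=O_p(\Delta_n^{1/4}(\log n)^{p/2})\,p_{n,j}(y)=o_p(p_{n,j})$. Since $|e^{ia}-e^{ib}|\le|a-b|$, the term $\Gamma^{(j)}$ can be dropped.

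\textbf{Decouple the remainder.} The remainder $\Gamma^{(-j)}_{n,b}$ splits into a part strictly before $j-k_n$ and a part strictly after $j+2M_n+k_n$. Conditionally on $\mathcal G_{n,j}$ together with the time-$t_{j+2M_n}$ information, the noise entering the later part is independent of the exceedance window. Its Brownian part depends on the window only through $\sigma_{t_{j+2M_n}}$, and $\sigma$ moves by $O_p(\sqrt{M_n\Delta_n})$ across the window. Freezing $\sigma$, exactly as in the volatility-freezing bounds of Lemma \ref{lem:noise_pabar_decomp}, costs $o_p(1)$ in the characteristic function of the later part. The earlier part is $\mathcal G_{n,j}$-measurable, and $p_{n,j}$ is its conditional exceedance probability. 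Iterating conditional expectations therefore factorizes the joint term into $\E(e^{it\Gamma^{(-j)}_{n,b}}\mid\mathcal H_{n,b-1})\,p_{n,j}+o_p(p_{n,j})$. Putting back $\Gamma^{(j)}$ (again cheap, now without the indicator) recovers $\varphi_{n,b}$.

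\textbf{Summability and main obstacle.} Summability follows because each bound is $o(r_n\Delta_n)$ in $L^1$ after localization (dominated convergence for the bounded quantities), and $J_n r_n\Delta_n=O(1)$. The main obstacle is the second step: controlling $\Gamma^{(j)}$ and the $\sigma$-freezing error on the rare event rather than merely in probability. If the $\log$ factors turn out too costly, my first fallback is a truncation argument: restrict to the event on which all $|\Delta_n^{-1/4}\bar Y_i^n|$ and $\hat Y$ values are at most $C\sqrt{\log n}$, whose complement has probability $o(n^{-2})$ by sub-Gaussianity and is therefore negligible even after summing over blocks.
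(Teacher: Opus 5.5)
Your proposal is correct and follows essentially the same route as the paper. You reduce $\Upsilon_{n,b}$ to single exceedances $\{|X_{n,j}|>\nu_n(y)\}$, where the overlap (inclusion--exclusion) remainder is $O((r_n\Delta_n)^2)$. You split off the $O(k_n+M_n)$ overlapping pre-averaging terms, of size $O(\Delta_n^{1/4})$, via $|e^{iu}-e^{iv}|\le|u-v|$ and moment bounds on the rare event. You treat the rest of $\Gamma_{n,b}$ as conditionally independent of the exceedance, and then sum over the $O(r_n/M_n)$ grid points and the $J_n$ blocks.

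Your per-point relative target $o(p_{n,j}(y))$ matches the paper's bound $\Delta_n^{1/4}\overline\Phi(\nu_n(y))^{1-1/s}$, because $\overline\Phi(\nu_n(y))\asymp\Delta_n^{1/2}$ gives $\Delta_n^{1/4}\overline\Phi(\nu_n(y))^{-1/s}\to0$ for $s>2$. Your early/late split with $\sigma$-freezing makes explicit the conditional-independence step that the paper only asserts.
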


\begin{proof}
Put $\nu_n(y)=A_{N_n}+B_{N_n}y$ and $A_{n,j}(y)=\{\abs{X_{n,j}}>\nu_n(y)\}$.
We first perform a standard localization: for any $\varepsilon>0$, there exists a compact set
$K_\varepsilon$ such that $\Pbb((\sigma_t)_{t\in[0,T]}\notin K_\varepsilon)<\varepsilon$ and on
$K_\varepsilon$ we have $0<\underline\sigma\le\sigma_t\le\overline\sigma<\infty$ and
$\sigma_t$ is $\alpha$-H\"older continuous for some $\alpha>0$. All estimates below are
understood conditionally on $K_\varepsilon$, and the final result follows by letting
$\varepsilon\to0$.

By Lemma \ref{lem:noise_max_gaussian}, uniformly in $j\in\mathcal J_n$,
\begin{equation}\label{eq:noise_block_tail_uniform}
\Pbb\paren{A_{n,j}(y)\mid\F_{t_j}}
=2\overline\Phi(\nu_n(y))(1+o_p(1)),
\qquad
\overline\Phi(\nu_n(y))\asymp N_n^{-1}\asymp M_n\Delta_n .
\end{equation}
Moreover, the polynomial-moment bounds from Lemma \ref{lem:noise_block_clt} imply that, for every
$s\ge2$,
\begin{equation}\label{eq:noise_Xi_moment_block}
\max_{1\le b\le J_n}\max_{i\in I_{n,b}^{\circ}}
\E\paren{\abs{\Xi_{n,i}}^s\mid\mathcal H_{n,b-1}}
\le C_s\Delta_n^{3s/4}\mathbf{1}_{K_\varepsilon}+o(\Delta_n^{3s/4}).
\end{equation}

Fix a block $b$ and a grid point $j\in\mathcal J_{n,b}$. Let
\[
\mathcal N_{n,b}(j)=
\set{i\in I_{n,b}^{\circ}:(t_i,t_{i+k_n}]\cap(t_j,t_{j+2M_n}]\ne\varnothing}.
\]
Then
\begin{equation}\label{eq:noise_overlap_size}
\#\mathcal N_{n,b}(j)\le C(k_n+M_n).
\end{equation}
Write
\[
\Lambda_{n,b,j}=\sum_{i\in\mathcal N_{n,b}(j)}\Xi_{n,i},
\qquad
\Gamma_{n,b}^{(-j)}=\Gamma_{n,b}-\Lambda_{n,b,j}.
\]
The term $\Gamma_{n,b}^{(-j)}$ uses only pre-averaging windows whose Brownian and noise innovations
are disjoint from those used by $X_{n,j}$; conditionally on $\mathcal H_{n,b-1}$, it is independent
of $A_{n,j}(y)$ up to a negligible error controlled by Lemmas
\ref{lem:noise_pabar_decomp}, \ref{lem:noise_max_gaussian}, and \ref{lem:noise_tsrsv_feasible}.

By Minkowski's inequality, \eqref{eq:noise_Xi_moment_block}, and \eqref{eq:noise_overlap_size}, for
any fixed $s>2$,
\begin{equation*}
\max_{b,j}
\norm{\Lambda_{n,b,j}}_{L^s(\Pbb\mid\mathcal H_{n,b-1})}
\le
C_s(k_n+M_n)\Delta_n^{3/4}+o(\Delta_n^{1/4})
=O(\Delta_n^{1/4})\quad\text{a.s. on }K_\varepsilon.
\end{equation*}
Using $\abs{e^{iu}-e^{iv}}\le\abs{u-v}$, the conditional independence of
$\Gamma_{n,b}^{(-j)}$ and $A_{n,j}(y)$, H\"older's inequality, and
\eqref{eq:noise_block_tail_uniform}, we obtain
\begin{align}
&\max_{b,j}
\Bigl|
\E\brac{
\paren{e^{it\Gamma_{n,b}}-\varphi_{n,b}(t)}
\mathbf{1}_{A_{n,j}(y)}
\mid\mathcal H_{n,b-1}}
\Bigr|
\nonumber\\
&\quad\le
C_t\max_{b,j}
\Bigl(
\E\brac{\abs{\Lambda_{n,b,j}}\mathbf{1}_{A_{n,j}(y)}
       \mid\mathcal H_{n,b-1}}
+\E\brac{\abs{\Lambda_{n,b,j}}\mid\mathcal H_{n,b-1}}
\Pbb\paren{A_{n,j}(y)\mid\mathcal H_{n,b-1}}
\Bigr)
\nonumber\\
&\qquad
+o_p\paren{\Delta_n^{1/4}\overline\Phi(\nu_n(y))}
\nonumber\\
&\quad\le
C_t\Delta_n^{1/4}\overline\Phi(\nu_n(y))^{1-1/s}
\paren{1+\nu_n(y)^C}
\qquad\text{a.s. on }K_\varepsilon.
\label{eq:noise_single_exceed_cov}
\end{align}

Now we pass from a single exceedance event to the no-exceedance indicator $\Upsilon_{n,b}(y)$.
The number of local-average windows in block $b$ is $|\mathcal J_{n,b}| \le C r_n/M_n$.
Summing \eqref{eq:noise_single_exceed_cov} over these windows gives
\begin{align}
\frac{r_n}{M_n}\Delta_n^{1/4}
\overline\Phi(\nu_n(y))^{1-1/s}\paren{1+\nu_n(y)^C}
&\le
C r_n\Delta_n^{5/4-1/(2s)}(\log n)^C
\nonumber\\
&=o(r_n\Delta_n) \quad\text{on }K_\varepsilon,
\label{eq:noise_single_sum_block}
\end{align}
because $\Delta_n^{1/4-1/(2s)}(\log n)^C\to0$ for any fixed $s>2$.

For two or more exceedances, the inclusion-exclusion expansion yields
\begin{equation}\label{eq:noise_uexpansion}
\Upsilon_{n,b}(y)
=1-\sum_{j\in\mathcal J_{n,b}}\mathbf{1}_{A_{n,j}(y)}+R_{n,b}(y),
\end{equation}
where the remainder satisfies
\[
|R_{n,b}(y)|
\le \sum_{j<\ell\in\mathcal J_{n,b}}\mathbf{1}_{A_{n,j}(y)}\mathbf{1}_{A_{n,\ell}(y)}
+ \sum_{j<\ell<m}\mathbf{1}_{A_{n,j}(y)}\mathbf{1}_{A_{n,\ell}(y)}\mathbf{1}_{A_{n,m}(y)}+\cdots.
\]
Using $\Pbb(A_{n,j}(y)\cap A_{n,\ell}(y))\le [\Pbb(A_{n,j}(y))]^2$ for $j\neq\ell$ (by
conditional independence for windows separated by at least $2M_n$), and the fact that
$|\mathcal J_{n,b}|=O(r_n/M_n)$, we obtain
\[
\E[|R_{n,b}(y)|\mid\mathcal H_{n,b-1}]
\le C\left(\frac{r_n}{M_n}\right)^2 \overline\Phi(\nu_n(y))^2 = O((r_n\Delta_n)^2)
\quad\text{on }K_\varepsilon.
\]

Combining \eqref{eq:noise_single_sum_block} with the bound on $R_{n,b}(y)$,
and noting that windows crossing block boundaries contribute an additional
$O((k_n+M_n)/r_n)=o(1)$ to the total error, we obtain
\begin{equation}\label{eq:noise_block_decouple}
\max_{1\le b\le J_n}
\abs{\psi_{n,b}(t,y)-\varphi_{n,b}(t)\chi_{n,b}(y)}
=o_p(r_n\Delta_n) \quad\text{on }K_\varepsilon. 
\end{equation}
Because $K_\varepsilon$ has probability $1-\varepsilon$ and $\varepsilon$ is arbitrary,
the same holds unconditionally. Finally, since $J_n r_n\Delta_n \to T$, we have $r_n\Delta_n = \Theta(J_n^{-1})$, so
$o_p(r_n\Delta_n)=o_p(J_n^{-1})$.

Moreover,
\begin{equation}\label{eq:noise_block_decouple_sum}
\sum_{b=1}^{J_n}\E\abs{\psi_{n,b}-\varphi_{n,b}\chi_{n,b}}
\le J_n\cdot o(r_n\Delta_n) = o(1).
\end{equation}
\end{proof}

\begin{proof}[Proof of Theorem \ref{thm:noisejoint}]
Lemma \ref{lem:noise_block_clt} gives
\begin{equation}\label{eq:noise_joint_start}
Z_n^{PA}
=
\sum_{b=1}^{J_n}\Gamma_{n,b}+o_p(1).
\end{equation}
Partition the disjoint-grid index set $\mathcal J_n$ according to the same blocks:
\[
\mathcal J_{n,b}=\mathcal J_n\cap I_{n,b},
\qquad
\Upsilon_{n,b}(y)=
\mathbf{1}\set{\max_{j\in\mathcal J_{n,b}}\abs{X_{n,j}}\le A_{N_n}+B_{N_n}y}.
\]
We claim that
\begin{equation}\label{eq:noise_xi_product}
\mathbf{1}\set{\xi_n^{LA}\le y}
=
\prod_{b=1}^{J_n}\Upsilon_{n,b}(y)+o_p(1).
\end{equation}
To verify this, note that the disjoint windows $\mathcal J_n$ cover all but at most
$O(M_n)$ time points. The windows that are entirely contained in the union of blocks
$\bigcup_b I_{n,b}^{\circ}$ are those whose index set is a subset of $\bigcup_b I_{n,b}^{\circ}$.
The remaining windows are either (i) within distance $2M_n$ of a block boundary, or (ii) in the
incomplete tail beyond the last complete block. The number of such exceptional windows is
$O(J_n(k_n+M_n)/r_n + r_n/M_n) = o(J_n)$ because $r_n/k_n\to\infty$, $r_n/M_n\to\infty$, and
$r_n\Delta_n\to0$. For each such window, the exceedance probability is
$\Pbb(\abs{X_{n,j}}>\nu_n(y)) \sim 2\overline\Phi(\nu_n(y)) \asymp M_n\Delta_n$.
Thus the total contribution of these exceptional windows to the indicator is $o(1)$ in probability
by Markov's inequality and the union bound. Hence \eqref{eq:noise_xi_product} holds.

For fixed $t\in\R$ define $\psi_{n,b}(t,y)$, $\varphi_{n,b}(t)$, and $\chi_{n,b}(y)$ as in
Lemma \ref{lem:noise_block_decouple}. Iterating conditional expectations over the block filtration
gives
\[
\E\paren{
\exp\set{it\sum_{b=1}^{J_n}\Gamma_{n,b}}
\prod_{b=1}^{J_n}\Upsilon_{n,b}(y)}
=
\E\brac{\prod_{b=1}^{J_n}\psi_{n,b}(t,y)}.
\]
Write $\psi_{n,b} = \varphi_{n,b}\chi_{n,b} + \delta_{n,b}$ where
$\delta_{n,b} = \psi_{n,b} - \varphi_{n,b}\chi_{n,b}$. Since $|\psi_{n,b}|,|\varphi_{n,b}\chi_{n,b}|\le1$,
the telescoping inequality
\[
\left|\prod_{b=1}^{J_n}(\varphi_{n,b}\chi_{n,b}+\delta_{n,b})
-
\prod_{b=1}^{J_n}\varphi_{n,b}\chi_{n,b}\right|
\le \sum_{b=1}^{J_n}|\delta_{n,b}|
\]
together with \eqref{eq:noise_block_decouple_sum} yields
\begin{equation}\label{eq:noise_factorization}
\E\paren{
\exp\set{it\sum_{b=1}^{J_n}\Gamma_{n,b}}
\prod_{b=1}^{J_n}\Upsilon_{n,b}(y)}
=
\E\brac{\prod_{b=1}^{J_n}\varphi_{n,b}(t)\chi_{n,b}(y)}+o(1).
\end{equation}
The martingale characteristic-function argument underlying Lemma \ref{lem:noise_block_clt} yields
\[
\E\prod_{b=1}^{J_n}\varphi_{n,b}(t)\to e^{-t^2/2}.
\]
On the max side, Lemma \ref{lem:noise_max_gaussian} gives
\[
\prod_{b=1}^{J_n}\chi_{n,b}(y)
=
\Pbb\paren{\xi_n^{LA}\le y\mid \mathcal H_{n,0}}+o_p(1)
\toP
\exp\set{-e^{-y}}.
\]
Moreover, by the uniform integrability of $\prod_b\chi_{n,b}(y)$ (bounded by 1), we have
\[
\E\left|\prod_{b=1}^{J_n}\chi_{n,b}(y)-e^{-e^{-y}}\right|\to0,
\]
and because $\prod_b\varphi_{n,b}(t)$ is bounded, it follows that
\[
\E\prod_{b=1}^{J_n}\varphi_{n,b}(t)\chi_{n,b}(y)
-
 e^{-e^{-y}}\E\prod_{b=1}^{J_n}\varphi_{n,b}(t)\to0.
\]
Combining with the characteristic-function limit for $\prod_b\varphi_{n,b}$,
\eqref{eq:noise_factorization} implies
\[
\E\paren{
\exp\set{it\sum_{b=1}^{J_n}\Gamma_{n,b}}
\mathbf{1}\set{\xi_n^{LA}\le y}}
\to
e^{-t^2/2}\exp\set{-e^{-y}}.
\]

Now fix arbitrary $t_1,t_2\in\R$. By the preceding result with $t=t_1$ and the fact that
$\mathbf{1}\{\xi\le y\}$ generates the Borel $\sigma$-algebra on $\R$, the joint characteristic
function of $(\sum\Gamma_{n,b},\xi_n^{LA})$ satisfies
\[
\E\brac{\exp\set{it_1\sum_{b=1}^{J_n}\Gamma_{n,b}+it_2\xi_n^{LA}}}
\to
e^{-t_1^2/2}\cdot\phi_\xi(t_2),
\]
where $\phi_\xi(t_2)=\E e^{it_2\xi}= \Gamma(1-it_2)$ is the characteristic function of the
standard Gumbel distribution. By L\'evy's continuity theorem,
\[
\paren{\sum_{b=1}^{J_n}\Gamma_{n,b},\ \xi_n^{LA}}\toD (Z,\xi),
\]
with $Z\sim N(0,1)$, $\xi$ standard Gumbel, and $Z\perp\xi$. Combining this with
\eqref{eq:noise_joint_start} and \eqref{eq:noise_xi_product} proves the desired result.
\end{proof}

\begin{proof}[Proof of Corollary \ref{cor:noisecauchy}]
From Theorem \ref{thm:noisejoint},
\[
Z_n^{PA}\toD Z,
\qquad
\xi_n^{LA}\toD \xi,
\qquad
Z\ind \xi.
\]
Hence
\[
p_n^{PA}=2\paren{1-\Phi(\abs{Z_n^{PA}})}\toD U_1,
\qquad
p_n^{LA}=1-\exp\set{-e^{-\xi_n^{LA}}}\toD U_2,
\]
where $U_1$ and $U_2$ are independent $U(0,1)$ variables. Therefore
\[
\tan\brac{\pi\paren{\frac12-p_n^{PA}}}\toD C_1,
\qquad
\tan\brac{\pi\paren{\frac12-p_n^{LA}}}\toD C_2,
\]
where $C_1$ and $C_2$ are independent standard Cauchy variables. Since the standard Cauchy law is
stable under equal-weight averaging,
\[
\mathcal T_n^{C,N}
=
\frac{1}{2}\tan\brac{\pi\paren{\frac12-p_n^{PA}}}
+
\frac{1}{2}\tan\brac{\pi\paren{\frac12-p_n^{LA}}}
\toD C,
\]
where $C$ is standard Cauchy. The map
\[
x\mapsto \frac12-\frac{1}{\pi}\arctan(x)
\]
transforms a standard Cauchy variable into a $U(0,1)$ variable. Hence
\[
p_n^{C,N}\toD U(0,1).
\]
Therefore, for $0<\alpha<1$,
\[
\Pbb\paren{p_n^{C,N}\le \alpha}\to \alpha.
\]
\end{proof}

\subsection{The noisy dense local alternative}

\begin{lemma}[Counts under the noisy dense local alternative]
\label{lem:noisedensecounts}
Under Assumption \ref{ass:noisedense}, let
\[
N_n^J=\mu_n^N([0,T]\times\R),
\]
\[
\mathcal I_n^{J}
=
\set{0\le i\le n-k_n:(t_i,t_{i+k_n}]\text{ contains at least one jump}},
\]
\[
\mathcal I_n^{J,2}
=
\set{0\le i\le n-k_n:(t_i,t_{i+k_n}]\text{ contains at least two jumps}},
\]
and
\[
\mathcal J_n^{J}
=
\set{j\in\mathcal J_n:(t_j,t_{j+2M_n}]\text{ contains at least one jump}}.
\]
Then
\begin{align*}
N_n^J &= O_p(\Delta_n^{-1/4}),
\qquad
\#\mathcal I_n^{J}=O_p(k_n\Delta_n^{-1/4})=O_p(\Delta_n^{-3/4}), \\
\E\#\mathcal I_n^{J,2} &= O(\Delta_n^{-1/2}),
\qquad
\#\mathcal J_n^{J}=O_p(\Delta_n^{-1/4}),
\qquad \Delta_n\#\mathcal I_n^{J,2}=o_p(\Delta_n^{1/4}).
\end{align*}
\end{lemma}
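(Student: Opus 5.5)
The proof is a set of elementary Poisson counting arguments, in the spirit of Lemma \ref{lem:densecount}. Every count is a functional of the jump times alone. Under Assumption \ref{ass:noisedense} these times form a homogeneous Poisson process on $[0,T]$ with intensity $\lambda_n=\vartheta\Delta_n^{-1/4}$. We also use $k_n\asymp M_n\asymp\Delta_n^{-1/2}$ from \eqref{eq:noise_kn} and \eqref{eq:noise_Mn}.

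\textbf{The total count $N_n^J$.} It is Poisson with mean $\vartheta T\Delta_n^{-1/4}$, so Markov's inequality gives $N_n^J=O_p(\Delta_n^{-1/4})$.

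\textbf{The set $\mathcal I_n^J$.} A given jump at time $\tau$ lies in $(t_i,t_{i+k_n}]$ for at most $k_n$ indices $i$. Hence
\[
\#\mathcal I_n^J\le k_nN_n^J=O_p(k_n\Delta_n^{-1/4})=O_p(\Delta_n^{-3/4}).
\]

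\textbf{The grid count $\mathcal J_n^J$.} The windows $(t_j,t_{j+2M_n}]$, $j\in\mathcal J_n$, are disjoint. So each jump falls in at most one of them, and $\#\mathcal J_n^J\le N_n^J=O_p(\Delta_n^{-1/4})$.

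\textbf{The double-jump set $\mathcal I_n^{J,2}$.} For each $i$, the number of jumps in $(t_i,t_{i+k_n}]$ is Poisson with mean
\[
\lambda_{n,i}=\vartheta k_n\Delta_n\Delta_n^{-1/4}\asymp\Delta_n^{1/4}\to0.
\]
Therefore
\[
\Pbb(\text{at least two jumps})=1-e^{-\lambda_{n,i}}-\lambda_{n,i}e^{-\lambda_{n,i}}\le\lambda_{n,i}^2/2=O(\Delta_n^{1/2}).
\]
Summing over the $n-k_n+1=O(\Delta_n^{-1})$ indices gives $\E\#\mathcal I_n^{J,2}=O(\Delta_n^{-1/2})$. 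Markov's inequality then yields
\[
\Delta_n\#\mathcal I_n^{J,2}=O_p(\Delta_n^{1/2})=o_p(\Delta_n^{1/4}).
\]

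No step is a genuine obstacle. The only points needing care are to use $\lambda_{n,i}\to0$ for the quadratic bound on the two-jump probability, and to keep track of the scale $k_n\asymp\Delta_n^{-1/2}$ when converting $O_p(k_n\Delta_n^{-1/4})$ into $O_p(\Delta_n^{-3/4})$.
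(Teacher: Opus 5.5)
Your proposal is correct and follows the paper's proof essentially step for step: the Poisson mean for $N_n^J$, the bound $\#\mathcal I_n^J\le k_nN_n^J$, the quadratic bound on the two-jump probability for a window with mean $\vartheta k_n\Delta_n^{3/4}$ summed over at most $n$ windows, and disjointness of the grid windows for $\mathcal J_n^J$. Your explicit Markov step giving $\Delta_n\#\mathcal I_n^{J,2}=O_p(\Delta_n^{1/2})$ spells out what the paper leaves implicit.
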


\begin{proof}
The total jump count $N_n^J$ is Poisson with mean $\vartheta T\Delta_n^{-1/4}$, hence
\[
N_n^J=O_p(\Delta_n^{-1/4}).
\]
Each jump affects at most $k_n$ pre-averaging windows, so
\[
\#\mathcal I_n^{J}\le k_nN_n^J=O_p(k_n\Delta_n^{-1/4})=O_p(\Delta_n^{-3/4}).
\]
A fixed pre-averaging window has length $k_n\Delta_n$, so its jump count is Poisson with mean
\[
\lambda_{n,k}=\vartheta k_n\Delta_n^{3/4}=O(\Delta_n^{1/4}).
\]
Therefore
\[
\begin{aligned}
\Pbb\paren{\text{a given pre-averaging window has at least two jumps}}
&=
1-e^{-\lambda_{n,k}}-\lambda_{n,k}e^{-\lambda_{n,k}} \\
&=
O(\lambda_{n,k}^2)
=
O(\Delta_n^{1/2}).
\end{aligned}
\]
Since there are at most $n$ such windows,
\[
\E\#\mathcal I_n^{J,2}=nO(\Delta_n^{1/2})=O(\Delta_n^{-1/2}),
\]
which implies 
\[
\Delta_n\#\mathcal I_n^{J,2}=o_p(\Delta_n^{1/4}).
\]
Finally, the intervals
$(t_j,t_{j+2M_n}]$, $j\in\mathcal J_n$, are disjoint, so each jump can contaminate at most one such
interval. Hence
\[
\#\mathcal J_n^{J}\le N_n^J=O_p(\Delta_n^{-1/4}).
\]
\end{proof}

\begin{proof}[Proof of Theorem \ref{thm:noisedense}]
Write
\[
X_t^{(n)}=X_t^0+J_t^{(n)},
\qquad
X_t^0=X_0+\int_0^t b_s\dd s+\int_0^t\sigma_s\dd W_s,
\]
and define the observed processes
\[
Y_{t_i}^0=X_{t_i}^0+\epsilon_i,
\qquad
Y_{t_i}^{(n)}=X_{t_i}^{(n)}+\epsilon_i.
\]
Theorem \ref{thm:noisejoint} applied to $Y^0$ gives
\begin{equation}\label{eq:noisedense_null_base}
\paren{Z_n^{PA}(Y^0),\xi_n^{LA}(Y^0)}\toD (Z,\xi),
\qquad
Z\sim N(0,1),
\qquad
Z\ind \xi.
\end{equation}
Let the jumps of $\mu_n^N$ be $(\tau_q,Y_q)_{1\le q\le N_n^J}$, and let $r_q$ be the unique index such
that $t_{r_q-1}<\tau_q\le t_{r_q}$. For $\phi\in\{g,h\}$ and $i=0,\dots,n-k_n$, set
\[
w_{n,i,q}(\phi)
=
\phi\paren{\frac{r_q-i}{k_n}}\1\set{1\le r_q-i\le k_n-1}.
\]
Then
\begin{equation}\label{eq:noisedense_barY_expand}
\bar Y_i^n(Y^{(n)},\phi)
=
\bar Y_i^n(Y^0,\phi)
+
\Delta_n^{1/4}\sum_{q=1}^{N_n^J}\sigma_{\tau_q-}Y_qw_{n,i,q}(\phi)
+\bar r_{n,i}(\phi),
\end{equation}
where
\[
\max_{0\le i\le n-k_n}\abs{\bar r_{n,i}(\phi)}=o_p(\Delta_n^{1/4}).
\]
Indeed, every affected window has width $k_n\Delta_n\to 0$, the path $\sigma$ is c\`adl\`ag and bounded,
and the number of jumps in one such window is $O_p(1)$ by Assumption \ref{ass:noisedense}. Moreover,
on $\mathcal I_n^{J}\setminus\mathcal I_n^{J,2}$ there is exactly one jump in the $i$th pre-averaging
window, say $(\tau_{q(i)},Y_{q(i)})$, and Lemma \ref{lem:noise_pabar_decomp} gives
\[
\Delta_n^{-1/4}\bar Y_i^n(Y^0,\phi)
=
\sigma_{t_i}\zeta_{n,i}(\phi)+\omega\eta_{n,i}(\phi)+o_p(1),
\]
while the last result in Lemma \ref{lem:noise_pabar_decomp} yields
\[
\Delta_n^{-1/2}\hat Y_i^n(Y^{(n)},\phi)
=
2\omega^2\theta^{-1}\bar\phi'(2)+o_p(1)
\]
uniformly on $\mathcal I_n^{J}\setminus\mathcal I_n^{J,2}$ because a single jump of size
$\Delta_n^{1/4}$ changes $\hat Y_i^n(\phi)$ only by $O_p(\Delta_n^{1/2})$. Therefore, using \eqref{eq:noisedense_barY_expand}, Lemma \ref{lem:noise_subgaussian_replacement},
and Lemma \ref{lem:noise_shift_identity} with
\[
a=\theta\bar\phi(2)\sigma_{t_i}^2,
\qquad
b=\omega^2\theta^{-1}\bar\phi'(2),
\qquad
x=\sigma_{t_i}Y_{q(i)}w_{n,i,q(i)}(\phi),
\]
we obtain
\begin{align}
\Delta_n^{1-p/4}\Psi_{n,i}(Y^{(n)},\phi)
&=
\Delta_n\sigma_{t_i}^p
\E\abs{\sqrt{\theta\bar\phi(2)}E+Y_{q(i)}w_{n,i,q(i)}(\phi)}^p
+o_p(\Delta_n),
\label{eq:noisedense_single1}
\\
\Delta_n^{1-p/4}\Psi_{n,i}(Y^{0},\phi)
&=
\Delta_n m_p\theta^{p/2}\bar\phi(2)^{p/2}\sigma_{t_i}^p
+o_p(\Delta_n),
\label{eq:noisedense_single0}
\end{align}
uniformly over $i\in\mathcal I_n^{J}\setminus\mathcal I_n^{J,2}$. Subtracting these relations yields
\begin{equation}\label{eq:noisedense_single_diff}
\Delta_n^{1-p/4}\paren{\Psi_{n,i}(Y^{(n)},\phi)-\Psi_{n,i}(Y^0,\phi)}
=
\Delta_n\sigma_{t_i}^p
D_\phi\paren{Y_{q(i)},\frac{r_{q(i)}-i}{k_n}}
+o_p(\Delta_n).
\end{equation}
If $i\notin\mathcal I_n^{J}$, then the two windows coincide and the difference is zero. If
$i\in\mathcal I_n^{J,2}$, boundedness of $\sigma$, compact support of $F$, and the polynomial growth of
$\Psi_{n,i}$ imply
\[
\Delta_n^{1-p/4}\abs{\Psi_{n,i}(Y^{(n)},\phi)-\Psi_{n,i}(Y^0,\phi)}=O_p(\Delta_n),
\]
so Lemma \ref{lem:noisedensecounts} gives
\[
\sum_{i\in\mathcal I_n^{J,2}}
\Delta_n^{1-p/4}\abs{\Psi_{n,i}(Y^{(n)},\phi)-\Psi_{n,i}(Y^0,\phi)}
=
O_p\paren{\Delta_n\#\mathcal I_n^{J,2}}
=
o_p(\Delta_n^{1/4}).
\]
Consequently,
\begin{align}
\Delta_n^{1-p/4}\paren{V_n(Y^{(n)},\phi,p)-V_n(Y^0,\phi,p)}
&=
\Delta_n\sum_{q=1}^{N_n^J}
\sum_{i=r_q-k_n+1}^{r_q-1}
\sigma_{t_i}^pD_\phi\paren{Y_q,\frac{r_q-i}{k_n}}
+o_p(\Delta_n^{1/4}).
\label{eq:noisedense_Vpre}
\end{align}
Since $\sigma$ is c\`adl\`ag and bounded and $k_n\Delta_n\to0$,
\[
\sum_{i=r_q-k_n+1}^{r_q-1}\sigma_{t_i}^pD_\phi\paren{Y_q,\frac{r_q-i}{k_n}}
=
\sigma_{\tau_q}^p\sum_{i=r_q-k_n+1}^{r_q-1}D_\phi\paren{Y_q,\frac{r_q-i}{k_n}}+o_p(k_n).
\]
The Riemann-sum approximation then gives
\[
\frac{1}{k_n}\sum_{i=r_q-k_n+1}^{r_q-1}D_\phi\paren{Y_q,\frac{r_q-i}{k_n}}
\to
\int_0^1D_\phi(Y_q,u)\dd u.
\]
Writing
\[
H_\phi(y)=\int_0^1D_\phi(y,u)\dd u,
\]
we obtain from \eqref{eq:noisedense_Vpre}
\begin{equation}\label{eq:noisedense_Vpre2}
\Delta_n^{1-p/4}\paren{V_n(Y^{(n)},\phi,p)-V_n(Y^0,\phi,p)}
=
k_n\Delta_n\sum_{q=1}^{N_n^J}\sigma_{\tau_q}^pH_\phi(Y_q)+o_p(\Delta_n^{1/4}).
\end{equation}
Because $k_n\sqrt{\Delta_n}\to\theta$,
\[
k_n\Delta_n=\theta\Delta_n^{1/2}+o(\Delta_n^{1/2}).
\]
Conditional on $\F_T$, the Poisson compensation formula yields
\[
\E\brac{\sum_{q=1}^{N_n^J}\sigma_{\tau_q}^pH_\phi(Y_q)\mid\F_T}
=
\vartheta\Delta_n^{-1/4}A_p d_\phi^N
\]
and
\[
\Var\brac{\sum_{q=1}^{N_n^J}\sigma_{\tau_q}^pH_\phi(Y_q)\mid\F_T}
=
\vartheta\Delta_n^{-1/4}
\int_0^T\sigma_t^{2p}\dd t
\int_{\R}H_\phi(y)^2F(\dd y)
=
O_p(\Delta_n^{-1/4}).
\]
Multiplying by $k_n\Delta_n$ therefore gives
\begin{equation}\label{eq:noisedense_Vshift}
\Delta_n^{1-p/4}V_n(Y^{(n)},\phi,p)
=
\Delta_n^{1-p/4}V_n(Y^0,\phi,p)
+
\vartheta\theta A_p d_\phi^N\Delta_n^{1/4}
+
o_p(\Delta_n^{1/4}).
\end{equation}
Applying \eqref{eq:noisedense_Vshift} with $\phi=g$ and $\phi=h$, and using
Lemma \ref{lem:noise_V_limit},
\begin{equation}\label{eq:noisedense_ratio_shift}
\Delta_n^{-1/4}\paren{R_n^{PA}(Y^{(n)})-\gamma''}
=
\Delta_n^{-1/4}\paren{R_n^{PA}(Y^0)-\gamma''}
+
\frac{\vartheta\theta A_p}{\gamma'c_h}\paren{d_g^N-\gamma'\gamma''d_h^N}
+
o_p(1).   
\end{equation}
Let $\mathcal B_n^{J}$ be the collection of blocks $I_{n,b}$ that intersect $\mathcal I_n^{J}$. Since
$r_n/k_n\to\infty$, every jump contaminates at most two such blocks, and Lemma
\ref{lem:noisedensecounts} yields
\[
\#\mathcal B_n^{J}=O_p(\Delta_n^{-1/4}).
\]
For $b\notin\mathcal B_n^{J}$, we have $\Gamma_{n,b}(Y^{(n)})=\Gamma_{n,b}(Y^0)$. For
$b\in\mathcal B_n^{J}$, relation \eqref{eq:noisedense_ratio_shift} on a single block gives
\[
\abs{\Gamma_{n,b}(Y^{(n)})-\Gamma_{n,b}(Y^0)}=O_p(\Delta_n^{1/4}).
\]
Hence
\begin{align*}
\abs{\hat\varsigma_n^{PA}(Y^{(n)})^2-\hat\varsigma_n^{PA}(Y^0)^2}
&\le
\sum_{b\in\mathcal B_n^{J}}
\abs{\Gamma_{n,b}(Y^{(n)})-\Gamma_{n,b}(Y^0)}
\abs{\Gamma_{n,b}(Y^{(n)})+\Gamma_{n,b}(Y^0)}
\\
&=
O_p(\#\mathcal B_n^{J}\Delta_n^{1/2})
+
O_p\paren{\Delta_n^{1/4}\sum_{b\in\mathcal B_n^{J}}\abs{\Gamma_{n,b}(Y^0)}}
\\
&=
o_p(1),
\end{align*}
because $\#\mathcal B_n^{J}=O_p(\Delta_n^{-1/4})$ and Lemma \ref{lem:noise_block_clt} gives
$\max_b\abs{\Gamma_{n,b}(Y^0)}\toP0$. Since the same lemma yields
$\hat\varsigma_n^{PA}(Y^0)\toP\Sigma^{1/2}$, we obtain
\begin{equation}\label{eq:noisedense_selfnorm}
\hat\varsigma_n^{PA}(Y^{(n)})\toP\Sigma^{1/2}.
\end{equation}
Combining \eqref{eq:noisedense_ratio_shift} and \eqref{eq:noisedense_selfnorm} gives
\begin{equation}\label{eq:noisedense_Zshift}
Z_n^{PA}(Y^{(n)})=Z_n^{PA}(Y^0)+\mu_{D,N}+o_p(1).
\end{equation}

Now turn to the max statistic. If $j\notin\mathcal J_n^{J}$, then
$L_{n,j}(Y^{(n)})=L_{n,j}(Y^0)$. If $j\in\mathcal J_n^{J}$, each jump in
$(t_j,t_{j+2M_n}]$ contributes at most $C\Delta_n^{1/4}$ to $L_{n,j}$, so
\[
\sqrt{M_n}\abs{L_{n,j}(Y^{(n)})-L_{n,j}(Y^0)}
\le
CN_{n,j}^{J},
\]
where $N_{n,j}^{J}$ is the jump count on $(t_j,t_{j+2M_n}]$. Because
\[
\E N_{n,j}^{J}=\vartheta 2M_n\Delta_n^{3/4}=O(\Delta_n^{1/4}),
\]
we have $N_{n,j}^{J}=O_p(1)$ uniformly on $\mathcal J_n^{J}$. Lemma
\ref{lem:noisedensecounts} implies $\#\mathcal J_n^{J}=O_p(\Delta_n^{-1/4})$. Therefore, for every
fixed $y\in\R$, the tail approximation in Lemma \ref{lem:noise_max_gaussian} gives
\begin{align}
\Pbb\paren{\max_{j\in\mathcal J_n^{J}}\abs{X_{n,j}(Y^{(n)})}>A_{N_n}+B_{N_n}y}
&\le
\E\#\mathcal J_n^{J}\cdot
2\overline\Phi\paren{A_{N_n}+B_{N_n}y-C}(1+o(1))
+o(1)
\nonumber\\
&\le
C\Delta_n^{-1/4}\overline\Phi\paren{A_{N_n}-C+B_{N_n}y}+o(1)
\to 0,
\label{eq:noisedense_max_negl}
\end{align}
where $\overline\Phi=1-\Phi$ and the last limit uses $N_n\asymp\Delta_n^{-1/2}$. Thus the
contaminated disjoint windows do not affect the extreme-value limit. It remains only to check that
the feasible spot-volatility and noise-variance estimators are not spoiled by the dense small jumps. For
$\hat\omega_n^2$, the jump contribution is bounded by
\[
\frac{C}{n}\sum_{i=1}^n\{(\Delta_i^nJ^{(n)})^2+|\Delta_i^nJ^{(n)}|\,|\Delta_i^nY^0|\}=o_p((\log n)^{-2}),
\]
because $N_n^J=O_p(\Delta_n^{-1/4})$ and $|\Delta_i^nJ^{(n)}|\le C\Delta_n^{1/4}$ on the compact-support
localization. For the filtering TSRSV estimator, a spot window has length
$h_n\asymp\Delta_n^{1/6}$ and contains $O_p(h_n\Delta_n^{-1/4}+|\log\Delta_n|)=O_p(\Delta_n^{-1/12}|
\log\Delta_n|)$ jumps uniformly over the disjoint grid. The jump quadratic variation in such a window is
therefore $O_p(h_n\Delta_n^{1/4}|\log\Delta_n|)$; after division by $h_n$ it is
$O_p(\Delta_n^{1/4}|\log\Delta_n|)=o_p((\log n)^{-2})$. The cross terms with the continuous part and the
noise part are smaller by Cauchy--Schwarz and the sub-exponential bounds used in Lemma
\ref{lem:noise_tsrsv_feasible}. Consequently the feasible-array condition \eqref{eq:noise_feas} also holds
for $Y^{(n)}$, and hence
\begin{equation}\label{eq:noisedense_xi_same}
\xi_n^{LA}(Y^{(n)})-\xi_n^{LA}(Y^0)\toP 0.
\end{equation}
Combining \eqref{eq:noisedense_null_base}, \eqref{eq:noisedense_Zshift}, and
\eqref{eq:noisedense_xi_same} gives
\[
\paren{Z_n^{PA}(Y^{(n)}),\xi_n^{LA}(Y^{(n)})}\toD (Z+\mu_{D,N},\xi),
\qquad
Z\ind\xi.
\]
% The statement about $p$-values follows from the
% continuous mapping theorem.
\end{proof}

\begin{proof}[Proof of Corollary \ref{cor:noisedensepower}]
Theorem \ref{thm:noisedense} implies
\[
Z_n^{PA}\toD Z_{D,N},
\qquad
p_n^{LA}\toD U,
\qquad
Z_{D,N}\ind U,
\]
where $Z_{D,N}\sim N(\mu_{D,N},1)$ and $U\sim\mathrm{Unif}(0,1)$. Therefore
\[
\mathcal T_n^{C,N}\toD \mathcal T_{D,N}:=\frac12A(Z_{D,N})+\frac12C,
\]
where $A(\cdot)$ is given in Corollary \ref{cor:densepower} and
\[
C=\tan\brac{\pi\paren{\frac12-U}}
\]
is standard Cauchy and independent of $Z_{D,N}$. Hence
\begin{align*}
\beta_{D,N}(\alpha;\mu_{D,N})
&=
\Pbb\paren{\mathcal T_{D,N}>c_\alpha}
\\
&=
\int_{\R}
\Pbb\paren{C>2c_\alpha-A(z)}
\phi(z-\mu_{D,N})\dd z
\\
&=
\int_{\R}
\paren{\frac12-\frac{1}{\pi}\arctan\paren{2c_\alpha-A(z)}}
\phi(z-\mu_{D,N})\dd z.
\end{align*}
\end{proof}

\begin{proof}[Proof of Theorem \ref{thm:noisefixed}]
Lemma \ref{lem:noise_fixed_sum} gives
\[
R_n^{PA}\toP1.
\]
We next prove that the local-average maximum detects the fixed jump. Let $q_n$ and $j_n\in\mathcal J_n$ be measurable indices attaining the maximum in the definition of $D_n$, and set
\[
u_n=i_{q_n,n}-j_n,
\qquad
1\le u_n\le 2M_n-1.
\]
Since the fixed jump times are separated and $M_n\Delta_n\to0$, the local-average window $(t_{j_n},t_{j_n+2M_n}]$ contains no fixed jump other than $\kappa_{q_n}$ with probability tending to one. Lemma \ref{lem:noise_localavg_rep} gives the exact triangular-weight representation
\[
\sqrt{M_n}L_{n,j_n}
=
\frac{a_{M_n,u_n}}{\sqrt{M_n}}\kappa_{q_n}
+\frac{1}{\sqrt{M_n}}\sum_{u=1}^{2M_n-1}a_{M_n,u}\Delta_{j_n+u}^nX^0
+\frac{1}{\sqrt{M_n}}\sum_{u=0}^{M_n-1}(\epsilon_{j_n+M_n+u}-\epsilon_{j_n+u}),
\]
where $X^0$ is the continuous component of $X$. By Lemma \ref{lem:noise_max_gaussian}, applied to the continuous-plus-noise part,
\[
\max_{j\in\mathcal J_n}
\abs{\frac{1}{v_{n,j}}
\left\{
\frac{1}{\sqrt{M_n}}\sum_{u=1}^{2M_n-1}a_{M_n,u}\Delta_{j+u}^nX^0
+\frac{1}{\sqrt{M_n}}\sum_{u=0}^{M_n-1}(\epsilon_{j+M_n+u}-\epsilon_{j+u})
\right\}}
=O_p(\sqrt{\log n}).
\]
Moreover $0<c\le v_{n,j}\le C<\infty$ uniformly in $j$, and the finite fixed jumps do not affect the plug-in variance at the selected left endpoint $j_n$: the estimator $\hat\omega_n^2$ receives only $O_p(n^{-1})$ from finitely many squared fixed jumps, and the TSRSV spot window ending at $t_{j_n}$ contains no fixed jump for all large $n$ because the jump times are separated and $h_n\to0$. Thus
\[
\max_{j\in\mathcal J_n}\abs{\hat v_{n,j}^{-1}-v_{n,j}^{-1}}=o_p(1)
\]
on the selected jump-detecting windows. Since $\inf_q\abs{\kappa_q}>0$ a.s., condition \eqref{eq:noisy_fixed_grid_condition} implies
\[
\abs{\hat X_{n,j_n}}
\ge
c\frac{D_n}{\sqrt{M_n}}-O_p(\sqrt{\log n})
\]
and therefore
\[
\frac{\mathcal M_n^{LA}}{\sqrt{\log n}}\toP\infty.
\]
Because $A_{N_n}=O(\sqrt{\log n})$ and $B_{N_n}^{-1}=O(\sqrt{\log n})$, it follows that
\[
\frac{\xi_n^{LA}}{\log n}\toP\infty.
\]
Consequently, for every fixed $L>0$,
\[
p_n^{LA}=1-\exp\{-e^{-\xi_n^{LA}}\}=o_p(\Delta_n^L),
\]
and hence
\[
\tan\brac{\pi\paren{\frac12-p_n^{LA}}}
=\cot(\pi p_n^{LA})
\]
diverges faster than any fixed power of $\Delta_n^{-1}$.

It remains to check that the other Cauchy component cannot cancel this divergence. Under fixed finite-activity jumps, the variables entering the block self-normalizer are at most polynomial in $\Delta_n^{-1}$ after localization and after using the sub-Gaussian maximal bound
$\max_{0\le i\le n}\abs{\epsilon_i}=O_p(\sqrt{\log n})$. Hence there exists a finite constant $C_p$ such that
\[
\hat\varsigma_n^{PA}=O_p(\Delta_n^{-C_p}).
\]
Since $R_n^{PA}\toP1$ and $\gamma''>1$, the numerator $\Delta_n^{-1/4}(R_n^{PA}-\gamma'')$ is bounded away from zero in absolute value times $\Delta_n^{-1/4}$ with probability tending to one. Thus
\[
\abs{Z_n^{PA}}\ge \Delta_n^{C_p}
\]
with probability tending to one, after increasing $C_p$ if necessary. The elementary inequalities
\[
1-p_n^{PA}=2\Phi(\abs{Z_n^{PA}})-1\ge c(\abs{Z_n^{PA}}\wedge1),
\qquad
\tan\brac{\pi\paren{\frac12-u}}
\ge -\frac{C}{1-u},\quad 0<u<1,
\]
then imply
\[
\tan\brac{\pi\paren{\frac12-p_n^{PA}}}=O_p^{-}(\Delta_n^{-C_p}),
\]
where $O_p^{-}(a_n)$ means that the negative part is $O_p(a_n)$. Choosing $L>C_p$ in the preceding bound for $p_n^{LA}$ gives
\[
\mathcal T_n^{C,N}\toP\infty.
\]
Therefore, for every $0<\alpha<1$,
\[
\Pbb\paren{p_n^{C,N}\le\alpha}
=
\Pbb\paren{\mathcal T_n^{C,N}>c_\alpha}\to1.
\]
\end{proof}
\bibliographystyle{elsarticle-harv}
\bibliography{refs.bib}

\end{document}